\newcommand{\NI}{\noindent}
\newcounter{line}
\newcommand{\infer}[2]{\frac{\displaystyle{ #1 }}{\displaystyle{ #2 }}}
\newcommand{\ZEROPREMISERULE}[1]{\infer{-}{#1}}
\newcommand{\ONEPREMISERULE}[2]{\infer{#1}{#2}}
\newcommand{\TWOPREMISERULE}[3]{\infer{#1 \quad #2}{#3}}
\newcommand{\THREEPREMISERULE}[4]{\infer{#1 \quad #2 \quad #3}{#4}}
\newcommand{\FIVEPREMISERULE}[6]{\infer{#1 \quad #2 \quad #3 \quad #4 \quad #5}{#6}}
\newcommand{\RULENAME}[1]{{\sc{#1}}}
\newcommand{\SMALLRULENAME}[1]{\textsc{\tiny #1}}
\newcommand{\SUBST}[2]{[#1/#2]}
\newcommand{\PSUBST}[2]{\PROGRAM{ [ }#1 \PROGRAM{/} #2\PROGRAM{]}} 
\newcommand{\LSUBST}[2]{\PROGRAMLOGIC{ [ }#1 \PROGRAMLOGIC{/} #2\PROGRAMLOGIC{]}} 
\newcommand{\LSUBSTLTC}[3]{\PROGRAMLOGIC{ [ }#1 \PROGRAMLOGIC{/} #2\PROGRAMLOGIC{]}_{#3}} 
\newenvironment{FIGURE}{\raggedbottom\begin{figure}\rule{\linewidth}{.5pt}}{\rule{\linewidth}{.5pt}\end{figure}\raggedbottom}
\newcommand{\REF}[1]{\mathsf{Ref}(#1)}
\newcommand{\HIDDEN}[1]{}
\newcommand{\PROOFFINISHED}[1]
{ #1}
\newcommand{\FRESH}[2]{#1 \# #2}
\newcommand{\FORALL}[2]{\forall #1 \in (#2).}
\newcommand{\EXISTS}[2]{\exists #1 \in (#2).}
\newcommand{\DONE}{}
\newcommand{\HIDE}[1]{}
\newcommand{\FAD}[1]{\forall #1.}
\newcommand{\TCV}{\delta}
\newcommand{\PLUSTC}{\!+\!}
\newcommand{\PLUSV}{\!+\!}
\newcommand{\PLUSG}{\!+\!}
\newcommand{\GAMMA}{\mathrm{\mathit I}\!\!\Gamma}
\newcommand{\CONSTRUCT}[2]{#1 \triangleright #2}
\newcommand{\LTCtoSTC}{\downarrow_{-TC}} 
\newcommand{\REMOVETCVfrom}{\backslash_{-TCV}} 
\newcommand{\EMPH}[1]{\emph{#1}}
\newcommand{\NUC}{$\nu$-calculus}
\newenvironment{GRAMMAR}{\[\begin{array}{lcl}}{\end{array}\]}
\newenvironment{RULES}{\[\begin{array}{c}}{\end{array}\]}
\newcommand{\VERTICAL}{\  \mid\hspace{-3.0pt}\mid \ }
\newcommand{\UNIT}{\mathsf{Unit}}
\newcommand{\IFTHEN}[2]{\mathsf{if}\; #1\;\mathsf{then}\; #2}
\newcommand{\IFTHENELSE}[3]{\IFTHEN{#1}{#2}\;\mathsf{else}\;#3}
\newcommand{\TC}{\mathbb{TC}}
\newcommand{\NAME}{\mathsf{Nm}}
\newcommand{\BOOL}{\mathsf{Bool}}
\newcommand{\INT}{\mathsf{Int}}
\newcommand{\TYBASE}{\mathsf{\alpha_b}}
\newcommand{\RED}{\rightarrow}
\newcommand{\FS}{\rightarrow}
\newcommand{\TRUE}{\PROGRAM{true}}
\newcommand{\FALSE}{\PROGRAM{false}}
\newcommand{\mynote}[2]{
		\fbox{\bfseries\sffamily\scriptsize#1}
		{\small$\blacktriangleright$\textsf{\emph{#2}}$\blacktriangleleft$}
	}
\newcommand{\mynote}[2]{}
\newcommand{\RBOX}[1]{\parbox[t]{3cm}{\raggedleft #1 }}
\newcommand{\GENSYM}{\mathsf{gensym}}
\newcommand{\EXT}[1]{\mathsf{Ext}{(#1)}}
\newcommand{\TCTYPES}[2]{#1 \Vdash {#2}}
\newcommand{\FORMULATYPES}[2]{#1 \Vdash {#2}}
\newcommand{\JUDGEMENTTYPES}[2]{\FORMULATYPES{#1}{#2}}
\newcommand{\MODELTYPES}[2]{#1 \Vdash #2}
\newcommand{\FV}[1]{\mathsf{fv}(#1)}
\newcommand{\FTCV}[1]{\mathsf{ftcv}(#1)}
\newcommand{\AN}[1]{\textsf{\aa}(#1)}
\newcommand{\ONEEVAL}[4]{#1 \bullet #2=#3 \{#4\}}
\newcommand{\DEFEQ}{\stackrel{\text{\EMPH{def}}}{=}}
\newcommand{\CAL}[1]{\mathcal{#1}}
\newcommand{\MMM}{\xi}
\newcommand{\EEE}[1]{\CAL{E}[#1]}
\newcommand{\RAWPAIR}[1]{\langle#1\rangle}
\newcommand{\PAIR}[2]{\RAWPAIR{#1, #2}}
\newcommand{\EQA}[2]{#1 = #2}
\newcommand{\PROJ}[3]{\pi^{#1}_{#2}(#3)}
\newcommand{\LET}[3]{\PROGRAM{let}\ #1 = #2\ \PROGRAM{in}\ #3}
\newcommand{\EXTSINGLE}{\preccurlyeq}
\newcommand{\EXTSTAR}{\preccurlyeq^{\star}}
\newcommand{\LTCDERIVEDVALUE}[4]{#1 \stackrel{[#2,\ #3]}{\rightsquigarrow} #4}
\newcommand{\CONGEXTSINGLE}{\sim^{\EXTSINGLE}}
\newcommand{\CONGEXTSTAR}{\sim^{\EXTSTAR}}
\newcommand{\DOM}[1]{\mathsf{dom}(#1)}
\newcommand{\TYPES}[3]{#1 \vdash #2 : #3}
\newcommand{\EXPRESSIONTYPES}[3]{#1 \Vdash #2 : #3}
\newcommand{\CONG}{\cong}
\newcommand{\CONGCONTEXT}[2]{\CONG_{#1}^{#2}}
\newcommand{\MINUS}{{\mbox{\bf\small -}}}
\newcommand{\LOGIC}[1]{\mathsf{#1}}
\newcommand{\PROGRAM}[1]{\mathsf{#1}}
\newcommand{\SEM}[2]{\lbrack\!\lbrack #1 \rbrack\!\rbrack_{#2}}
\newcommand{\TRANSLATE}[1]{\langle\!\langle #1 \rangle\!\rangle}
\newcommand{\ZEROPREMISERULENAMEDRIGHT}[2]{\ZEROPREMISERULE{#1}\,\SMALLRULENAME{#2}}
\newcommand{\ONEPREMISERULENAMEDRIGHT}[3]{\ONEPREMISERULE{#1}{#2}\,\SMALLRULENAME{#3}}
\newcommand{\TWOPREMISERULENAMEDRIGHT}[4]{\TWOPREMISERULE{#1}{#2}{#3}\,\SMALLRULENAME{#4}}
\newcommand{\THREEPREMISERULENAMEDRIGHT}[5]{\THREEPREMISERULE{#1}{#2}{#3}{#4}\,\SMALLRULENAME{#5}}
\newcommand{\FIVEPREMISERULENAMEDRIGHT}[7]{\FIVEPREMISERULE{#1}{#2}{#3}{#4}{#5}{#6}\,\SMALLRULENAME{#7}}
\newcommand{\NOVSPACEPARAGRAPH}[1]{\NI\textbf{\EMPH{#1}.}}
\newcommand{\PARAGRAPH}[1]{\vspace{2mm}\NOVSPACEPARAGRAPH{#1}}
\newcommand{\TRUTH}{\LOGIC{T}}
\newcommand{\FALSITY}{\LOGIC{F}}
\newcommand{\IMPLIES}{\rightarrow}
\newcommand{\VEC}[1]{\tilde{#1}}
\newcommand{\AND}{\land}
\newcommand{\ASSUMETERMINATION}{}
\newcommand{\EVALFORMULA}[4]{#1 \bullet #2 = #3 \{#4\}}
\newcommand{\EVALFORMULASHORT}[3]{#1 \bullet #2 = #3}
\newcommand{\OR}{\lor}
\newcommand{\IFF}{\leftrightarrow}
 \newcommand{\GREYBOX}[1]{{\colorbox{darkgray!10}{$#1$}}}
\newcommand{\CONVERGES}{\Downarrow}
\newcommand{\CONV}{\CONVERGES}
 \newcommand{\nextLine}{\\[1mm] \hline \\[-3mm]}
\newenvironment{NDERIVATION}[1]{\setcounter{line}{#1}\[\begin{array}{ll}}{\end{array}\]}
\newcommand{\NLINESKELETON}[2]{\theline &\quad  #1\ \quad\hfill \text{\EMPH{#2}}\addtocounter{line}{1}}
\newcommand{\NLINE}[2]{\NLINESKELETON{#1}{#2}\nextLine}
\newcommand{\NLASTLINE}[2]{\NLINESKELETON{#1}{#2}}
\newcommand{\NPLINESKELETON}[3]{\theline &\quad  \!\!\begin{array}[t]{l} #1 \end{array} \hfill \!\!\parbox[t]{#2}{\raggedleft \EMPH{#3}}\addtocounter{line}{1}}
\newcommand{\NPLINE}[3]{\NPLINESKELETON{#1}{#2}{#3}\nextLine}
\newcommand{\ASSERT}[4]{\{#1\}\; #2 :_{#3} \{#4\}}
\newcommand{\METALOGIC}[1]{{\boldsymbol{\color{cyan} #1}}}
\newcommand{\MIMPLIES}{\METALOGIC{\IMPLIES}}
\newcommand{\MIMPLIEDBY}{\METALOGIC{\leftarrow}}
\newcommand{\MIFF}{\METALOGIC{\IFF}}
\newcommand{\MAND}{\METALOGIC{\AND}}
\newcommand{\MOR}{\METALOGIC{\OR}}
\newcommand{\Mforall}{\METALOGIC{\forall}}
\newcommand{\Mexists}{\METALOGIC{\exists}}
\newcommand{\Mequiv}{\METALOGIC{\equiv}}
\newcommand{\PROGRAMLOGIC}[1]{ #1}
\newcommand{\PIMPLIES}{\PROGRAMLOGIC{\IMPLIES}}
\newcommand{\PIMPLIEDBY}{\PROGRAMLOGIC{\leftarrow}}
\newcommand{\PIFF}{\PROGRAMLOGIC{\IFF}}
\newcommand{\PAND}{\PROGRAMLOGIC{\AND}}
\newcommand{\POR}{\PROGRAMLOGIC{\OR}}
\newcommand{\THINWRT}[1]{\textsf{ thin w.r.t } #1 }
\newcommand{\REMOVEVARIABLE}{ \backslash }
\newcommand{\EXTINDEP}{\textsf{Ext-Ind}}
\newcommand{\SYNEXTINDEP}{\textsf{SYN-EXT-IND}}
\begin{document}

\title{A Program Logic for Fresh Name Generation}

\author{Harold Pancho Eliott \inst{1}\and Martin Berger \inst{1,2}
\authorrunning{HPG Eliott and M Berger}
\institute{Department of Informatics, University of Sussex, Brighton, UK.
\and
Turing Core, Huawei 2012 Labs, London, UK.}
}
\maketitle        
\begin{abstract}
  We present a program logic for Pitts and Stark's \NUC, an
  extension of the call-by-value simply-typed $\lambda$-calculus with a
  mechanism for the generation of fresh names. Names can be compared
  for equality and inequality, producing programs with subtle observable
  properties.  Hidden names produced by interactions between
  generation and abstraction are captured logically with a
  second-order quantifier over type contexts.
  We illustrate usage of the logic through
  reasoning about well-known difficult cases from the literature.
\end{abstract}

\section{Introduction}

Naming is a long-standing problem in computer science. Most
programming languages can define naming constructs, which, when
called, yield a fresh name.  The $\pi$-calculus \cite{MilnerR:calmp1}
made naming and the $\nu$-operator, a constructor for name creation, a
first-class construct, leading to a flurry of research,
e.g.~\cite{PittsAM:newaas,FernandezM:nomrews,HondaK:elestriptswr,PittsAM:nomsetnasics,PittsAM:nomlfo-jv,UrbanC:nomteciih}. Initially
it was unclear if the $\pi$-calculus approach had purchase beyond
process calculi. Pitts and Stark \cite{PittsAM:obsproohoftdclnown} as
well as Odersky \cite{OderskyM:funtheoln} added the $\nu$-operator to
the simply-typed $\lambda$-calculus (STLC from now on), and showed
that the subtleties of naming are already present in the interplay
between higher-order functions and fresh name generation.  This raises
the question of how compositionally to reason about programs that can
generate fresh names?  There are program logics for ML-like languages
that can generate fresh references, such as
\cite{YHB07:local:full,DreyerD:impohigosacfolrr},
but, to the best of our knowledge, always in the context of languages
with other expressive features such as aliasing, mutable higher-order
state or pointer arithmetic, leading to complex logics, where the
contribution of fresh name generation to the difficulties of reasoning
is not apparent. This is problematic because, while the type $\NAME$
carries the same information as $\REF{\UNIT}$ in ML, we are often
interested in reasoning about languages that combine fresh name
generation with other features, such as meta-programming
\cite{BergerM:modhomgmp}. Can we study reasoning about fresh names in
as simple a programming language as possible?

\begin{quote}
\textbf{Research question}. Is there a Hoare-style program logic for the \NUC,
conservatively extending program logics for the STLC in a natural
manner, that allows for compositional reasoning about fresh name
generation?
\end{quote}

\NI The present paper gives an affirmative answer to the research
question, and presents the first program logic for the \NUC.

\PARAGRAPH{Informal explanation}
By the \NUC\ we mean the STLC with a type $\NAME$ of names, a
constructor $\GENSYM$ of type $\UNIT \FS \NAME$ and a destructor, in
form of equality and inequality on names ($\GENSYM$ and $\nu$ are essentially
identical, but the former is more widely used). Immediately we realise that
the \NUC\ loses extensionality, as $\GENSYM () = \GENSYM ()$ evaluates to $\FALSE$.
While the loss of extensionality is expected in a stateful language,
the \NUC\ does not have state, at least not in a conventional sense.

A first difficulty is expressing freshness in logic.  What does it
mean for a name $x$ to be fresh? A first idea might be to say that $x$
is guaranteed to be distinct from all existing names. We cannot simply
say
\[
\ASSERT{\TRUTH}{\GENSYM()}{u}{\forall x. u \neq x}
\]
since we must prevent $\forall x. u \neq x$ being instantiated to $u
\neq u$. We want to say something like:
\begin{align}\label{no_future}
\forall x. \ASSERT{\TRUTH}{\GENSYM()}{u}{u \neq x}
\end{align}
Unfortunately we cannot quantify over Hoare triples.  A second problem
is that (\ref{no_future}) is not strong enough, in the sense that
$\GENSYM$ does not just create names that are fresh w.r.t.~existing
names, but also w.r.t.~all future calls to $\GENSYM$.  We 
introduce a new quantifier to deal with both problems at the same
time.  A third difficulty is that fresh names can be exported or
remain hidden, with observable consequences.  Consider:
\begin{align}\label{example_canonical}
  \LET{x}{\GENSYM()}{\lambda y. x=y}
\end{align}
of type $\NAME \FS \BOOL$. It receives a name $y$ as argument and
compares it with fresh name $x$. Since $x$ is never exported to the
outside, no context can ever supply this fresh $x$ to
$\lambda y. x=y$. Hence (\ref{example_canonical}) must be contextually
indistinguishable from $\lambda y.\FALSE$. Operationally, this is
straightforward. But how can we prove this compositionally?
Note that this is not a property of $\lambda y. x=y$, but it is also
not a consequence of $x$'s being freshly generated, for $x$ is also
fresh in this program:
\begin{align}\label{example_canonical_pair}  
  \LET{x}{\GENSYM()}{\PAIR{x}{\lambda y. x=y}}
\end{align}
But in (\ref{example_canonical_pair}), $\lambda y. x=y$ can return
$\TRUE$, for example if we use (\ref{example_canonical_pair}) in this
context:
\[
   \LET{p}{[\cdot]}{(\pi_2\ p)(\pi_1\ p)}
\]

In program logics like \cite{HY04PPDP}, the specification of any
abstraction $\lambda y.M$ will be a universally quantified formula
$\forall y.A$.  With fresh names, instantiation of quantification is a core
difficulty.  Recall that in first-order logic, $\forall y.A$
always implies $A\SUBST{e}{y}$, for all ambient expressions
$e$. Clearly, in the case of (\ref{example_canonical}) we cannot
conclude to $A\SUBST{x}{y}$ from $\forall y.A$, because $x$ is, in
some sense, not available for instantiation. In contrast, in the case
of (\ref{example_canonical_pair}) we can infer $A\SUBST{x}{y}$.  Hence
we need to answer the question how to express logically the inability
to instantiate a universal quantifier with a fresh and hidden name
like $x$ in (\ref{example_canonical}).
We introduce a novel restricted quantifier,
limiting the values based on a type context,
and a new quantifier over type contexts to 
extend the reach of restricted quantifiers.

\section{Programming Language}

Our programming language is essentially the \NUC \ of
\cite{PittsAM:obsproohoftdclnown}, with small additions in particular
pairs, included for the sake of convenience.  We assume a countably
infinite set of variables, ranged over by $x, y, ...$ and a countably
infinite set, disjoint from variables, of names, ranged over by $r,
...$. 
Constants ranged over by $c$ are  Booleans $\TRUE$, $\FALSE$, and Unit $()$.
For simplicity we also call our language \NUC. It is given by
the following grammar, where $\alpha$ ranges over types, $\Gamma$ over
\EMPH{standard type contexts} (STC), $V$ over values and $M$ over
programs.  (Additions over the STLC highlighted.)
\begin{GRAMMAR}
	\begin{array}{lclllllll}
		\alpha
		&\ ::= \ &
		\UNIT \VERTICAL \BOOL \VERTICAL \GREYBOX{\NAME} \VERTICAL \alpha \FS \alpha \VERTICAL \alpha \times \alpha
		&\qquad&
		\Gamma
		&::=&
		\emptyset \VERTICAL \Gamma, x:\alpha
        \end{array}        
	\\\\
	\begin{array}{lclllllll}        
		V
		&\ ::=\ &
		\GREYBOX{r} \VERTICAL \GREYBOX{\GENSYM } \VERTICAL x \VERTICAL c \VERTICAL \lambda x.M
		\VERTICAL \PAIR{V}{V}
		\\[0.5mm]
		M
		& ::= &
		V \VERTICAL MM  \VERTICAL \LET{x}{M}{M} \VERTICAL M = M 
		\\
		& \VERTICAL & \IFTHENELSE{M}{M}{M} \VERTICAL \PAIR{M}{M} \VERTICAL \PROJ{}{i}{M}  
	\end{array}
\end{GRAMMAR}
\NI \EMPH{Free variables} in $M$, written $\FV{M}$ are defined as
usual.  $M$ is \EMPH{closed} if $\FV{M} = \emptyset$.  
There are no binders for names so the set
$\AN{M}$ of \EMPH{all names} in $M$, is given by the obvious
rules, including $\AN{r} = \{r\}$, $\AN{MN} = \AN{M} \cup \AN{N}$. 
If $\AN{M}= \emptyset$ then $M$ is \EMPH{compile-time} syntax.
The $\nu n. M$ constructor from the \NUC \ \cite{PittsAM:obsproohoftdclnown} is equivalent to
$\LET{n}{\GENSYM()}{M}$ as $\GENSYM()$ generates fresh names. 
The typing judgements is $\TYPES{\Gamma}{M}{\alpha}$, 
with the STC $\Gamma$  being an unordered mapping from variables to types.
Typing rules are standard \cite{PierceBC:typsysfpl} with the following extensions:
$\TYPES{\Gamma}{\GENSYM}{\UNIT \FS \NAME}$ and $\TYPES{\Gamma}{r}{\NAME}$.

The operational semantics of our \NUC \ is straightforward and the same as 
\cite{PittsAM:obsproohoftdclnown}.  A \EMPH{configuration of type
  $\alpha$} is a pair $(G, M)$ where $M$ is a closed term of
type $\alpha$, and $G$ a finite set of previously \underline{G}enerated names such that $\AN{M}
\subseteq G$.
The standard call-by-value reduction relation, $\RED$,  
has the following key rules.
\begin{equation*}
  \begin{array}{rclcr}
		(G, \ (\lambda x. M) V) & \quad\RED\quad & (G, \ M \PSUBST{V}{x}) 
		\\
		(G, \ \GENSYM()) & \RED & (G \cup \{n\}, \  n)  &\quad& (n \notin G)
		\\
		(G \cup\{ n\}, \ n = n) & \RED & (G \cup\{ n\}, \ \TRUE)
		\\
		(G \cup\{ n_1,n_2\}, \ n_1 = n_2) & \RED & (G \cup\{ n_1,n_2\}, \ \FALSE) && (n_1 \neq n_2)
  \end{array}
\end{equation*}
\[
		(G, \ M) \RED (G', \ N) \quad\text{implies}\quad (G, \ \EEE{M})  \RED (G', \ \EEE{N})  
\]
Here $M\PSUBST{V}{x}$ is the usual capture-avoiding substitution, and
$\EEE{\cdot}$ ranges over the usual reduction contexts of the
STLC.
Finally,  $\CONV$ is short for $\RED^{*}$.

\section{Logical Language}	\label{def:logical_language_syntax}
This section defines the syntax of the logic.  As is customary for
program logics, ours is an extension of first order logic with
equality (alongside axioms for arithmetic).  \EMPH{Expressions},
ranged over by $e$,$e'$,..., \EMPH{formulae}, ranged over by $A$, $B$,
$C$,~... and \EMPH{Logical Type Contexts} (LTCs), ranged over by
$\GAMMA$, $\GAMMA'$, $\GAMMA_i$, ... , are given by the grammar below.
(Extensions over \cite{HY04PPDP} highlighted.)
\begin{GRAMMAR}
		e
		&\quad::=\quad& x^{\alpha}
		\VERTICAL c  \VERTICAL \PAIR{e}{e} \VERTICAL \pi_i(e) 
		\\[0.5mm]
		\GREYBOX{\GAMMA}
		&::=&
		\GREYBOX{\emptyset} 
		\VERTICAL 
		\GREYBOX{\GAMMA \PLUSV x: \alpha}
		\VERTICAL
		\GREYBOX{\GAMMA \PLUSTC \TCV:\TC}
		\\[0.5mm]
		A
		&::=&
		e = e
		\VERTICAL
		\neg A
		\VERTICAL
		A \AND A
		\VERTICAL
		\EVALFORMULA{e}{e}{x^{\alpha}}{A} 
		\VERTICAL
		\GREYBOX{\FORALL{x^{\alpha}}{\GAMMA} A}
		\VERTICAL 
		\GREYBOX{\FAD{\TCV} A }
\end{GRAMMAR}
Expressions, $e$, are standard, where constants, $c$, range over
Booleans and $()$, but do \EMPH{not} include names or
$\GENSYM$ as constants.  Equality, negation and conjunction are
standard. Evaluation formulae $\ONEEVAL{e}{e'}{m}{A}$ internalise
triples \cite{HY04PPDP} and express that if the program denoted
by $e$ is executed with argument denoted by $e'$, then the result,
denoted by $m$, satisfies $A$. Since the \NUC\ has no recursion, all
applications terminate and we do not distinguish partial from
total correctness.
We write $\EVALFORMULASHORT{e_1}{e_2}{e_3}$ as shorthand for $\EVALFORMULA{e_1}{e_2}{m}{m=e_3}$.

Given variables represent values,
ensuring hidden names cannot be revealed in an unsafe manner requires the idea that a value is \EMPH{derived} from an LTC if a name free term uses the variables in the LTC to evaluate to said value.  
Specifically define a name as reachable from said LTC if it can be derived from it, and hidden otherwise.

Freshness is not an absolute notion. Instead, a name is fresh with
respect to something, in this case names generated in the past, and
future of the computation.  Formulae refer to names by variables, and
variables are tracked in the STC. Freshness is now defined in two
steps: (1) First we characterise freshness of a name w.r.t. the
current STC, 
meaning the name cannot be derived from the variables in the STC.
Then, (2) we define freshness w.r.t. all
future extension of the current STC, details in Sec.~\ref{model}.
The modal operator is used in \cite{YHB07:local:full} in order to express ``for all future extensions'', but
we found modalities inconvenient, since they don't allow us to name
extensions. 
We introduce a new quantifier $\FORALL{x^{\alpha}}{\GAMMA} A$ instead, 
where $\GAMMA$ ranges over LTCs from which $x$ can be derived. 
To make this precise, we need LTCs (explained next), a generalisation of STCs.

\PARAGRAPH{LTCs} Like STCs, LTCs
map variables to types, and are needed for typing expressions,
formulae and triples (introduced in Sec. \ref{triples}), LTCs
generalise STCs in two ways: they are \EMPH{ordered}, and they don't
just contain program variables, but also \EMPH{type context
  variables} (TCVs), ranged over by $\TCV$. TCVs are always mapped
to the new type $\TC$, short for \EMPH{type context}.  The ordering in
LTCs is essential because $\GAMMA \PLUSTC \TCV:\TC$ implies $\TCV$
represents an \EMPH{extension} of the LTC $\GAMMA$.

\PARAGRAPH{Restricted universal quantification}
The meaning of $\FORALL{x^{\alpha}}{\GAMMA} A$ is intuitively simple:
$A$ must be true for all $x$ that range only over values of type
$\alpha$, derived from $\GAMMA$ that do \EMPH{not} reveal hidden
names. For example if the model contained the name $r$ but only as
$\lambda y.y = r$, then $r$ was hidden and whatever $x$ in
$\FORALL{x^{\alpha}}{\GAMMA} A$ ranged over, it must not reveal $r$.
Formalising this requirement is subtle.

\PARAGRAPH{Quantification over LTCs}
Below we formalise the axiomatic semantics of $\GENSYM$ by saying that
the result of each call to this function is fresh w.r.t.~all future
extensions of the present state (with the present state
being included). The purpose of $\FAD{\TCV} A$ is to allow us to
do so: $\FAD{\TCV} A$ implies for
all future states derived from the current state (included), when the LTC for
that state is assigned to the TCV $\TCV$, then $A$ holds.

\PARAGRAPH{A convenient shorthand, the freshness predicate}
We express freshness of the name $x$ relative to the LTC $\GAMMA$ as $
\FORALL{z}{\GAMMA} x \neq z.  $ and, as this predicate is used
pervasively, abbreviate it to $\FRESH{x}{\GAMMA}$.  Intuitively,
$\FRESH{x}{\GAMMA}$, a variant of a similar predicate in
\cite{YHB07:local:full}, states that the name denoted by $x$ is not
derivable, directly or indirectly, from the LTC $\GAMMA$.

\PARAGRAPH{Typing of expressions, formulae and triples} We continue
with setting up definitions that allow us to type expressions,
formulae and triples.  The ordered union of $\GAMMA$ and $\GAMMA'$
with $\DOM{\GAMMA} \cap \DOM{\GAMMA'} = \emptyset$ is written $\GAMMA
\PLUSG \GAMMA'$, and should be understood as: every variable from
$\DOM{\GAMMA}$ comes before every variable from $\DOM{\GAMMA'}$.
Other abbreviations include $\EXISTS{x^{\alpha}}{\GAMMA} A \ \DEFEQ
\ \neg \FORALL{x^{\alpha}}{\GAMMA} \neg A$, and where $\alpha$ is
obvious $(\GAMMA \PLUSV y) \ \DEFEQ \ (\GAMMA \PLUSV y:\alpha)$
(respectively $(\GAMMA \PLUSTC \TCV) \ \DEFEQ \ (\GAMMA \PLUSTC
\TCV:\TC)$).  For simplicity, where not explicitly required,
$\GAMMA\PLUSTC \TCV$ is written $\TCV$.  Functions on LTCs are defined
as expected including mapping variables, $\GAMMA(x)$, and TCVs,
$\GAMMA(\TCV)$; obtaining the domain, $\DOM{\GAMMA}$; ordered removal
of a variable, $\GAMMA \REMOVEVARIABLE x$; ordered removal of all TCV,
$\GAMMA \REMOVETCVfrom$; and removal of TCV to produce a STC,
$\GAMMA\LTCtoSTC$.  We define free variables of LTC, $\FV{\GAMMA} \DEFEQ
\DOM{\GAMMA \LTCtoSTC} \DEFEQ \DOM{\GAMMA \REMOVETCVfrom}$, then free
variables of formulae defined as expected, with the addition of
$\FV{\FRESH{x}{\GAMMA}} \DEFEQ \FV{\GAMMA} \cup \{x\}$,
$\FV{\FORALL{x}{\GAMMA} A} \DEFEQ (\FV{A} \backslash \{x\}) \cup
\FV{\GAMMA}$, and $\FV{\FAD{\TCV} A}  \DEFEQ \FV{A}$.  Similarly
$\FTCV{\GAMMA}$ and $\FTCV{A}$ define all TCV occurring in $\GAMMA$
and unbound by $\FAD{\TCV}$ in $A$ respectively, calling $\GAMMA$
\EMPH{TCV-free} if $\FTCV{\GAMMA} \DEFEQ \emptyset$.  The typing judgement for 
LTCs, written $\TCTYPES{\GAMMA}{\GAMMA'}$, checks that $\GAMMA'$ is an
`ordered subset' of $\GAMMA$.  Type checks on expressions, formulae
and triples use LTC as the base, written
$\EXPRESSIONTYPES{\GAMMA}{e}{\alpha}$, $\FORMULATYPES{\GAMMA}{A}$ and
$\JUDGEMENTTYPES{\GAMMA}{\ASSERT{A}{M}{u}{B}}$ respectively.  Fig.~\ref{figure_typing_formulae} gives the rules defining the typing
judgements. From now on we adhere to the following convention:
\EMPH{All expressions, formulae and triples are typed}, and we will
mostly omit being explicit about typing.

\PARAGRAPH{Advanced substitutions}
Reasoning with quantifiers requires quantifier instantiation. This
is subtle with $\FAD{\TCV} A$, and we need to
define two substitutions, $A\LSUBSTLTC{e}{x}{\GAMMA}$ (substitutes
expressions for variables) and $A\LSUBSTLTC{\GAMMA_0}{\TCV}{\GAMMA}$
(LTCs substituted for TCVs).  First extend the definition
\EMPH{$e$ is free for $x^{\alpha}$ in $A$} in \cite{Mendelson}, to 
ensure if $e$ contains destructors i.e.~$\pi_i( \ )$ or
$\EQA{}{}$, then all free occurrences of $x$ in any LTC $\GAMMA_0$ in
$A$ must imply $\EXPRESSIONTYPES{\GAMMA_0}{e}{\alpha}$.  Below, we
assume the standard substitution $e\SUBST{e'}{x}$ of expressions for
variables in expressions, simple details omitted.

We define $A\LSUBSTLTC{e}{x}{\GAMMA}$, \EMPH{logical substitution of
$e$ for $x$ in A in the context of $\GAMMA$}, 
if $e$ is free for $x$ in $A$ and $e$ is typed by $\GAMMA$,
by the following clauses (simple cases omitted) and the auxiliary operation on LTCs below.
We often write $A\LSUBSTLTC{e}{x}{}$ for $A\LSUBSTLTC{e}{x}{\GAMMA}$.
\begin{FIGURE}
  \begin{RULES}
	\ONEPREMISERULE
	{
	b \in \{\TRUE, \FALSE\}
	}
	{
	\EXPRESSIONTYPES{\GAMMA}{b}{\BOOL}
	}
	\quad 
	\ZEROPREMISERULE
	{
	\EXPRESSIONTYPES{\GAMMA}{()}{\UNIT}
	}
	\quad 
	\ONEPREMISERULE
	{
	\GAMMA(x) = \alpha
	}
	{
	\EXPRESSIONTYPES{\GAMMA}{x}{\alpha}
	}
	\quad
	\TWOPREMISERULE
	{
	\EXPRESSIONTYPES{\GAMMA}{e}{\alpha}
	}
	{
	\EXPRESSIONTYPES{\GAMMA}{e'}{\beta}
	}
	{
	\EXPRESSIONTYPES{\GAMMA}{\PAIR{e}{e'}}{\alpha \times \beta}
	}
	\quad
	\ONEPREMISERULE
	{
	\EXPRESSIONTYPES{\GAMMA}{e}{\alpha_1 \times \alpha_2}
	}
	{
	\EXPRESSIONTYPES{\GAMMA}{\pi_i(e)}{\alpha_i}
	}
	\\\\
	\ZEROPREMISERULE
	{
		\TCTYPES{\GAMMA}{\emptyset}
	}
	\quad
	\ONEPREMISERULE
	{
		\TCTYPES{\GAMMA}{\GAMMA_0}
	}
	{
		\TCTYPES{\GAMMA \PLUSV x:\alpha}{\GAMMA_0 \PLUSV x:\alpha}
	}
	\quad
	\ONEPREMISERULE
	{
		\TCTYPES{\GAMMA}{\GAMMA_0}
	}
	{
		\TCTYPES{\GAMMA \PLUSTC \TCV}{\GAMMA_0 \PLUSTC \TCV}
	}
	\quad
	\ONEPREMISERULE
	{
		\TCTYPES{\GAMMA}{\GAMMA_0}
	}
	{
		\TCTYPES{\GAMMA \PLUSG \GAMMA'}{\GAMMA_0}
	}
	\\\\
	\TWOPREMISERULE
	{
		\EXPRESSIONTYPES{\GAMMA}{e_1}{\alpha}
	}
	{
		\EXPRESSIONTYPES{\GAMMA}{e_2}{\alpha}
	}
	{
	\FORMULATYPES{\GAMMA}{e_1 = e_2}
	}
	\quad
	\TWOPREMISERULE
	{
	\FORMULATYPES{\GAMMA}{A_1}
	}
	{
	\FORMULATYPES{\GAMMA}{A_2}
	}
	{
	\FORMULATYPES{\GAMMA}{A_1 \PAND A_2}
	}
	\quad
	\ONEPREMISERULE
	{
	\FORMULATYPES{\GAMMA}{A}
	}
	{
	\FORMULATYPES{\GAMMA}{\neg A}
	}
	\quad
	\ONEPREMISERULE
	{
	\FORMULATYPES{\GAMMA\PLUSTC \TCV:\TC }{A}
	}
	{
	\FORMULATYPES{\GAMMA}{\FAD{\TCV} A}
	}
	\\\\
	\THREEPREMISERULE
	{
	\EXPRESSIONTYPES{\GAMMA}{e}{\alpha \FS \beta}
	}
	{
	\EXPRESSIONTYPES{\GAMMA}{e'}{\alpha}
	}
	{
	\FORMULATYPES{\GAMMA \PLUSV x : \beta}{A}
	}
	{
	\FORMULATYPES{\GAMMA}{\ONEEVAL{e}{e'}{x^{\beta}}{A}}
	}
	\quad
	\TWOPREMISERULE
	{
	\EXPRESSIONTYPES{\GAMMA}{x}{\NAME}
	}
	{
	\TCTYPES{\GAMMA}{\GAMMA'}
	}        
	{
	\FORMULATYPES{\GAMMA}{\FRESH{x}{\GAMMA'}}
	}
	\\\\
	\TWOPREMISERULE
	{
	\TCTYPES{\GAMMA}{\GAMMA'}
	}
	{
	\FORMULATYPES{\GAMMA \PLUSV x : \alpha}{A}
	}
	{
	\FORMULATYPES{\GAMMA}{\FORALL{x^{\alpha}}{\GAMMA'} A}
	}
	\quad
	\THREEPREMISERULE
	{
		\FORMULATYPES{\GAMMA}{A}
	}
	{
		\TYPES{\GAMMA\LTCtoSTC}{M}{\alpha}
	}
	{
		\FORMULATYPES{\GAMMA \PLUSV m :\alpha}{B}
	}
	{
		\JUDGEMENTTYPES{\GAMMA}{\ASSERT{A}{M}{m}{B}}
	}
  \end{RULES}
  \vspace{-4mm}
  \caption{Typing rules for LTCs, expressions, formulae and triples
    (see Sec. \ref{triples}).  Simple cases omitted. $M$ in the last
    rule is compile-time syntax.}\label{figure_typing_formulae}
\label{def:typing_triples}
\end{FIGURE}

\begin{equation*}\label{def:sec:logical_substitution}
\begin{array}{l}
	\begin{array}{rcll}                
    \\
    (\ONEEVAL{e_1}{e_2}{m}{A})\LSUBSTLTC{e}{x}{\GAMMA}
    & \ \DEFEQ \ &
    \ONEEVAL{e_1\SUBST{e}{x}}{e_2\SUBST{e}{x}}{m}{A\LSUBSTLTC{e}{x}{\GAMMA \PLUSV m}}
    &
    (x\neq m, \ m \notin \FV{\GAMMA})
    \\
    (\FRESH{y}{\GAMMA'})\LSUBSTLTC{e}{x}{\GAMMA}
    & \ \DEFEQ \ &
	\FRESH{y\SUBST{e}{x}}{(\GAMMA'\LSUBSTLTC{e}{x}{\GAMMA})}
    \\
    (\FORALL{m}{\GAMMA'} A)\LSUBSTLTC{e}{x}{\GAMMA} 
    & \ \DEFEQ \ &
    \FORALL{m}{\GAMMA'\LSUBSTLTC{e}{x}{\GAMMA}} (A\LSUBSTLTC{e}{x}{\GAMMA \PLUSV m}) 
    &
    (x \neq m, \ m \notin \FV{\GAMMA})
    \\
    (\FAD{\TCV} A)\LSUBSTLTC{e}{x}{\GAMMA}
	& \ \DEFEQ \ &
    \FAD{\TCV} (A\LSUBSTLTC{e}{x}{\GAMMA \PLUSTC \TCV})
	\end{array}
	\\
	\GAMMA'\LSUBSTLTC{e}{x}{\GAMMA}  \DEFEQ
	\begin{cases}
		\GAMMA_e' \text{ s.t. } 
		\DOM{\GAMMA_e'} = \FV{e} \cup \DOM{\GAMMA'\REMOVEVARIABLE x}, 
		\TCTYPES{\GAMMA}{\GAMMA_e'} 
		& x \in \DOM{\GAMMA'}
		\\
		\GAMMA'  & x \notin \DOM{\GAMMA'}
	\end{cases}
\end{array}
\end{equation*}

\EMPH{Type context substitution}
$A\LSUBSTLTC{\GAMMA_0}{\TCV}{\GAMMA}$ instantiates $\TCV$ with
$\GAMMA_0$ in $A$, similar to classical substitution.
We often write $\LSUBST{\GAMMA_0}{\TCV}$ for $\LSUBSTLTC{\GAMMA_0}{\TCV}{\GAMMA}$ as $\GAMMA$ is used for ordering and is obvious. 
As above, the omitted cases are straightforward and the auxiliary operation on LTCs is included.
\begin{equation*}\label{def:LSUBST_LTC}
\begin{array}{l}
  \begin{array}{rclr}
			(\FRESH{x}{\GAMMA'})\LSUBSTLTC{\GAMMA_0}{\TCV}{\GAMMA} 
			& \ \DEFEQ \ &
			\FRESH{x}{(\GAMMA'\LSUBSTLTC{\GAMMA_0}{\TCV}{\GAMMA})}
			\\
			(\ONEEVAL{e_1}{e_2}{m}{A})\LSUBSTLTC{\GAMMA_0}{\TCV}{\GAMMA} 
			& \ \DEFEQ \ &
			\ONEEVAL{e_1}{e_2}{m}{A\LSUBSTLTC{\GAMMA_0}{\TCV}{\GAMMA \PLUSV m}}
			&
			(m \notin \DOM{\GAMMA_0})
			\\
			(\FORALL{x}{\GAMMA'} A)\LSUBSTLTC{\GAMMA_0}{\TCV}{\GAMMA} 
			& \ \DEFEQ \ &
			\FORALL{x}{\GAMMA'\LSUBSTLTC{\GAMMA_0}{\TCV}{\GAMMA}} (A\LSUBSTLTC{\GAMMA_0}{\TCV}{\GAMMA \PLUSV x})
			&
			(x \notin \DOM{\GAMMA_0})
			\\
			(\FAD{\TCV'} A)\LSUBSTLTC{\GAMMA_0}{\TCV}{\GAMMA}
            & \ \DEFEQ \ &
			\begin{cases}
				(\FAD{\TCV'} A\LSUBSTLTC{\GAMMA_0}{\TCV}{\GAMMA \PLUSTC \TCV'})	
				& 
				\TCV \neq \TCV' 
				\\
				\FAD{\TCV} A &\text{otherwise}
			\end{cases}
			& 
			\begin{array}{r}
				(\TCV' \notin \DOM{\GAMMA_0})
				\\ \
			\end{array}
  \end{array}
\\
\GAMMA'\LSUBSTLTC{\GAMMA_0}{\TCV}{\GAMMA}  
\ \DEFEQ \
\begin{cases}
	\GAMMA_1 \text{ s.t. } 
	\DOM{\GAMMA_1} = \DOM{\GAMMA_0, \GAMMA'}, 
	\TCTYPES{\GAMMA}{\GAMMA_1} 
	& \TCV \in \DOM{\GAMMA'}
	\\
	\GAMMA'  & \TCV \notin \DOM{\GAMMA'}
\end{cases}
\end{array}
\end{equation*}

\section{Model}\label{model}

	We define a \EMPH{model} $\MMM$ as a finite (possibly empty) map
	from variables and TCV to closed values and TCV-free LTCs respectively.

	\[\MMM \ ::= \ \emptyset \VERTICAL \MMM \cdot x:V \VERTICAL \MMM \cdot \TCV:\GAMMA'\]

	Standard actions on models $\MMM$ are defined as expected and include:
	variable mappings to values, $\MMM(x)$, or TCV mapping to LTC, $\MMM(\TCV)$;
	removal of variable $x$ as $\MMM \REMOVEVARIABLE x$ (with $(\MMM \cdot \TCV:\GAMMA_1)\REMOVEVARIABLE x = (\MMM \REMOVEVARIABLE x) \cdot \TCV: (\GAMMA_1 \REMOVEVARIABLE x)$);
	removal of TCV $\TCV$ as $\MMM\REMOVEVARIABLE \TCV$;
	removal of all TCVs as $\MMM \REMOVETCVfrom$;
	and defining all names in $\MMM$ as $\AN{\MMM}$ noting that $\AN{\GAMMA}= \emptyset$.

	A model  $\MMM$ is typed by a LTC $\GAMMA$ written
	$\MMM^{\GAMMA}$, if $\MODELTYPES{\GAMMA}{\MMM}$ as defined below,
	were $\GAMMA_d = \GAMMA_d \REMOVETCVfrom$ formalises that
	$\GAMMA_d$ is TCV-free.
\[
		\ZEROPREMISERULE
		{
		\MODELTYPES{\emptyset}{\emptyset}
		}
		\qquad
		\TWOPREMISERULE
		{
			\MODELTYPES{\GAMMA}{\MMM}
		}{
			\TYPES{\emptyset}{V}{\alpha}
		}{
			\MODELTYPES{\GAMMA \PLUSV x:\alpha}{\MMM \cdot x:V}
		}
		\qquad
		\THREEPREMISERULE
		{
			\MODELTYPES{\GAMMA}{\MMM}
		}{
			\TCTYPES{\GAMMA}{\GAMMA_d}
		}{
			\GAMMA_d = \GAMMA_d \REMOVETCVfrom
		}{
			\MODELTYPES{\GAMMA \PLUSTC \TCV}{\MMM \cdot \TCV:\GAMMA_d}
		}
\]

	The \EMPH{closure} of a term $M$ by a model $\MMM$, written $M \MMM $ is defined as standard with the additions, $\GENSYM\MMM \DEFEQ \GENSYM$ and $r\MMM \DEFEQ r$.
	Noting that $M \MMM \REMOVETCVfrom = M \MMM =  M \MMM\cdot \TCV:\GAMMA'$ holds for all $\TCV$ and $\GAMMA'$ as $\TYPES{\GAMMA\LTCtoSTC}{M}{\alpha}$.

	The \EMPH{interpretation of expression} $e$ in a model
        $\MMM^{\GAMMA}$, written $\SEM{e}{\MMM}$, is standard,
        e.g.~$\SEM{c}{\MMM} \DEFEQ c$, ~$\SEM{x}{\MMM} \DEFEQ \MMM(x)$, $\SEM{\PAIR{e}{e'}}{\MMM} \DEFEQ
        \PAIR{\SEM{e}{\MMM}}{\SEM{e'}{\MMM}}$, etc.

	The \EMPH{interpretation of LTCs} $\GAMMA_0$ in a model $\MMM^{\GAMMA}$, written
	$\SEM{\GAMMA_0}{\MMM}$, outputs a STC. It is assumed $\TCTYPES{\GAMMA}{}$ the LTC in the following definition:
	\[
		\SEM{\emptyset}{\MMM} \DEFEQ \emptyset
		\qquad
		\SEM{\GAMMA_0 \PLUSV x:\alpha}{\MMM} \DEFEQ  \SEM{\GAMMA_0}{\MMM}, x:\alpha
		\qquad
		\SEM{\GAMMA_0 \PLUSTC \TCV:\TC}{\MMM} \DEFEQ  \SEM{\GAMMA_0}{\MMM} \cup \SEM{\MMM(\TCV)}{\MMM}
	\]

Write $\LTCDERIVEDVALUE{M}{\GAMMA}{\MMM}{V}$ as the \EMPH{derivation of a value}
$V$ from term $M$ which  is typed by the LTC $\GAMMA$ 
and closed and evaluated in a model $\MMM$.
This ensures names are derived from actual reachable values in $\MMM$ as if they
were programs closed by the model, hence not revealing hidden names from $\MMM$.
$\LTCDERIVEDVALUE{M}{\GAMMA}{\MMM}{V}$
holds exactly when:
\begin{itemize}
	\item	$	\AN{M}=\emptyset$
	
	\item	$	\TYPES{\SEM{\GAMMA}{\MMM}}{M}{\alpha} $
	
	\item	$	(\AN{\MMM}, M\MMM) \CONV (\AN{\MMM} \cup G', \ V)$
\end{itemize}

Model extensions aim to capture the fact that models represent real states of execution, by stating a model is only constructed by evaluating terms derivable from the model.

  A model $\MMM'$ is a \EMPH{single step model extension} to another
  model $\MMM^{\GAMMA}$, written $\MMM \EXTSINGLE \MMM'$, if the single new
  value in $\MMM'$ is derived from $\MMM$ or the mapped LTC is $\GAMMA$ with TCVs removed. Formally
  $
  \MMM^{\GAMMA} \EXTSINGLE \MMM'
  $
  holds if either of the following hold:
  \begin{itemize}
  \item There is $M_y^{\alpha}$ such that
    $\LTCDERIVEDVALUE{M_y}{\GAMMA}{\MMM}{V_y}$ and  $\MMM'^{\GAMMA \PLUSV y:\alpha} =  \MMM \cdot y:V_y$.

  \item $\MMM'^{\GAMMA \PLUSTC \TCV} = \MMM
    \cdot \TCV: \GAMMA\REMOVETCVfrom$ for some $\TCV$.

  \end{itemize}
  We write $\EXTSTAR$ for the transitive, reflexive closure of
  $\EXTSINGLE$. If $\xi \EXTSTAR \xi'$ we say $\xi'$ is an
  \EMPH{extension} of $\xi$ and $\xi$ is a \EMPH{contraction} of
  $\xi'$.

  A model $\MMM$ is \EMPH{constructed by $\GAMMA$}, written
  $\CONSTRUCT{\GAMMA}{\MMM}$, if any TCV represents a model extension.
  Formally we define $\CONSTRUCT{\GAMMA}{\MMM}$ by the following rules:
  \begin{RULES}
    \ZEROPREMISERULE{
      \CONSTRUCT{\emptyset}{\emptyset}
    }
    \quad
    \TWOPREMISERULE{
      \CONSTRUCT{\GAMMA}{\MMM}
    }{
      \text{exists}\ M^{\alpha}. \LTCDERIVEDVALUE{M}{\GAMMA}{\MMM}{V}
    }{
      \CONSTRUCT{\GAMMA \PLUSV x:\alpha}{\MMM \cdot x:V}
      }
    \quad
    \THREEPREMISERULE{
      \CONSTRUCT{\GAMMA}{\MMM_0}
    }{
      \MMM_0 \EXTSTAR \MMM^{\GAMMA_2}
    }{
      \GAMMA_1 = \GAMMA_2 \REMOVETCVfrom
    }{
      \CONSTRUCT{\GAMMA \PLUSTC \TCV}{\MMM \cdot \TCV:\GAMMA_1}
    }
  \end{RULES}
  A model $\MMM^{\GAMMA}$ is  \EMPH{well constructed}
  if there exists an LTC, $\GAMMA'$, such that
  $\CONSTRUCT{\GAMMA'}{\MMM}$, noting that
  $\TCTYPES{\GAMMA}{\GAMMA'}$.

Model extensions and well constructed models represent models
derivable by \NUC\ programs, ensuring names cannot be revealed by later
programs.  Consider the basic model: $y:\lambda
a. \IFTHENELSE{a=r_1}{r_2}{r_3}$, \ if $r_1$ could be added to the
model, this clearly reveals access to $r_2$ otherwise $r_2$ is hidden.
Hence the assumption that all models are well constructed from here onwards.

\EMPH{Contextual equivalence} of two terms requires them to be contextually indistinguishable in all variable-closing single holed contexts of Boolean type in any valid configuration, as is standard \cite{BentonN:mechbisftnc, stark:namhof}.
When $M_1$ and $M_2$ are closed terms of type $\alpha$ and $\AN{M_1}\cup \AN{M_2} \subseteq G$, we write 
$M_1 \CONGCONTEXT{\alpha}{G} M_2$  to be equivalent to 
$G, \emptyset \vdash M_1 \equiv M_2 :\alpha$ from \cite{BentonN:mechbisftnc}.

\subsection{Semantics}
The \EMPH{satisfaction relation} for formula $A$ in a well constructed model $\MMM^{\GAMMA}$, written $\MMM \models A$, assumes $\FORMULATYPES{\GAMMA}{A}$, and is defined as follows:
	\begin{itemize}
		\item $\MMM \models e = e'$ \ if \ $\SEM{e}{\MMM}  \CONGCONTEXT{\alpha}{\AN{\MMM}} \SEM{e'}{\MMM}$.

		\item $\MMM \models \neg A$ \ if \ $\MMM \not \models A$.

		\item $\MMM \models A \PAND B$ \ if \ $\MMM \models A$ and $\MMM \models B$.

		\item $\MMM \models \ONEEVAL{e}{e'}{m}{A}$
		\ if \
		$\LTCDERIVEDVALUE{\SEM{e}{\MMM}\SEM{e'}{\MMM}}{\emptyset}{\MMM}{V}$ and  $\MMM \cdot m:V \models A  $
		\item $\MMM \models \FORALL{ x^{\alpha}}{\GAMMA'} A$ \
		if for all  $M. \
		\LTCDERIVEDVALUE{M}{\GAMMA'}{\MMM}{V}$
		implies $\MMM \cdot x:V \models A
		$

		\item $\MMM \models \FAD{\TCV} A$ if forall
                  $\MMM'^{\GAMMA'}.  \MMM \EXTSTAR \MMM'$ implies
                  $\MMM' \cdot \TCV : (\GAMMA' \REMOVETCVfrom) \models
                  A$

 		\item $\MMM \models \FRESH{x}{\GAMMA_0} $ \ if \
 		there is no  $M_x$ such that $\LTCDERIVEDVALUE{M_x}{\GAMMA_0}{\MMM}{\SEM{x}{\MMM}}$
                  
	\end{itemize}
In first-order logic, if a formula is satisfied by a model, then it is
also satisfied by extensions of that model, and vice-versa (as long as
all free variables of the formula remain in the model). This can no
longer be taken for granted in our logic. Consider the formula $
\FAD{\TCV} \EXISTS{z}{\TCV}(\FRESH{z}{\GAMMA} \PAND
\neg\FRESH{z}{\TCV}) $ Validity of this formula depends on how many
names exist in the ambient model: it may become invalid under
contracting the model.  Fortunately, such formulae are rarely needed
when reasoning about programs.  In order to simplify our
soundness proofs we will therefore restrict some of our axioms and rules to formulae that are stable
under model extension and contractions.  Sometimes we need a weaker
property, where formulae preserve their validity when a variable is
removed from a model. Both concepts are defined semantically next.

We define formula $A$ as \EMPH{model extensions independent}, short $\EXTINDEP$,
if for all $\GAMMA, \MMM^{\GAMMA}, \MMM'$ such
that $\FORMULATYPES{\GAMMA}{A}$ and $ \MMM \EXTSTAR \MMM'$ we
have: $ \MMM \models A \ \text{iff}\ \MMM' \models A $.

We define formula $A$ as \EMPH{thin} w.r.t.~$x$, written $A
\THINWRT{x^{\alpha}}$, if for all $\GAMMA$ such that
$\FORMULATYPES{\GAMMA\REMOVEVARIABLE x}{A}$ and $x^{\alpha} \in \DOM{\GAMMA}$ we
have for all well constructed models $\MMM^{\GAMMA}$ and $\MMM \REMOVEVARIABLE x$ that: 
$ \MMM \models A
\ \text{implies}\ \MMM \REMOVEVARIABLE x \models A.  $

\section{Axioms}\label{axioms}
\label{sec:axioms}

Axioms and axiom schemas are similar in intention to those of the logic for the STLC, but expressed within the constraints of our logic.
Axiom schemas are indexed by the LTC that types them and the explicit types where noted.
We introduce the interesting axioms (schemas) and those used in Sec. \ref{reasoning}.

Equality axioms are standard where $(eq1)$ allows for substitution. 
Most axioms for universal quantification over LTCs $(u1)$-$(u5)$ are inspired by those of first order logic. 
The exceptions are $(u2)$ which allows for the reduction of LTCs and $(u5)$ which holds only on $\NAME$-free types.
Axioms for existential quantification over LTCs $(ex1)$-$(ex3)$ are new aside from $(ex1)$ which is the dual of $(u1)$.
Axiom $(ex2)$ introduces existential quantification from evaluation formulae that produce a fixed result.
Reducing $\GAMMA$ in $\EXISTS{x}{\GAMMA} A$ is possible via $(ex3)$ for a specific structure. 
We use base types $\TYBASE \ ::= \ \UNIT \VERTICAL \BOOL \VERTICAL \TYBASE \times \TYBASE$ as core lambda calculus types excluding functions.
Freshness axioms $(f1)$-$(f2)$ show instances LTCs can be extended, whereas $(f3)$-$(f4)$ reduce the LTC.
Axiom $(f1)$ holds due to $f$ being derived from $\GAMMA \PLUSV x$, and the rest are trivial.

\begin{RULES}
	\begin{array}{l}
		\begin{array}{lrclr}
			(eq1)
			&
			\GAMMA \Vdash 
			A(x) \PAND x= e 
			& \PIFF &
			A(x)\LSUBSTLTC{e}{x}{\GAMMA}
			\\
			(u1) \quad
			&
			\GAMMA \Vdash 
			\FORALL{x^{\alpha}}{\GAMMA_0} A & \quad \PIMPLIES \quad & A \LSUBSTLTC{e}{x}{\GAMMA}
			&
			\TYPES{\GAMMA_0}{e}{\alpha} 
			\\
			(u2)
			&
			\FORALL{x}{\GAMMA_0 \PLUSG \GAMMA_1}A & \PIMPLIES & (\FORALL{x}{\GAMMA_0}A ) \PAND (\FORALL{x}{\GAMMA_1} A )
			\\
			(u3)
			&	
			A^{-x} & \PIFF & \FORALL{x}{\GAMMA_0} A^{-x}
			&
			A-\EXTINDEP 
			\\
			(u4)
			&
			\FORALL{x}{\GAMMA_0}(A \PAND B) & \PIFF & (\FORALL{x}{\GAMMA_0}A ) \PAND (\FORALL{x}{\GAMMA_0} B )
			\\
			(u5)
			&
			\FORALL{x^{\alpha}}{\GAMMA_0} A & \PIFF & \FORALL{x^{\alpha}}{\emptyset}  A
			&
			\hspace{-1cm}
			\text{$\alpha$ is $\NAME$-free}
		\end{array}
		\\\\
		\begin{array}{lrclr}
			(ex1)
			&
			\GAMMA \Vdash 
			A\LSUBSTLTC{e}{x}{\GAMMA} & \quad \PIMPLIES \quad  & \EXISTS{x'}{\GAMMA_0} A 
			&
			\hspace{-0.3cm}
			\TCTYPES{\GAMMA}{\GAMMA_0} \text{ and } \TYPES{\GAMMA_0}{e}{\alpha}
			\\
			(ex2)
			&
			\GAMMA \PLUSV x \PLUSTC \GAMMA_0 \Vdash 
			\ONEEVAL{a}{b}{c}{c=x}
			& \PIMPLIES &
			\EXISTS{x'}{\GAMMA_0} x=x'
			&
			\{ a,b \} \subseteq \DOM{\GAMMA_0}
			\\
			\multicolumn{5}{l}{
				(ex3) \quad
				\GAMMA \PLUSV x \Vdash 
				\FORALL{y}{\emptyset}\EXISTS{z^{\NAME}}{\GAMMA_0 \PLUSV y} x=z
				\quad \PIMPLIES \quad
				\EXISTS{z}{\GAMMA_0} x=z
			}
		\end{array}
		\\\\
		\begin{array}{lrclr}
			(f1) 
			\quad
			&
			\GAMMA\PLUSV  x \PLUSV f :\alpha \FS \TYBASE \Vdash  \FRESH{x}{\GAMMA} & \quad \PIMPLIES \quad & \FRESH{x}{\GAMMA \PLUSV f: \alpha \FS \TYBASE}
			\\
			(f2) 
			&
			\GAMMA\Vdash  
			\FRESH{x}{\GAMMA_0} \PAND \FORALL{y^{\alpha}}{\GAMMA_0}A 
			& \PIFF &
			\FORALL{y^{\alpha}}{\GAMMA_0}(\FRESH{x}{(\GAMMA_0 \PLUSV y)} \PAND A) 
			\hspace{-1cm}
			\\
			(f3)
			& 
			\FRESH{x}{\GAMMA_0} 
			& \PIMPLIES &
			x \neq e
			&
			\TYPES{\GAMMA_0}{e}{\NAME} 
			\\
			(f4) 
			&
			\FRESH{x}{(\GAMMA_0 \PLUSG \GAMMA_1)} 
			& \PIMPLIES &
			\FRESH{x}{\GAMMA_0} \PAND \FRESH{x}{ \GAMMA_1}
		\end{array}
	\end{array}
\end{RULES}

Axioms for quantification over LTCs are also similar to those for the classical universal quantifier except (utc2) which extends the restricted quantifier to any future LTC which can only mean adding fresh names.
\begin{RULES}
	\begin{array}{lr c lr}
	(utc1) 
	\quad
	&
	\GAMMA \Vdash  
	\FAD{\TCV} A & \quad \PIMPLIES \quad &  A \LSUBSTLTC{\GAMMA}{\TCV}{\GAMMA}
	\\
	(utc2) 
	& 
	\GAMMA \Vdash 
	\FORALL{x^{\NAME}}{\GAMMA} A^{-\TCV}
	& \PIFF &  
	\FAD{\TCV} \FORALL{x^{\NAME}}{\GAMMA \PLUSTC \TCV}  A
	&
	\quad
	A-\EXTINDEP
	\\
	(utc3) 
	&
	A^{-\TCV} 
	& \PIFF &  
	\FAD{\TCV} A^{-\TCV} 
	&
	A-\EXTINDEP
	\\
	(utc4) 
	&
	\FAD{\TCV} (A \PAND B) 
	&\PIFF &
	(\FAD{\TCV} A) \PAND (\FAD{\TCV} B)
	\end{array}
\end{RULES}

Axioms for the evaluation formulae are similar to those of \cite{BergerM:prologfhgrtmp}. The interaction between evaluation formulae and the new constructors are shown. 
All STLC values are included in the variables of $\NAME$-free type, and 
if we let $\mathsf{Ext}(e_2, e_2)$ stand for $\FORALL{x}{\emptyset} \ONEEVAL{e_1}{x}{m_1}{\ONEEVAL{e_2}{x}{m_2}{m_1 = m_2}}$
then (ext) maintains extensionality in this logic for the STLC terms.
Typing restrictions require $m\notin \FV{A}$ in $(e1)$ and $\FV{e_1,e_2,m} \cap \DOM{\GAMMA \PLUSV x} = \emptyset$ in $(e2)$.

\begin{RULES}
	\begin{array}{lr c lr}
		(e1)
		& 
		\ONEEVAL{e_1}{e_2}{m}{A \PAND B} 
		& \quad \PIFF \quad & 
		(A \PAND \ONEEVAL{e_1}{e_2}{m}{B})
		& 
		A-\EXTINDEP
		\\
		(e2)
		&
		\ONEEVAL{e_1}{e_2}{m}{\FORALL{x}{\GAMMA}A} 
		& \PIFF & 
		\FORALL{x}{\GAMMA} \ONEEVAL{e_1}{e_2}{m}{A})
		\\
		(e3) 
		&
		\ONEEVAL{e_1}{e_2}{m^{\TYBASE}}{\FAD{\TCV} A} 
		& \PIFF &
		\FAD{\TCV} \ONEEVAL{e_1}{e_2}{m^{\TYBASE}}{A} 
		&
		A-\EXTINDEP
		\\
		(ext)
		&
		\mathsf{Ext}(e_1, e_2)
		& \PIFF & 
		e_1 =^{\alpha_1 \FS \alpha_2} e_2
		&
		\hspace{-1cm}
		\alpha_1 \FS \alpha_2 \text{ is $\NAME$-free}
	
	\end{array}
\end{RULES}
\section{Rules}\label{rules}

\label{triples}
Our logic uses standard \EMPH{triples} $ \ASSERT{A}{M}{m}{B} $ where
in this logic, the program $M$ is restricted to compile-time syntax.
Triples are typed by the rule in Fig.~\ref{figure_typing_formulae}.
Semantics of triples is standard: if the \EMPH{pre-condition} $A$
holds and the value derived from $M$ is assigned to the \EMPH{anchor}
$m$ then the \EMPH{post-condition} $B$ holds. In detail: let
$\MMM^{\GAMMA}$ be a model.
\[
\MMM^{\GAMMA} \models \ASSERT{A}{M}{m}{B}
\qquad
\text{iff}
\qquad
\MMM \models A 
\ \text{implies}\ 
(\LTCDERIVEDVALUE{M}{\GAMMA}{\MMM}{V}
\ \text{and}\ \MMM \cdot m:V \models B )
\]
The triple is \EMPH{valid},  written $\models \ASSERT{A}{M}{m}{B}$,
if for all $\GAMMA$ and $\MMM_0^{\GAMMA_0}$ we have
\[
\JUDGEMENTTYPES{\GAMMA}{\ASSERT{A}{M}{m}{B}}
\ \text{and}
\ \CONSTRUCT{\GAMMA}{\MMM_0}\ \text{together
	imply:}\ \MMM_0 \models \ASSERT{A}{M}{m}{B}
\]

From here on we will assume all models are well constructed, noting
that the construction of models is the essence of $\FAD{\TCV}$ as it
allows for all possible future names generated. Variables occurring in
$\DOM{\MMM_0}-\DOM{\GAMMA}$ may never occur directly in the triple,
but their mapped values will have an effect.

The rules of inference can be found in Fig.~\ref{figure_rules_language}  and Fig.~\ref{figure_rules_structural}.  
We write $\vdash \ASSERT{A}{M}{m}{B}$ to indicate that $\ASSERT{A}{M}{m}{B}$ can be derived from these rules.
Our rules are similar to those of \cite{HY04PPDP} for vanilla
$\lambda$-calculi, but suitably adapted to the effectful nature of the
\NUC.  All rules are typed. The typing of rules follows the
corresponding typing of the programs occurring in the triples, but
with additions to account for auxiliary variables.  We have two
substantially new rules: \RULENAME{[Gensym]} and
\RULENAME{[Let]}. The former lets us reason about fresh name
creation by $\GENSYM$, the latter about the
$\LET{x}{M}{N}$. Operationally, $\LET{x}{M}{N}$ is often just an
abbreviation for $(\lambda x.N)M$, but we have been unable to derive
\RULENAME{[Let]} using the remaining rules and axioms.
Any syntactic proof of \RULENAME{[Let]} requires \RULENAME{[Lam]} and \RULENAME{[App]}, 
which requires the postcondition: $C \THINWRT{p}$ for $p$ the anchor of the \RULENAME{[Lam]} rule.
We have not been able to prove this thinness for all models of the relevant type.

In comparison with \cite{HY04PPDP, YHB07:local:full}, 
the primary difference with our rules is our substitution. 
Our changes to substitution only affects \RULENAME{[Eq]} and \RULENAME{[Proj($i$)]} 
which are reduced in strength by the new definition of substitution 
as more constraints are placed on the formulae to ensure correct substitution occurs. 
All other rules remain equally strong.
The other difference from \cite{HY04PPDP} is the need for thinness to replace the standard `free from', which is discussed above.
Removal of variables via thinness is required in the proof of soundness, 
for example \RULENAME{[App]}, which produces $u$ from $m$ and $n$, 
hence $\EXTINDEP$ \ is insufficient given the order of 
$\GAMMA \PLUSV m \PLUSV n \PLUSV u \Vdash C$, i.e.~$u$ introduced after $m$ and $n$.
We explain the novelty in the rules in more detail below.

\begin{FIGURE}
  \begin{RULES}
	\ZEROPREMISERULENAMEDRIGHT
        {
		\ASSERT{A\LSUBST{x}{m}}{x}{m}{A}
        }{[Var]}
	\quad
    \ZEROPREMISERULENAMEDRIGHT
    {
    	\ASSERT{\TRUTH}{\GENSYM}{u}{\FAD{\TCV} \ONEEVAL{u}{()}{m}{\FRESH{m}{\TCV}}}
    }
    {[Gensym]}         
	\\\\
	\ZEROPREMISERULENAMEDRIGHT
        {
		\ASSERT{A\LSUBST{\LOGIC{c}}{m}}{\PROGRAM{c}}{m}{A}
        }{[Const]}
    \quad
     \TWOPREMISERULENAMEDRIGHT
     {
        	\ASSERT{A}{M}{m}{B}
     }
     {
        	\ASSERT{B}{N}{n}{C
        		\LSUBST{\EQA{m}{n}}{u}}
     }
     {
        	\ASSERT{A}{M = N}{u}{C}
     }{[Eq]}   
		\\\\
		\TWOPREMISERULENAMEDRIGHT
		{
			A-\EXTINDEP
		}
		{
			\GAMMA  \PLUSTC \TCV\PLUSV  x:\alpha  \vdash 
			\ASSERT{A^{\MINUS x} \AND B}{M}{m}{C}
		}
		{
			\GAMMA \vdash 
			\ASSERT
			{A}
			{\lambda x^{\alpha}. M}{u}
			{\FAD{\TCV} 
				\FORALL{x^{\alpha}}{\TCV} (B \IMPLIES \ONEEVAL{u}{x}{m}{C})}
		}{[Lam]}
		\\\\
		\TWOPREMISERULENAMEDRIGHT
		{
			\ASSERT{A}{M}{m}{B}
		}
		{
			\ASSERT{B}{N}{n}{\ONEEVAL{m}{n}{u}{C}}
		}
		{
			\ASSERT{A}{MN}{u}{C}
		}{[App]}
		\\\\
		\FIVEPREMISERULENAMEDRIGHT
		{
			\ASSERT{A}{M}{m}{B}
		}
		{
			\ASSERT{B\LSUBST{b_i}{m}}{N_i}{u}{C}
		}
		{
			b_1 = \TRUE
		}
		{
			b_2 = \FALSE
		}
		{
			i = 1, 2
		}
		{
			\ASSERT{A}{\IFTHENELSE{M}{N_1}{N_2}}{u}{C}
		}{[If]}
		\\\\
		\TWOPREMISERULENAMEDRIGHT
		{
				\ASSERT{A}{M}{m}{B}
		}
		{
				\ASSERT{B}{N}{n}{C \LSUBST{\PAIR{m}{n}}{u}}
		}
		{
				\ASSERT{A}{\PAIR{M}{N}}{u}{C}
		}{[Pair]}
		\quad
		\ONEPREMISERULENAMEDRIGHT
		{
				\ASSERT{A}{M}{m}{C
					\LSUBST{\pi_i(m)}{u}}
		}
		{
			\ASSERT{A}{\pi_i(M)}{u}{C}
		}{[Proj($i$)]}
		\\\\
		\TWOPREMISERULENAMEDRIGHT
		{
			\ASSERT{A}{M}{m}{B}
		}
		{
			\ASSERT{B}{N}{u}{C}
		}
		{
			\ASSERT{A}{\LET{m^{\alpha}}{M}{N}}{u}{C}
		}{[Let]}
  \end{RULES}
  \vspace{-4mm}  

  \caption{ Rules for the core language,
    cf.~\cite{BergerM:prologfhgrtmp,YHB07:local:full,HY04PPDP}.  We
    require $C \THINWRT{m}$ in \RULENAME{[Proj($i$), Let]} and
    $C\THINWRT{m,n}$ in \RULENAME{[Eq, App, Pair]}.  We omit
    LTCs where not essential.  }\label{figure_rules_language}
\end{FIGURE}

In \RULENAME{[Gensym]}, $\ONEEVAL{u}{()}{m}{\FRESH{m}{\TCV}}$ 
indicates the name produced by $u()$ and stored at $m$ is not derivable from the  LTC $\TCV$.
If there were no quantification over LTCs prior to the evaluation we could only say $m$ is fresh from the current typing context,
however we want to say that even if there is a future typing context with new names 
and we evaluate $u()$, this will still produce a fresh name.
Hence we introduce the $\FAD{\TCV}$ 
to quantify over all future LTCs (and hence all future names).
Elsewhere in reasoning it is key that the post-condition of \RULENAME{[Gensym]} is \EXTINDEP \ and hence holds in all 
extending and contracting models (assuming the anchor for $\GENSYM$ is present),
reinforcing the re-applicability of $\GENSYM$ in any context.

Rules for $\lambda$-abstraction in previous logics for
lambda-calculi \cite{HY04PPDP, YHB07:local:full} universally quantify over all possible
arguments.  Our corresponding \RULENAME{[Lam]} rule refines this and
quantifies over current or future values that do not contain
hidden names.
Comparing the two LTCs typing the assumption and conclusion implies 
$\GAMMA \PLUSTC \TCV \PLUSV x$ extends $\GAMMA$ to any possible extension assigned to $\TCV$,
and extends to $x$ a value derived from $\GAMMA \PLUSTC \TCV$.
Hence the typing implies precisely what is conveyed in the post-condition of the conclusion: `$\FAD{\TCV}\FORALL{x}{\TCV}$'.
Constraints on $\TCV$ and $x$ are introduced by $B$,
and $A-\EXTINDEP$ implies $A$ still holds in all extensions of $\GAMMA$ including $\GAMMA \PLUSTC \TCV \PLUSV x$.
The rest is trivial when we consider $((\lambda x. M)x)\MMM \CONGCONTEXT{\alpha}{\AN{\MMM}} M\MMM$.

The STLC's \RULENAME{[Let]} rule introduces $x$ in the post-condition by means of an `$\exists x. C$'.
This fails here as $x$ may be unreachable, hence not derivable from any extending or contracting LTC.
The requirement that $C\THINWRT{x}$ ensures $x$ is not critical to $C$ so can either be derived from the current LTC or is hidden.
Thinness ensures no reference to the variable $m$ is somehow hidden under quantification over LTCs.
	
The \RULENAME{[Invar]} rule is standard with the constraint that $C-\EXTINDEP$ \ to ensure $C$ holds in the extension where $m$ has been assigned.
The \RULENAME{[LetFresh]} rule is commonly used and hence included for convenience, but it is entirely derivable from the other rules.

\begin{FIGURE}
  \begin{RULES}
    \THREEPREMISERULENAMEDRIGHT
        {
          A \IMPLIES A'
        }
        {
          \ASSERT{A'}{M}{m}{B'}
        }
        {
          B' \IMPLIES B
        }
        {
          \ASSERT{A}{M}{m}{B}
        }{[Conseq]}
        \qquad
        \TWOPREMISERULENAMEDRIGHT
        {
        	C-\EXTINDEP
        }
        {
                \ASSERT{A}{M}{m}{B}
        }
        {
			\ASSERT{A \AND C}{M}{m}{B \AND C}
        }{[Invar]}
    	\\\\
    	\TWOPREMISERULENAMEDRIGHT
    	{
    		A-\EXTINDEP
    	}
    	{
    		\GAMMA \PLUSV m:\NAME  \vdash \ASSERT{A \AND \FRESH{m}{\GAMMA}}{M}{u}{C}
    	}
    	{
    		\GAMMA \vdash \ASSERT{A}{\LET{m}{\GENSYM()}{M}}{u}{C}
    	}{[LetFresh]}
  \end{RULES}
  \vspace{-4mm}  
  \caption{Key structural rules \RULENAME{[Conseq]} and \RULENAME{[Invar]} and for convenience the derived \RULENAME{[LetFresh]} rule where $C\THINWRT{m}$ is required.}\label{figure_rules_structural}
\end{FIGURE}

\begin{theorem}
All axioms and rules are sound.
\end{theorem}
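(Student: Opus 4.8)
The plan is to prove soundness by a case analysis that treats each axiom and each inference rule separately: for an axiom $\GAMMA \Vdash A$ I must show $\MMM \models A$ (respectively that the stated $\PIMPLIES$/$\PIFF$ holds) for every well constructed model $\MMM^{\GAMMA}$, and for a rule I must show that validity of the premises forces validity of the conclusion. Since validity of a triple quantifies over all $\GAMMA$ and all $\MMM_0^{\GAMMA_0}$ with $\CONSTRUCT{\GAMMA}{\MMM_0}$, every case unfolds into a statement about well constructed models and the derivation relation $\LTCDERIVEDVALUE{\cdot}{\cdot}{\cdot}{\cdot}$.

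Before the case analysis I would establish the technical backbone as a handful of lemmas. The most important are two substitution lemmas connecting the two logical substitutions to model update: $\MMM \models A\LSUBSTLTC{e}{x}{\GAMMA}$ iff $\MMM \cdot x:\SEM{e}{\MMM} \models A$, and $\MMM \models A\LSUBSTLTC{\GAMMA_0}{\TCV}{\GAMMA}$ iff $\MMM \cdot \TCV:\SEM{\GAMMA_0}{\MMM} \models A$, both under the relevant typing and freeness side conditions; these are proved by induction on $A$, with the LTC clauses of the substitution definitions matching the model clauses of $\SEM{\cdot}{\MMM}$. I also need: (i) an operational correspondence lemma relating $\RED$ to the derivation relation, in particular that $(\lambda x.M)\MMM$ applied to $V$ derives the same value as $M$ does in $\MMM \cdot x:V$, and that $\GENSYM()$ derives a name outside $\AN{\MMM}$; (ii) closure properties showing that the model operations used (single-step extension, $\cdot\, m:V$ for derived $V$, and TCV extension) preserve well constructedness; and (iii) a usable characterisation of $\EXTINDEP$ and of thinness. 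Given these, the equality axioms, the first-order-style quantifier axioms $(u1)$-$(u4)$, $(ex1)$, the evaluation axioms $(e1)$-$(e3)$, and the structural rules $\RULENAME{[Var]}$, $\RULENAME{[Const]}$, $\RULENAME{[Conseq]}$, $\RULENAME{[Invar]}$ follow routinely from the substitution lemmas and $\EXTINDEP$.

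The genuinely new obligations cluster around the name-specific machinery. For $\RULENAME{[Gensym]}$ I would use the operational lemma: in any extension $\MMM'$ of $\MMM \cdot u:\GENSYM$, evaluating $u()$ produces a name $n \notin \AN{\MMM'}$, and since every value derivable from $\MMM'$ contains only names in $\AN{\MMM'}$, no term derives $n$, giving $\FRESH{n}{\TCV}$ once $\TCV$ is bound to the LTC of $\MMM'$; $\EXTINDEP$ of this post-condition is checked separately and reused later. Axiom $(utc2)$ is the heart of the LTC quantifier: left to right I must show that quantifying over a fresh $\TCV$ and then over $x$ derivable from $\GAMMA \PLUSTC \TCV$ adds nothing beyond names, using that a $\NAME$-valued $x$ derivable from an extension but not from $\GAMMA$ is fresh, and the $\EXTINDEP$ hypothesis on $A$ is what licenses moving between the model and its extension. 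The freshness axioms $(f1)$-$(f4)$ reduce to the derivation-based definition of $\FRESH{\cdot}{\cdot}$, with $(f1)$ needing that a function value of $\NAME$-free codomain cannot be used to derive the hidden $x$, and $(ex2)$, $(ex3)$ are handled by exhibiting the witnessing derivation forced by the evaluation formula.

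The main obstacle I expect is $\RULENAME{[Lam]}$, and relatedly $\RULENAME{[Let]}$. For $\RULENAME{[Lam]}$ the conclusion's post-condition $\FAD{\TCV}\FORALL{x^{\alpha}}{\TCV}(B \IMPLIES \ONEEVAL{u}{x}{m}{C})$ must be discharged in every extension $\MMM'$ of $\MMM \cdot u:(\lambda x.M)\MMM$ and for every $x$ derivable from $\MMM'$: I beta-reduce $u\,x$ via the operational lemma to the value $M$ derives in $\MMM' \cdot x:V_x$, then invoke validity of the premise $\ASSERT{A^{\MINUS x} \AND B}{M}{m}{C}$ in that model. Making this fit requires (a) that $\MMM' \cdot x:V_x$ is still well constructed, (b) that $A^{\MINUS x}$ holds there, obtained by pushing $\MMM \models A$ up to $\MMM'$ through $A$-$\EXTINDEP$ and across the $x$-extension, and (c) careful bookkeeping of the LTC ordering $\GAMMA \PLUSTC \TCV \PLUSV x$ so that the premise's typing context matches. $\RULENAME{[Let]}$ is subtle for the dual reason the authors flag: after evaluating $M$ to $V_m$ and $N$ to $V_u$ I obtain $\MMM \cdot m:V_m \cdot u:V_u \models C$, but the conclusion demands $\MMM \cdot u:V_u \models C$ with $m$ absent, which is exactly where $C \THINWRT{m}$ is consumed to delete the now-hidden binding $m:V_m$ while preserving $C$. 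Verifying that the thinness and $\EXTINDEP$ side conditions genuinely suffice in every well constructed model, rather than merely in the models one first imagines, is where the bulk of the care goes.
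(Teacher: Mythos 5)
Your plan follows essentially the same route as the paper's proof: a case-by-case soundness argument for every axiom and rule, resting on exactly the lemma layer the paper develops in its appendices --- the derived semantics of the two substitutions, preservation of term closure, expression interpretation and derivability under model extension, well-constructedness of extended models, and separate syntactic characterisations of $\EXTINDEP$ and thinness that are then consumed by \RULENAME{[Lam]}, \RULENAME{[Let]}, \RULENAME{[Invar]} and the $(utc)$/$(e)$ axioms. The cases you isolate as the genuinely hard ones (\RULENAME{[Gensym]}, \RULENAME{[Lam]}, \RULENAME{[Let]}, $(utc2)$, $(f1)$) are precisely those on which the paper spends most of its effort.

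One concrete caution on your \RULENAME{[Gensym]} case: the invariant you cite, ``every value derivable from $\MMM'$ contains only names in $\AN{\MMM'}$,'' is false --- a term derivable from the LTC may itself call $\GENSYM$ and return a value containing genuinely fresh names; moreover the model against which $\FRESH{m}{\TCV}$ is evaluated already contains the new name $n$ (bound to the anchor $m$), so $n \in \AN{\MMM'\cdot m{:}n}$ and the invariant would not give underivability even if it held. The statement you actually need, and the one the paper proves, is sharper: a compile-time term closed by a model not containing the specific name $n$ cannot evaluate to $n$ once $n$ is already recorded in the generated-name set, because the closed term does not mention $n$ syntactically and $\GENSYM$ never re-issues a name already in $G$ (the paper's lemmas that names fresh in a term remain fresh in its value, and that a fresh name is underivable from the model extended with that name). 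The same point is what drives $(utc2)$ and $(f1)$, so it is worth getting the invariant right once and reusing it.
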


\begin{theorem}
  The logic for the \NUC\ is a conservative extension of the logic
  \cite{HY04PPDP} for the STLC.
\end{theorem}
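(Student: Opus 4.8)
The plan is to prove the two inclusions that together constitute conservative extension. Write $\mathcal{L}$ for the logic of this paper and $\mathcal{L}_0$ for the logic of \cite{HY04PPDP}. First I would fix a syntactic embedding of $\mathcal{L}_0$ into $\mathcal{L}$: the STLC is exactly the $\NAME$-free, $\GENSYM$-free fragment of the \NUC, so every STLC type, program, expression and context is already a \NUC\ one, and I would map each classical quantifier $\forall x^{\alpha}.A$ of $\mathcal{L}_0$ to the restricted $\FORALL{x^{\alpha}}{\emptyset}A$, leaving all other connectives and evaluation formulae unchanged. Crucially, the image of this embedding contains no freshness predicate, no TCV and no $\FAD{\TCV}$. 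The \emph{extension} direction then asks that every axiom and rule of $\mathcal{L}_0$ be derivable in $\mathcal{L}$ under this embedding, and the \emph{conservativity} direction asks that any embedded triple provable in $\mathcal{L}$ be already provable in $\mathcal{L}_0$.

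For the extension direction I would proceed axiom by axiom and rule by rule, showing that on the $\NAME$-free fragment the extra machinery of $\mathcal{L}$ collapses to classical behaviour. The two facts doing the work are $(u5)$, which turns $\FORALL{x^{\alpha}}{\GAMMA_0}A$ into $\FORALL{x^{\alpha}}{\emptyset}A$ whenever $\alpha$ is $\NAME$-free, and $(ext)$, which restores full extensionality $e_1 = e_2$ from $\mathsf{Ext}(e_1,e_2)$ at $\NAME$-free function types. I would first establish an auxiliary lemma that every embedded formula is \EXTINDEP: mentioning neither names, freshness nor TCVs, its satisfaction cannot change under model extension or contraction. This discharges the side conditions of $(u3)$, $(utc3)$, $(e1)$, $(e3)$ and of \RULENAME{[Invar]} and \RULENAME{[Lam]} automatically. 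With this, \RULENAME{[Lam]} reduces to the ordinary abstraction rule, since the $\FAD{\TCV}\FORALL{x}{\TCV}$ prefix degenerates to $\forall x$ via $(utc3)$ and $(u5)$; and \RULENAME{[App]}, \RULENAME{[Pair]}, \RULENAME{[If]}, \RULENAME{[Proj($i$)]} and \RULENAME{[Let]} match their $\mathcal{L}_0$ counterparts once the thinness side conditions are seen to hold automatically for \EXTINDEP, $\NAME$-free postconditions, while \RULENAME{[Var]}, \RULENAME{[Const]}, \RULENAME{[Eq]} and \RULENAME{[Conseq]} coincide with those of \cite{HY04PPDP}.

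The conservativity direction carries the real content, and I would route it through the just-proved soundness (Theorem 1) together with completeness of $\mathcal{L}_0$ for the STLC from \cite{HY04PPDP}. Suppose an embedded triple $\ASSERT{A}{M}{m}{B}$ is provable in $\mathcal{L}$. By Theorem 1 it is valid, i.e.\ holds in every well-constructed \NUC\ model of its context; in particular it holds in every \emph{pure} model, one mapping variables to STLC values. The key lemma I must prove is a coincidence of semantics: for $\NAME$-free $M$, $A$, $B$ and pure models $\MMM$, satisfaction in $\mathcal{L}$ agrees with satisfaction in the $\mathcal{L}_0$-model obtained by forgetting the (now inert) name apparatus. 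Here the degeneration is made precise: on $\NAME$-free data $\LTCDERIVEDVALUE{N}{\emptyset}{\MMM}{V}$ is ordinary STLC evaluation of $N\MMM$, the quantifier $\FORALL{x^{\alpha}}{\emptyset}A$ ranges over exactly the closed STLC values of type $\alpha$, and $\CONGCONTEXT{\alpha}{\AN{\MMM}}$ collapses to STLC contextual equivalence. Granting the lemma, validity in $\mathcal{L}$ yields validity in $\mathcal{L}_0$, and completeness of $\mathcal{L}_0$ returns an $\mathcal{L}_0$-proof.

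The hard part will be the coincidence lemma, specifically the interaction between the two model classes. Because $\mathcal{L}$ quantifies over all well-constructed models, a $\NAME$-free function-typed variable may be bound to an impure \NUC\ value that internally calls $\GENSYM$; I must check that such values let no embedded formula distinguish programs that are STLC-contextually equal, so that restricting to pure models loses nothing and $\CONGCONTEXT{\alpha}{G}$ at $\NAME$-free type genuinely coincides with STLC contextual equivalence. I would handle this by a type-indexed argument, leaning on the Pitts--Stark-style fact that at $\NAME$-free types the \NUC\ observes nothing the STLC cannot (a name generated inside a value of $\NAME$-free type can never escape to be compared), so every STLC value of the right type is derivable and derivation never yields anything observationally new. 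A secondary point to confirm is that the completeness theorem of \cite{HY04PPDP} applies to precisely the fragment in the image of the embedding; should its formula language differ cosmetically, I would mediate through that correspondence first, and if completeness is unavailable in the required form, fall back on a more laborious syntactic translation of $\mathcal{L}$-derivations of embedded judgments into $\mathcal{L}_0$-derivations by induction, eliminating the inert quantifiers and predicates step by step.
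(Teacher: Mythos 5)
Your first half matches the paper's proof almost exactly: the paper defines the embedding by sending $\forall x^{\alpha}.A$ to $\FORALL{x^{\alpha}}{\emptyset}A$ and leaving everything else fixed, observes that every translated formula is TCV-free and hence \EXTINDEP, and then checks axiom by axiom and rule by rule that the STLC axioms and rules are derivable, using exactly the facts you cite --- $(u5)$ to collapse restricted quantifiers at $\NAME$-free types, $(ext)$ for extensionality, instantiation of $\FAD{\TCV}$ and reduction of $\FORALL{x}{\TCV}$ via $(u2)$/$(u5)$ in \RULENAME{[Lam]}, and the degeneration of thinness to $x\notin\FV{A}$ in the name-free setting. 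Where you diverge is in reading ``conservative extension'' in the standard model-theoretic sense and therefore adding a converse direction (embedded triples provable in the new logic are already provable in \cite{HY04PPDP}). The paper's own Definition of conservative extension consists of \emph{two forward implications only} --- derivability in the STLC logic implies derivability in the $\nu$-logic, for formulae and for triples --- so the paper never proves, and does not need, your conservativity direction. That extra direction is also where all the risk in your proposal sits: it leans on completeness of the \cite{HY04PPDP} logic and on a coincidence lemma identifying $\CONGCONTEXT{\alpha}{G}$ at $\NAME$-free type with STLC contextual equivalence and showing that impure well-constructed models add no observations; neither is established in this paper (the closest available tool is Lemma~\ref{lem:Nm-free_terms_have_equivalent_name_free_STLC-term}, which you would indeed need). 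So for the theorem as the paper states and defines it, your proof is complete and coincides with the paper's; the second half is a strictly stronger claim that the paper does not make, and you should either flag it as such or drop it rather than let the theorem's validity hinge on it.
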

All proofs can be found in the first author's forthcoming dissertation \cite{EliottHP:Thesis_draft}.

\section{Reasoning Examples}
\label{reasoning}

\PARAGRAPH{Example 1}
We reason about the core construct $\GENSYM()$ in an LTC $\GAMMA$.
In Line 2, $(utc1)$ instantiates the postcondition to $\ONEEVAL{b}{()}{a}{\FRESH{a}{\GAMMA \PLUSV b}}$ and $(f4)$ removes the $b$ from the LTC to ensure the postcondition satisfies the $\THINWRT{b}$ requirement in \RULENAME{[App]}.
\begin{NDERIVATION}{1}
	\NLINE{\GAMMA \Vdash\ASSERT{\TRUTH}{\GENSYM}{b}{ \FAD{\TCV} \ONEEVAL{b}{()}{a}{\FRESH{a}{\TCV}} }}{\RULENAME{[Gensym]}}
	\NLINE{\GAMMA \Vdash \ASSERT{\TRUTH}{\GENSYM}{b}{ \ONEEVAL{b}{()}{a}{\FRESH{a}{\GAMMA}} }}{\RULENAME{[Conseq]}, (utc1), (f4), 1}
	\NLINE{\GAMMA \PLUSV b \Vdash \ASSERT{ \ONEEVAL{b}{()}{a}{\FRESH{a}{\GAMMA}}}{()}{c}{ \ONEEVAL{b}{c}{a}{\FRESH{a}{\GAMMA} }}}{\RULENAME{[Const]}}
	\NLASTLINE{\GAMMA \Vdash \ASSERT{\TRUTH}{\GENSYM ()}{a}{\FRESH{a}{\GAMMA}}}{\RULENAME{[App]}, 2, 3}
\end{NDERIVATION}

\PARAGRAPH{Example 2}
We reason about the comparison of two fresh names, clearly returning $\FALSE$, by applying Example 1 in the relevant LTCs.
\begin{NDERIVATION}{1}
	\NLINE{\GAMMA \Vdash \ASSERT{\TRUTH}{\GENSYM()}{a}{\FRESH{a}{\GAMMA}}}{See Example 1}
	\NLINE{\GAMMA \PLUSV a \Vdash \ASSERT{\TRUTH}{\GENSYM()}{b}{\FRESH{b}{\GAMMA \PLUSV a}}}{See Example 1}
	\NLINE{\GAMMA \PLUSV a \Vdash \ASSERT{\FRESH{a}{\GAMMA}}{\GENSYM()}{b}{a \neq b}}{\RULENAME{[Conseq]}, (f3), 2}
	\NLASTLINE{\GAMMA \Vdash \ASSERT{\TRUTH}{\GENSYM()=\GENSYM()}{u}{u = \FALSE}}{\RULENAME{[Eq]}, 3}
\end{NDERIVATION}

\PARAGRAPH{Example 3}
Placing name generation inside an abstraction halts the production of fresh names until the function is applied. When $y$ is of type $\UNIT$ then this specification is identical to that of $\GENSYM$.
\begin{NDERIVATION}{1}
	\NLINE{\GAMMA \PLUSTC \TCV \PLUSV y \Vdash \ASSERT{\TRUTH}{\GENSYM()}{m}{\FRESH{m}{\GAMMA \PLUSTC \TCV \PLUSV y}}}{See Example 1}
	\NLASTLINE{\GAMMA \Vdash \ASSERT{\TRUTH}{\lambda y. \GENSYM()}{u}{\FAD{\TCV} \FORALL{y}{\TCV} \ONEEVAL{u}{y}{m}{\FRESH{m}{\GAMMA \PLUSTC \TCV \PLUSV y}}}}{\RULENAME{[Lam]}, 2}
\end{NDERIVATION}

\PARAGRAPH{Example 4} Generating a name outside an abstraction and returning that same name in the function is often compared to Example 3 \cite{BentonN:mechbisftnc, stark:namhof}.
We reason as follows: letting $A_4(p) \DEFEQ \FAD{\TCV} \FORALL{y}{\TCV} \ONEEVAL{u}{y}{m}{\FRESH{m}{\GAMMA} \PAND p=m}$.
\begin{NDERIVATION}{1}
	\NLINE{\ASSERT{\FRESH{x}{\GAMMA}}{x}{m}{\FRESH{m}{\GAMMA} \PAND x=m}}{\RULENAME{[Var]}}
	\NLINE{\ASSERT{\FRESH{x}{\GAMMA}}{\lambda y. x}{u}{A_4(x)}}{\RULENAME{[Lam]}, 1}
	\NLINE{\ASSERT{\FRESH{x}{\GAMMA}}{\lambda y. x}{u}{\EXISTS{x'}{u} A_4(x')}}{\RULENAME{[Conseq]}, 2}
	\NLASTLINE{\GAMMA \Vdash \ASSERT{\TRUTH}{\LET{x}{\GENSYM()}{\lambda y. x}}{u}{\EXISTS{x'}{u}  A_4(x')}}{\RULENAME{[LetFresh]}, 3}
\end{NDERIVATION}
Proof of line 3 above, essentially proves $x$ is derivable from $u$:
\begin{NDERIVATION}{5} 
	\NLINE{A_4(x) \PAND \FORALL{y}{\emptyset} \ONEEVAL{u}{y}{m}{x=m}}{(utc1), (u2), FOL}
	\NLINE{A_4(x) \PAND \EXISTS{x'}{u} x=x'}{(ex2), (ex3)}
	\NLINE{ \EXISTS{x'}{u} (A_4(x) \PAND x=x')}{(u3), (u4)}
	\NLASTLINE{\EXISTS{x'}{u} A_4(x')}{(eq1)}
\end{NDERIVATION}

\PARAGRAPH{Example 5} In order to demonstrate the subtlety of hidden names, the Introduction used
Program (\ref{example_canonical}), which was
$
  M \DEFEQ \LET{x}{\GENSYM()}{\lambda y. x=y}
$. 
We now use our logic to reason about $M$.

\begin{NDERIVATION}{1}
	\NLINE{\GAMMA \PLUSV x \PLUSTC \TCV \PLUSV y \Vdash 
		\ASSERT
		{\TRUTH}
		{x = y}{m}
		{m = (\EQA{x}{y})}}
	{\RULENAME{[Eq]}}
	\NLINE{\GAMMA \PLUSV x  \Vdash 
		\ASSERT
		{\TRUTH}
		{\lambda y.x = y}{u}
		{\FAD{\TCV}\FORALL{y}{\TCV} \EVALFORMULASHORT{u}{y}{(\EQA{x}{y})}}
	}{\RULENAME{[Lam]}, 1}
	\NLINE{\GAMMA \PLUSV x  \Vdash 
		\ASSERT
		{\FRESH{x}{\GAMMA}}
		{\lambda y.x = y}{u}
		{\FRESH{x}{\GAMMA} \PAND  \FAD{\TCV}\FORALL{y}{\TCV}\EVALFORMULASHORT{u}{y}{(\EQA{x}{y})}}
	}{\RULENAME{[Invar]}, 2}
	\NLINE{\GAMMA \PLUSV x  \Vdash 
		\ASSERT
		{\FRESH{x}{\GAMMA}}
		{\lambda y.x = y}{u}
		{\FORALL{y}{\GAMMA \PLUSV u} \EVALFORMULASHORT{u}{y}{\FALSE}}
	}{\RULENAME{[Conseq]}, 3}
	\NLINE{\GAMMA \Vdash 
		\ASSERT
		{\TRUTH}
		{M}{u}
		{\FORALL{y}{\GAMMA \PLUSV u} \EVALFORMULASHORT{u}{y}{\FALSE}}
	}{ \RULENAME{[LetFresh]}}
	\NLASTLINE{\GAMMA \Vdash 
		\ASSERT
		{\TRUTH}
		{M}{u}
		{\FAD{\TCV}\FORALL{y^{\NAME}}{\TCV} \EVALFORMULASHORT{u}{y}{\FALSE}}
	}{(utc2)}
\end{NDERIVATION}
To prove line 4 above we apply the axioms as follows:
\PROOFFINISHED{
	\begin{NDERIVATION}{7}
		\NLINE{
			\GAMMA \PLUSV x \PLUSV u  \Vdash 
			\FRESH{x}{\GAMMA} \PAND \FAD{\TCV}\FORALL{y}{\TCV} \EVALFORMULASHORT{u}{y}{(\EQA{x}{y})}	
		}{}
		\NLINE{
			\FRESH{x}{\GAMMA} \PAND \FORALL{y}{\GAMMA \PLUSV x \PLUSV u} \EVALFORMULASHORT{u}{y}{(\EQA{x}{y})}	
		}{(utc1)}
		\NLINE{
				\FRESH{x}{\GAMMA \PLUSV u} \PAND \FORALL{y}{\GAMMA \PLUSV x \PLUSV u} \ \EVALFORMULASHORT{u}{y}{(\EQA{x}{y})}
		}{(f1)}
		\NLINE{
			\FRESH{x}{\GAMMA \PLUSV u} \PAND \FORALL{y}{\GAMMA \PLUSV u} \ \EVALFORMULASHORT{u}{y}{(\EQA{x}{y})}
		}{(u2)}
		\NLINE{
			\FORALL{y}{\GAMMA \PLUSV u} \ \FRESH{x}{\GAMMA \PLUSV u \PLUSV y} \PAND \EVALFORMULASHORT{u}{y}{(\EQA{x}{y})}
		}{(f2)}
		\NLINE{
			\FORALL{y}{\GAMMA \PLUSV u} \ x \neq y \PAND \EVALFORMULASHORT{u}{y}{(\EQA{x}{y})}
		}{(f3)}
		\NLASTLINE{
			\FORALL{y}{\GAMMA \PLUSV u} \EVALFORMULASHORT{u}{y}{\FALSE}
		}{(e1)}
	\end{NDERIVATION}
}

\PARAGRAPH{Example 6} To demonstrate the release of a hidden variable using Program \eqref{example_canonical_pair}, which was 
$
M \DEFEQ \LET{x}{\GENSYM()}{\PAIR{x}{\lambda y. x=y}}
$,
we reason as follows, with $A_6(p,q) \DEFEQ \FRESH{p}{\GAMMA} \PAND \FAD{\TCV}\FORALL{y}{\TCV} \EVALFORMULASHORT{q}{y}{(\EQA{p}{y})}$:
\begin{NDERIVATION}{1}
	\NLINE{
		\ASSERT
		{\FRESH{x}{\GAMMA}}
		{x}{b}
		{x=b \PAND \FRESH{x}{\GAMMA}}
	}{\RULENAME{[Var]}}
	\NLINE{
		\ASSERT
		{\TRUTH}
		{\lambda y.x = y}{c}
		{\FAD{\TCV}\FORALL{y}{\TCV} \EVALFORMULASHORT{c}{y}{(\EQA{x}{y})}}
	}{See Example 4, lines 1-2}
	\NLINE{
		\ASSERT
		{x=b \PAND \FRESH{x}{\GAMMA}}
		{\lambda y.x = y}{c}
		{
			x=\pi_1(\PAIR{b}{c}) \PAND C(x,c)
		}
	}{
			\RULENAME{[Conseq]}, \RULENAME{[Invar]}, 2
	}
	\NLINE{
		\ASSERT
		{x=b \PAND \FRESH{x}{\GAMMA}}
		{\lambda y.x = y}{c}
		{A_6(\pi_1(a), \pi_2(a)) \LSUBST{\PAIR{b}{c}}{a}}
	}{\RULENAME{[Conseq]}, (eq1)}
	\NLINE{
		\ASSERT
		{\FRESH{x}{\GAMMA}}
		{\PAIR{x}{\lambda y.x = y}}{a}
		{A_6(\pi_1(a), \pi_2(a))}
	}{\RULENAME{[Pair]}, 1, 4}
	\NLASTLINE{
		\GAMMA \Vdash 
		\ASSERT
		{\TRUTH}
		{M}{a}
		{A_6(\pi_1(a), \pi_2(a))}
	}{
		\RULENAME{[LetFresh]}, 5
	}
\end{NDERIVATION}

\section{Conclusion}

We have presented the first program logic for the \NUC, a variant of the STLC
with names as first class values. Our logic is a conservative
extension with two new universal quantifiers of the logic in
\cite{HY04PPDP} for the STLC. We provide axioms and proof rules for
the logic, prove their soundness, and show its expressive power by
reasoning about well-known difficult examples from the literature.

We are currently unable to reason about this example from \cite{PittsAM:obsproohoftdclnown}:
\[
\begin{array}{l}
  \LET{F}{(\LET{x,y}{\GENSYM()}{\lambda f^{\NAME \FS \NAME}. fx=fy})}{}
  \\
  \LET{G}{\lambda v^{\NAME}. F(\lambda u^{\NAME}. v=u) }{FG}
\end{array}
\]
Is this because our logic is too inexpressive, or did we simply fail
to find the right proof?  Another open question is whether our logic's
approach to freshness is independent of the \NUC's lack of integers
and recursion, or not?  For both questions we conjecture the former,
and leave them as future work.

\bibliographystyle{splncs04}
\bibliography{semantic}

\appendix

\section*{\LARGE Appendix}
\vspace{0.3cm}
In the proofs that follow, we will use $\METALOGIC{blue}$ coloured to represent the ``meta'' logic constructors, i.e. $\Mforall$ meaning ``for all'' (in meta language), ...

We prove soundness of the logic by proving that all the rules and axioms are sound, i.e.  $\vdash \ASSERT{A}{M}{u}{B}$ implies $\vDash \ASSERT{A}{M}{u}{B}$

The proof of soundness of the logic requires us to prove all axioms and rules sound, this is done in App.~\ref{appendix_soundness_axioms} and App.~\ref{appendix_soundness_rules}.

We give a syntactic characterisation of $\EXTINDEP$ and thinness and prove they hold as such in App.~\ref{appendix_EXTINDEP} and App.~\ref{appendix_thinness}.

Conservativity from the STLC is introduced formally and proven in App.~\ref{appendix_conservativitiy}.

\PARAGRAPH{Structure of appendix}
\begin{itemize}
	\item[App.~\ref{appendix_figure_typing_programs}] Typing rules of \NUC \ program.
	
	\item[App.~\ref{appendix_lemmas}] Lemmas used in following proofs
	
	\item[App.~\ref{appendix_soundness_axioms}] Proof of soundness of axioms
	
	\item[App.~\ref{appendix_soundness_rules}] Proof of soundness of rules
	
	\item[App.~\ref{appendix_EXTINDEP}] Define Syntactic \EXTINDEP \  and proof that syntactic \EXTINDEP \ implies \EXTINDEP
	
	\item[App.~\ref{appendix_thinness}] Define Syntactic thinness and proof that syntactic thinness implies thinness
	
	\item[App.~\ref{appendix_conservativitiy}] Definition and proof of conservativity
\end{itemize}

\section{Typing Rules For The \NUC}
\label{appendix_figure_typing_programs}
The typing rules for the \NUC\ are included for completeness in Fig.~\ref{figure_typing_programs}
\begin{FIGURE}

\begin{RULES}
	\ONEPREMISERULE
	{
		c \in \{ \TRUE, \FALSE \}
	}
	{
		\TYPES{\Gamma}{c}{\BOOL}
	}
	\quad
	\ZEROPREMISERULE
	{
		\TYPES{\Gamma}{()}{\UNIT}
	}
	\quad
	\ONEPREMISERULE
	{
		\Gamma(x) = \alpha
	}
	{
		\TYPES{\Gamma}{x}{\alpha}
	}
	\quad
	\ZEROPREMISERULE
	{
		\TYPES{\Gamma}{\GENSYM}{\UNIT \FS \NAME}
	}
	\qquad
	\ZEROPREMISERULE
	{
		\TYPES{\Gamma}{r}{\NAME}
	}
	\\\\
	\ONEPREMISERULE
	{
		\TYPES{\Gamma, x:\alpha_1}{M}{\alpha_2}
	}
	{
		\TYPES{\Gamma}{\lambda x^{\alpha_1}. M}{\alpha_1 \FS \alpha_2}
	}
	\qquad
	\TWOPREMISERULE
	{
		\TYPES{\Gamma}{M}{\alpha_1 \FS \alpha_2}
	}
	{
		\TYPES{\Gamma}{N}{\alpha_1}
	}
	{
		\TYPES{\Gamma}{MN}{\alpha_2}
	}
	\qquad
	\TWOPREMISERULE
	{
		\TYPES{\Gamma}{M}{\alpha_1}
	}
	{
		\TYPES{\Gamma, x:\alpha_1}{N}{\alpha_2}
	}
	{
		\TYPES{\Gamma}{\LET{x^{\alpha_1}}{M}{N}}{\alpha_2}
	}
	\\\\
	\THREEPREMISERULE
	{
		\TYPES{\Gamma}{M}{\BOOL}
	}
	{
		\TYPES{\Gamma}{N_1}{\alpha}
	}
	{
		\TYPES{\Gamma}{N_2}{\alpha}
	}
	{
		\TYPES{\Gamma}{\IFTHENELSE{M}{N_1}{N_2}}{\alpha}
	}
	\qquad
	\TWOPREMISERULE
	{
		\TYPES{\Gamma}{M}{\alpha_1}
	}
	{
		\TYPES{\Gamma}{N}{\alpha_2}
	}
	{
		\TYPES{\Gamma}{\PAIR{M}{N}}{\alpha_1 \times \alpha_2}
	}
	\qquad
	\ONEPREMISERULE
	{
		\TYPES{\Gamma}{M}{\alpha_1 \times \alpha_2}
	}
	{
		\TYPES{\Gamma}{\pi_i(M)}{\alpha_i}
	}
\end{RULES}
\caption{
	Typing rules for the \NUC \ programming language.
}
\label{figure_typing_programs}
\end{FIGURE}

\section{Lemmas Used For The Following Proofs}
\label{lemmas} \label{appendix_lemmas}

First we introduce some definitions that are either from the literature or derived from other definitions.

\PARAGRAPH{Contextual Equivalence}
\begin{equation}
	\begin{array}[t]{c}
		M_1 \CONGCONTEXT{\alpha}{G} M_2
		\\ \MIFF \\
		\Mforall C[\cdot]^{\alpha}, b^{\BOOL}.
		\left(
		\begin{array}{c} 
			\TYPES{\emptyset}{C[\cdot]^{\alpha}}{\BOOL}
			\\
			\MAND \ \AN{C[\cdot]} \subseteq G
		\end{array}
		\right)
		\MIMPLIES \ 
		\left(
		\begin{array}{c}
			(\AN{\MMM} \cup G, \ C[M_1] ) \CONV (G_1', \ b) 
			\\ \MIFF \\ 
			(\AN{\MMM} \cup G, \ C[M_2] ) \CONV (G_2', \ b) 
		\end{array}
		\right)
	\end{array}
\end{equation}

\PARAGRAPH{Derived Semantics of formulae}
\begin{itemize}
	\item $\MMM^{\GAMMA} \models \EXISTS{ x^{\alpha}}{\GAMMA'} A$
	if
	there exists 
	$ M^{\alpha}. \
	\LTCDERIVEDVALUE{M}{\GAMMA}{\MMM}{V}$
	and  
	$\MMM' \cdot x:V \models A $
	
	\item $\MMM \models A^{-x}(x) \LSUBSTLTC{e}{x}{\GAMMA}$ \ if \ $x \notin \DOM{\MMM} \MAND \LTCDERIVEDVALUE{e}{\GAMMA}{\MMM}{V} \ \MAND \ \MMM \cdot x:V \models A(x)$
	
	\item $\MMM \models A^{-x}(x) \LSUBSTLTC{e}{x}{\GAMMA}$ \ if \ $x \in \DOM{\MMM} \MAND \Mforall x'. x' \notin \DOM{\MMM} \MIMPLIES \MMM \models A(x')\LSUBSTLTC{e}{x'}{\GAMMA}$
	
	\item $\MMM \models A^{-\TCV}(\TCV)
	\LSUBSTLTC{\GAMMA_0}{\TCV}{\GAMMA}$ if $\TCV \notin \DOM{\MMM}$
	then
	$\SEM{\GAMMA_0}{\MMM} = \GAMMA_1$
	and
	$\MMM \cdot \TCV:\GAMMA_1 \models A(\TCV)$
	
	\item $\MMM \models A^{-\TCV}(\TCV)
	\LSUBSTLTC{\GAMMA_0}{\TCV}{\GAMMA}$  if  $\TCV \in \DOM{\MMM}$
	and for all $\TCV'
	\notin \DOM{\MMM}.$
	$\MMM \models A(\TCV')\LSUBSTLTC{\GAMMA_0}{\TCV'}{\GAMMA}$
\end{itemize}

\PARAGRAPH{Lemmas}
Now we introduce lemmas used to simplify the later proofs in App.~\ref{appendix_soundness_axioms}, App.~\ref{appendix_soundness_rules}, App.~\ref{appendix_EXTINDEP}, App.~\ref{appendix_thinness}

\begin{lemma}[\DONE Adding/removing unused names maintains evaluation]
	\label{lem:adding/remove_unused_names_maintains_evaluation}
	\[
	\Mforall M, V, G', G_m.  \
	\AN{V} \cap G' = \emptyset \ \MIMPLIES \
	\left(
	\begin{array}{c}
		(\AN{M},G', \ M) \CONV (\AN{M},G',G_m, \ V) \\ \MIFF \\  (\AN{M}, \ M) \CONV (\AN{M},G_m, \ V)
	\end{array}
	\right)
	\]
	\begin{proof}
		Clearly holds as $M$ does not require any name in $G'$ if $\AN{V} \cap G' = \emptyset$, noting that $G\cap G_m = \emptyset$ and  $G'\cap G_m = \emptyset$ and hence if this does not hold, the fresh names in $V$ can be renamed so that it does hold.
		\\
		Note: If $\AN{V} \cap G' \neq \emptyset$ then this fails as it would imply previously generated names that are not accessible can be reached.
	\end{proof}
\end{lemma}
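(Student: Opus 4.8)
The plan is to prove the biconditional by induction on the length of the reduction sequence $\CONV \; (=\RED^{*})$, after observing that among all reduction rules only the rule for $\GENSYM$, namely $(G,\GENSYM()) \RED (G \cup \{n\}, n)$ with side condition $n \notin G$, is sensitive to the ambient set of generated names; every other rule leaves that set unchanged and therefore commutes trivially with augmenting it by the disjoint block $G'$. Hence the whole content of the lemma concentrates on the $\GENSYM$ step, and the two directions differ only in whether augmenting the name pool can \emph{block} a previously licensed choice of fresh name.

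First I would dispatch the forward direction ($\MIMPLIES$). Assuming $(\AN{M},G',M) \CONV (\AN{M},G',G_m,V)$, I transfer the concrete reduction sequence step by step to the configuration with $G'$ deleted. Each $\GENSYM$ step there chooses some $n$ outside the current name set, which \emph{includes} $G'$; a fortiori $n$ lies outside the smaller set obtained by deleting $G'$, so the identical step remains licensed. Since the freshly generated names $G_m$ are (without loss of generality, by the choice of representatives for fresh names) disjoint from $G'$, deleting $G'$ disturbs neither the intermediate terms nor the final value $V$, giving $(\AN{M},M) \CONV (\AN{M},G_m,V)$.

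The backward direction ($\MIMPLIEDBY$) is where the work lies. Re-inserting $G'$ into the initial pool, a $\GENSYM$ step that originally chose $n$ stays licensed unless $n \in G'$, in which case the augmented pool already contains $n$ and the step is blocked. I resolve such a clash by alpha-renaming the offending generated name to a genuinely fresh $n''$ outside $\AN{M} \cup G' \cup (\text{names already in play})$, appealing to \emph{equivariance} of reduction: reduction is stable under any injective name renaming $\pi$, so from $(G,M)\CONV(G\cup G_m,V)$ with $\pi$ fixing the initial pool one derives $(G,M)\CONV(G\cup \pi(G_m),\pi(V))$. The hypothesis $\AN{V} \cap G' = \emptyset$ guarantees that no clashing name occurs in $V$, whence $\pi(V)=V$; thus every clash can be renamed away and I may assume throughout that $G_m \cap G' = \emptyset$, after which the sequence transfers verbatim to yield $(\AN{M},G',M) \CONV (\AN{M},G',G_m,V)$.

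I expect the main obstacle to be making this equivariance/renaming step rigorous rather than leaving it as the terse remark the author gives. Concretely I would first establish, by a separate induction on reduction length, the equivariance statement above (stability of $\CONV$ under injective renamings fixing $\AN{M}$), and only then specialise $\pi$ to move exactly the clashing elements of $G_m$ out of $G'$. Finally I would record why the side condition is indispensable: were some name of $G'$ to occur in $V$, that name would be simultaneously a \emph{previously generated} name (a member of $G'$) and reachable in the result, so adding or removing $G'$ would genuinely change the observable outcome — precisely the failure flagged in the statement's concluding note.
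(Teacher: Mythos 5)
Your proposal is correct and follows essentially the same route as the paper's (much terser) proof: the paper likewise reduces everything to the $\GENSYM$ side condition and invokes renaming of the fresh names in $G_m$ (licensed by $\AN{V}\cap G'=\emptyset$) to restore disjointness from $G'$. Your explicit equivariance lemma is just a rigorous rendering of the paper's "the fresh names in $V$ can be renamed so that it does hold."
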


\begin{lemma}[Congruence is unaffected by adding/removing excess names]
	\label{lem:adding/removing_names_to_congruence_makes_no_difference}
	\[
	\Mforall G, G'. \
	\AN{M,N} \cap \AN{G'} = \emptyset
	\ \MIMPLIES \
	(M \CONGCONTEXT{\alpha}{G} N \MIFF M \CONGCONTEXT{\alpha}{G \cup G'} N) 
	\]
	\begin{proof}
		Clearly holds by Sem. $\CONGCONTEXT{\alpha}{G \cup G'}$ where $\AN{M,N} \subseteq G$. 
	\end{proof}
\end{lemma}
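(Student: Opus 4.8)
The plan is to unfold the definition of $\CONGCONTEXT{\alpha}{G}$ recalled above and prove the two directions of the biconditional separately. The implication $M \CONGCONTEXT{\alpha}{G \cup G'} N \MIMPLIES M \CONGCONTEXT{\alpha}{G} N$ I would settle immediately: any Boolean context $C$ admissible for the $G$-equivalence satisfies $\AN{C} \subseteq G \subseteq G\cup G'$, hence is also admissible for the $G\cup G'$-equivalence. The only mismatch is that the two equivalences launch $C[M]$ and $C[N]$ from the initial name sets $\AN{\MMM}\cup G$ and $\AN{\MMM}\cup G\cup G'$ respectively; since the observation $b$ is a Boolean with $\AN{b}=\emptyset$, Lemma~\ref{lem:adding/remove_unused_names_maintains_evaluation} lets me insert or delete the unused names of $G'$ in the initial configuration without disturbing convergence, so the hypothesis transfers directly.

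The forward implication is where the hypothesis $\AN{M,N}\cap G'=\emptyset$ does real work, because a context admissible for $G\cup G'$ may mention names in $G'\setminus G$ that are forbidden in the $G$-equivalence. Given such a $C$ with $\AN{C}\subseteq G\cup G'$, let $\VEC r = \AN{C}\cap(G'\setminus G)$ collect these offending names. I would replace $C$ by $C^\dagger$, the context obtained by renaming each $r_i$ to a fresh name-typed variable $z_i$ and binding it in front via $\GENSYM$, i.e.\ $C^\dagger = \LET{z_1}{\GENSYM()}{\cdots\,\LET{z_k}{\GENSYM()}{C'}}$ where $C'$ is the renamed body. Then $\AN{C^\dagger}\subseteq G$ and $C^\dagger$ is again a closed Boolean context of type $\alpha$, so the $G$-hypothesis applies and forces $C^\dagger[M]$ and $C^\dagger[N]$ to the same Boolean.

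It remains to identify the behaviour of $C^\dagger[\cdot]$ from $\AN{\MMM}\cup G$ with that of $C[\cdot]$ from $\AN{\MMM}\cup G\cup G'$, which is the technical heart of the argument. Evaluating $C^\dagger[M]$ first lets $\GENSYM$ produce names $\VEC s$ chosen fresh for $\AN{\MMM}\cup G$, after which execution proceeds as the body with each $z_i$ now denoting $s_i$, from the name set $\AN{\MMM}\cup G\cup\VEC s$. I would then appeal to equivariance of $\RED$ under name permutations — the same renaming principle already invoked in the proof of Lemma~\ref{lem:adding/remove_unused_names_maintains_evaluation} — using the permutation $\sigma$ that transposes each $s_i$ with $r_i$. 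Because $(\VEC r\cup\VEC s)\cap\AN{M}=\emptyset$ (the $\VEC r$ by the present hypothesis, the $\VEC s$ by freshness) we have $\sigma M = M$, while $\sigma$ turns the renamed body back into $C$ and carries the name set to $\AN{\MMM}\cup G\cup\VEC r$; a final use of Lemma~\ref{lem:adding/remove_unused_names_maintains_evaluation} discards the still-unused names of $G'$, matching the $G\cup G'$-configuration. Chaining these equalities, and the identical chain for $N$, with the agreement of $C^\dagger[M]$ and $C^\dagger[N]$ yields agreement of $C[M]$ and $C[N]$, which is exactly $M\CONGCONTEXT{\alpha}{G\cup G'} N$.

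I expect the genuine obstacle to be this last step's equivariance argument: one must state precisely that $\RED$ is invariant under name permutations fixing the ambient generated set, and verify that $\VEC s$ can be chosen disjoint from $G'$ and from $\AN{\MMM}$ so that the transposition $\sigma$ leaves $M$, $N$ and the $G$-names genuinely used by the context untouched. Once this freshness bookkeeping is in place, the remainder is routine substitution together with two applications of Lemma~\ref{lem:adding/remove_unused_names_maintains_evaluation}.
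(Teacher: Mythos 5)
Your proposal is correct, but it is a genuinely different animal from what the paper offers: the paper's entire proof is the single sentence ``Clearly holds by Sem.\ $\CONGCONTEXT{\alpha}{G \cup G'}$ where $\AN{M,N} \subseteq G$'', i.e.\ it observes that the side condition $\AN{M,N}\subseteq G$ makes both equivalences well-formed and then simply asserts the biconditional, implicitly treating weakening of the ambient name set as a known property of the underlying equivalence of \cite{BentonN:mechbisftnc}. You instead actually prove it. Your backward direction (every $G$-admissible context is $G\cup G'$-admissible, plus Lemma~\ref{lem:adding/remove_unused_names_maintains_evaluation} to reconcile the initial name sets) is exactly the easy half the paper is gesturing at. Your forward direction is the part the paper skips entirely, and you correctly identify that it has real content: a $G\cup G'$-context may mention names of $G'\setminus G$ that no $G$-context can, and the hypothesis $\AN{M,N}\cap G'=\emptyset$ is what licenses replacing those names by $\GENSYM$-bound fresh ones and transporting the result back by a name permutation fixing $M$, $N$ and $G$. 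That equivariance-of-$\RED$ step is the same renaming principle the paper invokes informally inside the proof of Lemma~\ref{lem:adding/remove_unused_names_maintains_evaluation} (``the fresh names in $V$ can be renamed''), so your argument is consistent with the paper's ambient assumptions; you just need the routine bookkeeping that the generated names $\VEC s$ can be chosen outside the finite set $G'\cup\AN{\MMM}$, which the nondeterminism of the $\GENSYM$ rule permits. In short: the paper buys brevity by appealing to the cited semantics; your route buys a self-contained proof whose only unproved ingredient is the standard permutation-invariance of the reduction relation.
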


In the following lemmas we use $\TYBASE$ as the base types as defined in Sec.~\ref{sec:axioms} i.e. $\TYBASE::= \UNIT \VERTICAL \BOOL \VERTICAL \TYBASE  \times \TYBASE$. Note $\TYBASE$ is also always a $\NAME$-free type.

\begin{lemma}[\DONE Base type values are name free]
	\label{lem:BaseValuesAreNameFree}
		\[
			\Mforall V\text{-value}. \ \TYPES{\emptyset}{V}{\TYBASE} \ \MIMPLIES \ \AN{V}= \emptyset
		\]
	\begin{proof}
		By induction on the structure of type $\TYBASE$, so clearly holds.
		\begin{itemize}
			\item $\alpha = \UNIT$ implies $V= \UNIT$ so this clearly holds.
			\item $\alpha = \BOOL$ implies $V= \TRUE$ or $V= \FALSE$ so this clearly holds.
			\item $\alpha = \TYBASE_1 \times \TYBASE_2$ then $V= \PAIR{V_1}{V_2}$ and by IH assuming 
			\\
			$\Mforall V_1\text{-value}. \TYPES{\emptyset}{V_1}{\TYBASE_1} \ \MIMPLIES \ \AN{V_1}= \emptyset$
			\\
			$\Mforall V_2\text{-value}. \TYPES{\emptyset}{V_2}{\TYBASE_2} \ \MIMPLIES \ \AN{V_2}= \emptyset$
			\\
			then clearly $\AN{\PAIR{V_1}{V_2}}=\emptyset$
		\end{itemize}
	\end{proof}
\end{lemma}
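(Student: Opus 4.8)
The plan is to proceed by structural induction on the base type $\TYBASE$, exploiting the fact that $\TYBASE ::= \UNIT \mid \BOOL \mid \TYBASE \times \TYBASE$ excludes both $\NAME$ and every arrow type. The only closed values that carry names are built, directly or indirectly, from bare names $r$ (of type $\NAME$) or from the constructor $\GENSYM$ (of type $\UNIT \FS \NAME$); since neither can be assigned a base type, a canonical-forms argument will pin down the shape of $V$ in each case and let the names simply not appear.

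First I would record the relevant canonical-forms facts, obtained by inversion on $\TYPES{\emptyset}{V}{\TYBASE}$ together with the value grammar. A closed value of type $\UNIT$ can only be the constant $()$; a closed value of type $\BOOL$ can only be $\TRUE$ or $\FALSE$; and a closed value of type $\TYBASE_1 \times \TYBASE_2$ can only be a pair $\PAIR{V_1}{V_2}$ with $\TYPES{\emptyset}{V_1}{\TYBASE_1}$ and $\TYPES{\emptyset}{V_2}{\TYBASE_2}$. In particular the cases $V = r$, $V = \GENSYM$, $V = \lambda x.M$ and $V = x$ are all excluded: the first two carry types $\NAME$ and $\UNIT \FS \NAME$ respectively, an abstraction has an arrow type, and a variable cannot be typed in the empty context; none of these is a base type.

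With these facts in hand the induction is immediate. In the two base cases the identified values $()$, $\TRUE$, $\FALSE$ contain no names by the defining clauses of $\AN{\cdot}$, so $\AN{V} = \emptyset$. In the inductive case $\TYBASE = \TYBASE_1 \times \TYBASE_2$, canonical forms give $V = \PAIR{V_1}{V_2}$; applying the induction hypothesis to $V_1$ and $V_2$ yields $\AN{V_1} = \AN{V_2} = \emptyset$, and since $\AN{\PAIR{V_1}{V_2}} = \AN{V_1} \cup \AN{V_2}$ we conclude $\AN{V} = \emptyset$.

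There is no genuine obstacle here: the statement is a routine consequence of the type discipline, and the only part requiring any care is making the canonical-forms step explicit, namely the observation that the name-introducing syntactic forms ($r$ and $\GENSYM$) together with the function-introducing form ($\lambda x.M$) are precisely the values ruled out by carrying a base type. This is exactly what makes $\TYBASE$'s being $\NAME$-free (as noted just before the lemma) the essential ingredient of the argument.
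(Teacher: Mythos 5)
Your proof is correct and follows essentially the same route as the paper: structural induction on $\TYBASE$, with canonical forms pinning $V$ down to $()$, $\TRUE/\FALSE$, or a pair of base-typed values to which the induction hypothesis applies. You merely make explicit the canonical-forms inversion that the paper leaves implicit in its ``clearly holds'' remarks.
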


\begin{lemma}[\DONE Base type values can be derived equally from any LTC]
	\label{lem:BaseValues_are_equally_derivable_from_any_LTC}
	\[
	\Mforall \GAMMA, \MMM^{\GAMMA}, \GAMMA_1, V^{\TYBASE}. \ \TCTYPES{\GAMMA}{\GAMMA_1} \ \MIMPLIES \ 
	(\Mexists M_0^{\TYBASE}. \ \LTCDERIVEDVALUE{M_0}{\emptyset}{\MMM}{V}
	\ \MIFF \
	\Mexists M_1^{\TYBASE}. \ \LTCDERIVEDVALUE{M_1}{\GAMMA_1}{\MMM}{V})
	\]
	\begin{proof}
		Holds as $M_0 \equiv V \equiv M_1$ always holds as $\AN{V}=\emptyset \ \MAND \ \TYPES{\emptyset}{V}{\TYBASE}$.
	\end{proof}
	\HIDDEN{
		\begin{proof}
			Any value of type $\TYBASE$ contains no names (Lemma \ref{lem:BaseValuesAreNameFree}) hence all values of type $\TYBASE$ are also terms such that $\AN{V}=\emptyset$ and $\TYPES{\emptyset}{V}{\TYBASE}$.
			\\
			$\MIMPLIES$: clearly holds as $\emptyset \subseteq \GAMMA_1$.
			\\
			$\MIMPLIEDBY$: holds as if there is a term that is derived by $\GAMMA_1$ then the value itself is always derivable from $\emptyset$ so clearly holds.
		\end{proof}
	}
\end{lemma}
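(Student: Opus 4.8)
The plan is to exploit the fact that a closed value of base type carries no names, so it can act as its own derivation witness regardless of which LTC is used to type it. Concretely, I would take the witnessing term to be $V$ itself in \emph{both} existentials; the biconditional then holds not by transporting one witness to the other, but because each side is independently true given the hypothesis that $V$ is a value of type $\TYBASE$.

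First I would apply Lemma~\ref{lem:BaseValuesAreNameFree} to conclude $\AN{V} = \emptyset$ from $\TYPES{\emptyset}{V}{\TYBASE}$. This makes $V$ legitimate name-free compile-time syntax that is, in addition, closed. I would then verify the three clauses defining $\LTCDERIVEDVALUE{V}{\emptyset}{\MMM}{V}$: the name-freeness clause is immediate; typability $\TYPES{\SEM{\emptyset}{\MMM}}{V}{\TYBASE}$ reduces to $\TYPES{\emptyset}{V}{\TYBASE}$ since $\SEM{\emptyset}{\MMM} = \emptyset$; and the evaluation clause holds because $V\MMM = V$ (closedness and name-freeness make the model closure act trivially) and a value reduces to itself in zero steps, adding no names to $\AN{\MMM}$. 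The same three checks yield $\LTCDERIVEDVALUE{V}{\GAMMA_1}{\MMM}{V}$, the only difference being that typability now reads $\TYPES{\SEM{\GAMMA_1}{\MMM}}{V}{\TYBASE}$, which follows from $\TYPES{\emptyset}{V}{\TYBASE}$ by weakening, as $\SEM{\GAMMA_1}{\MMM}$ is an STC containing the empty STC.

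I do not expect a genuine obstacle. The hypothesis $\TCTYPES{\GAMMA}{\GAMMA_1}$ plays only an administrative role, ensuring that $\SEM{\GAMMA_1}{\MMM}$ is defined (its interpretation presupposes this typing), and contributes no real content. The whole argument collapses to the single structural fact, supplied by Lemma~\ref{lem:BaseValuesAreNameFree}, that base-type values are closed and name-free; once that is in hand, both derivation questions reduce to the trivial observation that $V$ derives itself. The only step meriting a moment's care is the weakening used for typability in the $\GAMMA_1$ case, and that is a standard property of the STLC type system.
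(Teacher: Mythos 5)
Your proposal is correct and takes essentially the same route as the paper: both proofs observe that a closed value of type $\TYBASE$ is name-free (Lemma~\ref{lem:BaseValuesAreNameFree}) and therefore serves as its own witnessing term $M_0 \equiv V \equiv M_1$ under either LTC. Your version merely spells out the three clauses of $\LTCDERIVEDVALUE{}{}{}{}$ and the weakening step that the paper leaves implicit.
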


\begin{lemma}[\DONE Base type values don't extend reach]
	\label{lem:Base_Values_Dont_Extend_Reach}
	\[
	\Mforall \GAMMA_0. \Mforall \MMM^{\GAMMA_0}.  \
	\TCTYPES{\GAMMA_0}{\GAMMA \PLUSV y:\TYBASE}
	\ \MIMPLIES \
	(\Mexists M^{\NAME}. \LTCDERIVEDVALUE{M}{\GAMMA}{\MMM}{V} 
	\ \MIFF \ 
	\Mexists M^{\NAME}. \LTCDERIVEDVALUE{M}{\GAMMA \PLUSV y:\TYBASE}{\MMM}{V})
	\]
	\begin{proof}
		Clearly holds from Lemmas \ref{lem:BaseValuesAreNameFree} and \ref{lem:BaseValues_are_equally_derivable_from_any_LTC} so no names are added to the LTC.
	\end{proof}
\HIDDEN
	{
	\begin{proof}
		\begin{NDERIVATION}{1}
			\NLINE{
				\Mexists M. \LTCDERIVEDVALUE{M}{\GAMMA}{\MMM}{s}
				\ \MIMPLIES \
				\Mexists M_b. \LTCDERIVEDVALUE{M_b}{\GAMMA \PLUSV y:\TYBASE}{\MMM}{s}
			}{Def $\LTCDERIVEDVALUE{}{}{}{}$, with $\SEM{\GAMMA}{\MMM} \subseteq \SEM{\GAMMA \PLUSV y: \TYBASE}{\MMM}$}
			\NLINE{
				\Mexists M_b. 
				\begin{array}[t]{l}
					\TYPES{\SEM{\GAMMA \PLUSV y:\TYBASE}{\MMM}}{M_b}{\NAME} 
					\\
					\MAND \AN{M_b}=\emptyset 
					\\
					\MAND  (G, M_b\MMM) \CONV (G', s) 
				\end{array}
				\MIMPLIES\
				\Mexists M. 
				\begin{array}[t]{l}
					\TYPES{\SEM{\GAMMA}{\MMM}}{M}{\NAME} 
					\\
					\MAND \AN{M}=\emptyset 
					\\
					\MAND  (G, M\MMM) \CONV (G', s) 
				\end{array} 
			}{
				\parbox[t]{6cm}{
					\raggedleft 
					Let $M= M_b\PSUBST{\MMM(y)}{y}$ then conditions hold
					as $y$ no longer occurs in $M$ and 
					Lemma \ref{lem:BaseValuesAreNameFree} implies $\AN{\MMM(y)} = \emptyset$
					\\
					implies $\AN{M_b\PSUBST{\MMM(y)}{y}} =\AN{M_b} = \emptyset$
					\\
					and $M\MMM \equiv M_b \MMM$
				}
			}
			\NLINE{
				\Mexists M. 
				\begin{array}{l}
					\TYPES{\SEM{\GAMMA}{\MMM}}{M}{\NAME} 
					\\
					\MAND \AN{M}=\emptyset 
					\\
					\MAND  (G, M\MMM) \CONV (G', s) 
				\end{array} 
				\MIFF \
				\Mexists M_b. 
				\begin{array}{l}
					\TYPES{\SEM{\GAMMA \PLUSV y:\TYBASE}{\MMM}}{M_b}{\NAME} 
					\\
					\MAND \AN{M_b}=\emptyset 
					\\
					\MAND  (G, M_b\MMM) \CONV (G', s) 
				\end{array}
			}{1, 2}
			\NLASTLINE{
				\Mexists M. \LTCDERIVEDVALUE{M}{\GAMMA}{\MMM}{s}
				\ \MIFF \
				\Mexists M_b. \LTCDERIVEDVALUE{M_b}{\GAMMA \PLUSV y:\TYBASE}{\MMM}{s}
			}{3}
		\end{NDERIVATION}
	\end{proof}	
}
\end{lemma}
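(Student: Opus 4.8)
The plan is to prove the biconditional by establishing its two directions separately, dispatching the trivial forward implication first and then concentrating on the backward one, where the content lies. Throughout I use that the hypothesis $\TCTYPES{\GAMMA_0}{\GAMMA \PLUSV y:\TYBASE}$ guarantees $y \in \DOM{\GAMMA_0}$, so that $\MMM(y)$ is a well-defined closed value of type $\TYBASE$, and that by the LTC interpretation clause $\SEM{\GAMMA \PLUSV y:\TYBASE}{\MMM} = \SEM{\GAMMA}{\MMM}, y:\TYBASE$.

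\emph{Forward direction} ($\Mexists M.\,\LTCDERIVEDVALUE{M}{\GAMMA}{\MMM}{V} \MIMPLIES \Mexists M.\,\LTCDERIVEDVALUE{M}{\GAMMA \PLUSV y:\TYBASE}{\MMM}{V}$). Here $\SEM{\GAMMA}{\MMM}$ is an initial segment of $\SEM{\GAMMA \PLUSV y:\TYBASE}{\MMM}$, so the very same witness $M$ satisfies all three clauses of $\LTCDERIVEDVALUE{M}{\GAMMA \PLUSV y:\TYBASE}{\MMM}{V}$: it is still name-free, it is still typed once the STC is enlarged (typing is preserved under context extension), and its closure $M\MMM$ is unchanged, hence still reduces to $V$.

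\emph{Backward direction.} This is the interesting case. Given a witness $M_b$ with $\LTCDERIVEDVALUE{M_b}{\GAMMA \PLUSV y:\TYBASE}{\MMM}{V}$, so that $\TYPES{\SEM{\GAMMA}{\MMM}, y:\TYBASE}{M_b}{\NAME}$, I would inline the model's value for $y$ and set $M := M_b\PSUBST{\MMM(y)}{y}$, then verify the three derivation conditions for $\LTCDERIVEDVALUE{M}{\GAMMA}{\MMM}{V}$. First, $\MMM(y)$ is a closed value of type $\TYBASE$, so Lemma~\ref{lem:BaseValuesAreNameFree} gives $\AN{\MMM(y)} = \emptyset$; together with $\AN{M_b} = \emptyset$ this yields $\AN{M} = \emptyset$. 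Second, substituting a closed value of type $\TYBASE$ for $y:\TYBASE$ preserves typing and removes the dependence on $y$, giving $\TYPES{\SEM{\GAMMA}{\MMM}}{M}{\NAME}$. Third, since closing $M_b$ by $\MMM$ already replaces $y$ by $\MMM(y)$, we have $M\MMM \equiv M_b\MMM$, so the two terms reduce identically and $M$ inherits the reduction to $V$.

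The step I expect to be the main obstacle is the third (evaluation) clause: I must confirm that pre-substituting $\MMM(y)$ and then closing produces the same closed term as closing $M_b$ directly, i.e.\ that substitution and model-closure commute on $y$. This is clean precisely because $\MMM(y)$ is a closed name-free value by Lemma~\ref{lem:BaseValuesAreNameFree}: it contains no variables for the closure to act on and no names to clash with $\AN{\MMM}$, so no renaming is required and $(\AN{\MMM}, M\MMM) \CONV (\AN{\MMM} \cup G', V)$ follows from the assumed reduction of $M_b\MMM$. Lemma~\ref{lem:BaseValues_are_equally_derivable_from_any_LTC} records the same phenomenon at the level of derivability, confirming that extending the LTC by a base-typed variable contributes no new names to its reach, which is exactly what makes the $\NAME$-valued derivations coincide.
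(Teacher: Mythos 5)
Your proof is correct and follows essentially the same route as the paper's own (detailed) argument: the forward direction is immediate from context extension, and the backward direction uses exactly the substitution $M := M_b\PSUBST{\MMM(y)}{y}$, with Lemma~\ref{lem:BaseValuesAreNameFree} giving $\AN{\MMM(y)}=\emptyset$ and the observation $M\MMM \equiv M_b\MMM$ closing the evaluation clause. Nothing is missing.
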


\begin{lemma}[\DONE Base type values can be added and removed from the $\TCV$ mappings without any harm]
	\label{lem:Base_Values_can_be_added/removed_from_TCV:TC}
	\[
	\Mforall \GAMMA. \Mforall \MMM^{\GAMMA}. \ \MMM \cdot x:V_x^{\TYBASE} \cdot \TCV:\GAMMA \REMOVETCVfrom \models A \ \MIFF \ \MMM \cdot x:V_x^{\TYBASE} \cdot \TCV:(\GAMMA \PLUSV x:\TYBASE) \REMOVETCVfrom \models A
	\]
	
	
	\begin{proof}
			The only two occurrences of $\TCV$  possible in assertions are in $\GAMMA_0$ in $\FORALL{x}{\GAMMA_0}$ and $\FRESH{x}{\GAMMA_0}$:
			\\
			If $A$ contains $\FORALL{x}{\GAMMA_0} A'$ and $\GAMMA_0$ contains $\TCV$ then it is unaffected by the addition (or removal) of $x:\TYBASE$ as $\LTCDERIVEDVALUE{M}{\GAMMA_0}{\MMM}{V} \MIFF \LTCDERIVEDVALUE{M}{\GAMMA_0 \PLUSV y:\TYBASE}{\MMM}{V}$ from Lemma \ref{lem:Base_Values_Dont_Extend_Reach}
			\\
			If $A$ contains $\FRESH{x}{\GAMMA_0}$ and $\GAMMA_0$ contains $\TCV$ then it is unaffected by the addition (or removal) of $x:\TYBASE$ as $\LTCDERIVEDVALUE{M}{\GAMMA_0}{\MMM}{V} \MIFF \LTCDERIVEDVALUE{M}{\GAMMA_0 \PLUSV y:\TYBASE}{\MMM}{V}$ from Lemma \ref{lem:Base_Values_Dont_Extend_Reach}
			hence the Lemma holds.
	\end{proof}	
\end{lemma}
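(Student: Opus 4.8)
The plan is to prove the biconditional by structural induction on the formula $A$, after first pinning down exactly where the value assigned to $\TCV$ can affect satisfaction. Write $\MMM_1 \DEFEQ \MMM \cdot x:V_x \cdot \TCV:(\GAMMA\REMOVETCVfrom)$ and $\MMM_2 \DEFEQ \MMM \cdot x:V_x \cdot \TCV:((\GAMMA\PLUSV x:\TYBASE)\REMOVETCVfrom)$. Since $\PLUSV x:\TYBASE$ adds a variable rather than a TCV, removal of TCVs commutes with it, giving $\MMM_2(\TCV) = \MMM_1(\TCV)\PLUSV x:\TYBASE$; the two models agree on every variable binding (both send $x$ to $V_x$) and on $\AN{\MMM_1}=\AN{\MMM_2}$ (TCV mappings carry no names). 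Hence the \emph{only} semantic difference between the two models is that the interpreted context $\SEM{\GAMMA_0}{\MMM_i}$ of any LTC $\GAMMA_0$ mentioning $\TCV$ gains or loses the single name-free binding $x:\TYBASE$.

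First I would record the base observation that toggling $x:\TYBASE$ inside an interpreted context leaves the set of derivable values unchanged. Because $V_x$ has base type it is name-free (Lemma \ref{lem:BaseValuesAreNameFree}), so any name-free term $M$ typed under $\SEM{\GAMMA_0}{\MMM_2}$ can be replaced by $M\PSUBST{V_x}{x}$, which is typed under $\SEM{\GAMMA_0}{\MMM_1}$, still has no names, and evaluates to the same value under closure; conversely a term not using $x$ is already typed under the larger context. This is the reach-invariance of Lemma \ref{lem:Base_Values_Dont_Extend_Reach}, generalised from $\NAME$ to arbitrary result types, yielding $\LTCDERIVEDVALUE{M}{\GAMMA_0}{\MMM_1}{V} \MIFF \LTCDERIVEDVALUE{M'}{\GAMMA_0}{\MMM_2}{V}$ for matching $M,M'$, and hence identical \emph{sets} of values derivable from $\GAMMA_0$ in the two models.

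With this in hand the induction is routine. Equality, and any $\FRESH{z}{\GAMMA_0}$ or $\FORALL{z}{\GAMMA_0}A'$ whose $\GAMMA_0$ omits $\TCV$, coincide because expression interpretation and TCV-free derivation ignore $\MMM(\TCV)$. For $\FRESH{z}{\GAMMA_0}$ with $\TCV\in\DOM{\GAMMA_0}$ the condition is \emph{non}-derivability of the single name-typed value $\SEM{z}{\MMM}$, so Lemma \ref{lem:Base_Values_Dont_Extend_Reach} applies verbatim. For $\FORALL{z}{\GAMMA_0}A'$ with $\TCV\in\DOM{\GAMMA_0}$ the two models quantify $z$ over the same derivable values by the previous paragraph, and for each such $V$ the extended models $\MMM_1\cdot z:V$ and $\MMM_2\cdot z:V$ again differ only in $\MMM(\TCV)$, so the induction hypothesis closes the case. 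The connectives $\neg$, $\AND$ and the evaluation formula $\EVALFORMULA{e}{e'}{m}{A'}$ follow directly from the induction hypothesis, noting that the derivation inside an evaluation formula uses the empty LTC and so is unaffected; similarly $\FAD{\TCV'}A'$ with $\TCV'\neq\TCV$ follows once one checks that $\EXTSTAR$ produces corresponding families of extensions from $\MMM_1$ and $\MMM_2$.

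I expect the main obstacle to be justifying the reach-invariance at \emph{all} types rather than only at $\NAME$: Lemma \ref{lem:Base_Values_Dont_Extend_Reach} is phrased for $\NAME$-typed derivations, whereas the restricted quantifier $\FORALL{z^\alpha}{\GAMMA_0}$ ranges over arbitrary $\alpha$, so the substitution argument $M \mapsto M\PSUBST{V_x}{x}$ must be carried out uniformly in the type of the derived value (soundly, since $V_x$ is name-free regardless of where it is plugged in). A secondary point needing care is the $\FAD{\TCV'}$ case, where one must verify that the single-step extension relation $\EXTSINGLE$ relates $\MMM_1$ and $\MMM_2$ to the same extensions; this once more reduces to the fact that the binding $x:\TYBASE$ contributes nothing to the reachable values or to $\AN{\MMM}$.
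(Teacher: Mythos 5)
Your proposal is correct and follows essentially the same route as the paper: both reduce the claim to the observation that the only places $\TCV$ can influence satisfaction are the LTCs occurring inside $\FORALL{\cdot}{\cdot}$ and $\FRESH{\cdot}{\cdot}$, where adding or removing the base-typed binding $x:\TYBASE$ leaves the set of derivable values unchanged by Lemma~\ref{lem:Base_Values_Dont_Extend_Reach}. Your further observation --- that this reach lemma is stated only for $\NAME$-typed derivations and must be generalised to arbitrary result types to cover $\FORALL{z^{\alpha}}{\GAMMA_0}$ --- is a fair point that the paper's own one-line citation also glosses over, and your substitution argument $M \mapsto M\PSUBST{V_x}{x}$ (sound because $V_x$ is name-free) is the right way to close it.
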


\begin{lemma}[\DONE Base type values can be added/removed and maintain extensions]
	\label{lem:Base_types_added/removed_maintain_extension}
	\[
	\Mforall \GAMMA^{-m},\MMM^{\GAMMA}, M^{\TYBASE}, V_m. \
	\LTCDERIVEDVALUE{M}{\GAMMA}{\MMM}{V_m}
	\ \MIMPLIES \
	\ \Mforall \GAMMA', \MMM'^{\GAMMA'}. \ \MMM \EXTSTAR \MMM' \ \MIFF \ \MMM \cdot m:V_m \EXTSTAR \MMM' \cdot m:V_m
	\]
	\begin{proof}
		$\MIMPLIES$: from Lemma \ref{lem:Gamma_derived_terms_maintain_extension_when_added} ($\AN{V_m}= \emptyset$). 
		\\
		$\MIMPLIEDBY$:
		This holds as $V_m:\TYBASE$ means the variable $m$ can always be replaced by a value $V_m$ which by Lemma \ref{lem:BaseValuesAreNameFree}, $\AN{V_m}=\emptyset$ and by being a value, $\TYPES{\emptyset}{V_m}{\TYBASE}$, so their substitution is not problematic.
	\end{proof}
\end{lemma}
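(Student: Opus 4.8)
The plan is to prove both directions of the biconditional by induction on the length of the $\EXTSTAR$ chain, exploiting throughout that $V_m$ is a closed value of base type. By Lemma~\ref{lem:BaseValuesAreNameFree} this gives $\AN{V_m}=\emptyset$, and by hypothesis $\TYPES{\emptyset}{V_m}{\TYBASE}$. Together these say that the entry $m:V_m$ contributes no names to any model in which it sits and can never enlarge the reach of an LTC; intuitively it is \emph{inert} for both derivation and reachability, which is what lets it be slid in and out of the extension chain freely.

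For the forward direction ($\MIMPLIES$) I would appeal directly to the general fact that appending a name-free, $\GAMMA$-derived value to both endpoints of an extension chain preserves that chain (Lemma~\ref{lem:Gamma_derived_terms_maintain_extension_when_added}), instantiated with $\AN{V_m}=\emptyset$; no further argument is needed. The genuine content lies in the backward direction ($\MIMPLIEDBY$): from $\MMM \cdot m:V_m \EXTSTAR \MMM' \cdot m:V_m$ I must recover $\MMM \EXTSTAR \MMM'$. I would induct on the number of single steps $\EXTSINGLE$. The reflexive base case is immediate, since deleting the $m$ entry from $\MMM \cdot m:V_m = \MMM' \cdot m:V_m$ forces $\MMM = \MMM'$. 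For the inductive step I case-split on the final single step. If it appends a program variable $y:V_y$ witnessed by some term $M_y$ deriving $V_y$ from an LTC $\GAMMA_0$ that mentions $m$, I rewrite the witness as $M_y\PSUBST{V_m}{m}$: because $V_m$ is closed and name-free, this substituted term derives the \emph{same} $V_y$ from the model with $m$ deleted and generates no new names, which is exactly the content of Lemma~\ref{lem:Base_Values_Dont_Extend_Reach}. If instead the final step appends a TCV mapping whose stored LTC contains $m:\TYBASE$, then deleting $m$ from that stored LTC is harmless by Lemma~\ref{lem:Base_Values_can_be_added/removed_from_TCV:TC}. In either case the step descends to a valid single step between the $m$-deleted models, and the induction hypothesis closes the chain.

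The main obstacle I anticipate is the bookkeeping around the \emph{ordering} of LTCs, since $m$ is removed from a position that need not be last: I must check that every evaluation witnessing a step, and every TCV-stored LTC, still type-checks and yields the same value once $m$ is deleted. This is precisely where name-freeness of $V_m$ is indispensable — substituting the closed value $V_m$ for $m$ cannot capture variables, leaves $\AN{\cdot}$ unchanged, and (being base-typed) cannot extend the set of reachable names, so each witnessing evaluation $\CONV$ is preserved verbatim and no hidden name is exposed. Assembling these observations for both the variable and TCV cases, both implications follow and the lemma holds.
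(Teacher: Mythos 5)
Your proposal is correct and follows essentially the same route as the paper: the forward direction is discharged by the same appeal to Lemma~\ref{lem:Gamma_derived_terms_maintain_extension_when_added} with $\AN{V_m}=\emptyset$, and the backward direction rests on the same observation that the closed, name-free, base-typed value $V_m$ can be substituted for $m$ in every witnessing term without affecting evaluation or reachability (the paper states this in one line; you unfold it into an explicit induction on the $\EXTSTAR$ chain using Lemmas~\ref{lem:Base_Values_Dont_Extend_Reach} and~\ref{lem:Base_Values_can_be_added/removed_from_TCV:TC}, which is just a more detailed rendering of the same argument).
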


\begin{lemma}[\DONE Semantics of LTC is equal in model extensions]
		\label{lem:extensions_give_same_semantics_for_type_contexts}
		\[
		\Mforall \GAMMA_1, \GAMMA_2, \MMM_1^{\GAMMA_1}, \MMM_{2}^{\GAMMA_2}.  \
		(\MMM_1 \EXTSTAR \MMM_{2} 
		\ \MAND \ \TCTYPES{\GAMMA_1}{\GAMMA_0}
		) 
		\ \MIMPLIES \ \SEM{\GAMMA_0}{\MMM_1} \equiv \SEM{\GAMMA_0}{\MMM_{2}}
		\]
\begin{proof}
	Prove by induction on the structure of $\GAMMA_0$:
	Induction on $\GAMMA_0$
	\begin{itemize}
		\item[$\GAMMA_0  \equiv \emptyset$:] $\SEM{\emptyset}{\MMM_1} \equiv \SEM{\emptyset}{\MMM_{2}}$
		
		\item[$\GAMMA_0 \equiv \GAMMA_0' \PLUSV x:\alpha$:]
		By IH $\SEM{\GAMMA_0'}{\MMM_1} \equiv \SEM{\GAMMA_0'}{\MMM_{2}}$ and also by Def $\EXTSTAR$, $\GAMMA_1(x) \equiv \GAMMA_{2}(x)$.
		
		\item[$\GAMMA_0 \equiv \GAMMA_0' \PLUSTC \TCV:\TC$:]
		By IH $\SEM{\GAMMA_0'}{\MMM_1} \equiv \SEM{\GAMMA_0'}{\MMM_{2}}$ and also by Def $\EXTSTAR$, $\GAMMA_1(\TCV) \equiv \GAMMA_{2}(\TCV)$
		and $\MMM_1(\TCV) \equiv \MMM_{2}(\TCV)$ gives us that this statement holds.
		
		\item[$\GAMMA_0 \equiv \GAMMA_0' \PLUSG \GAMMA_0''$:] by IH on both $\GAMMA_0'$ and $\GAMMA_0''$ this clearly holds.
	\end{itemize}
\end{proof}

\end{lemma}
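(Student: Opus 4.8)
The plan is to argue by structural induction on $\GAMMA_0$, resting on a single observation: $\SEM{\GAMMA_0}{\MMM}$ consults $\MMM$ only at the variables and TCVs that actually occur in $\GAMMA_0$, and a model extension never overwrites an existing binding. So before starting the induction I would record the relevant preservation property of $\EXTSTAR$: if $\MMM_1 \EXTSTAR \MMM_2$ then $\DOM{\MMM_1} \subseteq \DOM{\MMM_2}$ and the two models agree on $\DOM{\MMM_1}$, both on variable bindings and on TCV bindings. This is immediate from the definition of a single-step extension, which only appends a fresh variable binding $y{:}V_y$ or a fresh TCV binding $\TCV{:}(\GAMMA \REMOVETCVfrom)$, together with transitivity. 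Since $\TCTYPES{\GAMMA_1}{\GAMMA_0}$ forces every variable and TCV appearing in $\GAMMA_0$ to lie in $\DOM{\GAMMA_1} = \DOM{\MMM_1}$, the two models coincide on everything $\GAMMA_0$ can inspect.

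The induction then splits along the three generators of LTCs. For $\GAMMA_0 = \emptyset$ both sides are $\emptyset$. For $\GAMMA_0 = \GAMMA_0' \PLUSV x{:}\alpha$, inverting the derivation of $\TCTYPES{\GAMMA_1}{\GAMMA_0}$ (absorbing the $\PLUSG$ weakening rule) yields $\TCTYPES{\GAMMA_1}{\GAMMA_0'}$, so the induction hypothesis gives $\SEM{\GAMMA_0'}{\MMM_1} \equiv \SEM{\GAMMA_0'}{\MMM_2}$; appending $x{:}\alpha$ — a type fixed syntactically by $\GAMMA_0$ and thus independent of the model — to both sides closes the case. The ordered-union case $\GAMMA_0 = \GAMMA_0' \PLUSG \GAMMA_0''$ is dispatched by applying the hypothesis to each component.

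The crux is the TCV case $\GAMMA_0 = \GAMMA_0' \PLUSTC \TCV{:}\TC$, where by definition $\SEM{\GAMMA_0}{\MMM} \DEFEQ \SEM{\GAMMA_0'}{\MMM} \cup \SEM{\MMM(\TCV)}{\MMM}$. The induction hypothesis handles the first summand, so it remains to show $\SEM{\MMM_1(\TCV)}{\MMM_1} \equiv \SEM{\MMM_2(\TCV)}{\MMM_2}$. I would close this with two facts. First, by the preservation property above, $\MMM_1(\TCV) = \MMM_2(\TCV)$. Second, a model maps every TCV to a \EMPH{TCV-free} LTC, and the interpretation of a TCV-free LTC never triggers the TCV clause of $\SEM{\cdot}{\cdot}$ — it merely appends the declared types — so it is independent of the ambient model. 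Hence both interpretations equal the STC obtained from $\MMM_1(\TCV)$ by type-context erasure, and they coincide. I expect this observation, namely the model-independence of the interpretation of TCV-free LTCs, to be the only genuinely load-bearing step; everything else is bookkeeping on the typing derivation of $\TCTYPES{\GAMMA_1}{\GAMMA_0}$ and routine use of the induction hypothesis.
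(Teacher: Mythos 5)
Your proof is correct and follows essentially the same route as the paper's: structural induction on $\GAMMA_0$ with the same four cases, relying on the fact that $\EXTSTAR$ only appends bindings and never overwrites existing ones. Your explicit observation in the $\PLUSTC$ case --- that a model maps each TCV to a \emph{TCV-free} LTC, whose interpretation therefore never re-enters the TCV clause and is independent of the ambient model --- makes that step more rigorous than the paper's terse ``$\MMM_1(\TCV) \equiv \MMM_2(\TCV)$ gives us that this statement holds'', but it is the same underlying argument.
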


\begin{lemma}[\DONE Model extensions close terms equally]
	\label{lem:extensions_close_terms_equally}
	\[
	\Mforall \GAMMA_1, \MMM_1^{\GAMMA_1}, \GAMMA_2, \MMM_{2}^{\GAMMA_2}. 
	(\MMM_1 \EXTSTAR \MMM_{2} 
	\ \MAND \ \TYPES{\SEM{\GAMMA_1}{\MMM_1} }{M}{\alpha}
	) 
	\ \MIMPLIES \ M\MMM_1 \equiv M\MMM_{2}
	\]
	\begin{proof}
		Prove by induction on the structure of $M$ using the definition of term closure:
		\begin{itemize}
			\item[$M  \equiv x$:] 
			$x\in \GAMMA_1$ implies (Def $\EXTSTAR$) $x\MMM_1 \equiv \MMM_1(x) \equiv \MMM_{2}(x)\equiv x\MMM_{12}$
			\\
			$x\notin \GAMMA_1$ implies $x$ is introduced by a $\lambda$ hence should be left untouched.
			
			\item[$M \equiv c$:]
			$c\MMM_1 \equiv c \equiv c\MMM_{2}$ 
			
			\item[$M \equiv \GENSYM$:]
			$\GENSYM\MMM_1 \equiv \GENSYM \equiv \GENSYM\MMM_{2}$ 
			
			\item[$M \equiv \lambda x. M'$:]
			$x\in \GAMMA_1$ implies $\alpha$-renaming can be used to substitute $x$ for some other unused variable, 
			\\
			i.e. $\lambda x. M \equiv \lambda z. (M\PSUBST{z}{x})$ hence the same proof as below holds.
			\\
			$x\notin \GAMMA_1$ implies $\MMM_1^{-x} \equiv \MMM_1 \EXTSTAR \MMM_{2} \equiv \MMM_{2}^{-x}$
			\\
			and IH implies $ M' \MMM_1^{-x} \equiv M' \MMM_{2}^{-x}$
			\\
			hence
			$(\lambda x. M')\MMM_1 \equiv \lambda x. (M' \MMM_1^{-x}) \equiv \lambda x. (M' \MMM_{2}^{-x}) \equiv (\lambda x.M')\MMM_{2}$
			\item[$M \equiv N_1N_2$:] by IH on both $N_1$ and $N_2$ this clearly holds.
			\item[$M \equiv \LET{x}{N_1}{N_2}$] holds by induction on both $N_1$ and $N_2$ similar to the above two cases with $x$ $\alpha$-renamed if required.
			 
			\item[$M \equiv \RAWPAIR{N_1 , N_2}$:] by IH on both $N_1$ and $N_2$  this clearly holds.
			
			\item[$M \equiv \pi_i(M')$:] by IH on $M'$ this clearly holds.
			
			\item[$M \equiv \IFTHENELSE{M'}{N_1}{N_2}$:] by IH on both $M'$, $N_1$ and $N_2$ this clearly holds.
			
		\end{itemize}
	\end{proof}
\end{lemma}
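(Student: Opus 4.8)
The plan is to prove the statement by structural induction on $M$, using as the key auxiliary fact that the extension relation $\EXTSTAR$ never disturbs existing entries. First I would establish this fact: if $\MMM_1 \EXTSTAR \MMM_2$ then $\MMM_1(x) = \MMM_2(x)$ for every $x \in \DOM{\MMM_1}$ (and likewise $\MMM_1(\TCV) = \MMM_2(\TCV)$ for TCVs). This follows by induction on the number of single-step extensions: each application of $\EXTSINGLE$ only appends one new binding $y:V_y$ with $y$ fresh, or one new TCV mapping, without altering any pre-existing entry. Combined with the hypothesis $\TYPES{\SEM{\GAMMA_1}{\MMM_1}}{M}{\alpha}$, which forces $\FV{M} \subseteq \DOM{\SEM{\GAMMA_1}{\MMM_1}} \subseteq \DOM{\MMM_1}$, this guarantees that $\MMM_1$ and $\MMM_2$ agree on every free variable of $M$.

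With this in hand, the induction on $M$ is routine for the non-binding cases. For $M \equiv x$ the typing forces $x \in \DOM{\MMM_1}$, so $x\MMM_1 \equiv \MMM_1(x) \equiv \MMM_2(x) \equiv x\MMM_2$ by the auxiliary fact. For $M \equiv c$ and $M \equiv \GENSYM$ the closure leaves the term fixed, so the two sides are syntactically identical. For the compound constructs $N_1 N_2$, $\PAIR{N_1}{N_2}$, $\pi_i(N)$, $\IFTHENELSE{N}{N_1}{N_2}$, and the head of $\LET{x}{N_1}{N_2}$, I would apply the induction hypothesis to each immediate subterm (each typed under the same STC $\SEM{\GAMMA_1}{\MMM_1}$) and reassemble componentwise.

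The only real subtlety lies in the binder cases $\lambda x.M'$ and $\LET{x}{N_1}{N_2}$. Here the typing context for the body grows by $x:\alpha_1$ while the models $\MMM_1,\MMM_2$ do \emph{not} contain $x$, so the induction hypothesis cannot be applied verbatim. I would resolve this exactly as the term-closure definition dictates: by $\alpha$-renaming $x$ to a variable fresh for both $\DOM{\MMM_1}$ and $\DOM{\MMM_2}$, so that $\MMM_i \REMOVEVARIABLE x = \MMM_i$ and the bound variable passes through each closure untouched, i.e.~$(\lambda x.M')\MMM_i \equiv \lambda x.(M'\,\MMM_i)$. The body $M'$ then has all of its model-relevant free variables (those other than the bound $x$) covered by $\MMM_1$, and since $\MMM_1 \EXTSTAR \MMM_2$ still holds, the induction hypothesis yields $M'\MMM_1 \equiv M'\MMM_2$; prefixing $\lambda x$ gives the result, and the $\LET$ case is identical, descending into $N_1$ and into $N_2$ with $x$ kept fresh. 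I expect this binder bookkeeping — reconciling the growing typing context against the fixed models via $\alpha$-renaming — to be the one step needing care; everything else is mechanical.
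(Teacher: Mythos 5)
Your proposal is correct and follows essentially the same route as the paper's proof: structural induction on $M$, using the fact (which the paper invokes directly from the definition of $\EXTSTAR$, and which you make explicit as an auxiliary claim) that extensions never alter existing bindings, with $\alpha$-renaming to handle the binder cases. The only difference is presentational — you isolate the preservation-of-bindings fact as a separate step, which the paper leaves implicit.
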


\begin{lemma}[\DONE LTC derived values are unaffected by the addition/removal of a TCV to model]
	\label{lem:LTC_derived_values_unaffected_by_TCV_addition/removal}
	\[
	\Mforall \GAMMA, \MMM^{\GAMMA}, \GAMMA_0, M. \
	\TCTYPES{\GAMMA}{\GAMMA_0}
	\ \MIMPLIES \ 
	(
	\LTCDERIVEDVALUE{M}{\GAMMA_0}{\MMM}{V}
	\ \MIFF \
	\LTCDERIVEDVALUE{M}{\GAMMA_0}{\MMM \cdot \TCV:\GAMMA\REMOVETCVfrom}{V}
	)
	\]
	\begin{proof}
	This holds simply by Def $\LTCDERIVEDVALUE{}{}{}{}$ and the facts that $\TCV$ cannot appear in $\GAMMA_0$ and $\AN{\MMM} \equiv \AN{\MMM \cdot \TCV:\GAMMA\REMOVETCVfrom}$ and $M\MMM \equiv M (\MMM \cdot \TCV:\GAMMA\REMOVETCVfrom) $ and the possibly derivable terms are equivalent.
	\end{proof}
\end{lemma}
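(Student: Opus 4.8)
The plan is to unfold the definition of $\LTCDERIVEDVALUE{M}{\GAMMA_0}{\MMM}{V}$ on both sides of the biconditional and check that each of its three defining conjuncts survives the addition of the single TCV binding. Write $\MMM' \DEFEQ \MMM \cdot \TCV:\GAMMA\REMOVETCVfrom$. I would first observe that $\MMM'$ is obtained from $\MMM$ by exactly the second clause of single-step model extension, so $\MMM \EXTSTAR \MMM'$; this lets me reuse the two preceding lemmas rather than re-prove everything by hand.

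Unfolding, $\LTCDERIVEDVALUE{M}{\GAMMA_0}{\MMM}{V}$ asserts (i) $\AN{M} = \emptyset$, (ii) $\TYPES{\SEM{\GAMMA_0}{\MMM}}{M}{\alpha}$, and (iii) $(\AN{\MMM}, M\MMM) \CONV (\AN{\MMM}\cup G', V)$, and similarly with $\MMM'$ in place of $\MMM$. Condition (i) mentions no model and is therefore literally the same on both sides. For (ii) I would invoke Lemma~\ref{lem:extensions_give_same_semantics_for_type_contexts}: since $\MMM \EXTSTAR \MMM'$ and $\TCTYPES{\GAMMA}{\GAMMA_0}$, the interpretations coincide, $\SEM{\GAMMA_0}{\MMM} = \SEM{\GAMMA_0}{\MMM'}$, so the two typing judgements are the same judgement.

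For (iii) there are two things to align: the generated-name component and the closed term. For the names, TCV bindings carry none, since $\AN{\GAMMA} = \emptyset$ for every LTC, whence $\AN{\MMM'} = \AN{\MMM} \cup \AN{\GAMMA\REMOVETCVfrom} = \AN{\MMM}$. For the term, closure by a model ignores TCV bindings, $M\MMM = M(\MMM \cdot \TCV:\GAMMA\REMOVETCVfrom) = M\MMM'$ (equivalently Lemma~\ref{lem:extensions_close_terms_equally} applied to $\MMM \EXTSTAR \MMM'$), because $M$ is typed by $\GAMMA\LTCtoSTC$ and hence contains no TCVs. Consequently the starting configurations $(\AN{\MMM}, M\MMM)$ and $(\AN{\MMM'}, M\MMM')$ are syntactically equal, so their reduction sequences agree verbatim and yield the same $V$ and $G'$. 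Combining (i)--(iii) gives both directions of the biconditional at once.

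I do not anticipate a genuine obstacle: the lemma is pure bookkeeping, recording that a TCV binding is inert for value derivation. The only point needing explicit justification is (ii), where one must note that the freshly added $\TCV$ does not occur in $\GAMMA_0$ --- guaranteed by $\TCTYPES{\GAMMA}{\GAMMA_0}$, which makes $\GAMMA_0$ an ordered subset of the context $\GAMMA$ against which $\TCV$ is newly introduced --- so that the semantic interpretation of $\GAMMA_0$ consults only bindings shared by $\MMM$ and $\MMM'$.
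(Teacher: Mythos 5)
Your proof is correct and takes essentially the same route as the paper's: the paper's one-line justification appeals to exactly the facts you establish, namely that $\TCV$ cannot occur in $\GAMMA_0$ (so $\SEM{\GAMMA_0}{\cdot}$ is unchanged), that $\AN{\MMM} \equiv \AN{\MMM \cdot \TCV:\GAMMA\REMOVETCVfrom}$, and that closure ignores TCV bindings. Your version merely spells out the three conjuncts of the definition and routes condition (ii) through the model-extension lemmas, which is a harmless elaboration of the same argument.
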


\begin{lemma}[\DONE Evaluation under model extensions are equivalent]
	\label{lem:eval_under_extensions_are_equivalent}
	\[
	\begin{array}[t]{l}
		\Mforall \GAMMA_1, \MMM_1^{\GAMMA_1}, \GAMMA_2, \MMM_2^{\GAMMA_2}, \GAMMA_0, M, V. \
		\\ 
		\qquad 
		(
		\MMM_1 \EXTSTAR \MMM_2 
		\ \MAND \
		\TCTYPES{\GAMMA_1}{\GAMMA_0}
		\ \MAND \
		\AN{V} \cap \AN{\MMM_2} \subseteq \AN{\MMM_1}
		)
		\\
		\qquad 
		\MIMPLIES \
		(\LTCDERIVEDVALUE{M}{\GAMMA_0}{\MMM_1}{V}
		\ \MIFF \
		\LTCDERIVEDVALUE{M}{\GAMMA_0}{\MMM_2}{V})
	\end{array}
	\]
	
	\begin{proof}
		Essentially $M\MMM_1 \equiv M \MMM_2$ (Lemma \ref{lem:extensions_close_terms_equally}) and Lemma \ref{lem:adding/remove_unused_names_maintains_evaluation} prove this in both directions of the $\MIFF$.
		\begin{NDERIVATION}{1}
			\NLINE{\text{Assume: } \GAMMA_1, \GAMMA_2, \MMM_1^{\GAMMA_1}, \MMM_2^{\GAMMA_2}, \GAMMA_0 \text{ s.t}
			}{$\Mforall \GAMMA_1, \MMM_1^{\GAMMA_1}, \GAMMA_2, \MMM_2^{\GAMMA_2}, \GAMMA_0. $}
			\NLINE{
				\begin{array}{l}
					\MMM_1 \EXTSTAR \MMM_2 
					\ \MAND \ 
					\TCTYPES{\GAMMA_1}{\GAMMA_0}
					\ \MAND \ 
					\LTCDERIVEDVALUE{M}{\GAMMA_0}{\MMM_1}{V}
					\ \MAND \
					\AN{V} \cap \AN{\MMM_2} \subseteq \AN{\MMM_1}
				\end{array}
			}{Assume}
			\NLINE{
				\begin{array}[t]{ll}
					\MMM_1 \EXTSTAR \MMM_2 
					\ \MAND \ 
					\TCTYPES{\GAMMA_1}{\GAMMA_0}
					\\
					\MAND \ 
					\AN{M}= \emptyset 
					\ \MAND \
					\TYPES{\SEM{\GAMMA_0}{\MMM_1}}{M}{\alpha}
					\\
					\MAND \
					(\AN{\MMM_1}, M\MMM_1) \CONV (\AN{\MMM_1}, G', \ V)
					&
					\MIFF \
					(\AN{\MMM_1}, M\MMM_2) \CONV (\AN{\MMM_1}, G', \ V)
				\end{array}
			}{Lemma \ref{lem:extensions_close_terms_equally} $ \MIMPLIES M \MMM_1 \equiv M \MMM_2$}
			\NLINE{
				\begin{array}[t]{ll}
					\MMM_1 \EXTSTAR \MMM_2 
					\ \MAND \ 
					\TCTYPES{\GAMMA_1}{\GAMMA_0}
					\\
					\MAND \ 
					\AN{M}= \emptyset 
					\ \MAND \
					\TYPES{\SEM{\GAMMA_0}{\MMM_1}}{M}{\alpha}
					\\
					\MAND \
					(\AN{\MMM_1}, M\MMM_2) \CONV (\AN{\MMM_1}, G', \ V)
					&
					\MIFF \
					(\AN{\MMM_2}, M\MMM_2) \CONV (\AN{\MMM_2}, G', \ V)
				\end{array}
			}{
				\parbox[t]{4cm}{\raggedleft
					$\TYPES{\SEM{\GAMMA_0}{\MMM_1}}{M}{\alpha}$
					\\
					$ \MIMPLIES \AN{M \MMM_1} \subseteq \AN{\MMM_1} \subseteq \AN{\MMM_2}$ 
					\\
					Lemma \ref{lem:adding/remove_unused_names_maintains_evaluation} $\AN{V} \cap \AN{\MMM_2} \subseteq \AN{\MMM}$
				}
			}
			\NLASTLINE{\text{hence: }
				\begin{array}[t]{l}
					\Mforall \GAMMA_1, \MMM_1^{\GAMMA_1}, \GAMMA_2, \MMM_2^{\GAMMA_2}, \GAMMA_0, M. \
					\\ 
					\qquad 
					(\MMM_1 \EXTSTAR \MMM_2 
					\ \MAND \
					\TCTYPES{\GAMMA_1}{\GAMMA_0})
					\\
					\qquad 
					\MIMPLIES \
					(\LTCDERIVEDVALUE{M}{\GAMMA_0}{\MMM_1}{V}
					\ \MIFF \
					\LTCDERIVEDVALUE{M}{\GAMMA_0}{\MMM_2}{V})
				\end{array}
			}{Lemma \ref{lem:extensions_give_same_semantics_for_type_contexts}, $\SEM{\GAMMA_0}{\MMM_1} \equiv \SEM{\GAMMA_0}{\MMM_2} $}
		\end{NDERIVATION}
	\end{proof}
\end{lemma}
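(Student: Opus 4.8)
The plan is to unfold the definition of $\LTCDERIVEDVALUE{M}{\GAMMA_0}{\MMM_i}{V}$ into its three defining conjuncts and argue that each transfers, as a biconditional, between $\MMM_1$ and $\MMM_2$. These conjuncts are (i) $\AN{M}=\emptyset$; (ii) $\TYPES{\SEM{\GAMMA_0}{\MMM_i}}{M}{\alpha}$; and (iii) $(\AN{\MMM_i},\, M\MMM_i) \CONV (\AN{\MMM_i} \cup G_m,\, V)$ for some generated set $G_m$. Conjunct (i) mentions neither model, so it is literally identical for $i=1$ and $i=2$ and needs no argument. Since $\LTCDERIVEDVALUE{}{}{}{}$ is precisely the conjunction of (i)--(iii), establishing each as a biconditional yields the whole $\MIFF$.

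For conjunct (ii) I would appeal directly to Lemma~\ref{lem:extensions_give_same_semantics_for_type_contexts}: the hypotheses $\MMM_1 \EXTSTAR \MMM_2$ and $\TCTYPES{\GAMMA_1}{\GAMMA_0}$ give $\SEM{\GAMMA_0}{\MMM_1} \equiv \SEM{\GAMMA_0}{\MMM_2}$, so $M$ is typed against the \emph{same} standard type context in both models and the typing judgement transfers verbatim in either direction.

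Conjunct (iii) carries the real content. First, Lemma~\ref{lem:extensions_close_terms_equally} (whose hypothesis $\TYPES{\SEM{\GAMMA_1}{\MMM_1}}{M}{\alpha}$ is supplied by the typing just discussed) gives $M\MMM_1 \equiv M\MMM_2$, so the two closed terms actually being reduced coincide; call this common term $M'$. It remains only to reconcile the two different initial name sets. Since $\MMM_1 \EXTSTAR \MMM_2$ we have $\AN{\MMM_1} \subseteq \AN{\MMM_2}$, so write $\AN{\MMM_2} = \AN{\MMM_1} \cup G_\Delta$ with $G_\Delta \DEFEQ \AN{\MMM_2}\setminus\AN{\MMM_1}$. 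The side hypothesis $\AN{V} \cap \AN{\MMM_2} \subseteq \AN{\MMM_1}$ is exactly what forces $\AN{V} \cap G_\Delta = \emptyset$, which is precisely the disjointness premise of Lemma~\ref{lem:adding/remove_unused_names_maintains_evaluation}. Applying that lemma to $M'$ with excess names $G_\Delta$ yields $(\AN{\MMM_1}, M') \CONV (\AN{\MMM_1}\cup G_m, V)$ iff $(\AN{\MMM_2}, M') \CONV (\AN{\MMM_2}\cup G_m, V)$, and existentially quantifying the (common) generated set $G_m$ on both sides gives conjunct (iii) for $\MMM_1$ iff conjunct (iii) for $\MMM_2$.

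I expect conjunct (iii), and specifically the name-set bookkeeping, to be the main obstacle. The subtlety is that enlarging the initial name set from $\AN{\MMM_1}$ to $\AN{\MMM_2}$ changes which names $\GENSYM$ may freshly produce, so without control over $V$ the two evaluations could disagree: a name occurring in $V$ but lying in $G_\Delta$ would be ``fresh'' relative to $\MMM_1$ yet ``already generated'' relative to $\MMM_2$. Verifying that $\AN{V}\cap\AN{\MMM_2}\subseteq\AN{\MMM_1}$ is equivalent to $\AN{V}\cap G_\Delta = \emptyset$, and that Lemma~\ref{lem:adding/remove_unused_names_maintains_evaluation} applies with exactly this $G_\Delta$, is the crux; conjuncts (i) and (ii) are then immediate from the structural lemmas already established by induction.
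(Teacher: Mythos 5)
Your proposal is correct and follows essentially the same route as the paper's proof: it unfolds the definition of $\LTCDERIVEDVALUE{M}{\GAMMA_0}{\MMM_i}{V}$, uses Lemma~\ref{lem:extensions_give_same_semantics_for_type_contexts} for the typing conjunct, Lemma~\ref{lem:extensions_close_terms_equally} to identify $M\MMM_1$ with $M\MMM_2$, and Lemma~\ref{lem:adding/remove_unused_names_maintains_evaluation} with the side condition $\AN{V}\cap\AN{\MMM_2}\subseteq\AN{\MMM_1}$ to reconcile the initial name sets. Your identification of the name-set bookkeeping as the crux matches the paper's use of that same condition in its justification of the corresponding step.
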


The next few lemmas will be defined for types which are $\NAME$-free, these represent the STLC types although the key is there are \NUC \ programs that are of this $\NAME$-free type but do contain names e.g. $\lambda x. (r =r)$. We show that these $\NAME$-free typed terms can be equated to a name-free term i.e. a STLC term. One of the key assumptions to make the proof simple is that the initial STLC is simple enough to not contain infinite values of a single type i.e. there are no integers and no recursion, however proving these harder extensions would be harder.
To prove a finite number of values of any particular $\NAME$-free type (up to equivalence), we first prove that it is possible to define when two functions of $\NAME$-free type are equal, this is essentially through comparing each of the finite values of the input type.

\newcommand{\EQFS}[3]{\mathsf{EQ}^{#1}(#2,#3)}
\begin{definition}
	Inductively define $\EQFS{\alpha}{M}{N}$ as the program that equates two functions of type $\alpha$ as follows:
	
	\[
	\begin{array}{rcl}
		\EQFS{\UNIT}{M}{N} & \DEFEQ & \TRUE
		\\
		\EQFS{\BOOL}{M}{N} & \DEFEQ & M=N
		\\
		\EQFS{\alpha \FS \UNIT}{M}{N} & \DEFEQ & \TRUE
		\\
		\EQFS{\alpha_1 \times \alpha_2}{M}{N} & \DEFEQ &
			\begin{array}[t]{l} 
				\IFTHENELSE{\neg \EQFS{\alpha_1}{\pi_1(M)}{\pi_1(N)}}{\FALSE}{}
				\\
				\IFTHENELSE{\neg \EQFS{\alpha_2}{\pi_2(M)}{\pi_2(N)}}{\FALSE}{\TRUE}
			\end{array}
		\\
		\EQFS{\alpha_1 \FS \alpha_2}{M}{N} & \DEFEQ &
			\begin{array}[t]{l} 
			\IFTHENELSE{\neg \EQFS{\alpha_2}{M\VEC{V}_0}{N\VEC{V}_0}}{\FALSE}{}
			\\
			\IFTHENELSE{\neg \EQFS{\alpha_2}{M\VEC{V}_1}{N\VEC{V}_1}}{\FALSE}{}
			\\
			...
			\\
			\IFTHENELSE{\neg \EQFS{\alpha_2}{M\VEC{V}_k}{N\VEC{V}_k}}{\FALSE}{\TRUE}
			\\
			\text{where $\VEC{V}$ is the set of all finite-$k$ number of values of type $\alpha_1$ (Lemma \ref{lem:STLC_finite_values})}
			\end{array}
	\end{array}
	\]
	
\end{definition}

In the following lemmas we will write $M_{\nu}$ for a standard \NUC \ term and $M_{\lambda}$ for a \NUC \ term constructed only from STLC terms (i.e. no names and no $\GENSYM$), we also ignore $\LET{x}{M}{N}$ for simplicity.
\newcommand{\NUTERM}{$\nu$-term}
\newcommand{\STLCTERM}{$\lambda$-term}

\begin{lemma}[There are finite STLC values for each type]
	\label{lem:STLC_finite_values}
	
	\[
	\Mforall \alpha. \Mexists \VEC{W}_{\lambda\text{finite}}^{\alpha}. 
	\Mforall M_{\lambda}^{\alpha}. \Mexists V^{\alpha}_{\lambda} \in \VEC{W}_{\lambda}. M_{\lambda} \CONG V_{\lambda}
	\]
	\begin{proof}
		By induction on the structure of $\alpha$ we create a complete list $\VEC{W}^{\alpha} \equiv V[\alpha]$ as follows:
		\begin{itemize}
				\item $\alpha \equiv \UNIT$ then this clearly holds as $(G, \ M) \CONV (G', \ ())$ must always hold and so $V[\UNIT] \equiv ()$ holds.
				\item $\alpha \equiv \BOOL$ then this clearly holds as $(G, \ M) \CONV (G', \ \TRUE)$ must always hold or $(G, \ M) \CONV (G', \ \FALSE)$ must always hold and so $V[\BOOL] \equiv \TRUE \VERTICAL \FALSE$ holds.
				\item $\alpha \equiv \alpha_1 \times \alpha_2$ then clearly $V[\alpha _1 \times \alpha_1] \equiv \PAIR{V[\alpha_1]}{V[\alpha_2]}$ where in the RHS $V[\alpha_1]$ represents every possible value of type $\alpha_1$, 
				hence the RHS is the list of every possible combination between $V[\alpha_1]$ and $V[\alpha_2]$.
				
				\item $\alpha \equiv \alpha_1 \FS \alpha_2$ then by induction on $\alpha_1$ we can assume there are finite values of this type i.e. let $ \VEC{W} \equiv V[\alpha_1]$, we use this to state the values of type $\alpha_1 \FS \alpha_2$ as 
				\[
				V[\alpha_1 \FS \alpha_2] 
				\equiv 
				\lambda x^{\alpha_1}.
				\begin{array}[t]{l}
					\IFTHENELSE{\EQFS{\alpha_1}{x}{\VEC{W}_0}}{V[\alpha_2]}{}
					\\
					\IFTHENELSE{\EQFS{\alpha_1}{x}{\VEC{W}_1}}{V[\alpha_2]}{}
					\\
					...
					\\
					\IFTHENELSE{\EQFS{\alpha_1}{x}{\VEC{W}_k}}{V[\alpha_2]}{V[\alpha_2]}
				\end{array}
				\]
					hence the RHS is the list of every possible combination between $\VEC{W}$, and $V[\alpha_2]$ in each instance.

			\end{itemize}
		
		The number of values grow exponentially with the size of the type but there are always finite number of values for each type.
		These values cover all possible cases by definition as no other possible inputs or outputs exist to a function of the given type.
	\end{proof}
\end{lemma}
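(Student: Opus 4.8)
The plan is to proceed by induction on the structure of the type $\alpha$, building in each case an explicit finite list $\VEC{W}^{\alpha}$ of closed values of type $\alpha$ and showing that every closed name-free term $M_{\lambda}^{\alpha}$ is contextually equivalent ($\CONG$) to some member of the list. Two facts drive the argument: \emph{termination}, i.e.~since the name-free fragment has no recursion every closed term reduces to a value, and \emph{extensionality} for $\NAME$-free types (the (ext) axiom), which lets me identify a function with its action on arguments. Throughout I rely on $\CONG$ being a congruence, so that pairing and application preserve it.

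For the base cases I would take $\VEC{W}^{\UNIT} = \{()\}$ and $\VEC{W}^{\BOOL} = \{\TRUE,\FALSE\}$: by termination every closed $M_{\lambda}$ of type $\UNIT$ (resp.~$\BOOL$) reduces to $()$ (resp.~to $\TRUE$ or $\FALSE$), which already lies in the list. For products I set $\VEC{W}^{\alpha_1\times\alpha_2}$ to be $\{\PAIR{V_1}{V_2} : V_1 \in \VEC{W}^{\alpha_1},\, V_2 \in \VEC{W}^{\alpha_2}\}$, finite by the two induction hypotheses; any $M_{\lambda}$ of product type reduces to a pair $\PAIR{N_1}{N_2}$, and applying the IH to each projection together with congruence of $\CONG$ under pairing places it next to a list element.

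The crux is the arrow case $\alpha_1 \FS \alpha_2$. By the IH there are finitely many input representatives $\VEC{W}^{\alpha_1} = \{\VEC{W}_0,\dots,\VEC{W}_k\}$ and finitely many output representatives $\VEC{W}^{\alpha_2}$, and every closed argument of type $\alpha_1$ is $\CONG$ to exactly one $\VEC{W}_i$. I then define $\VEC{W}^{\alpha_1\FS\alpha_2}$ to contain, for each map $f$ from the $k{+}1$ input representatives to output representatives, the canonical dispatcher $\lambda x.\,(\IFTHENELSE{\EQFS{\alpha_1}{x}{\VEC{W}_0}}{f(\VEC{W}_0)}{\cdots})$ that compares $x$ against each $\VEC{W}_i$ using $\EQFS{\alpha_1}{x}{\VEC{W}_i}$ and returns the corresponding output. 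This list has size $|\VEC{W}^{\alpha_2}|^{k+1}$, hence is finite. Given an arbitrary $M_{\lambda}^{\alpha_1\FS\alpha_2}$, applying it to each representative input yields, by the IH on $\alpha_2$, an output representative; this data is precisely a map $f$, and I claim $M_{\lambda} \CONG V_f$ for the corresponding dispatcher $V_f$. The claim follows from extensionality: $M_{\lambda}$ and $V_f$ agree up to $\CONG$ on every input representative, and since every closed argument is $\CONG$ to some representative (IH on $\alpha_1$) and $\CONG$ is a congruence, they agree on all arguments, so $M_{\lambda}\CONG V_f$.

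The hard part will be the function case, and specifically three interlocking obligations. First, extensionality must be invoked in the precise form ``two name-free functions that agree, up to $\CONG$, on the finitely many input representatives are contextually equal'', which needs both the (ext) axiom restricted to $\NAME$-free types and the covering property from the IH on $\alpha_1$. Second, the construction is mutually dependent with the equality program: the dispatcher at type $\alpha_1\FS\alpha_2$ uses $\EQFS{\alpha_1}{\cdot}{\cdot}$, while $\EQFS{\alpha_1\FS\alpha_2}{\cdot}{\cdot}$ in turn uses the enumeration $\VEC{W}^{\alpha_1}$; since both references descend to strictly smaller types, a single induction on the structure of the type discharges both, but I would make this well-foundedness explicit rather than leaving it implicit. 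Third, I must verify that $\EQFS{\alpha}{M}{N}$ really decides $M \CONG N$ on representatives, so that the branches of the dispatcher fire correctly; by its defining clauses this again reduces to correctness of equality on the smaller types and is therefore available from the same induction.
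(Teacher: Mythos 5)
Your proposal takes essentially the same route as the paper: induction on the type, with $\{()\}$ and $\{\TRUE,\FALSE\}$ at base types, products by combining component lists, and the arrow case enumerating dispatchers $\lambda x.\,\IFTHENELSE{\EQFS{\alpha_1}{x}{\VEC{W}_i}}{\cdots}{\cdots}$ over all maps from input representatives to output representatives. You are in fact more explicit than the paper about why an arbitrary $M_\lambda$ is congruent to its dispatcher (via extensionality and the covering property) and about the well-foundedness of the mutual induction with $\EQFS{\alpha}{\cdot}{\cdot}$, which the paper leaves implicit.
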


\begin{lemma}[$\NAME$-free terms are equivalent to a name free STLC term]
	\label{lem:Nm-free_terms_have_equivalent_name_free_STLC-term}
	If $\alpha$ is $\NAME$-free.
	\[
	\Mforall M_{\nu}^{\alpha}. \Mexists N_{\lambda}. M_{\nu} \CONGCONTEXT{\alpha}{\AN{M_{\nu}}} N_{\lambda}
	\]
I.e. for each term in the \NUC \ of a type which is $\NAME$-free then there exists an equivalent term constructed only of core STLC terms.
	\begin{proof}
		It is clear that all \NUTERM \ that are of type $\UNIT$ and $\BOOL$ are equivalent to a constants of that type.
		\\
		The case for $\alpha \equiv \alpha_1 \times \alpha_2$ holds trivially by induction on $\pi_1(M_{\nu})^{\alpha_1}$ and $\pi_2(M_{\nu})^{\alpha_2}$.
		The case for $\alpha \equiv \alpha_1 \FS \alpha_2$ holds as follows:
		\begin{NDERIVATION}{1}
			\NLINE{IH(\alpha) \DEFEQ \Mforall M^{\alpha}. \Mexists N_{\lambda}. \ M \CONGCONTEXT{\alpha}{\AN{M}} N }{$IH(\alpha)$}
			\NLINE{\text{Assume: } IH(\alpha_1)  \ \MAND IH(\alpha_2)}{ for $\alpha_1$, $\alpha_2$ being $\NAME$-free}
			\NLINE{\text{Prove: } IH(\alpha_1 \FS \alpha_2)}{ for $\alpha_1 \FS \alpha_2$ being $\NAME$-free}
			\NLINE{\Mforall \alpha. \Mexists \VEC{W}_{\text{finite}}^{\alpha}. 
				\Mforall M_{\lambda}^{\alpha}. \Mexists V^{\alpha} \in \VEC{W}. M \CONG V }{$\alpha$ is $\NAME$-free Lemma \ref{lem:STLC_finite_values}}
			\NLINE{\parbox[t]{10cm}{$IH(\alpha_1)$ implies for each \NUTERM \ of type $\alpha_1$ there exists an equivalent \STLCTERM (which must mean there are finite ones of these) which we call $\VEC{W}$}}{$IH(\alpha_1)$}
			\NLINE{\Mforall M_{1\nu}^{\alpha_1}. \Mexists N_{1\lambda}^{\alpha_1}. \ M_{1\nu} \CONGCONTEXT{\alpha_1}{\AN{M_{1\nu}}} N_{1\lambda}}{$IH(\alpha_1)$}
			\NLINE{\parbox[t]{10cm}{For each value in $\VEC{W}$, then $M \VEC{W}_i$ is a term  of type $\alpha_2$ which by $IH(\alpha_2)$ implies there is an equivalent \STLCTERM \ we call $\VEC{U}_i$ i.e. $M\VEC{V}_i \CONG \VEC{U}_i$}}{$IH(\alpha_2)$}
			\NLINE{\Mforall \VEC{W}_i^{\alpha_1} \in \VEC{W}^{\alpha_1}. \ \Mforall M^{\alpha_1 \FS \alpha_2}_{\nu}. \Mexists N_{\lambda}^{\alpha_2}. \ M\VEC{W}_i \CONGCONTEXT{\alpha_2}{\AN{M}} N }{$IH(\alpha_2)$}
			\NLINE{\parbox[t]{10cm}{
				Using lines 5-8 we can build an equivalent formula to $M$ by brute force such that for each input case we have the equivalent output case i.e. 
				\\
				$N_{\lambda} \equiv \lambda x^{\alpha_1}. \begin{array}[t]{l}
					\IFTHENELSE{\EQFS{\alpha_1}{x}{\VEC{V}_0}}{\VEC{U}_0}{}
					\\
					\IFTHENELSE{\EQFS{\alpha_1}{x}{\VEC{V}_1}}{\VEC{U}_1}{}
					\\
					...
					\\
					\IFTHENELSE{\EQFS{\alpha_1}{x}{\VEC{V}_{k-1}}}{\VEC{U}_{k-1}}{\VEC{U}_k}
					\end{array}$
			}}{}
			\NLASTLINE{\parbox[t]{10cm}{
				By definition $M_{\nu} \CONGCONTEXT{\alpha1 \FS \alpha_2}{\AN{M_{\nu}}} N_{\lambda}$ as any use of $N_{\lambda}$  behaves identically to $M_{\nu}$ in any application it is used in.
			}}{
			}
		\end{NDERIVATION}
		
	\end{proof}

\end{lemma}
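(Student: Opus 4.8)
The plan is to proceed by induction on the structure of the $\NAME$-free type $\alpha$, exploiting the fact that at every $\NAME$-free type there are only finitely many values up to contextual equivalence (Lemma~\ref{lem:STLC_finite_values}), so that a \NUC\ term's observable behaviour can be tabulated and replayed by a pure STLC term. The guiding intuition is that although a $\NAME$-free typed term may create and compare names internally, none of that activity can be observed at a $\NAME$-free result type, so only the finite input--output table matters.

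For the base cases $\alpha \equiv \UNIT$ and $\alpha \equiv \BOOL$, any closed term of type $\alpha$ evaluates to one of the name-free constants $()$, $\TRUE$ or $\FALSE$, and that constant serves directly as the witness $N_\lambda$. For the product case $\alpha \equiv \alpha_1 \times \alpha_2$ I would apply the induction hypothesis to $\pi_1(M_\nu)$ and to $\pi_2(M_\nu)$, obtaining name-free terms $N_1$ and $N_2$, and take $N_\lambda \equiv \PAIR{N_1}{N_2}$, observing that any context can only inspect a pair through its two projections.

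The interesting case is $\alpha \equiv \alpha_1 \FS \alpha_2$. Here I would first invoke Lemma~\ref{lem:STLC_finite_values} to fix a finite list $\VEC{W} = W_0, \ldots, W_k$ of all STLC values of the ($\NAME$-free) input type $\alpha_1$ up to $\CONG$; by the induction hypothesis at $\alpha_1$ this list also exhausts every \NUC\ value of type $\alpha_1$ up to equivalence. For each $W_i$ the application $M_\nu W_i$ has the $\NAME$-free type $\alpha_2$, so the induction hypothesis at $\alpha_2$ supplies a name-free term $U_i$ with $M_\nu W_i \CONG U_i$. I would then assemble the witness as the cascade $N_\lambda \equiv \lambda x^{\alpha_1}.\ \IFTHENELSE{\EQFS{\alpha_1}{x}{W_0}}{U_0}{\ldots}$, using the equality-testing program $\EQFS{\alpha_1}{x}{W_i}$, which is itself well-defined at $\NAME$-free types precisely because the input type has finitely many values. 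Since every argument a context can supply has $\NAME$-free type $\alpha_1$ and is therefore equivalent to some $W_i$, and $N_\lambda$ returns $U_i \CONG M_\nu W_i$ on the matching branch, $N_\lambda$ reproduces the full input--output behaviour of $M_\nu$, yielding $M_\nu \CONGCONTEXT{\alpha_1 \FS \alpha_2}{\AN{M_\nu}} N_\lambda$.

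The main obstacle I anticipate is the function case, specifically the argument that a \NUC\ function of $\NAME$-free type is pinned down up to contextual equivalence by its values on this finite table of inputs. This rests on two points: that the internally generated fresh names of $M_\nu$ can never be observed at the name-free result type $\alpha_2$ (exactly what the induction hypothesis at $\alpha_2$ delivers, since each output is already equivalent to a name-free $U_i$), and that $\EQFS{\alpha_1}{x}{W_i}$ correctly classifies an arbitrary argument into its equivalence class. Establishing the latter cleanly, together with the finiteness claim of Lemma~\ref{lem:STLC_finite_values}, depends essentially on the \NUC's lack of integers and recursion; without that finiteness the enumeration underlying both $\EQFS{}{}{}$ and the cascade for $N_\lambda$ would fail, and the whole tabulation strategy would collapse.
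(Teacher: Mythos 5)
Your proposal is correct and follows essentially the same route as the paper's own proof: base types reduce to constants, products reduce to the two projections, and the arrow case tabulates $M_\nu$ on the finite list of values of $\alpha_1$ from Lemma~\ref{lem:STLC_finite_values} and replays it with an $\EQFS{\alpha_1}{\cdot}{\cdot}$-driven cascade of conditionals whose branches are the name-free terms $U_i$ supplied by the induction hypothesis at $\alpha_2$. Your closing remarks on why the argument hinges on the absence of integers and recursion also match the paper's own caveats.
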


\begin{lemma}[\DONE Functions that map to base types cannot reveal names in that function]
	\label{lem:Base-result-functions_cannot_reveal_their_names}
	\[
	\begin{array}{l}
	\Mforall M^{\NAME}, V^{(\alpha \FS \TYBASE)}, r_x^{\NAME}. (G \equiv \AN{M}\cup\AN{V}) 
	\\ 
	\left(
	\begin{array}{l}
	r_x \notin \AN{M} 
	\\ 
	\MAND \ \TYPES{f:\alpha \FS \TYBASE}{M}{\NAME}
	\\ 
	\MAND \ r_x \in \AN{V}  
	\\ 
	\MAND \ \TYPES{\emptyset}{V}{(\alpha \FS \TYBASE)}
	\end{array}
	\right) 
	\
	\MIMPLIES \ \neg (G, \ M\PSUBST{V}{f}) \CONV (G, G', \ r_x)
	\end{array}
	\]
	\begin{proof}
		Given the semantics, $\CONV \ \equiv \ \RED^{*}$.
		\\
		Given $r_x \notin \AN{M}$ then it is clear $M \neq r_x$ and given $\TYPES{\emptyset}{V}{(\alpha \FS \TYBASE)}$ then it is clear $V \neq r_x$, hence $r_x$ must be derived from terms $M$ and $V$.
		Assume there exists at least one such $M$ fro which this holds, then take the smallest one $M$, then by definition there exists an $M_k$ such that 
		\\
		$(G, \ M\PSUBST{V}{f}) \CONV (G, G', \ r_x) \ \MIFF \ (G, \ M\PSUBST{V}{f}) \CONV  (G, G', \ M_k) \RED (G, G', \ r_x)$ 
		\\
		then the following must hold:
		\begin{itemize}
			\item[$M_k \equiv \pi_1(\PAIR{r_x}{V'})$] hence $M\PSUBST{V}{f} \equiv \EEE{\pi_1(\PAIR{M_1}{M_2})} \PSUBST{V}{f}$ where $\EEE{M_2}\PSUBST{V}{f} \CONV r_x$ so this $\EEE{M_2}$ is smaller than $M$ hence contradiction.
			\item[$M_k \equiv \pi_2(\PAIR{V'}{r_x})$] hence $M\PSUBST{V}{f} \equiv \EEE{\pi_2(\PAIR{M_1}{M_2})}\PSUBST{V}{f}$ where $\EEE{M_2}\PSUBST{V}{f} \CONV r_x$ so this $\EEE{M_2}$ is smaller than $M$ hence contradiction.
			\item[$M_k \equiv \IFTHENELSE{\TRUE}{r_x}{V'}$] hence $M\PSUBST{V}{f} \equiv \EEE{\IFTHENELSE{M_b}{M_1}{M_2}}\PSUBST{V}{f}$ where $\EEE{M_1}\PSUBST{V}{f} \CONV r_x$ so this $\EEE{M_1}$ is smaller than $M$ hence contradiction.
			\item[$M_k \equiv \IFTHENELSE{\FALSE}{V'}{r_x}$] hence $M\PSUBST{V}{f} \equiv \EEE{\IFTHENELSE{M_b}{M_1}{M_2}}\PSUBST{V}{f}$ where $\EEE{M_2}\PSUBST{V}{f} \CONV r_x$ so this $\EEE{M_2}$ is smaller than $M$ hence contradiction.
			\item[$M_k \equiv (\lambda a. r_x)V'$] hence $M\PSUBST{V}{f} \equiv \EEE{(\lambda a. M_1)M_2}\PSUBST{V}{f}$ where $\EEE{M_1}\PSUBST{V}{f} \CONV r_x$ so this $C[M_1]$ is smaller than $M$ hence contradiction.
			\item[$M_k \equiv (\lambda a. a)r_x$] hence $M\PSUBST{V}{f} \equiv \EEE{(\lambda a. M_1)M_2}\PSUBST{V}{f}$ where $\EEE{M_2}\PSUBST{V}{f} \CONV r_x$ so this $C[M_2]$ is smaller than $M$ hence contradiction.
			\item[$M_k \equiv \LET{x}{V'}{M}$] Similar for $\LET{x}{V'}{M}$
			\item There are no other possible terms that reduce to $r_x$.
			\item[$M_k \equiv \GENSYM()$] fails to produce $r_x$ as $r_x \in G$
		\end{itemize}
		i.e. Assuming a smallest $M$ s.t. $M \CONV r_x$ (which is assumed to exist), for each term $M_k$ which is just one $\RED$-step away from $r_x$ then it can be proven that a smaller $M$ could be produced hence contradiction.
	\end{proof}
\end{lemma}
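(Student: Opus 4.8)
The plan is to argue that the name $r_x$ is \emph{confined} inside the substituted copies of $V$ and can never surface as the whole result. The intuition is purely type-driven: since $r_x \notin \AN{M}$ and $V$ is closed, every occurrence of $r_x$ in $M\PSUBST{V}{f}$ lies inside a copy of $V$; as $V$ is a value of type $\alpha \FS \TYBASE$, each such copy is a $\lambda$-abstraction (or a value built from such), so $r_x$ always sits guarded under a function of result type $\TYBASE$. The only way to observe anything produced by $V$ is to apply it, and an application of $V$ yields a value of base type $\TYBASE$, which by Lemma~\ref{lem:BaseValuesAreNameFree} is name-free and hence cannot equal $r_x$. So $r_x$ can only ever be \emph{consumed} by equality tests (producing a Boolean), never \emph{returned}.

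To turn this into a proof I would assume, for contradiction, that $(G,\ M\PSUBST{V}{f}) \CONV (G,G',\ r_x)$, and take a \emph{minimal} counterexample, say the one with the smallest $M$ (equivalently, with respect to the length of the reduction). I then inspect the final step $M_k \RED r_x$ and case-split on the shape of the contracted redex. The contractum $r_x$ can only arise from a projection $\pi_i(\PAIR{M_1}{M_2})$ with $r_x$ in the $i$-th component, an $\IFTHENELSE{b}{M_1}{M_2}$ selecting a branch that is $r_x$, a $\beta$-redex $(\lambda a.N)W$, or a $\LET{a}{W}{N}$. In each of these cases the subterm that actually carries $r_x$ yields a strictly smaller configuration that still reduces to $r_x$, contradicting minimality. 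The remaining redex, $\GENSYM()$, cannot contract to $r_x$ either, because $\GENSYM$ always produces a name outside the current generated set $G$, whereas $r_x \in \AN{V} \subseteq G$.

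The main obstacle is the $\beta$-reduction case in which the function being applied is a residual of $V$ itself. Here the body $N$ has type $\TYBASE$, so by Lemma~\ref{lem:BaseValuesAreNameFree} it must converge to a name-free value and therefore cannot be $r_x$; the subtlety is that the argument $W$ passed to $V$ may itself contain further copies of $V$, so reduction can \emph{duplicate} $r_x$ into the body. Consequently the confinement property cannot be stated as a naive ``$r_x$ occurs only under a $\lambda$'' syntactic condition: it must be an invariant, closed under substitution of arguments into function bodies, asserting that every occurrence of $r_x$ lies within a value of some type $\beta \FS \TYBASE$ and never at a $\NAME$-typed evaluation position that could become the entire term. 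Establishing that this invariant holds of $M\PSUBST{V}{f}$ and is preserved by every reduction rule, in particular across an application of $V$ and across the capture-avoiding substitution $N\PSUBST{W}{a}$, is the crux of the argument; once it is in place, the terminal configuration cannot be the bare name $r_x$, giving the required contradiction.
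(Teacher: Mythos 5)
Your proposal follows essentially the same strategy as the paper's proof: assume a minimal counterexample, inspect the last reduction step $M_k \RED r_x$, case-split on the shape of the contracted redex (projection, conditional, $\beta$-redex, let, $\GENSYM$), and dispose of the $\GENSYM$ case by observing that $r_x \in G$ so a fresh name cannot equal $r_x$. The case list and the minimality contradiction are the same as in the paper.

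Where you go beyond the paper is in flagging the $\beta$-case in which the applied abstraction is a residual of $V$ itself. This is a real issue: when the redex $(\lambda a.N)W$ is created by the substitution $\PSUBST{V}{f}$ rather than being the image of a redex already present in $M$, the paper's move of exhibiting ``a smaller $M$'' has no obvious justification, since there is no subterm of the original compile-time $M$ corresponding to the body $N$. The paper's proof silently treats every $M_k$ as if it decomposed as $\EEE{\cdot}\PSUBST{V}{f}$ for some context already in $M$, which is exactly the point you identify as the obstacle. Your proposed repair --- a confinement invariant stating that every occurrence of $r_x$ sits inside a value of some type $\beta \FS \TYBASE$ and never at a $\NAME$-typed position that could become the whole term, preserved by every reduction rule and by substitution of arguments into bodies --- is the right shape of argument and is where the real work lies. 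However, as written you only name this invariant and declare it the crux; you do not establish that it holds initially or that it is preserved by $\beta$-reduction when the argument $W$ itself contains copies of $V$. Until that preservation lemma is actually proved, your proposal (like the paper's own proof) is a sketch with a gap at precisely the duplication/residual case, so you should either carry out the invariant-preservation induction in full or reformulate the minimality measure (e.g.\ over reduction length of the substituted term rather than over the compile-time $M$) so that the smaller counterexample genuinely exists in every case.
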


\begin{lemma}[Adding a name to the model but not to the context means it is fresh]
	\label{lem:Gamma_derived_name_and_not_derivable_from_model_plus_name_implies_fresh_name}
	\[
	\Mforall \GAMMA, \MMM^{\GAMMA}, r_x. \ (\Mexists M_x. \LTCDERIVEDVALUE{M_x}{\GAMMA}{\MMM}{r_x} \ \MAND \ \neg \Mexists N_x. \LTCDERIVEDVALUE{N_x}{\GAMMA }{\MMM \cdot x:r_x}{r_x}) \ \MIMPLIES \ r_x \notin \AN{\MMM}
	\]
	\begin{proof}
		Assume $r_x \in \AN{\MMM}$ then clearly there would be a direct contradiction in the assumption as $\Mexists N_x. \LTCDERIVEDVALUE{N_x}{\GAMMA }{\MMM \cdot x:r_x}{r_x}$ as $\AN{\MMM \cdot x:r_x} \equiv \AN{\MMM}$.
		Hence $r_x \notin \AN{\MMM}$.
	\end{proof}
\end{lemma}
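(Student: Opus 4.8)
The plan is to prove the contrapositive by contradiction: assume $r_x \in \AN{\MMM}$ and manufacture a witness that violates the second conjunct of the hypothesis. Concretely, the first conjunct hands us a term $M_x$ with $\LTCDERIVEDVALUE{M_x}{\GAMMA}{\MMM}{r_x}$, and I would show that this \emph{same} $M_x$ also witnesses $\LTCDERIVEDVALUE{M_x}{\GAMMA}{\MMM \cdot x:r_x}{r_x}$. That directly contradicts the assumed $\neg \Mexists N_x.\, \LTCDERIVEDVALUE{N_x}{\GAMMA}{\MMM \cdot x:r_x}{r_x}$, and the desired conclusion $r_x \notin \AN{\MMM}$ follows.

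To carry this out, first I would unfold the definition of $\LTCDERIVEDVALUE{}{}{}{}$ into its three clauses for $M_x$: name-freeness $\AN{M_x} = \emptyset$, the typing $\TYPES{\SEM{\GAMMA}{\MMM}}{M_x}{\NAME}$, and convergence $(\AN{\MMM},\, M_x\MMM) \CONV (\AN{\MMM} \cup G',\, r_x)$, all three of which are supplied by the first conjunct. The key observation is that adjoining the binding $x:r_x$ to the model changes nothing relevant. Since $x \notin \DOM{\GAMMA}$, the interpretation of $\GAMMA$ is unaffected, i.e.\ $\SEM{\GAMMA}{\MMM \cdot x:r_x} = \SEM{\GAMMA}{\MMM}$, so the typing clause transfers verbatim; and because $M_x$ is typed under $\GAMMA$ we have $x \notin \FV{M_x}$, whence the two closures coincide, $M_x(\MMM \cdot x:r_x) = M_x\MMM$. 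Name-freeness of $M_x$ is a property of the term alone and is independent of the model.

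The one set-theoretic fact I need is that the ambient name set is preserved: in general $\AN{\MMM \cdot x:r_x} = \AN{\MMM} \cup \{r_x\}$, and under the standing assumption $r_x \in \AN{\MMM}$ this is simply $\AN{\MMM}$. Consequently the convergence clause $(\AN{\MMM \cdot x:r_x},\, M_x(\MMM \cdot x:r_x)) \CONV (\AN{\MMM \cdot x:r_x} \cup G',\, r_x)$ is literally the same reduction already supplied for $\MMM$. Hence all three clauses hold with the model $\MMM \cdot x:r_x$, producing the contradicting witness $N_x \DEFEQ M_x$.

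I do not expect a genuine obstacle here; the statement is essentially a consistency check on the definition of derivation. The only point that warrants a moment's care is justifying that $x$ is fresh for both $\GAMMA$ and $M_x$ — that is, that extending the model by $x:r_x$ alters neither $\SEM{\GAMMA}{\cdot}$ nor the closure $M_x\MMM$ — which is immediate from the typing discipline keeping $x$ out of $\DOM{\GAMMA}$, and therefore out of $\FV{M_x}$.
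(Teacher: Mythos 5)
Your proposal is correct and takes essentially the same route as the paper's proof: both argue by contradiction from $r_x \in \AN{\MMM}$, observing that the witness $M_x$ from the first conjunct then also works for the extended model because $\AN{\MMM \cdot x:r_x} \equiv \AN{\MMM}$. You merely spell out the transfer of the three clauses of $\LTCDERIVEDVALUE{}{}{}{}$ (typing, name-freeness, convergence), which the paper leaves implicit.
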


\begin{lemma}[\DONE LTC derived terms cannot reveal old names]
	\label{lem:LTC_derived_value_cannot_reveal_old_names}
	\[ 
	\Mforall \GAMMA, \MMM^{\GAMMA}, \GAMMA_0, M_y^{\alpha_y}, s^{\NAME}. \
	\left(
	\begin{array}{l}
	\TCTYPES{\GAMMA}{\GAMMA_0} 
	\\ \MAND \
	\LTCDERIVEDVALUE{M_y}{\GAMMA_0}{\MMM}{V_y}
	\\ \MAND \  
	s \in  \AN{\MMM} 
	\end{array}
	\right)
	\ \MIMPLIES \ 
	\begin{array}{l}
	\Mexists M_1. \LTCDERIVEDVALUE{M_1}{\GAMMA_0 \PLUSV y}{\MMM \cdot y:V_y}{s} 
	\\ \MIFF \\
	\Mexists M_0. \LTCDERIVEDVALUE{M_0}{\GAMMA_0}{\MMM}{s}
	\end{array}
	\]
		Essentially $V_y$ cannot reveal any names in $\MMM$ that are not already available to $\GAMMA_0$.
		\begin{proof}
			Proof by contradiction: 
			\\
			assume some $\GAMMA, \MMM^{\GAMMA}, \GAMMA_0, M_y^{\alpha_y}, s^{\NAME}$
			with 
			$\TCTYPES{\GAMMA}{\GAMMA_0} 
			\ \MAND \
			\LTCDERIVEDVALUE{M_y}{\GAMMA_0}{\MMM}{V_y}
			\ \MAND \  
			s \in  \AN{\MMM} $
			then show
			\\
			the following fails
			$
			\neg \Mexists M_1. \LTCDERIVEDVALUE{M_1}{\GAMMA_0 \PLUSV y}{\MMM \cdot y:V_y}{s} 
			\ \MAND \ 
			\Mexists M_0. \LTCDERIVEDVALUE{M_0}{\GAMMA_0}{\MMM}{s}$
			but this creates a contradiction simply by Def $\LTCDERIVEDVALUE{}{}{}{}$.
			\\
			and the following fails
			$\Mexists s. 
			\Mexists M_1. \LTCDERIVEDVALUE{M_1}{\GAMMA_0 \PLUSV y}{\MMM\cdot y:V_y}{s}
			\ \MAND \ 
			\neg \Mexists M_0. \LTCDERIVEDVALUE{M_0}{\GAMMA_0}{\MMM}{s}$
			\\
			iff \
			$\Mexists s.
			\begin{array}[t]{l} 
			\Mexists M_1. 
			\begin{array}[t]{l}
			\TYPES{\SEM{\GAMMA_0 \PLUSV y}{\MMM \cdot y:V_y}}{M_1}{\NAME} 
			\\
			\MAND \ \AN{M_1}=\emptyset 
			\\
			\MAND \ (\AN{\MMM \cdot y:V_y}, \ M_1\MMM \cdot y:V_y) \CONV (\AN{\MMM \cdot y:V_y}, G', \ s) 
			\end{array}
			\\ 
			\MAND \ 
			\neg \Mexists M_0. 
			\begin{array}[t]{l}
			\TYPES{\SEM{\GAMMA_0}{\MMM}}{M_0}{\NAME} 
			\\
			\MAND \ \AN{M_0}=\emptyset 
			\\
			\MAND \ (\AN{\MMM}, \ M_1\MMM) \CONV (\AN{\MMM}, G', \ s) 
			\end{array}
			\end{array}
			$
			\\
			iff \
			$\Mexists s.
			\begin{array}[t]{l} 
			\Mexists M_1(y). 
			\begin{array}[t]{l}
			\TYPES{\SEM{\GAMMA_0 \PLUSV y}{\MMM \cdot y:V_y}}{M_1(y)}{\NAME} 
			\\
			\MAND \ \AN{M_1(y)}=\emptyset 
			\\
			\MAND \ (\AN{\MMM \cdot y:V_y}, \ M_1(y)\MMM \cdot y:V_y) \CONV (\AN{\MMM \cdot y:V_y}, G', \ s) 
			\end{array}
			\\ 
			\MAND \ 
			\neg \Mexists M_0. 
			\begin{array}[t]{l}
			\TYPES{\SEM{\GAMMA_0}{\MMM}}{M_0}{\NAME} 
			\\
			\MAND \ \AN{M_0}=\emptyset 
			\\
			\MAND \ (\AN{\MMM}, \ M_1\MMM) \CONV (\AN{\MMM}, G', \ s) 
			\end{array}
			\end{array}
			$
			\\
			iff \
			$\Mexists s.
			\begin{array}[t]{l} 
			\Mexists M_1(y). 
			\begin{array}[t]{l}
			\TYPES{\SEM{\GAMMA_0 \PLUSV y}{\MMM \cdot y:V_y}}{M_1(y)}{\NAME} 
			\\
			\MAND \ \AN{M_1(y)}=\emptyset 
			\\
			\MAND \ (\AN{\MMM \cdot y:V_y}, \ M_1(y)\MMM \cdot y:V_y) \CONV (\AN{\MMM \cdot y:V_y}, G', \ s) 
			\end{array}
			\\ 
			\MAND \ 
			\neg \Mexists \LET{y}{M_y}{M_1(y)}. 
			\begin{array}[t]{l}
			\TYPES{\SEM{\GAMMA_0}{\MMM}}{\LET{y}{M_y}{M_1(y)}}{\NAME} 
			\\
			\MAND \ \AN{\LET{y}{M_y}{M_1(y)}}=\emptyset 
			\\
			\MAND \ (\AN{\MMM}, \ (\LET{y}{M_y}{M_1(y)})\MMM) \CONV (\AN{\MMM}, G', \ s) 
			\end{array}
			\end{array}
			$
			\\
			hence contradiction as the name derived by $M_1(y)$ in the first evaluation, this same name can also be derived by $\LET{y}{M_y}{M_1(y)}$ in the second evaluation by the semantics of the evaluation of $\LET{y}{M_y}{M_1(y)}\MMM \RED M_1(V_y)\MMM \equiv M_1(y) (\MMM \cdot y:V_y)$. Where $M_y$ reduces to the term equal to $V_y$ except the fresh names which have no affect on the name produced by $M_0$ and $M_1$ hence contradiction.
			\\
			This works because $s \in \AN{\MMM}$ hence the name cannot be fresh (derived from $V_y$ and any fresh names in $V_y$ can be replicated by a new generation via $M_y$ and will be treated equally. 
		\end{proof}
\end{lemma}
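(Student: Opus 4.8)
The plan is to prove the two directions of the $\MIFF$ separately, re-establishing in each the three clauses of $\LTCDERIVEDVALUE{}{}{}{}$: name-freeness of the witness, its typing against the appropriate interpreted context, and convergence of its model-closure to $s$. Name-freeness and typing will be routine; the only delicate clause is convergence, since the two configurations start from different generated-name sets, namely $\AN{\MMM}$ versus $\AN{\MMM \cdot y:V_y} = \AN{\MMM} \cup \AN{V_y}$.

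For the backward direction, assume $\LTCDERIVEDVALUE{M_0}{\GAMMA_0}{\MMM}{s}$ and set $M_1 \DEFEQ M_0$. Typing is preserved because $\SEM{\GAMMA_0}{\MMM} \subseteq \SEM{\GAMMA_0 \PLUSV y}{\MMM \cdot y:V_y}$, and since $y \notin \FV{M_0}$ we have $M_0(\MMM \cdot y:V_y) \equiv M_0\MMM$. The starting name set merely grows by the fresh names $\AN{V_y} \setminus \AN{\MMM}$; as $s \in \AN{\MMM}$ is disjoint from these, Lemma~\ref{lem:adding/remove_unused_names_maintains_evaluation} transports the derivation of $s$ to the larger configuration.

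The forward direction is the substantive one. Given $\LTCDERIVEDVALUE{M_1}{\GAMMA_0 \PLUSV y}{\MMM \cdot y:V_y}{s}$, written $M_1(y)$ to mark the free occurrences of $y$, I would take $M_0 \DEFEQ \LET{y}{M_y}{M_1(y)}$. Then $\AN{M_0} = \emptyset$, and $M_0$ types against $\SEM{\GAMMA_0}{\MMM}$ using $\TYPES{\SEM{\GAMMA_0}{\MMM}}{M_y}{\alpha_y}$ together with $\SEM{\GAMMA_0 \PLUSV y}{\MMM \cdot y:V_y} = \SEM{\GAMMA_0}{\MMM}, y:\alpha_y$. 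For convergence, $M_0\MMM = \LET{y}{M_y\MMM}{M_1\MMM}$, and by hypothesis $(\AN{\MMM}, M_y\MMM) \CONV (\AN{\MMM} \cup G_y, V_y)$, so this let-reduces to $(\AN{\MMM} \cup G_y, (M_1\MMM)\PSUBST{V_y}{y})$. Since $(M_1\MMM)\PSUBST{V_y}{y} \equiv M_1(\MMM \cdot y:V_y)$ and the hypothesis gives $(\AN{\MMM \cdot y:V_y}, M_1(\MMM \cdot y:V_y)) \CONV (\AN{\MMM \cdot y:V_y} \cup G', s)$, I only need to reconcile the two starting sets: the let leaves us at $\AN{\MMM} \cup G_y$, a superset of $\AN{\MMM \cdot y:V_y} = \AN{\MMM} \cup \AN{V_y}$, and the surplus names are fresh and disjoint from $s$. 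Lemma~\ref{lem:adding/remove_unused_names_maintains_evaluation} then preserves the derivation of $s$, giving $\LTCDERIVEDVALUE{M_0}{\GAMMA_0}{\MMM}{s}$.

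The main obstacle is getting the use of the hypothesis $s \in \AN{\MMM}$ exactly right. Re-running $M_y$ inside the let reconstructs $V_y$ only up to the choice of the fresh names it generates, so were $s$ one of those fresh names (i.e.\ $s \in \AN{V_y} \setminus \AN{\MMM}$), the rebuilt value could carry a different name and the derivation would fail --- in fact the statement would then be false. The oldness of $s$ is exactly what guarantees that the name $M_1$ extracts does not depend on which fresh names populate $V_y$, allowing Lemma~\ref{lem:adding/remove_unused_names_maintains_evaluation} to bridge the differing generated-name sets in both directions. I would additionally invoke Lemma~\ref{lem:extensions_close_terms_equally} to justify the closure identities $M_0\MMM = \LET{y}{M_y\MMM}{M_1\MMM}$ and $(M_1\MMM)\PSUBST{V_y}{y} \equiv M_1(\MMM \cdot y:V_y)$.
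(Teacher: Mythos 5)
Your proposal is correct and follows essentially the same route as the paper: the right-to-left direction reuses $M_0$ unchanged (the paper dismisses this as immediate from the definition), and the substantive left-to-right direction uses the witness $\LET{y}{M_y}{M_1(y)}$, with the hypothesis $s \in \AN{\MMM}$ invoked exactly as the paper does to argue that the fresh names regenerated by re-running $M_y$ cannot affect the old name $s$. Your write-up is merely a direct (rather than by-contradiction) presentation of the same argument, and is if anything more explicit about the name-set bookkeeping via Lemma~\ref{lem:adding/remove_unused_names_maintains_evaluation}.
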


\begin{lemma}[\DONE Extensions cannot reveal old names]
	\label{lem:extensions_cannot_reveal_old_names}
	\[ 
	\Mforall \GAMMA, \MMM^{\GAMMA}, \GAMMA', \MMM^{\GAMMA'}, s^{\NAME}. \
	\left(
	\begin{array}{l}
	\MMM \EXTSTAR \MMM'
	\\ \MAND \  
	s \in  \AN{\MMM} 
	\end{array}
	\right)
	\ \MIMPLIES \ 
	\begin{array}{l}
	\Mexists M'. \LTCDERIVEDVALUE{M'}{\GAMMA'}{\MMM'}{s} 
	\\ \MIFF \\
	\Mexists M. \LTCDERIVEDVALUE{M}{\GAMMA}{\MMM}{s}
	\end{array}
	\]
	Essentially $\MMM'$ cannot reveal any names in $\MMM$ that are not already available to $\MMM$.
	\begin{proof}
		$\leftarrow:$ clearly holds
		\\
		$\rightarrow:$ 
		Proof by induction on the structure of $\MMM'$: 
		\\
		\begin{itemize}
			\item[$\MMM' \equiv \MMM$] then this clearly holds.
			\item[$\MMM' \equiv \MMM'_0 \cdot \TCV:\GAMMA_0\REMOVETCVfrom$] then this holds as no new names are reachable by $\TCV$ and IH on $\MMM'_0$.
			\item[$\MMM' \equiv \MMM_0^{\GAMMA_0'} \cdot y:V_y$] then $\Mexists M_y. \LTCDERIVEDVALUE{M_y}{\GAMMA_0'}{\MMM_0'}{V_y}$ and using Lemma \ref{lem:LTC_derived_value_cannot_reveal_old_names} and IH on $\MMM'_0$ this clearly holds.
	\end{itemize}
	\end{proof}
\end{lemma}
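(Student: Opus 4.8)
The plan is to prove the two directions of the biconditional separately. The backward direction ($\Leftarrow$) is immediate: if $\LTCDERIVEDVALUE{M}{\GAMMA}{\MMM}{s}$, then since $\MMM \EXTSTAR \MMM'$ forces $\TCTYPES{\GAMMA'}{\GAMMA}$ and preserves the interpretation of the old variables (Lemma~\ref{lem:extensions_give_same_semantics_for_type_contexts}), the same name-free $M$ stays typed and still evaluates to $s$ under the closure $\MMM'$. Formally this is Lemma~\ref{lem:eval_under_extensions_are_equivalent} instantiated with the derived value taken to be $s$, whose side condition $\{s\} \cap \AN{\MMM'} \subseteq \AN{\MMM}$ holds because $s \in \AN{\MMM}$; taking $M' := M$ (now read against the larger $\GAMMA'$) supplies the witness.

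All the content lies in the forward direction ($\Rightarrow$), which I would establish by induction on the number of single steps in the extension $\MMM \EXTSTAR \MMM'$, writing $\MMM \EXTSTAR \MMM'_0 \EXTSINGLE \MMM'$ in the inductive case (models grow only at the right end, so $\MMM$ is a prefix of $\MMM'_0$ and $\MMM \EXTSTAR \MMM'_0$ holds). The base case $\MMM' = \MMM$ is trivial. The final single step is one of two kinds. If it is a TCV-extension, $\MMM' = \MMM'_0 \cdot \TCV : \GAMMA'_0\REMOVETCVfrom$, then adding $\TCV$ introduces no names (recall $\AN{\GAMMA} = \emptyset$), so by Lemma~\ref{lem:LTC_derived_values_unaffected_by_TCV_addition/removal} the names derivable from $\MMM'$ coincide with those derivable from $\MMM'_0$, and the induction hypothesis relays this back to $\MMM$.

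The crucial step is the value-extension case $\MMM' = \MMM'_0 \cdot y : V_y$, where $\GAMMA' = \GAMMA'_0 \PLUSV y$ and $\LTCDERIVEDVALUE{M_y}{\GAMMA'_0}{\MMM'_0}{V_y}$ by the definition of $\EXTSINGLE$. Here I would first observe that $s \in \AN{\MMM} \subseteq \AN{\MMM'_0}$, since extension only enlarges the name set; this is exactly the old-name side condition of Lemma~\ref{lem:LTC_derived_value_cannot_reveal_old_names}. Applying that lemma, with both its $\GAMMA$ and $\GAMMA_0$ taken to be $\GAMMA'_0$ and its model taken to be $\MMM'_0$, yields $\Mexists M'.\, \LTCDERIVEDVALUE{M'}{\GAMMA'_0 \PLUSV y}{\MMM'}{s} \MIFF \Mexists M'_0.\, \LTCDERIVEDVALUE{M'_0}{\GAMMA'_0}{\MMM'_0}{s}$, after which the induction hypothesis closes the chain from $\MMM'_0$ back to $\MMM$.

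The main obstacle is therefore entirely absorbed into Lemma~\ref{lem:LTC_derived_value_cannot_reveal_old_names}: the statement that a newly bound value $V_y$, itself derived from the current model, cannot grant access to a previously hidden old name (the underlying intuition being that any derivation using $y$ can be inlined via $\LET{y}{M_y}{\cdots}$, with the fresh names generated inside $V_y$ irrelevant precisely because $s$ is old). Granting that lemma, the present proof is a clean induction, and the only care needed is bookkeeping: checking that $s \in \AN{\MMM}$ propagates to $s \in \AN{\MMM'_0}$ at every intermediate model, so the old-name hypothesis is always met, and that the ambient LTC $\GAMMA'$ factors correctly as $\GAMMA'_0 \PLUSV y$ or $\GAMMA'_0 \PLUSTC \TCV$ to match the model step being peeled off.
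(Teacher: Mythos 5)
Your proposal is correct and follows essentially the same route as the paper: the backward direction is immediate (the paper just says ``clearly holds'', you justify it via Lemma~\ref{lem:eval_under_extensions_are_equivalent}), and the forward direction is an induction peeling off the last extension step, discharging the TCV case by the irrelevance of TCVs to name reachability and the value case by Lemma~\ref{lem:LTC_derived_value_cannot_reveal_old_names} applied at $\MMM'_0$ with the old-name hypothesis propagated via $s \in \AN{\MMM} \subseteq \AN{\MMM'_0}$. Your write-up is somewhat more careful about the bookkeeping than the paper's, but the decomposition and the key lemma are identical.
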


In the next few lemmas we treat expressions as potential terms as each expression, $c$, $x$, $\pi_i(e)$, and $\PAIR{e}{e}$ is also a method of constructing a term and hence $\SEM{e}{\MMM} \equiv e\MMM$ (closure) implies we can treat $e$ as a term even though it is an expression, i.e. we write $e$ even though we mean ``the term constructed using the equivalent expression-constructors used to construct the expression $e$''.

\begin{lemma}[\DONE Expressions cannot create new names]
	\label{lem:expressions_cannot_create_new_names}	
	\[
	\Mforall \GAMMA, \MMM^{\GAMMA}, e. \ \EXPRESSIONTYPES{\GAMMA}{e}{\alpha} \ \MIMPLIES \ \Mexists V. \ (\AN{\MMM}, \ \SEM{e}{\MMM}) \CONV (\AN{\MMM}, \ V)
	\]
	Clearly guaranteed termination implies $\Mexists V'. (\AN{\MMM},\  \SEM{e}{\MMM}) \CONV (\AN{\MMM}, G', \ V')$, however this Lemma proves that no new names are produced in such an evaluation.
	\\
	\begin{proof}
		By induction on structure of $e \in \{c,x,\pi_i(e), \PAIR{e}{e} \}$
		\\
		let $G= \AN{\MMM}$
		\\
		\begin{itemize}
			\item $e= c$ constants  then clearly $(G, \SEM{e}{\MMM}) \Mequiv (G, c) \CONV (G, c)$
			\item $e= x$ clearly $(G, \SEM{e}{\MMM}) \Mequiv (G, \MMM(x)) \CONV (G, \MMM(x))$ as $\MMM(x)$ is a value by definition.
			\item $e= \pi_i(e')$ 
			\\
			clearly $(G, \SEM{\pi_i(e')}{\MMM}) \Mequiv (G, \pi_i(\SEM{e'}{\MMM}))$ 
			\\
			and by IH on $e'$ then  $\Mexists V'. (G, \SEM{e'}{\MMM}) \CONV (G, V')$ with $V':\alpha \times \beta$ 
			\\
			hence $V'\Mequiv \PAIR{V_1}{V_2}$
			\\
			hence $\pi_i(V_i)$ is a value and hence $(G, \SEM{\pi_i(e)}{\MMM}) \Mequiv (G, \pi_i(\SEM{e}{\MMM})) \CONV (G, V_i)$
			\item $e= \PAIR{e_1}{e_2}$
			\\
			by IH on $e_1$ and $e_2$:
			\\
			$\Mexists V_1.(G, \SEM{e_1}{\MMM}) \CONV (G, V_1)$
			\\
			and $\Mexists V_2.(G, \SEM{e_2}{\MMM}) \CONV (G, V_2)$
			\\
			hence from operational semantics: 
			\\
			$\Mexists V \ (= \PAIR{V_1}{V_2}).(G, \SEM{\PAIR{e_1}{e_2}}{\MMM}) \Mequiv (G, \PAIR{\SEM{e_1}{\MMM}}{\SEM{e_2}{\MMM}}) \CONV (G, V)$
		\end{itemize}
	\end{proof}
\end{lemma}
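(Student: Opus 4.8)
The plan is to proceed by structural induction on the expression $e$, following the grammar $e ::= x^{\alpha} \mid c \mid \PAIR{e}{e} \mid \pi_i(e)$. The decisive observation is that expressions, by definition, contain neither $\GENSYM$ nor names as constants; hence closing $e$ by a model $\MMM$ and evaluating the result can only ever fire projection and pair-formation reductions, never the $\GENSYM$-rule, which in the operational semantics is the sole step that enlarges the generated-name set. So the generated set must stay fixed at exactly $\AN{\MMM}$ throughout the reduction, which is precisely what the statement $(\AN{\MMM},\SEM{e}{\MMM}) \CONV (\AN{\MMM},V)$ asserts.

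For the base cases, when $e = c$ is a constant we have $\SEM{c}{\MMM} \Mequiv c$, already a value, so it evaluates to itself in zero steps with the name set untouched; and when $e = x$ we have $\SEM{x}{\MMM} \Mequiv \MMM(x)$, which is a closed value by the definition of a model, so it likewise evaluates to itself. Although $\MMM(x)$ may itself contain names (as in $\lambda a.\,a = r$), these already lie in $\AN{\MMM}$, so no \emph{new} name is introduced.

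For the inductive cases, when $e = \PAIR{e_1}{e_2}$ the induction hypothesis supplies values $V_1,V_2$ with $(\AN{\MMM},\SEM{e_i}{\MMM}) \CONV (\AN{\MMM},V_i)$; since $\SEM{\PAIR{e_1}{e_2}}{\MMM} \Mequiv \PAIR{\SEM{e_1}{\MMM}}{\SEM{e_2}{\MMM}}$, the two subexpressions reduce componentwise and the pair constructor assembles $\PAIR{V_1}{V_2}$, with the name set never changing. When $e = \pi_i(e')$, typing ($\EXPRESSIONTYPES{\GAMMA}{e'}{\alpha_1 \times \alpha_2}$) forces $e'$ to have product type, so by the induction hypothesis $\SEM{e'}{\MMM}$ reduces to some pair value $\PAIR{V_1}{V_2}$ over $\AN{\MMM}$, and the projection step returns $V_i$, again adding no names.

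I expect no genuine obstacle here; the one point that must be handled carefully rather than waved through is the bookkeeping of the generated-name component of the configuration. At each reduction step one has to confirm that the redex belongs to the pure fragment (projection applied to a pair, or assembly of a pair from values) and therefore that the $\GENSYM$-rule is never applicable to any subterm of $\SEM{e}{\MMM}$. Once that confinement is recorded, every case is routine, and termination (hence existence of some $V'$) is already guaranteed by the ambient strong-normalisation of the \NUC; the lemma merely refines this by pinning the final name set to $\AN{\MMM}$.
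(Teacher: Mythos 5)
Your proof is correct and takes essentially the same route as the paper's: structural induction on $e$ over the cases $c$, $x$, $\PAIR{e_1}{e_2}$, $\pi_i(e')$, with the key point being that expression constructors never trigger the $\GENSYM$ reduction, so the generated-name set stays fixed at $\AN{\MMM}$. Your additional remark about confining every redex to the pure projection/pair fragment is a slightly more explicit justification of the same fact the paper leaves implicit.
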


\begin{lemma}[\DONE Expressions are name free]
	\label{lem:expressions_are_name_free}
	\[
		\Mforall e. \ \EXPRESSIONTYPES{\GAMMA}{e}{\alpha} \ \MIMPLIES \ \AN{e}=\emptyset
	\]
	\begin{proof}
		By IH on the structure of $e$ knowing that $e \in \{c, x,\pi_i(e), \PAIR{e_1}{e_2} \}$, i.e. no built in names, names only come from closing with a model via the semantics of $\SEM{e}{\MMM}$.
	\end{proof}
\end{lemma}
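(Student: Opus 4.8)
The plan is to prove this by structural induction on the expression $e$, following the grammar $e ::= x^{\alpha} \mid c \mid \PAIR{e}{e'} \mid \pi_i(e)$ given in Section~\ref{def:logical_language_syntax}. The function $\AN{\cdot}$ collecting all names is defined on expressions by the same obvious clauses used for terms, so $\AN{\PAIR{e}{e'}} = \AN{e} \cup \AN{e'}$ and $\AN{\pi_i(e)} = \AN{e}$. The crucial observation is that, unlike the term grammar for $M$, the expression grammar contains \emph{no} production for names $r$ and \emph{no} production for $\GENSYM$: every expression is built solely from variables, the constants $\TRUE$, $\FALSE$, $()$, pairing, and projection. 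Consequently names can only enter when an expression is later closed by a model through $\SEM{e}{\MMM}$, and never appear in the raw syntax of $e$ itself.

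First I would discharge the base cases. If $e = x^{\alpha}$ then $\AN{x} = \emptyset$, since a variable is not a name. If $e = c$ then $c \in \{\TRUE, \FALSE, ()\}$ and none of these is a name, so $\AN{c} = \emptyset$. For the inductive step, assume the claim for the immediate subexpressions. If $e = \PAIR{e_1}{e_2}$, the pairing typing rule gives $\EXPRESSIONTYPES{\GAMMA}{e_1}{\alpha_1}$ and $\EXPRESSIONTYPES{\GAMMA}{e_2}{\alpha_2}$, so the induction hypothesis yields $\AN{e_1} = \AN{e_2} = \emptyset$ and hence $\AN{\PAIR{e_1}{e_2}} = \AN{e_1} \cup \AN{e_2} = \emptyset$. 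If $e = \pi_i(e')$, then $\EXPRESSIONTYPES{\GAMMA}{e'}{\alpha_1 \times \alpha_2}$ and the induction hypothesis gives $\AN{e'} = \emptyset$, whence $\AN{\pi_i(e')} = \AN{e'} = \emptyset$.

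I do not expect a genuine obstacle: the statement is a purely syntactic sanity check, and the typing premise $\EXPRESSIONTYPES{\GAMMA}{e}{\alpha}$ is only used to guarantee that $e$ is a well-formed expression—in fact the conclusion holds for any expression generated by the grammar, independent of typing. The only point worth flagging is definitional rather than mathematical: one must confirm that $\AN{\cdot}$ is defined on expressions with exactly the clauses above and contributes no names at the variable and constant leaves, which is immediate from the deliberate design choice to exclude $r$ and $\GENSYM$ from the expression syntax. This lemma is precisely what licenses treating expressions as name-free terms in the surrounding development, e.g.~in Lemma~\ref{lem:expressions_cannot_create_new_names}.
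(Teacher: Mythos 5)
Your proof is correct and takes exactly the same route as the paper: a structural induction over the expression grammar $e \in \{c, x, \pi_i(e), \PAIR{e_1}{e_2}\}$, keyed on the observation that names $r$ and $\GENSYM$ are deliberately excluded from the expression syntax so names can only arrive via closure $\SEM{e}{\MMM}$. The paper states this in one sentence; you merely spell out the cases.
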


\begin{lemma}[\DONE Expressions are fresh name free]
	\label{lem:expressions_results_are_fresh_name_free}
	\[
	\Mforall e. \ \EXPRESSIONTYPES{\GAMMA}{e}{\alpha} \ \MAND \ \LTCDERIVEDVALUE{e}{\GAMMA}{\MMM}{V_e} \ \MIMPLIES \ \AN{V_e}\subseteq \AN{\MMM}
	\]
	\begin{proof}
		By IH on the structure of $e$ knowing that $e \in \{c, x,\pi_i(e), \PAIR{e_1}{e_2} \}$, i.e. no built in names, names only come from closing with a model via the semantics of $\SEM{e}{\MMM}$, and hence no evaluation of $\GENSYM()$ can occur, just producing old names.
	\end{proof}	
\end{lemma}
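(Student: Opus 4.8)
The plan is to prove this by induction on the structure of the expression $e$, exploiting the fact that the grammar of expressions, $e ::= x \mid c \mid \PAIR{e}{e} \mid \pi_i(e)$, contains neither names nor $\GENSYM$. Consequently, closing $e$ by the model (recall from the remark preceding these lemmas that $\SEM{e}{\MMM} \Mequiv e\MMM$, so an expression may be treated as a term) and reducing it can never fire the $\GENSYM$ reduction rule, so no fresh names are ever generated: every name appearing in the resulting value $V_e$ must already occur in $\MMM$. Concretely, unfolding $\LTCDERIVEDVALUE{e}{\GAMMA}{\MMM}{V_e}$ yields $(\AN{\MMM}, e\MMM) \CONV (\AN{\MMM} \cup G', V_e)$, and I would first invoke Lemma~\ref{lem:expressions_cannot_create_new_names} to conclude that $G' = \emptyset$, i.e. the ambient name set is left unchanged by the evaluation. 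Together with Lemma~\ref{lem:expressions_are_name_free} (giving $\AN{e} = \emptyset$), this already shows that all names in $V_e$ must be traced back to the values stored in $\MMM$.

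For the base cases: if $e = c$ then $V_e = c$ and $\AN{c} = \emptyset \subseteq \AN{\MMM}$ trivially; if $e = x$ then $e\MMM = \MMM(x)$ is already a value, so $V_e = \MMM(x)$, and since $\MMM^{\GAMMA}$ is typed, $\MMM(x)$ is a closed value all of whose names lie in $\AN{\MMM}$ by the very definition of $\AN{\MMM}$, giving $\AN{V_e} \subseteq \AN{\MMM}$. For the inductive step, if $e = \PAIR{e_1}{e_2}$ then by the induction hypothesis the sub-values $V_{e_1}, V_{e_2}$ derived from $e_1, e_2$ satisfy $\AN{V_{e_i}} \subseteq \AN{\MMM}$, and since $V_e = \PAIR{V_{e_1}}{V_{e_2}}$ we obtain $\AN{V_e} = \AN{V_{e_1}} \cup \AN{V_{e_2}} \subseteq \AN{\MMM}$; if $e = \pi_i(e')$ then by the induction hypothesis the value $V_{e'}$ derived from $e'$ has $\AN{V_{e'}} \subseteq \AN{\MMM}$, and because $V_{e'}$ is a pair value $\PAIR{W_1}{W_2}$ with $V_e = W_i$, we get $\AN{V_e} = \AN{W_i} \subseteq \AN{V_{e'}} \subseteq \AN{\MMM}$.

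Since no case introduces fresh names, I do not expect a genuine obstacle: the statement is essentially a corollary of Lemma~\ref{lem:expressions_cannot_create_new_names} combined with the observation that the model's stored values contribute only names already counted in $\AN{\MMM}$. The single point requiring a little care is the variable case, where one must appeal to the typing of $\MMM^{\GAMMA}$ to guarantee that $\MMM(x)$ is a closed value with $\AN{\MMM(x)} \subseteq \AN{\MMM}$, rather than a value that could somehow carry a name lying outside $\AN{\MMM}$. Everything else is routine structural bookkeeping matching the style of the neighbouring lemmas.
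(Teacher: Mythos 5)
Your proposal is correct and follows essentially the same route as the paper's (very terse) proof: structural induction on $e$, using the fact that the expression grammar contains neither name literals nor $\GENSYM$, so closure by $\MMM$ followed by evaluation can only propagate names already in $\AN{\MMM}$. The extra detail you supply (the explicit base and inductive cases, and the appeal to Lemma~\ref{lem:expressions_cannot_create_new_names} and Lemma~\ref{lem:expressions_are_name_free}) is a faithful elaboration of what the paper leaves implicit.
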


\begin{lemma}[\DONE Expressions are congruent to their evaluation]
	\label{lem:expression_Cong_evaluation}
	\[
	\Mforall \GAMMA, \MMM^{\GAMMA}, e. \EXPRESSIONTYPES{\GAMMA}{e}{\alpha} \ \MAND \ \LTCDERIVEDVALUE{e}{\GAMMA}{\MMM}{V_e} \MIMPLIES e\MMM \CONGCONTEXT{\alpha}{\AN{\MMM}} V_e
	\]
	
	\begin{proof}
		By Lemma \ref{lem:expressions_results_are_fresh_name_free} we know all values contain names in the model.
		\\
		By induction on the structure of $e$:
		\begin{itemize}
			\item[$e\equiv c$] clearly holds.
			\item[$e\equiv x$] clearly holds as $x\MMM \equiv \MMM(x) \equiv V_e$
			\item[$e\equiv \pi_i(e')$] holds as by induction $e'$, 
			\\
			i.e. $\EXPRESSIONTYPES{\GAMMA}{e'}{\alpha_1 \times \alpha_2} \ \MAND \ \LTCDERIVEDVALUE{e'}{\GAMMA}{\MMM}{\PAIR{V_1}{V_2}} \MIMPLIES e'\MMM \CONGCONTEXT{\alpha}{\AN{\MMM}} \PAIR{V_1}{V_2}$
			\\
			hence  $\pi_i(e')\MMM \CONGCONTEXT{\alpha}{\AN{\MMM}} V_i$
			\item[$e\equiv \PAIR{e_1}{e_2}$] holds by IH on $e_1$ and $e_2$ 
			\\
			i.e. $\EXPRESSIONTYPES{\GAMMA}{e_1}{\alpha_1} \ \MAND \ \LTCDERIVEDVALUE{e_1}{\GAMMA}{\MMM}{V_1} \MIMPLIES e_1\MMM \CONGCONTEXT{\alpha}{\AN{\MMM}} V_1$
			\\
			and
			$\EXPRESSIONTYPES{\GAMMA}{e_2}{\alpha_2} \ \MAND \ \LTCDERIVEDVALUE{e_2}{\GAMMA}{\MMM}{V_2} \MIMPLIES e_2\MMM \CONGCONTEXT{\alpha}{\AN{\MMM}} V_2$
			\\
			implies $\PAIR{e_1}{e_2}\MMM \CONGCONTEXT{\alpha}{\AN{\MMM}} \PAIR{V_1}{V_2}$
		\end{itemize}
	\end{proof}
	
\end{lemma}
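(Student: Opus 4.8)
The plan is to prove the statement by structural induction on the grammar of expressions, $e ::= c \mid x \mid \pi_i(e) \mid \PAIR{e}{e}$, exploiting two standard properties of the contextual equivalence of~\cite{BentonN:mechbisftnc}: that it is a congruence (closed under the expression constructors $\PAIR{\cdot}{\cdot}$ and $\pi_i(\cdot)$), and that it absorbs reduction, i.e.\ a closed redex is contextually equivalent to its reduct within the same name set. Before the induction I would record that the name index $\AN{\MMM}$ is legitimate for the congruence on this pair of terms: by Lemma~\ref{lem:expressions_are_name_free} the raw expression is name-free, so $\AN{e\MMM} \subseteq \AN{\MMM}$, and by Lemma~\ref{lem:expressions_results_are_fresh_name_free} the derived value satisfies $\AN{V_e} \subseteq \AN{\MMM}$. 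Hence both $e\MMM$ and $V_e$ have all their names in $\AN{\MMM}$, so $\CONGCONTEXT{\alpha}{\AN{\MMM}}$ is well-formed exactly as required.

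The two base cases are immediate. For $e \equiv c$ the closure $c\MMM$ is the value $c$ itself and evaluation yields $V_e = c$, so reflexivity of $\CONGCONTEXT{\alpha}{\AN{\MMM}}$ suffices; for $e \equiv x$ the closure $x\MMM$ is $\MMM(x)$, already a value, so $V_e = \MMM(x)$ and reflexivity again applies. For $e \equiv \PAIR{e_1}{e_2}$, the derivation $\LTCDERIVEDVALUE{e}{\GAMMA}{\MMM}{V_e}$ forces $V_e = \PAIR{V_1}{V_2}$ with $\LTCDERIVEDVALUE{e_i}{\GAMMA}{\MMM}{V_i}$; the induction hypotheses give $e_i\MMM \CONGCONTEXT{\alpha_i}{\AN{\MMM}} V_i$, and since $\PAIR{e_1}{e_2}\MMM = \PAIR{e_1\MMM}{e_2\MMM}$ and contextual equivalence is a congruence, we conclude $\PAIR{e_1}{e_2}\MMM \CONGCONTEXT{\alpha}{\AN{\MMM}} \PAIR{V_1}{V_2} = V_e$. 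For $e \equiv \pi_i(e')$, the induction hypothesis gives $e'\MMM \CONGCONTEXT{\alpha_1 \times \alpha_2}{\AN{\MMM}} \PAIR{V_1}{V_2}$, where $e'$ evaluates to $\PAIR{V_1}{V_2}$; congruence lifts this to $\pi_i(e'\MMM) \CONGCONTEXT{\alpha_i}{\AN{\MMM}} \pi_i(\PAIR{V_1}{V_2})$, and since $\pi_i(\PAIR{V_1}{V_2})$ is a closed redex reducing to $V_i = V_e$, absorption of reduction closes the case.

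The main obstacle is really only the justification that contextual equivalence contains reduction, since that is what legitimises the projection step (and, read semantically, the entire statement). Here the name-generation machinery that normally complicates such arguments does \emph{not} intervene: expressions contain no $\GENSYM$, so by Lemma~\ref{lem:expressions_cannot_create_new_names} their evaluation generates no fresh names and the configuration's name component stays fixed at $\AN{\MMM}$. Consequently the generated-name set $G'$ appearing in the definition of $\LTCDERIVEDVALUE{e}{\GAMMA}{\MMM}{V_e}$ is empty, and Lemma~\ref{lem:adding/removing_names_to_congruence_makes_no_difference} confirms that the fixed index $\AN{\MMM}$ neither loses nor gains discriminating power. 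Thus the only genuinely load-bearing fact is the standard soundness of call-by-value reduction with respect to contextual equivalence, applied to closed redexes, which I would either cite directly from~\cite{BentonN:mechbisftnc} or establish once via the usual observation that any discriminating context must reduce a redex in evaluation position before it can observe a Boolean.
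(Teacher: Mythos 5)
Your proof is correct and follows essentially the same route as the paper's: structural induction on $e$ with the identical case analysis, using reflexivity for the base cases and congruence of $\CONGCONTEXT{\alpha}{\AN{\MMM}}$ (plus absorption of the $\pi_i$-redex) for the inductive cases. You simply make explicit the supporting facts the paper leaves implicit, namely that the name index $\AN{\MMM}$ is legitimate (via Lemmas~\ref{lem:expressions_are_name_free} and~\ref{lem:expressions_results_are_fresh_name_free}) and that contextual equivalence contains closed reduction.
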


\begin{lemma}[\DONE Names fresh in term imply name fresh in value]
	\label{lem:Name_fresh_pre_eval_implies_name_fresh_post_eval}
	\[
	\begin{array}{l}
	\Mforall \GAMMA,  \MMM^{\GAMMA}, M, r. 
	\
	r \notin \AN{M}
	\
	\MAND
	\
	(\AN{M},r,G_0, \ M) \CONV (\AN{\MMM},r,G_0,G', \ V) 
	\ \MIMPLIES \
	r \notin \AN{V}
	\end{array}
	\]
	
	\begin{proof}
		By induction on the structure of $M$, most cases are trivial, the only non-trivial cases are:
		\\
		$M\equiv M_1 M_2$:
		\\
		clearly $M_1 \CONV V_1$ and $M_2 \CONV V_2$ with $r \notin V_1 V_2$ (by IH) hence 2 cases occur:
		\\
		$V_1 \equiv \lambda x. M_1'$ then this evaluates to $M_1'\PSUBST{V_2}{x}$ which by assumptions and Op semantics $r\notin M_1'\PSUBST{V_2}{x}$
		\\
		$V_1 \equiv \GENSYM$ and $V_2 \equiv ()$ then by Op. Sem. this generates a fresh name ($ \neq r$).
	\end{proof}
\end{lemma}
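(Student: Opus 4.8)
The plan is to prove a preservation invariant for the one-step reduction relation and then lift it to $\CONV = \RED^{*}$ by induction on the length of the reduction sequence. Write $G := \AN{M}\cup\{r\}\cup G_0$ for the initial generated set, and observe that by hypothesis $r\in G$ while $r\notin\AN{M}$. The invariant I would establish is: for every single step $(G_1, N)\RED(G_2, N')$, if $r\in G_1$ and $r\notin\AN{N}$, then $r\in G_2$ and $r\notin\AN{N'}$. Since every key rule only grows the generated set ($G_1\subseteq G_2$ always), the condition $r\in G_1$ is preserved automatically, so the real content is propagating $r\notin\AN{N}$ across one step while keeping $r$ in the generated set.

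To prove the invariant I would do a case analysis on the reduction rule. By the definition of $\RED$, a single step decomposes as $N=\EEE{M_0}$ with $M_0$ a redex contracted by one of the key rules, and $N'=\EEE{M_0'}$. Since $\AN{\EEE{M_0}}=\AN{\EEE{\cdot}}\cup\AN{M_0}$ and likewise for the contractum, and $r\notin\AN{N}$ yields both $r\notin\AN{\EEE{\cdot}}$ and $r\notin\AN{M_0}$, it suffices to check each redex. For beta, $(G_1,(\lambda x.P)V)\RED(G_1,P\PSUBST{V}{x})$: capture-avoiding substitution introduces no new names, so $\AN{P\PSUBST{V}{x}}\subseteq\AN{P}\cup\AN{V}\subseteq\AN{M_0}$, whence $r\notin\AN{M_0'}$, and $G_1$ is unchanged. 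For the two equality redexes the contractum is $\TRUE$ or $\FALSE$, whose name sets are empty. For gensym, $(G_1,\GENSYM())\RED(G_1\cup\{n\},n)$ with side condition $n\notin G_1$; since $r\in G_1$ we get $n\neq r$, hence $r\notin\AN{n}=\{n\}$, and $r\in G_1\subseteq G_1\cup\{n\}$.

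Assembling these, the base case of the outer induction (zero steps, $V=M$) is immediate from $r\notin\AN{M}$, and the inductive step applies the single-step invariant to the head reduction and then the induction hypothesis to the remaining suffix, using that $r$ stays in the generated set throughout. I expect the gensym case to be the only genuine obstacle, and even there the argument is short: it rests entirely on the operational side condition $n\notin G_1$ together with the preserved fact $r\in G_1$, which is precisely why the invariant must track the generated set rather than the term alone. The remaining bookkeeping — that substitution manufactures no names and that a reduction context contributes a fixed name set disjoint from the freshly generated one — is routine. (A structural induction on $M$ is an alternative, but threading the generated-set invariant through the staged reductions of application and $\LET{x}{M}{N}$ makes the reduction-length formulation cleaner.)
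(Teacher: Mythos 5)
Your proof is correct, but it takes a genuinely different route from the paper's. The paper argues by structural induction on $M$, dismissing most cases as trivial and treating application as the only interesting one: either a $\beta$-redex, where capture-avoiding substitution introduces no names, or $\GENSYM\,()$, where the freshly generated name cannot be $r$ because $r$ is already in the generated set. You instead induct on the length of the reduction sequence, after isolating a one-step preservation invariant --- if $r \in G_1$ and $r \notin \AN{N}$ then $(G_1, N) \RED (G_2, N')$ gives $r \in G_2$ and $r \notin \AN{N'}$ --- verified by case analysis on the contracted redex and the observation that $\AN{\EEE{M_0}} = \AN{\EEE{\cdot}} \cup \AN{M_0}$. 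The two proofs rest on the same two facts (substitution manufactures no names; the side condition $n \notin G_1$ together with $r \in G_1$ forces $n \neq r$), so neither is deeper than the other. What your formulation buys is explicitness about how the generated-set invariant is threaded through staged evaluations: in the paper's application case, the induction hypothesis for $M_2$ must be invoked with a generated set already enlarged by the evaluation of $M_1$, a point the paper leaves implicit (it is harmless only because the statement quantifies over an arbitrary $G_0$). Your step-indexed invariant makes that bookkeeping automatic, at the mild cost of having to mention every key rule rather than only the non-trivial ones.
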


\begin{lemma}[\DONE Fresh names are underivable from model plus that name]
	\label{lem:fresh_name_underivable_from_model+name}
	\[
	\Mforall \GAMMA, \MMM^{\GAMMA}, r. \
	r \notin \AN{\MMM}
	\ \MIMPLIES \
	\neg \Mexists M^{\NAME}. \LTCDERIVEDVALUE{M}{\GAMMA}{\MMM \cdot m:r}{r}
	\]
	\begin{proof}
	This holds as $r$ must be a freshly generated name not appearing in $\MMM$ hence no term can be derived from the model $\MMM$ to generate such a name.
	\\
	\begin{NDERIVATION}{1}
		\NLINE{\text{Assume: $\GAMMA$, $\MMM^{\GAMMA}$ s.t.}}{}
		\NLINE{\parbox{11cm}{prove by contradiction: \\ i.e. assume $\neg 
			(r \notin \AN{\MMM}
			\ \MIMPLIES \
			\neg \Mexists M^{\NAME}. \LTCDERIVEDVALUE{M}{\GAMMA}{\MMM \cdot m:r}{r})$ prove contradiction}
		}{}
		\NLINE{\MIFF \ 
			(r \notin \AN{\MMM}
			\ \MAND \
			\Mexists M^{\NAME}. \LTCDERIVEDVALUE{M}{\GAMMA}{\MMM \cdot m:r}{r})
		}{FOL}
		\NLINE{\MIFF \ 
			r \notin \AN{\MMM}
			\ \MAND \
			\Mexists M^{\NAME}. 
			\begin{array}[t]{l}
				\AN{M}=\emptyset
				\
				\MAND \  
				\TYPES{\GAMMA}{M}{\alpha} 
				\\
				\MAND  (\AN{\MMM}, r, \  M(\MMM\cdot m:r)) \CONV (\AN{\MMM},r,G', \ r) 
			\end{array}
		}{Sem. $\LTCDERIVEDVALUE{}{}{}{}$}
		\NLINE{\MIFF \ 
			r \notin \AN{\MMM}
			\ \MAND \
			\Mexists M^{\NAME}. 
			\begin{array}[t]{l}
				\AN{M}=\emptyset
				\
				\MAND \  
				\TYPES{\GAMMA}{M}{\alpha} 
				\\
				\MAND  (\AN{\MMM}, r, \  M\MMM) \CONV (\AN{\MMM},r,G', \ r) 
			\end{array}
		}{$m\notin\DOM{\GAMMA} \ \MAND \ \TYPES{\GAMMA}{M}{\alpha}$}
		\NPLINE{\MIFF \ 
			r \notin \AN{\MMM}
			\ \MAND \
			\Mexists M^{\NAME}. 
			\begin{array}[t]{l}
				\AN{M}=\emptyset
				\
				\MAND \  
				\TYPES{\GAMMA}{M}{\alpha} 
				\\
				\MAND \ r \notin \AN{M\MMM}
				\\
				\MAND  (\AN{\MMM}, r, \  M\MMM) \CONV (\AN{\MMM},r,G', \ r) 
			\end{array}
		}{4cm}{$r \notin \AN{M\MMM}$ and \\ Lemma \ref{lem:Name_fresh_pre_eval_implies_name_fresh_post_eval} \\ imply contradiction}
		\NLASTLINE{\MIFF \ r \notin \AN{\MMM} \ \MIMPLIES \ \neg \Mexists M^{\NAME}. \LTCDERIVEDVALUE{M}{\GAMMA}{\MMM\cdot m:r}{r}}{}
	\end{NDERIVATION}
	\end{proof}

\end{lemma}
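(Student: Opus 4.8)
The plan is to argue by contradiction. I would assume $r \notin \AN{\MMM}$ and suppose, towards a contradiction, that some name-free term $M$ of type $\NAME$ witnesses $\LTCDERIVEDVALUE{M}{\GAMMA}{\MMM \cdot m:r}{r}$. Unfolding the definition of the derived-value judgement yields three facts: $\AN{M} = \emptyset$; the typing $\TYPES{\SEM{\GAMMA}{\MMM \cdot m:r}}{M}{\NAME}$; and the reduction $(\AN{\MMM \cdot m:r}, M(\MMM \cdot m:r)) \CONV (\AN{\MMM \cdot m:r} \cup G', r)$. Note that $\AN{\MMM \cdot m:r} = \AN{\MMM} \cup \{r\}$, so $r$ is present in the ambient generated-name set but, by hypothesis, is the \emph{only} new name contributed by the binding $m:r$.

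The crucial observation is that the anchor $m$ does not belong to $\DOM{\GAMMA}$, since $\MMM^{\GAMMA}$ is typed by $\GAMMA$ and we merely extend it with $m:r$. Hence $m \notin \FV{M}$, and moreover the interpretation of $\GAMMA$ never consults the $m$-binding, so $\SEM{\GAMMA}{\MMM \cdot m:r} = \SEM{\GAMMA}{\MMM}$. Consequently closing $M$ by the extended model coincides with closing it by $\MMM$ alone, i.e.\ $M(\MMM \cdot m:r) = M\MMM$. Since $\AN{M} = \emptyset$ and $r \notin \AN{\MMM}$, this closed term contains no occurrence of $r$, that is $r \notin \AN{M\MMM}$. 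Because $r$ already lies in the generated set $\AN{\MMM \cdot m:r}$, no $\GENSYM$ call during the reduction can reintroduce it either.

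With these two facts in hand---$r$ absent from the term but reserved in the generated set---I would invoke Lemma~\ref{lem:Name_fresh_pre_eval_implies_name_fresh_post_eval}, which states precisely that a reserved name absent from a term cannot occur in the value it reduces to. Applied to the reduction $(\AN{\MMM \cdot m:r}, M\MMM) \CONV (\AN{\MMM \cdot m:r} \cup G', r)$ with the reserved name $r$, this forces $r \notin \AN{r}$, an outright contradiction, completing the argument.

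The main obstacle is the seemingly trivial but genuinely load-bearing identity $M(\MMM \cdot m:r) = M\MMM$: it is what strips the unique occurrence of $r$ out of the computation, and it must be justified carefully from the typing $\TYPES{\SEM{\GAMMA}{\MMM \cdot m:r}}{M}{\NAME}$ together with $m \notin \DOM{\GAMMA}$. Once this reduction to an $r$-free term is established, everything else is a direct appeal to Lemma~\ref{lem:Name_fresh_pre_eval_implies_name_fresh_post_eval} and the definition of the derived-value relation.
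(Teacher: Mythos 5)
Your proof is correct and follows essentially the same route as the paper's: both unfold the derived-value judgement, use $m \notin \DOM{\GAMMA}$ to establish $M(\MMM \cdot m:r) = M\MMM$, observe that $r \notin \AN{M\MMM}$, and then derive the contradiction from Lemma~\ref{lem:Name_fresh_pre_eval_implies_name_fresh_post_eval}. No substantive differences.
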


\begin{lemma}[\DONE Extensions give equal semantics of expressions]
	\label{lem:semantics_expressions_equal_under_model_extensions}
	\[
	\Mforall \GAMMA_1, \MMM_1^{\GAMMA_1}, \GAMMA_2, \MMM_{2}^{\GAMMA_2}, e. \
	\MMM_1 \EXTSTAR \MMM_{2}
	\ \MAND \ 
	\EXPRESSIONTYPES{\GAMMA_1}{e}{\alpha}
	\ \MIMPLIES \ 
	\SEM{e}{\MMM_1} \equiv \SEM{e}{\MMM_{2}}
	\]
	\begin{proof}
		Prove by induction on the structure of $e$ using semantics of expressions:
		\begin{itemize}
			\item[$e \equiv c$:]
			$\SEM{c}{\MMM_1} \equiv c \equiv \SEM{c}{\MMM_2} $ 
			
			\item[$e  \equiv x$:] 
			$x\in \GAMMA_1$ implies (Def $\EXTSTAR$) $\SEM{x}{\MMM_1} \equiv \MMM_1(x) \equiv \MMM_{2}(x)\equiv\SEM{c}{\MMM_2} $
			
			\item[$e \equiv \RAWPAIR{e_1 , e_2}$:] by IH on both $e_1$ and $e_2$  this clearly holds.
			
			\item[$e \equiv \pi_i(e')$:] by IH on $e'$ this clearly holds.
			
		\end{itemize}
	\end{proof}
\end{lemma}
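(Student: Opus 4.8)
The plan is to argue by structural induction on the expression $e$, whose grammar is limited to $e ::= x^{\alpha} \mid c \mid \PAIR{e}{e} \mid \pi_i(e)$. The typing premise $\EXPRESSIONTYPES{\GAMMA_1}{e}{\alpha}$ guarantees that every free variable of $e$ lies in $\DOM{\GAMMA_1}$, hence (since $\MMM_1^{\GAMMA_1}$) is mapped by $\MMM_1$; this is what makes the interpretation $\SEM{e}{\MMM_1}$ well defined, and I will carry the corresponding typing through each inductive step by inversion of the expression typing rules in Fig.~\ref{figure_typing_formulae}.

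The two base cases are immediate. For a constant $c$ the semantics gives $\SEM{c}{\MMM_1} \equiv c \equiv \SEM{c}{\MMM_2}$ with no dependence on the model. For a variable $x$, the essential observation is that model extension never overwrites existing bindings: unfolding the definition of $\EXTSINGLE$, each single step only appends a fresh variable-to-value or TCV-to-LTC mapping, so by induction on the number of steps witnessing $\MMM_1 \EXTSTAR \MMM_2$ we obtain $\MMM_1(x) \equiv \MMM_2(x)$ for every $x \in \DOM{\MMM_1}$. Since $\EXPRESSIONTYPES{\GAMMA_1}{x}{\alpha}$ forces $x \in \DOM{\GAMMA_1} \subseteq \DOM{\MMM_1}$, we conclude $\SEM{x}{\MMM_1} \equiv \MMM_1(x) \equiv \MMM_2(x) \equiv \SEM{x}{\MMM_2}$.

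The inductive cases reduce to the induction hypothesis via the compositional clauses of the interpretation. For $\PAIR{e_1}{e_2}$, applying the hypothesis to the two subexpressions $e_1$ and $e_2$ (each typable under $\GAMMA_1$ by inversion of the pairing rule) gives $\SEM{e_1}{\MMM_1} \equiv \SEM{e_1}{\MMM_2}$ and $\SEM{e_2}{\MMM_1} \equiv \SEM{e_2}{\MMM_2}$, after which $\SEM{\PAIR{e_1}{e_2}}{\MMM} \equiv \PAIR{\SEM{e_1}{\MMM}}{\SEM{e_2}{\MMM}}$ closes the case. For $\pi_i(e')$, inversion gives $\EXPRESSIONTYPES{\GAMMA_1}{e'}{\alpha_1 \times \alpha_2}$, the hypothesis on $e'$ yields $\SEM{e'}{\MMM_1} \equiv \SEM{e'}{\MMM_2}$, and since projection acts pointwise on the interpretation the result follows.

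I expect no genuine obstacle here: the statement is a routine structural induction with no interaction with name generation or freshness. The one point that deserves care, and is the crux of the whole argument, is the variable case, where everything hinges on $\EXTSTAR$ preserving all pre-existing bindings of $\MMM_1$. I would isolate this as the explicit invariant of the induction on extension steps, since it is exactly the property that would fail for a more liberal notion of ``extension'' and is precisely what makes the lemma true; the remaining cases are then purely mechanical applications of the compositional semantics.
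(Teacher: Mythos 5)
Your proposal is correct and follows essentially the same route as the paper: a structural induction on $e$ with trivial constant and pair/projection cases, and the variable case resting on the fact that $\EXTSTAR$ only appends bindings and so preserves $\MMM_1(x)$. Your explicit isolation of that preservation invariant is a slightly more careful write-up of what the paper dispatches with ``Def $\EXTSTAR$'', but it is not a different argument.
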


\begin{lemma}[\DONE Sub LTC's implies subtype-contexts]
	\label{lem:subLTC_implies_sub_typing_context}
	\[
	\Mforall \GAMMA, \GAMMA_0, \MMM^{\GAMMA}. \ \TCTYPES{\GAMMA}{\GAMMA_0} \ \MIMPLIES \ \SEM{\GAMMA_0}{\MMM} \subseteq \SEM{\GAMMA}{\MMM}
	\]
	Proof: clearly holds through induction on structure of $\GAMMA$ and $\GAMMA_0$ and using the typing rules in Fig.~\ref{figure_typing_formulae}.
	\HIDDEN{
		\begin{proof}
			By induction on the structure of the rules for $\TCTYPES{}{}$ with the Def $\SEM{\GAMMA}{\MMM}$.
			\\
			\begin{itemize}
				\item $\GAMMA_0 \equiv \emptyset$ then clearly this holds.
				\item $\GAMMA \equiv \GAMMA' \PLUSV x:\alpha$ and $\GAMMA_0 \equiv \GAMMA_0' \PLUSV x:\alpha$ then by the typing rules and IH this holds.
				\item $\GAMMA \equiv \GAMMA' \PLUSTC \TCV$ and $\GAMMA_0 \equiv \GAMMA_0' \PLUSTC \TCV$ then by the typing rules and IH this holds as \\
				$\SEM{\GAMMA_0' \PLUSTC\TCV}{\MMM} \equiv \SEM{\GAMMA_0'}{\MMM}  \cup \SEM{\MMM(\TCV)}{\MMM}  \subseteq \SEM{\GAMMA'}{\MMM}  \cup \SEM{\MMM(\TCV)}{\MMM}   \equiv \SEM{\GAMMA' \PLUSTC \TCV}{\MMM}$
				\item $\GAMMA \equiv \GAMMA_0$ then clearly this holds.
				\item $\GAMMA \equiv \GAMMA' \PLUSV x:V$ and $\TCTYPES{\GAMMA'}{\GAMMA_0}$ this holds from IH and as $\SEM{\GAMMA'}{\MMM} \subseteq \SEM{\GAMMA' \PLUSV x:V}{\MMM}$
				\item $\GAMMA \equiv \GAMMA' \PLUSTC \TCV$ and $\TCTYPES{\GAMMA'}{\GAMMA_0}$ this holds from IH and as $\SEM{\GAMMA'}{\MMM} \subseteq \SEM{\GAMMA' \PLUSTC \TCV}{\MMM} \equiv \SEM{\GAMMA'}{\MMM} \cup \MMM(\TCV)$
			\end{itemize}
		\end{proof}
	}
\end{lemma}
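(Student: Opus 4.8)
The plan is to proceed by rule induction on the derivation of the ordered-subset judgement $\TCTYPES{\GAMMA}{\GAMMA_0}$, keeping the model $\MMM$ (and hence all the TCV lookups $\MMM(\TCV)$) fixed throughout, and to read off each case directly from the recursive clauses defining $\SEM{\cdot}{\MMM}$. The observation that makes the induction go through cleanly is that, for a fixed $\GAMMA_0$, the STC $\SEM{\GAMMA_0}{\MMM}$ depends on $\MMM$ only through the LTCs $\MMM(\TCV)$ it assigns to the TCVs occurring in $\GAMMA_0$; consequently, whenever I compare $\SEM{\GAMMA_0}{\MMM}$ with $\SEM{\GAMMA}{\MMM}$ I may use identical lookups on both sides.

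First I would dispatch the base and reflexive cases: the rule concluding $\TCTYPES{\GAMMA}{\emptyset}$ gives $\SEM{\emptyset}{\MMM} = \emptyset \subseteq \SEM{\GAMMA}{\MMM}$ at once, and $\GAMMA_0 \equiv \GAMMA$ gives equality. For the rule building $\TCTYPES{\GAMMA \PLUSV x:\alpha}{\GAMMA_0 \PLUSV x:\alpha}$ from $\TCTYPES{\GAMMA}{\GAMMA_0}$, the induction hypothesis yields $\SEM{\GAMMA_0}{\MMM} \subseteq \SEM{\GAMMA}{\MMM}$, and the interpretation clause for $\PLUSV$ appends the same binding to both sides, so $\SEM{\GAMMA_0 \PLUSV x:\alpha}{\MMM} = \SEM{\GAMMA_0}{\MMM}, x:\alpha \subseteq \SEM{\GAMMA}{\MMM}, x:\alpha = \SEM{\GAMMA \PLUSV x:\alpha}{\MMM}$. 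The TCV-extension rule is analogous: from the hypothesis together with the clause $\SEM{\GAMMA_0 \PLUSTC \TCV}{\MMM} = \SEM{\GAMMA_0}{\MMM} \cup \SEM{\MMM(\TCV)}{\MMM}$, adjoining the common set $\SEM{\MMM(\TCV)}{\MMM}$ to both sides preserves the inclusion.

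The only case needing a small auxiliary fact is the weakening rule, which derives $\TCTYPES{\GAMMA \PLUSG \GAMMA'}{\GAMMA_0}$ from $\TCTYPES{\GAMMA}{\GAMMA_0}$. Here the hypothesis gives $\SEM{\GAMMA_0}{\MMM} \subseteq \SEM{\GAMMA}{\MMM}$, and I would close the gap with the monotonicity fact $\SEM{\GAMMA}{\MMM} \subseteq \SEM{\GAMMA \PLUSG \GAMMA'}{\MMM}$ followed by transitivity of $\subseteq$. That monotonicity is itself an easy sub-induction on the structure of $\GAMMA'$: each rightmost binding peeled off by $\SEM{\cdot}{\MMM}$ only enlarges the resulting STC, so interpreting $\GAMMA$ as the prefix of $\GAMMA \PLUSG \GAMMA'$ recovers exactly $\SEM{\GAMMA}{\MMM}$ together with the extra bindings contributed by $\GAMMA'$ (whose TCVs are all defined in $\MMM$, since $\MMM$ types the larger context).

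I do not expect a genuine obstacle; the lemma is a structural bookkeeping result. The only point demanding care is to keep $\MMM$ fixed so that each $\MMM(\TCV)$, and the nested interpretation $\SEM{\MMM(\TCV)}{\MMM}$, denotes the same STC on both sides of every inclusion. In the program-variable case one should note explicitly that a model typing $\GAMMA \PLUSV x:\alpha$ has the form $\MMM' \cdot x:V$, and that $\SEM{\GAMMA_0}{\MMM} = \SEM{\GAMMA_0}{\MMM'}$ and $\SEM{\GAMMA}{\MMM} = \SEM{\GAMMA}{\MMM'}$, since the interpretation clauses consult $\MMM$ only at TCVs and are therefore unaffected by which value $\MMM$ binds to the program variable $x$; this justifies applying the induction hypothesis and then transporting the inclusion back to $\MMM$.
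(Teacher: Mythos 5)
Your proof is correct and follows essentially the same route as the paper's: rule induction on the derivation of $\TCTYPES{\GAMMA}{\GAMMA_0}$, reading each case off the clauses defining $\SEM{\cdot}{\MMM}$, with the weakening rule closed by monotonicity of the interpretation under appending bindings. Your extra care in the weakening case (sub-induction on $\GAMMA'$) and in transporting the induction hypothesis across the model extension $\MMM' \cdot x:V$ only makes explicit what the paper leaves implicit.
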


\begin{lemma}[\DONE Values derivable from subset of LTC is also derivable from superset of LTC]
	\label{lem:LTCDERIVED_subset_implies_LTCDERIVED_supset}
	\[
	\Mforall \GAMMA,\MMM^{\GAMMA}, \GAMMA_0 \PLUSG \GAMMA_1, M. \ 
	\TCTYPES{\GAMMA}{\GAMMA_0} \ \MAND \ \TCTYPES{\GAMMA}{\GAMMA_0 \PLUSG \GAMMA_1}
	\ \MAND \
	\LTCDERIVEDVALUE{M}{\GAMMA_0}{\MMM}{V} \ \MIMPLIES \ \LTCDERIVEDVALUE{M}{\GAMMA_0 \PLUSG \GAMMA_1}{\MMM}{V}
	\]
	\begin{proof}
		By Lemma \ref{lem:subLTC_implies_sub_typing_context},  
		\\
		$\TCTYPES{\GAMMA}{\GAMMA_0} \ \MAND \ \TCTYPES{\GAMMA}{\GAMMA_0 \PLUSG \GAMMA_1}$
		\\
		$\MIMPLIES \ \TCTYPES{\GAMMA_0 \PLUSG \GAMMA_1}{\GAMMA_0} $
		\\
		$\MIMPLIES \ \SEM{\GAMMA_0}{\MMM} \subseteq \SEM{\GAMMA_0\ \PLUSG \GAMMA_1}{\MMM}$ 
		\\
		hence this must hold.
	\end{proof}
\end{lemma}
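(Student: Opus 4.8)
The plan is to unfold the definition of $\LTCDERIVEDVALUE{M}{\GAMMA}{\MMM}{V}$ and observe that of its three defining conjuncts---namely $\AN{M}=\emptyset$, the typing judgement $\TYPES{\SEM{\GAMMA}{\MMM}}{M}{\alpha}$, and the convergence $(\AN{\MMM},\, M\MMM) \CONV (\AN{\MMM}\cup G',\, V)$---only the middle one mentions the LTC at all. The first and third clauses depend solely on $M$ and $\MMM$, not on the LTC used to type $M$, since neither $M\MMM$ nor $\AN{\MMM}$ is affected by replacing $\GAMMA_0$ with $\GAMMA_0 \PLUSG \GAMMA_1$. Hence these two clauses transfer verbatim from the hypothesis $\LTCDERIVEDVALUE{M}{\GAMMA_0}{\MMM}{V}$ to the desired conclusion $\LTCDERIVEDVALUE{M}{\GAMMA_0 \PLUSG \GAMMA_1}{\MMM}{V}$, and the whole argument reduces to transporting the typing judgement across the enlargement of the LTC.

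For the typing judgement I would argue by weakening. It suffices to establish the containment of the interpreted standard type contexts, $\SEM{\GAMMA_0}{\MMM} \subseteq \SEM{\GAMMA_0 \PLUSG \GAMMA_1}{\MMM}$; given this, standard weakening for the \NUC's typing relation (Fig.~\ref{figure_typing_programs}) turns $\TYPES{\SEM{\GAMMA_0}{\MMM}}{M}{\alpha}$ into $\TYPES{\SEM{\GAMMA_0 \PLUSG \GAMMA_1}{\MMM}}{M}{\alpha}$, which is exactly the remaining conjunct. To obtain the containment I would first show $\TCTYPES{\GAMMA_0 \PLUSG \GAMMA_1}{\GAMMA_0}$, i.e.~that $\GAMMA_0$ is an ordered subset of $\GAMMA_0 \PLUSG \GAMMA_1$. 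This holds because $\GAMMA_0$ is the ordered prefix of $\GAMMA_0 \PLUSG \GAMMA_1$: reflexivity $\TCTYPES{\GAMMA_0}{\GAMMA_0}$ (a trivial induction on $\GAMMA_0$ using the variable- and TCV-extension rules for $\TCTYPES{}{}$) followed by one application of the $\PLUSG$ rule for $\TCTYPES{}{}$ yields it; alternatively it follows from the two hypotheses $\TCTYPES{\GAMMA}{\GAMMA_0}$ and $\TCTYPES{\GAMMA}{\GAMMA_0 \PLUSG \GAMMA_1}$ by a subset-of-subset argument. Feeding $\TCTYPES{\GAMMA_0 \PLUSG \GAMMA_1}{\GAMMA_0}$ into Lemma~\ref{lem:subLTC_implies_sub_typing_context} delivers the required $\SEM{\GAMMA_0}{\MMM} \subseteq \SEM{\GAMMA_0 \PLUSG \GAMMA_1}{\MMM}$.

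I do not anticipate a genuine obstacle: the statement is essentially monotonicity of derivability in the LTC. The only points requiring a little care are recognising that the operational-convergence clause is completely insensitive to the typing context---so no re-evaluation or renaming argument is needed---and that the containment of the interpreted STCs is precisely what Lemma~\ref{lem:subLTC_implies_sub_typing_context} supplies, leaving only a routine appeal to weakening to finish.
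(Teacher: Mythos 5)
Your proposal is correct and follows essentially the same route as the paper: both reduce the claim to the containment $\SEM{\GAMMA_0}{\MMM} \subseteq \SEM{\GAMMA_0 \PLUSG \GAMMA_1}{\MMM}$ obtained via $\TCTYPES{\GAMMA_0 \PLUSG \GAMMA_1}{\GAMMA_0}$ and Lemma~\ref{lem:subLTC_implies_sub_typing_context}, after which weakening of the typing judgement settles the only LTC-dependent clause of $\LTCDERIVEDVALUE{M}{\GAMMA_0}{\MMM}{V}$. You merely spell out more explicitly than the paper does that the name-freeness and convergence clauses are insensitive to the choice of LTC.
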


\begin{lemma}[\DONE Values derived from $\GAMMA$ maintain extension (single) when added to models which are extensions]
	\label{lem:Gamma_derived_terms_maintain_extension_single_when_added}
	\[
	\Mforall \GAMMA, \MMM^{\GAMMA}, \GAMMA', \MMM'^{\GAMMA'}, M^{\alpha}.  \
	\MMM \EXTSINGLE \MMM' 
	\ \MAND \ 
	\LTCDERIVEDVALUE{M}{\GAMMA}{\MMM}{V}
	\ \MAND \
	\AN{V} \cap \AN{\MMM'} \subseteq \AN{\MMM}
	\ \MIMPLIES \ 
	\MMM \cdot x:V \EXTSINGLE \MMM' \cdot x:V
	\]
	\begin{proof}
	\begin{NDERIVATION}{1}
		\NLINE{\text{Assume: $\GAMMA,\MMM^{\GAMMA},\GAMMA', \MMM^{\GAMMA'}, M^{\alpha}$ s.t. $\MMM \EXTSINGLE \MMM' \ \MAND \ \LTCDERIVEDVALUE{M}{\GAMMA}{\MMM}{V}$ }}{}
		\NLINE{\text{The case for $\MMM' \equiv \MMM \cdot \TCV:\GAMMA_0\REMOVETCVfrom$ is proven below:}}{}
		\NLINE{
			\parbox[t]{9cm}{
				assume some $\GAMMA_0$ s.t. $\TCTYPES{\GAMMA}{\GAMMA_0}$ with $\MMM \EXTSINGLE\MMM \cdot \TCV:\GAMMA_0\REMOVETCVfrom$
				\\
				then  $\TCTYPES{\GAMMA \PLUSV x}{\GAMMA_0}$ and hence $\MMM \cdot x:V \EXTSINGLE \MMM \cdot \TCV:\GAMMA_0 \REMOVETCVfrom \cdot x:V$}
			}{Sem. $\EXTSINGLE$}
		\NLINE{\text{The case for $\MMM' \equiv \MMM \cdot y:V'$ is proven below:}}{}
		\NLINE{(\MMM \EXTSINGLE \MMM' \ \MAND \ \LTCDERIVEDVALUE{M}{\GAMMA}{\MMM}{V}) \ \MIMPLIES \ \Mexists M'. \ \LTCDERIVEDVALUE{M'}{\GAMMA}{\MMM}{V'} \ \MAND \ \MMM' \equiv \MMM \cdot y':V'  \ \MAND \ \LTCDERIVEDVALUE{M}{\GAMMA}{\MMM}{V}}{Sem. $\EXTSINGLE$}
		\NLINE{\MIMPLIES \ \Mexists M'. \ \LTCDERIVEDVALUE{M'}{\GAMMA}{\MMM \cdot x:V}{V'} \ \MAND \ \MMM' \cdot x:V \equiv \MMM \cdot x:V \cdot y':V'}{Lemma \ref{lem:eval_under_extensions_are_equivalent} ($\AN{V} \cap \AN{V'} \subseteq \AN{\MMM}$)}
		\NLINE{\MIMPLIES \ \Mexists M'. \ \LTCDERIVEDVALUE{M'}{\GAMMA \PLUSV x:\alpha_x}{\MMM \cdot x:V}{V'} \ \MAND \ \MMM' \cdot x:V \equiv \MMM \cdot x:V \cdot y':V'}{Lemma \ref{lem:LTCDERIVED_subset_implies_LTCDERIVED_supset}}
		\NLASTLINE{\MIMPLIES \ \MMM \cdot x:V \EXTSINGLE \MMM' \cdot x:V}{Sem. $\EXTSINGLE$}
	\end{NDERIVATION}
	\end{proof}
\end{lemma}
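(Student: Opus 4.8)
The plan is to unfold the definition of single-step extension and argue by cases on which of its two clauses justifies $\MMM \EXTSINGLE \MMM'$. Before splitting, I record one fact used throughout: the hypothesis $\LTCDERIVEDVALUE{M}{\GAMMA}{\MMM}{V}$ is itself a witness for the value-clause of $\EXTSINGLE$, so $\MMM \EXTSINGLE \MMM \cdot x:V$ (and hence $\MMM \EXTSTAR \MMM \cdot x:V$). The goal $\MMM \cdot x:V \EXTSINGLE \MMM' \cdot x:V$ thus amounts to showing that whatever single step turns $\MMM$ into $\MMM'$ can be replayed on top of $\MMM \cdot x:V$.

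\textbf{TCV clause.} Here $\MMM' = \MMM \cdot \TCV:\GAMMA\REMOVETCVfrom$. Since the added entry is name-free and does not mention $x$, the two independent entries commute, and $\MMM' \cdot x:V$ coincides (as a map, carrying a valid typing LTC) with $\MMM \cdot x:V \cdot \TCV:\GAMMA\REMOVETCVfrom$. It then remains to check that the TCV-clause applies to $\MMM \cdot x:V$, which requires $\TCTYPES{\GAMMA \PLUSV x:\alpha}{\GAMMA\REMOVETCVfrom}$; this follows from $\TCTYPES{\GAMMA}{\GAMMA\REMOVETCVfrom}$ by monotonicity of the sub-LTC relation under adding a variable (the ingredient behind Lemma \ref{lem:subLTC_implies_sub_typing_context}).

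\textbf{Value clause.} Here $\MMM' = \MMM \cdot y:V_y$ with $\LTCDERIVEDVALUE{M_y}{\GAMMA}{\MMM}{V_y}$, and the task is to re-derive $V_y$ over the richer model $\MMM \cdot x:V$. The name-disjointness hypothesis $\AN{V} \cap \AN{\MMM'} \subseteq \AN{\MMM}$ specialises, since $\AN{\MMM'} = \AN{\MMM} \cup \AN{V_y}$, to $\AN{V} \cap \AN{V_y} \subseteq \AN{\MMM}$, which is precisely the disjointness premise required by Lemma \ref{lem:eval_under_extensions_are_equivalent}. Applying that lemma along $\MMM \EXTSTAR \MMM \cdot x:V$ gives $\LTCDERIVEDVALUE{M_y}{\GAMMA}{\MMM \cdot x:V}{V_y}$, and weakening the context by Lemma \ref{lem:LTCDERIVED_subset_implies_LTCDERIVED_supset} yields $\LTCDERIVEDVALUE{M_y}{\GAMMA \PLUSV x:\alpha}{\MMM \cdot x:V}{V_y}$. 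This is exactly the premise of the value-clause for $\MMM \cdot x:V$, so $\MMM \cdot x:V \EXTSINGLE \MMM \cdot x:V \cdot y:V_y$; identifying $\MMM \cdot x:V \cdot y:V_y$ with $\MMM' \cdot x:V = \MMM \cdot y:V_y \cdot x:V$ by commuting the two entries closes the case.

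\textbf{Main obstacle.} The delicate point is the reordering step: although models are maps, $\EXTSINGLE$ is stated relative to the \emph{ordered} associated LTC, so I must verify that swapping two entries whose values share no names outside $\MMM$ preserves both the underlying map and a legitimate typing LTC. Equally, the whole argument turns on feeding the hypothesis $\AN{V} \cap \AN{\MMM'} \subseteq \AN{\MMM}$ into the disjointness side-condition of Lemma \ref{lem:eval_under_extensions_are_equivalent} in exactly the right form; getting this name bookkeeping right—rather than the definitional unfolding, which is routine—is where care is needed.
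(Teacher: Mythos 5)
Your proof takes essentially the same route as the paper's: the same case split on the two clauses of $\EXTSINGLE$, the same use of Lemma \ref{lem:eval_under_extensions_are_equivalent} (fed the specialised disjointness condition $\AN{V}\cap\AN{V'}\subseteq\AN{\MMM}$) followed by Lemma \ref{lem:LTCDERIVED_subset_implies_LTCDERIVED_supset} in the value case, and the same typing-monotonicity observation in the TCV case. The reordering of entries that you flag as the delicate point is silently identified in the paper's proof ($\MMM'\cdot x:V \equiv \MMM\cdot x:V\cdot y':V'$), so your version is if anything slightly more careful.
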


\begin{lemma}[\DONE Values derived from $\GAMMA$ maintain extension (star) when added to models which are extensions]
	\label{lem:Gamma_derived_terms_maintain_extension_when_added}
	\[
	\Mforall \GAMMA, \MMM^{\GAMMA}, \GAMMA', \MMM'^{\GAMMA'}, M.  \
	\MMM \EXTSTAR \MMM' 
	\ \MAND \ 
	\LTCDERIVEDVALUE{M}{\GAMMA}{\MMM}{V}
	\ \MAND \
	\AN{V} \cap \AN{\MMM'} \subseteq \AN{\MMM}
	\ \MIMPLIES \ 
	\MMM \cdot x:V \EXTSTAR \MMM' \cdot x:V
	\]
	\begin{proof}
		Use Lemma \ref{lem:Gamma_derived_terms_maintain_extension_single_when_added} with Def $\EXTSTAR$ to see this clearly holds.	
	\end{proof}
\end{lemma}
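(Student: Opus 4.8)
The plan is to lift the single-step result, Lemma~\ref{lem:Gamma_derived_terms_maintain_extension_single_when_added}, to the reflexive--transitive closure by induction on the length of the chain witnessing $\MMM \EXTSTAR \MMM'$. Before starting the induction I would record one auxiliary monotonicity fact: extensions never discard names. Inspecting the two clauses of $\EXTSINGLE$ shows that a variable step satisfies $\AN{\MMM \cdot y:V_y} = \AN{\MMM} \cup \AN{V_y}$, while a TCV step leaves the name set unchanged (since $\AN{\GAMMA} = \emptyset$); hence $\MMM \EXTSTAR \MMM'$ implies $\AN{\MMM} \subseteq \AN{\MMM'}$. This monotonicity is exactly what keeps the global name hypothesis $\AN{V} \cap \AN{\MMM'} \subseteq \AN{\MMM}$ available at every intermediate model, so that no strengthening of the induction hypothesis is required.

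For the base case the chain has length $0$, so $\MMM' = \MMM$ and the goal $\MMM \cdot x:V \EXTSTAR \MMM \cdot x:V$ holds by reflexivity of $\EXTSTAR$. For the inductive step I would decompose the chain as $\MMM \EXTSINGLE \MMM'' \EXTSTAR \MMM'$, where the tail has one fewer step. From $\MMM'' \EXTSTAR \MMM'$ and monotonicity I obtain $\AN{\MMM''} \subseteq \AN{\MMM'}$, so that $\AN{V} \cap \AN{\MMM''} \subseteq \AN{V} \cap \AN{\MMM'} \subseteq \AN{\MMM}$. This is precisely the side condition needed to invoke the single-step lemma on $\MMM \EXTSINGLE \MMM''$ together with $\LTCDERIVEDVALUE{M}{\GAMMA}{\MMM}{V}$, yielding $\MMM \cdot x:V \EXTSINGLE \MMM'' \cdot x:V$.

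To feed the tail into the induction hypothesis I must re-anchor the derivation of $V$ at $\MMM''$. Using the same bound $\AN{V} \cap \AN{\MMM''} \subseteq \AN{\MMM}$ together with $\TCTYPES{\GAMMA}{\GAMMA}$, Lemma~\ref{lem:eval_under_extensions_are_equivalent} converts $\LTCDERIVEDVALUE{M}{\GAMMA}{\MMM}{V}$ into $\LTCDERIVEDVALUE{M}{\GAMMA}{\MMM''}{V}$, and Lemma~\ref{lem:LTCDERIVED_subset_implies_LTCDERIVED_supset} promotes this to a derivation typed by the full LTC $\GAMMA''$ of $\MMM''$, which is $\GAMMA$ extended by the single variable or TCV added in the first step (so $\GAMMA'' = \GAMMA \PLUSG \GAMMA_1$ with $\GAMMA_1$ a singleton). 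Since $\AN{V} \cap \AN{\MMM'} \subseteq \AN{\MMM} \subseteq \AN{\MMM''}$, the induction hypothesis now applies to $\MMM'' \EXTSTAR \MMM'$ and gives $\MMM'' \cdot x:V \EXTSTAR \MMM' \cdot x:V$. Composing $\MMM \cdot x:V \EXTSINGLE \MMM'' \cdot x:V \EXTSTAR \MMM' \cdot x:V$ and using that $\EXTSTAR$ absorbs a leading single step closes the induction.

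I expect the only real difficulty to be the bookkeeping of names rather than anything conceptual. At each step one must check that the names freshly introduced by the extension are disjoint from $\AN{V}$ modulo $\AN{\MMM}$, so that both the single-step lemma and the re-anchoring via Lemma~\ref{lem:eval_under_extensions_are_equivalent} stay applicable. The monotonicity chain $\AN{\MMM} \subseteq \AN{\MMM''} \subseteq \AN{\MMM'}$ is what makes the single global hypothesis $\AN{V} \cap \AN{\MMM'} \subseteq \AN{\MMM}$ specialise correctly to every intermediate model; verifying this propagation carefully is the crux, after which the substance of the argument rests entirely on the already-proved single-step lemma.
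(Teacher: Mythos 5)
Your proof is correct and takes essentially the same route as the paper, which simply appeals to the single-step lemma together with the definition of $\EXTSTAR$; your induction on the chain length, with the name-monotonicity observation and the re-anchoring of the derivation via Lemmas \ref{lem:eval_under_extensions_are_equivalent} and \ref{lem:LTCDERIVED_subset_implies_LTCDERIVED_supset}, is just the detailed unfolding of that one-line argument.
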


\begin{lemma}[Names produced from an extension can be added to an extension and still hold]
	\label{lem:name_derived_from_extension_implies_extension_and_name_are_extension}
	\[
		\Mforall \GAMMA, \MMM^{\GAMMA}, \GAMMA', \MMM'^{\GAMMA'}, M, r.  \
		\LTCDERIVEDVALUE{M}{\GAMMA'}{\MMM'}{r} 
		\ \MAND \ 
		\MMM \EXTSTAR \MMM' 
		\ \MIMPLIES \ 
		\MMM \cdot x:r \EXTSTAR \MMM' \cdot x:r
	\]
	\begin{proof}
		Assume a sequence of terms $[\VEC{z}: \VEC{N}]$ that derive the (non-LTC) values of $\MMM'$ (the LTC are trivial),
		then three cases exist:
		\begin{itemize}
			\item[$r \in \AN{\MMM}$] then clearly this holds as the same sequence of terms $[\VEC{z}:\VEC{N}]$ derives $\MMM' \cdot x:r$ with the new LTC $\GAMMA \PLUSV x:\NAME$. Hence $\MMM \cdot x:r \EXTSTAR \MMM' \cdot x:r$
			
			\item[$r \notin \AN{\MMM} \MAND  r \notin \AN{\MMM'}$] then clearly $r$ is fresh and the same reasoning as above holds.
			
			\item[$r \notin \AN{\MMM} \MAND  r \in \AN{\MMM'}$]  then $r$ is derived from $\MMM'$ but not from $\MMM$ and hence the sequence $ [\VEC{z}:\VEC{N}\SUBST{x}{\GENSYM()}]$ also derives $\MMM'$ proves $\MMM \cdot x:r \EXTSTAR \MMM' \cdot x:r$ where we write  $\VEC{N}\SUBST{x}{\GENSYM()}$ for the term $\VEC{N}_i$ that initially produce $r$ to substitute $x$ for that $\GENSYM()$ which clearly must produce the same results. All other fresh names are irrelevant.

		\end{itemize}
	\end{proof}
	
\end{lemma}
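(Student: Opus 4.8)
The plan is to reduce the claim, wherever possible, to Lemma~\ref{lem:Gamma_derived_terms_maintain_extension_when_added}, which already appends a value $V$ derived from $\GAMMA$ to both ends of an extension $\MMM \EXTSTAR \MMM'$, provided the side condition $\AN{V} \cap \AN{\MMM'} \subseteq \AN{\MMM}$ holds. Specialising $V$ to the name $r$, this side condition becomes $\{r\} \cap \AN{\MMM'} \subseteq \AN{\MMM}$, which is satisfied exactly when $r \notin \AN{\MMM'}$ or $r \in \AN{\MMM}$. So I would case-split on the position of $r$ relative to $\AN{\MMM}$ and $\AN{\MMM'}$, using throughout that $\MMM \EXTSTAR \MMM'$ forces $\AN{\MMM} \subseteq \AN{\MMM'}$.

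In the two \emph{aligned} cases I exhibit a term deriving $r$ from the smaller model $\MMM$ and then invoke Lemma~\ref{lem:Gamma_derived_terms_maintain_extension_when_added}. If $r \notin \AN{\MMM'}$ then $r \notin \AN{\MMM}$, so $r$ is fresh for $\MMM$ and $\LTCDERIVEDVALUE{\GENSYM()}{\GAMMA}{\MMM}{r}$ holds, since one reduction of $\GENSYM()$ closed by $\MMM$ may return any name outside $\AN{\MMM}$; the side condition is $\emptyset \subseteq \AN{\MMM}$. If instead $r \in \AN{\MMM}$, then because the hypothesis supplies $\LTCDERIVEDVALUE{M}{\GAMMA'}{\MMM'}{r}$, Lemma~\ref{lem:extensions_cannot_reveal_old_names} yields some $M_r$ with $\LTCDERIVEDVALUE{M_r}{\GAMMA}{\MMM}{r}$ (this correctly handles the subtlety that $r \in \AN{\MMM}$ need not by itself make $r$ derivable, e.g.\ when $r$ is hidden), and the side condition holds as $r \in \AN{\MMM}$. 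In both cases Lemma~\ref{lem:Gamma_derived_terms_maintain_extension_when_added} delivers $\MMM \cdot x:r \EXTSTAR \MMM' \cdot x:r$.

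The remaining case, $r \notin \AN{\MMM}$ and $r \in \AN{\MMM'}$, is the substantive one: here $r$ is a name \emph{generated during} the extension, the side condition of Lemma~\ref{lem:Gamma_derived_terms_maintain_extension_when_added} fails, and a direct argument is required — this case is precisely what the present lemma adds over Lemma~\ref{lem:Gamma_derived_terms_maintain_extension_when_added}. I would fix a witnessing chain $\MMM = \MMM_0 \EXTSINGLE \cdots \EXTSINGLE \MMM_k = \MMM'$ together with the terms deriving each new binding. Since $r \notin \AN{\MMM}$, the name $r$ is fresh for $\MMM$, so $\MMM \EXTSINGLE \MMM \cdot x:r$ via $\GENSYM()$, and I rebuild the chain on top of $\MMM \cdot x:r$. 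At the unique first step of the original chain that makes $r$ reachable, the original derivation produced $r$ by a fresh $\GENSYM()$; in the rebuilt chain I replace that subterm by the variable $x$, which now derives $r$ directly. Every later binding derives the same value as before, so each rebuilt step is again a legal single-step extension, and the chain ends at a model carrying exactly the bindings of $\MMM' \cdot x:r$.

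The hard part will be this last case, and within it two points need care. First, one must verify that substituting $\GENSYM()$ by $x$ leaves every subsequent single-step derivation valid; this holds because the produced values are literally unchanged and $r$ becomes reachable from $x$ no later than it was in the original chain, so the evaluations stay in step in the same way as in Lemma~\ref{lem:eval_under_extensions_are_equivalent}. Second, the pre-placed binding $x:r$ sits \emph{before} the rebuilt bindings, whereas $\MMM' \cdot x:r$ carries $x:r$ \emph{after} them, so reconciling the two relies on model equality being insensitive to the order of bindings that do not depend on one another — exactly the reordering already used in the proof of Lemma~\ref{lem:Gamma_derived_terms_maintain_extension_single_when_added}. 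Establishing that this reordering is benign for $r$, which no rebuilt binding depends on since each can instead reach $r$ through $x$, is the crux of the argument.
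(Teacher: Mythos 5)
Your proposal is correct and follows essentially the same route as the paper: the identical three-way case split on whether $r$ lies in $\AN{\MMM}$ and/or $\AN{\MMM'}$, with the hard case handled by the paper's own key move of replacing the $\GENSYM()$ occurrence that generated $r$ by the variable $x$ now bound to $r$. Your routing of the two easy cases through Lemma~\ref{lem:Gamma_derived_terms_maintain_extension_when_added} (with Lemma~\ref{lem:extensions_cannot_reveal_old_names} supplying the derivability witness) and your explicit flagging of the binding-order issue are minor organisational refinements of the same argument.
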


\begin{lemma}[\DONE Extending the model by $\TCV$ maintains models]
	\label{lem:model_and_model_plus_TCV_models_equivalently_TCV-free_formula}
	($A$-\EXTINDEP \ not required)
	\[
	\Mforall \GAMMA, \MMM^{\GAMMA}, \GAMMA', \MMM'^{\GAMMA'},\GAMMA_a. \
	\TCTYPES{\GAMMA}{\GAMMA_a} \ \MIMPLIES \
	(\MMM \cdot \MMM' \models A^{-\TCV} \ \MIFF \ \MMM \cdot \TCV:\GAMMA_a\REMOVETCVfrom \cdot \MMM' \models A^{-\TCV})
	\]
	where $A^{-\TCV}$ means $\TCV$ does not occur syntactically in $A$.
	\begin{proof}
		By definition of $\EXTSINGLE$, $\MMM \EXTSINGLE \MMM \cdot \TCV:\GAMMA_a\REMOVETCVfrom$ holds and implies via Def $\EXTSTAR$ that $\MMM \cdot \MMM' \EXTSTAR \MMM \cdot \TCV:\GAMMA_a\REMOVETCVfrom \cdot \MMM'$ which allows the use of Lemma \ref{lem:semantics_expressions_equal_under_model_extensions} and  Lemma \ref{lem:eval_under_extensions_are_equivalent} ($\AN{\MMM \cdot \MMM'} \equiv \AN{\MMM \cdot \TCV:\GAMMA_a\REMOVETCVfrom \cdot \MMM'}$).
		\\\\
		The proof follows by IH on structure of $A$:
		\begin{itemize}
			\item $e_1 = e_2$ proven by noting that $\SEM{e_i}{\MMM \cdot \MMM'} \equiv \SEM{e_i}{\MMM\cdot \TCV:\GAMMA_a\REMOVETCVfrom \cdot \MMM'}$ (Lemma \ref{lem:semantics_expressions_equal_under_model_extensions}).
			\item $A_1 \PAND A_2$,  $A_1 \POR A_2$, $A_1 \PIMPLIES A_2$: IH on $A_1$, $A_2$.
			\item $\neg A_1$, proof by IH on $A_1$, assume $\MMM \cdot \TCV:\GAMMA_a\REMOVETCVfrom \cdot \MMM' \models \neg A_1$ then clearly 
			$\MMM \cdot \TCV:\GAMMA_a\REMOVETCVfrom \cdot \MMM' \not\models A_1$ and by IH on $A_1$ then $\MMM \cdot \MMM' \not\models A_1$ i.e. $\MMM \cdot \MMM' \models \neg A_1$.
			
			\item $\FRESH{x}{\GAMMA'}$ holds by Lemma \ref{lem:semantics_expressions_equal_under_model_extensions} and Lemma \ref{lem:eval_under_extensions_are_equivalent} ($\AN{\MMM \cdot \MMM'} \equiv \AN{\MMM \cdot \TCV:\GAMMA_a\REMOVETCVfrom \cdot \MMM'}$)
			which ensure $\SEM{x}{\MMM \cdot \MMM'} \equiv \SEM{x}{\MMM\cdot \TCV:\GAMMA_a\REMOVETCVfrom \cdot \MMM'}$ and
			\\
			$\Mexists M_x. \ \LTCDERIVEDVALUE{M_x}{\GAMMA'}{\MMM \cdot \MMM'}{V} \ \MIFF \ \Mexists M_x. \ \LTCDERIVEDVALUE{M_x}{\GAMMA'}{\MMM \cdot \TCV:\GAMMA_a\REMOVETCVfrom \cdot \MMM'}{V}$
			respectively, hence clearly
			\\
			$\neg \Mexists M_x. \ \LTCDERIVEDVALUE{M_x}{\GAMMA'}{\MMM \cdot \MMM'}{\SEM{x}{\MMM \cdot \MMM'}} \ \MIFF \ \neg\Mexists M_x. \ \LTCDERIVEDVALUE{M_x}{\GAMMA'}{\MMM \cdot \TCV:\GAMMA_a\REMOVETCVfrom \cdot \MMM'}{\SEM{x}{\MMM \cdot \TCV:\GAMMA_a\REMOVETCVfrom \cdot \MMM'}}$.
			
			\item $\ONEEVAL{u}{e}{m}{A_1}$ by IH on $A_1$ with $\SEM{u}{\MMM \cdot \MMM'} \equiv \SEM{u}{\MMM\cdot \TCV:\GAMMA_a\REMOVETCVfrom \cdot \MMM'}$ and $\SEM{e}{\MMM \cdot \MMM'} \equiv \SEM{e}{\MMM\cdot \TCV:\GAMMA_a\REMOVETCVfrom \cdot \MMM'}$ (Lemma \ref{lem:semantics_expressions_equal_under_model_extensions}) hence evaluation is equivalent and this holds.
			\item $\FORALL{x}{\GAMMA'} A_1$ knowing that $\TCV \notin \GAMMA'$ then the same possible terms are quantified over given Lemma \ref{lem:eval_under_extensions_are_equivalent} ($\LTCDERIVEDVALUE{M_x}{\GAMMA'}{\MMM \cdot \MMM'}{V} \MIFF \LTCDERIVEDVALUE{M_x}{\GAMMA'}{\MMM \cdot \TCV:\GAMMA_a\REMOVETCVfrom \cdot \MMM'}{V}$)  and by IH on $A_1$ the case holds.
			\item $\FAD{\TCV'} A_1$ by IH on $A_1$ this holds for any model containing $\TCV$ and $\TCV'$ then by the semantics the lemma holds.
		\end{itemize}
	\end{proof}
\end{lemma}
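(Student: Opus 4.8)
I would prove the biconditional by a single structural induction on $A^{-\TCV}$, handling both directions at once. The whole argument rests on the fact that interposing the mapping $\TCV:\GAMMA_a\REMOVETCVfrom$ between $\MMM$ and $\MMM'$ is invisible to everything that does not mention $\TCV$. Concretely I would first record three invariance facts, each read off directly from the definitions. (i) Since every LTC is name-free, $\AN{\GAMMA_a\REMOVETCVfrom}=\emptyset$, so $\AN{\MMM\cdot\MMM'}=\AN{\MMM\cdot\TCV:\GAMMA_a\REMOVETCVfrom\cdot\MMM'}$. (ii) Term closure and expression interpretation consult only the variable part of a model and ignore TCV mappings, so $M(\MMM\cdot\MMM')\equiv M(\MMM\cdot\TCV:\GAMMA_a\REMOVETCVfrom\cdot\MMM')$ and $\SEM{e}{\MMM\cdot\MMM'}\equiv\SEM{e}{\MMM\cdot\TCV:\GAMMA_a\REMOVETCVfrom\cdot\MMM'}$ for all $M$ and $e$ (the single-insertion counterpart of Lemma \ref{lem:semantics_expressions_equal_under_model_extensions}). (iii) Because $\TCV$ does not occur in $A$, every LTC $\GAMMA_0$ appearing in a subformula is $\TCV$-free, whence $\SEM{\GAMMA_0}{\MMM\cdot\MMM'}=\SEM{\GAMMA_0}{\MMM\cdot\TCV:\GAMMA_a\REMOVETCVfrom\cdot\MMM'}$; combined with (i)--(ii) this yields $\LTCDERIVEDVALUE{M}{\GAMMA_0}{\MMM\cdot\MMM'}{V}\MIFF\LTCDERIVEDVALUE{M}{\GAMMA_0}{\MMM\cdot\TCV:\GAMMA_a\REMOVETCVfrom\cdot\MMM'}{V}$, exactly as in Lemma \ref{lem:eval_under_extensions_are_equivalent}.

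Given these facts, most inductive cases are immediate. The atomic case $e_1=e_2$ follows from (i)--(ii); $\neg A_1$ and $A_1\PAND A_2$ follow from the induction hypotheses. For $\ONEEVAL{e}{e'}{m}{A_1}$, fact (ii) makes the underlying closed application evaluate to the same value $V$ over both models, and I then apply the induction hypothesis to $A_1$ on the two models further extended by $m{:}V$, which still differ only by the interposed $\TCV$. For the restricted quantifier $\FORALL{x}{\GAMMA_0} A_1$ and for $\FRESH{x}{\GAMMA_0}$, fact (iii) shows that exactly the same witnesses are derivable from $\GAMMA_0$ in both models, so the quantifier ranges over the same set and the induction hypothesis closes the case.

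The case demanding real care is $\FAD{\TCV'}A_1$. After $\alpha$-renaming I may take $\TCV'$ fresh and distinct from $\TCV$, so $\TCV$ still does not occur in $A_1$. Unfolding the semantics, the left side quantifies over all $\MMM''$ with $\MMM\cdot\MMM'\EXTSTAR\MMM''$ and the right side over all extensions of $\MMM\cdot\TCV:\GAMMA_a\REMOVETCVfrom\cdot\MMM'$; the crux is to match these two families so that the induction hypothesis can be applied pointwise. I would argue that, because the interposed $\TCV$ carries no names and is invisible to every $\TCV$-free LTC, the same single-step extensions are available from corresponding models---the available derivations coincide by (i)--(iii)---so the two extension chains are in bijection, each corresponding pair again differing only by the interposed $\TCV$. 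Since $\REMOVETCVfrom$ erases that interposed $\TCV$, the context $\GAMMA''\REMOVETCVfrom$ assigned to the fresh $\TCV'$ is the same on both sides, and the resulting models once more differ only by the interposed $\TCV$; the induction hypothesis on $A_1$ then finishes. I expect establishing this correspondence between the two extension chains, rather than any single base case, to be the main obstacle, since it is the only place where the ordering of models and the interaction of TCV insertion with $\EXTSTAR$ must be reconciled.
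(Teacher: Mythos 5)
Your proposal is correct and takes essentially the same approach as the paper: the paper likewise observes that interposing $\TCV:\GAMMA_a\REMOVETCVfrom$ is a (single-step) model extension, uses that to obtain exactly your invariance facts (i)--(iii) via Lemmas \ref{lem:semantics_expressions_equal_under_model_extensions} and \ref{lem:eval_under_extensions_are_equivalent}, and then closes each case of the same structural induction in the same way. Your treatment of the $\FAD{\TCV'}$ case is in fact more explicit about matching the two extension chains than the paper's one-line dismissal of that case.
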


The following 2 definitions will be used in some of the proofs using equality, but are not required otherwise.

\begin{definition}[Similar extensions (single)]
	\label{def:similar_extsingle}
	Two model extensions are \EMPH{congruent model extensions (single)}  iff there exists an equivalent derivation for both models:
	\[
	\begin{array}[t]{l}
		\Mforall \GAMMA_i, \MMM_1^{\GAMMA_i}, \MMM_2^{\GAMMA_i}, \GAMMA_i', \MMM_1'^{\GAMMA_i'}, \MMM_2'^{\GAMMA_i'}.
		\\
		(\MMM_1 \EXTSINGLE \MMM_1') \CONGEXTSINGLE (\MMM_2 \EXTSINGLE \MMM_2') 
		\ \MIFF \
		\begin{array}[t]{l}
			\GAMMA_i' \equiv \GAMMA_i \PLUSTC \TCV_j \ \MAND \
			\MMM_1'(\TCV_j) \equiv 
			\MMM_2'(\TCV_j)
			\\
			\MOR
			\\
			\GAMMA_i' \equiv \GAMMA_i \PLUSV x_j: \alpha_j \ \MAND \
			\Mexists M_j^{\alpha_j}. \
			\LTCDERIVEDVALUE{M_j}{\GAMMA_i}{\MMM_1}{\MMM_1'(x_j)} 
			\ \MAND \
			\LTCDERIVEDVALUE{M_j}{\GAMMA_i}{\MMM_2}{\MMM_2'(x_j)} 
		\end{array}
	\end{array}
	\]
\end{definition}
\begin{definition}[Similar extensions (star)]
	\label{def:similar_extstar}
	Two model extensions are \EMPH{similarly derived extensions (star)}  iff there exists an equivalent derivation for both models:
	\[
	\begin{array}[t]{l}
		\Mforall \GAMMA_i, \MMM_1^{\GAMMA_i}, \MMM_2^{\GAMMA_i}, \GAMMA_i', \MMM_1'^{\GAMMA_i'}, \MMM_2'^{\GAMMA_i'}.
		\\
		(\MMM_1 \EXTSTAR \MMM_1') \CONGEXTSTAR (\MMM_2 \EXTSTAR \MMM_2') 
		\ \MIFF \
		\begin{array}[t]{l}
			\MMM_1 \equiv \MMM_1' \ \MAND \ \MMM_2 \equiv \MMM_2'
			\\
			\MOR 
			\\
			(\MMM_1 \EXTSINGLE \MMM_1') \CONGEXTSINGLE (\MMM_2 \EXTSINGLE \MMM_2') 
			\\
			\MOR
			\\
			\Mexists \MMM_{1j}^{\GAMMA_{j}}, \MMM_{2j}^{\GAMMA_{j}}.
			\begin{array}[t]{l}
				(\MMM_1 \EXTSTAR \MMM_{1j}) \CONGEXTSTAR (\MMM_2 \EXTSTAR \MMM_{2j}) 
				\\ \MAND \
				(\MMM_{1j} \EXTSTAR \MMM_1') \CONGEXTSTAR (\MMM_{2j} \EXTSTAR \MMM_2') 
			\end{array}
		\end{array} 
	\end{array}
	\]
\end{definition}

\begin{lemma}[\DONE Congruent extensions implies their evaluation added to a model, model equivalence equivalently]
	\label{lem:congruent_base_model_extensions_model_equivalence_equivelently}
	\[
	\begin{array}[t]{l}
		\Mforall \GAMMA, \MMM^{\GAMMA}, M_1^{\alpha}, M_2^{\alpha}.
		\\
		M_1\MMM \CONGCONTEXT{\alpha}{\AN{\MMM}} M_2\MMM
		\ \MAND \ 
		\LTCDERIVEDVALUE{M_1}{\GAMMA}{\MMM}{V_1}
		\ \MAND \ 
		\LTCDERIVEDVALUE{M_2}{\GAMMA}{\MMM}{V_2}
		\\ 
		\MIMPLIES 
		\\
		\Mforall \GAMMA_i, \MMM_1'^{\GAMMA_i}, \MMM_2'^{\GAMMA_i}, e=e'.
		\begin{array}[t]{l}
			(\MMM \cdot x:V_1 \EXTSTAR \MMM_1') 
			\ \CONGEXTSTAR \ 
			(\MMM \cdot x:V_2 \EXTSTAR \MMM_2')
			\ \MAND \
			\FORMULATYPES{\GAMMA_i}{e=e'}
			\\
			\MIMPLIES 
			\\
			\MMM_1' \models e=e' \ \MIFF \ \MMM_2' \models e=e'
		\end{array}
	\end{array}
	\]
	As an example consider $M_1 \equiv \GENSYM() \equiv M_2$ clearly any use of the names produced must be indistinguishable in any future similarly derived models as the names can be switched asimilar to $\alpha$-equivalence.
	\begin{proof}
		Required to prove 
		$\SEM{e}{\MMM_1'} \CONGCONTEXT{\alpha}{\AN{\MMM_1'}} \SEM{e'}{\MMM_1'} 
		\ \MIMPLIES \
		\SEM{e}{\MMM_2'} \CONGCONTEXT{\alpha}{\AN{\MMM_2'}}  \SEM{e'}{\MMM_2'}$
		\\
		\begin{NDERIVATION}{1}
			\NLINE{\text{Making the relevant assumptions about $\GAMMA$, $\MMM^{\GAMMA}$, $M_1$, $M_2$, $\GAMMA_i$, $\MMM_1'^{\GAMMA_i}$, $\MMM_2'^{\GAMMA_i}$, $e=e'$, s.t. ...}}{}
			\NLINE{M_1\MMM \CONGCONTEXT{\alpha}{\AN{\MMM}} M_2\MMM}{}
			\NLINE{\MIFF \ 
				\Mforall C[\cdot]^{\alpha}, b^{\BOOL}. \ 
				\left(
				\begin{array}{l}
					\TYPES{\emptyset}{C[\cdot]^{\alpha}}{\BOOL} 
					\\ \MAND \
					\AN{C[\cdot]}\subseteq \AN{\MMM}
				\end{array}
				\right)
				\ \MIMPLIES \
				\left(
				\begin{array}{c}
					(\AN{\MMM}, C[M_1\MMM]) \CONV (\AN{\MMM}, G_1, b)
					\\ \MIFF \\
					(\AN{\MMM}, C[M_2\MMM]) \CONV (\AN{\MMM}, G_2, b)
				\end{array}
				\right)
			}{Sem. $\CONGCONTEXT{}{}$}
			\NLINE{\parbox[t]{12cm}{
					Let $\MMM_j'[\cdot]$ be derived from the terms as follows $\VEC{z}: \VEC{N}$ and $x:V_i$ noting that $\VEC{N}_{i+1}$ may contain the variable $\VEC{z}_{i}$ and ignoring the inconsequential TCV-parts, then if $\EXPRESSIONTYPES{\GAMMA_i}{e}{\alpha_e}$ a subset of $C[\cdot]$ can be set for any $C'[\cdot]^{\alpha_e}$ as follows:
				}
			}{}
			\NPLINE{\MIMPLIES \ 
				\begin{array}[t]{l}
					\Mforall C[\cdot] \equiv (\LET{x}{[\cdot]_{\emptyset}^{\alpha}}{
						\LET{\VEC{z}}{\VEC{N}\MMM}{
							C'[e\MMM] =C'[e'\MMM]}}), b^{\BOOL}. 
						\\
						\TYPES{\emptyset}{C[\cdot]^{\alpha}}{\BOOL} 
						\ \MAND \
						\AN{C[\cdot]}\subseteq \AN{\MMM}
					\\ \MIMPLIES \\
					\hspace{-0.5cm}\left(
					\begin{array}{c}
						(\AN{\MMM}, \LET{x}{[M_1\MMM]_{\emptyset}^{\alpha}}{
							\LET{\VEC{z}}{\VEC{N}\MMM}{
								C'[e\MMM] =C'[e'\MMM]}}) \CONV (\AN{\MMM}, G_1, b)
						\\ \MIFF \\
						(\AN{\MMM}, \LET{x}{[M_2\MMM]_{\emptyset}^{\alpha}}{
							\LET{\VEC{z}}{\VEC{N}\MMM}{
								C'[e\MMM] =C'[e'\MMM]}}) \CONV (\AN{\MMM}, G_2, b)
					\end{array}
					\right)
				\end{array} \hspace{-1.5cm}
			}{4cm}{
				Subset $C[\cdot] \equiv ...$  
				\\
				$C'[\cdot]$-capture-avoiding variables in $\DOM{\GAMMA_i}$
			}
			\NPLINE{\MIMPLIES \Mforall b^{\BOOL}.
				\begin{array}[t]{l}
					\left(
					\begin{array}{l}
						\Mforall C[\cdot] \equiv (\LET{x}{[\cdot]_{\emptyset}^{\alpha}}{
							\LET{\VEC{z}}{\VEC{N}\MMM}{
								C'[e\MMM] =C'[e'\MMM]}}), b^{\BOOL}. 
						\\
						\TYPES{\emptyset}{C[\cdot]}{\BOOL} 
						\ \MAND \
						\AN{C[\cdot]} \subseteq \AN{\MMM}
						\\ \MIMPLIES \
						(\AN{\MMM}, C[M_1\MMM]) \CONV (\AN{\MMM}, G_1, b)
					\end{array}
					\right)
					\\ \MIFF \\
					\left(
					\begin{array}{l}
						\Mforall C[\cdot] \equiv (\LET{x}{[\cdot]_{\emptyset}^{\alpha}}{
							\LET{\VEC{z}}{\VEC{N}\MMM}{
								C'[e\MMM] =C'[e'\MMM]}}), b^{\BOOL}. 
						\\
						\TYPES{\emptyset}{C[\cdot]}{\BOOL} 
						\ \MAND \
						\AN{C[\cdot]} \subseteq \AN{\MMM}
						\\ \MIMPLIES \
						(\AN{\MMM}, C[M_2\MMM]) \CONV (\AN{\MMM}, G_2, b)
					\end{array}
					\right)
				\end{array} 
			}{4cm}{
			$(\Mforall x. A(x) \MIMPLIES (B(x) \MIFF C(x)))$ 
			\\ 
			$ \MIMPLIES \qquad (\Mforall x. A(x) \MIMPLIES B(x)) $ 
			\\ 
			$\MIFF (\Mforall x. A(x) \MIMPLIES C(x))$
			\\
			FOL
			}
			\NPLINE{\MIMPLIES
				\Mforall b^{\BOOL}.
				\begin{array}[t]{l}
					\left(
					\begin{array}{l}
						\Mforall \LET{\VEC{z}}{\VEC{N}}{C'[e\MMM_1] =C'[e'\MMM_1]}. \\
						\left(
						\begin{array}{l}
							\TYPES{\emptyset}{\LET{\VEC{z}}{\VEC{N}\MMM_1}{C'[e\MMM_1] =C'[e'\MMM_1]}}{\BOOL} 
							\\ \MAND \
							\AN{\LET{\VEC{z}}{\VEC{N}\MMM_1}{C'[e\MMM_1] =C'[e'\MMM_1]}} \subseteq \AN{\MMM_1}
						\end{array}
						\right)
						\\ \MIMPLIES \
						(\AN{\MMM_1}, \LET{\VEC{z}}{\VEC{N}\MMM_1}{C'[e\MMM_1] =C'[e'\MMM_1]}) \CONV (\AN{\MMM_1}, G_1, b)
					\end{array}
					\right)
					\\ \MIFF \\
					\left(
					\begin{array}{l}
						\Mforall \LET{\VEC{z}}{\VEC{N}\MMM_2}{C'[e\MMM_2] =C'[e'\MMM_2]}. \\
						\left(
						\begin{array}{l}
							\TYPES{\emptyset}{\LET{\VEC{z}}{\VEC{N}}{C'[e\MMM_2] =C'[e'\MMM_2]}}{\BOOL} 
							\\ \MAND \
							\AN{\LET{\VEC{z}}{\VEC{N}\MMM_2}{C'[e\MMM_2] =C'[e'\MMM_2]}}\subseteq \AN{\MMM_2}
						\end{array}
						\right)
						\\ \MIMPLIES \
						(\AN{\MMM_2}, \LET{\VEC{z}}{\VEC{N}\MMM_2}{C'[e\MMM_2] =C'[e'\MMM_2]}) \CONV (\AN{\MMM_2}, G_2, b)
					\end{array}
					\right)
				\end{array}
			}{4cm}{
				$\LTCDERIVEDVALUE{M_j}{\GAMMA}{\MMM}{V_j}$ 
				\\	
				then $\MMM_j \equiv \MMM \cdot x:V_j$
				\\
				($j=1,2$)
				\\
				$\LET{x}{M_j}{M\MMM} \RED M\MMM_j$
			}
			\NPLINE{
				\MIMPLIES
				\Mforall b^{\BOOL}.
				\begin{array}[t]{l}
					\left(
					\begin{array}{l}
						\Mforall C'[\cdot]^{\alpha}. 
						\left(
						\begin{array}{l}
							\TYPES{\emptyset}{C'[\cdot]^{\alpha_e}}{\BOOL} 
							\\ \MAND \
							\AN{C'[\cdot]}\subseteq \AN{\MMM'_1}
						\end{array}
						\right)
						\\ \MIMPLIES \
						(\AN{\MMM_1'}, C'[e\MMM_1'] =C'[e'\MMM_1']) \CONV (\AN{\MMM_1'}, G_1, b)
					\end{array}
					\right)
					\\ \MIFF \\
					\left(
					\begin{array}{l}
						\Mforall C'[\cdot]^{\alpha_e}. 
						\left(
						\begin{array}{l}
							\TYPES{\emptyset}{C'[\cdot]^{\alpha_e}}{\BOOL} 
							\\ \MAND \
							\AN{C'[\cdot]}\subseteq \AN{\MMM'_2}
						\end{array}
						\right)
						\\ \MIMPLIES \
						(\AN{\MMM_2'}, C'[e\MMM_2'] =C'[e'\MMM_2']) \CONV (\AN{\MMM_2'}, G_2, b)
					\end{array}
					\right)
				\end{array}
			}{5.5cm}{
				$\LET{\VEC{z}}{\VEC{N}\MMM_j}{M\MMM_j} \RED M\MMM_j'$
				\\
				$C[M_j\MMM] \ \RED \ C'[e\MMM_j]^{\alpha} =C'[e'\MMM_j]^{\alpha}$
			}
			\NPLINE{
			 \MIMPLIES
				\begin{array}[t]{l}
					\left(
					\begin{array}{l}
						\Mforall C'[\cdot]^{\alpha_e}, b'^{\BOOL}.
						\left(
						\begin{array}{l}
							\TYPES{\emptyset}{C'[\cdot]^{\alpha_e}}{\BOOL} 
							\\ \MAND \
							\AN{C'[\cdot]}\subseteq \AN{\MMM'_1}
						\end{array}
						\right)
						\\
						\MIMPLIES
						\left(
						\begin{array}{c}
							(\AN{\MMM_1'}, C'[e\MMM_1']) \CONV (\AN{\MMM_1'}, G_1, b')
							\\ \MIFF \
							(\AN{\MMM_1'}, C'[e'\MMM_1']) \CONV (\AN{\MMM_1'}, G_1, b')
						\end{array}
						\right)
					\end{array}
					\right)
					\\ \MIFF \\
					\left(
					\begin{array}{l}
						\Mforall C'[\cdot]^{\alpha_e}, b'^{\BOOL}. 
						\left(
						\begin{array}{l}
							\TYPES{\emptyset}{C'[\cdot]^{\alpha_e}}{\BOOL} 
							\\ \MAND \
							\AN{C'[\cdot]}\subseteq \AN{\MMM_2'}
						\end{array}
						\right)
						\\
						\MIMPLIES
						\left(
						\begin{array}{c}
							(\AN{\MMM_2'}, C'[e\MMM_2']) \CONV (\AN{\MMM_2'}, G_1, b')
							\\ \MIFF \
							(\AN{\MMM_2'}, C'[e'\MMM_2']) \CONV (\AN{\MMM_2'}, G_1, b')
						\end{array}
						\right)
					\end{array}
					\right)
				\end{array} \hspace{-1cm}
			}{6cm}{ \hspace{-1cm} Hence given $C'[e\MMM_i']\MMM_i': \BOOL$ this implies:
				\\
				(To think about this set $b \equiv \TRUE$ or $b=\FALSE$ then clearly this holds)
				\\
				$b=\TRUE$: implies the $=$ case of $\MIFF$
				\\
				$b=\FALSE$: implies the $\neq$ case of $\MIFF$
			}
			\NLASTLINE{\MMM_1' \models e=e' \ \MIFF \ \MMM_2' \models e=e'}{Sem. $=$}
		\end{NDERIVATION}
	\end{proof}
\end{lemma}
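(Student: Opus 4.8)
The plan is to reduce the statement about the extended models $\MMM_1'$ and $\MMM_2'$ back to the single hypothesis $M_1\MMM \CONGCONTEXT{\alpha}{\AN{\MMM}} M_2\MMM$ by manufacturing one distinguishing context that simultaneously rebuilds both extensions and performs the equality test inside their scope. First I would unfold the witness of $(\MMM \cdot x:V_1 \EXTSTAR \MMM_1') \CONGEXTSTAR (\MMM \cdot x:V_2 \EXTSTAR \MMM_2')$ from Definition~\ref{def:similar_extstar}: ignoring the TCV mappings (which contribute no names, since $\AN{\GAMMA'}=\emptyset$ and closures satisfy $M\MMM\cdot\TCV:\GAMMA' = M\MMM$), this yields a common sequence of name-free terms $\VEC{z}:\VEC{N}$ such that, on both sides, each value added beyond $\MMM\cdot x:V_i$ is derived by the \emph{same} term $N_j$. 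Hence $\MMM_i'$ is exactly $(\MMM\cdot x:V_i)$ closed under $\VEC{N}$, and the fresh names generated along $\VEC{N}$ on the two sides agree up to renaming.

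Next I would unfold the goal through the semantics of equality: $\MMM_i' \models e=e'$ holds iff $\SEM{e}{\MMM_i'} \CONGCONTEXT{\alpha_e}{\AN{\MMM_i'}} \SEM{e'}{\MMM_i'}$, i.e.\ iff for every Boolean-typed context $C'[\cdot]$ with $\AN{C'[\cdot]} \subseteq \AN{\MMM_i'}$ the programs $C'[\SEM{e}{\MMM_i'}]$ and $C'[\SEM{e'}{\MMM_i'}]$ converge to the same boolean. The core construction is then the context
\[
C[\cdot] \ \DEFEQ \ \LET{x}{[\cdot]}{\LET{\VEC{z}}{\VEC{N}\MMM}{\,C'[e\MMM] = C'[e'\MMM]\,}},
\]
with its bound variables $\alpha$-renamed so that $C'[\cdot]$ captures nothing in $\DOM{\GAMMA_i}$. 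By the operational semantics of $\PROGRAM{let}$ we have $C[M_j\MMM] \RED^{*} C'[e\MMM_j'] = C'[e'\MMM_j']$, because binding $x$ to $V_j$ (from $\LTCDERIVEDVALUE{M_j}{\GAMMA}{\MMM}{V_j}$) and then binding $\VEC{z}$ to the values of $\VEC{N}$ rebuilds precisely the closure $\MMM_j'$ inside the let-scope. This $C[\cdot]$ is Boolean-typed and name-restricted to $\AN{\MMM}$, hence an admissible test for the hypothesis.

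I would then apply $M_1\MMM \CONGCONTEXT{\alpha}{\AN{\MMM}} M_2\MMM$ to $C[\cdot]$: for every choice of $C'$ the two runs $C[M_1\MMM]$ and $C[M_2\MMM]$ converge to the same boolean. A purely first-order manipulation, the schema $(\Mforall x.\,A(x)\MIMPLIES(B(x)\MIFF C(x)))\MIMPLIES((\Mforall x.\,A(x)\MIMPLIES B(x))\MIFF(\Mforall x.\,A(x)\MIMPLIES C(x)))$, lets me strip the common $x,\VEC{z}$ prefix and isolate the two equality tests, yielding that $C'[e\MMM_1']=C'[e'\MMM_1']$ and $C'[e\MMM_2']=C'[e'\MMM_2']$ agree for all $C'$. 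Reading this with the outer boolean set to $\TRUE$ gives the ``equal'' direction and to $\FALSE$ the ``unequal'' direction, i.e.\ $\SEM{e}{\MMM_1'}\CONGCONTEXT{\alpha_e}{\AN{\MMM_1'}}\SEM{e'}{\MMM_1'}$ iff $\SEM{e}{\MMM_2'}\CONGCONTEXT{\alpha_e}{\AN{\MMM_2'}}\SEM{e'}{\MMM_2'}$, which by the semantics of $=$ is exactly $\MMM_1'\models e=e'$ iff $\MMM_2'\models e=e'$.

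The main obstacle I expect is the name bookkeeping. Since $V_1,V_2$ are only contextually equivalent (not equal) and the extension terms $\VEC{N}$ may themselves invoke $\GENSYM$, the single context $C[\cdot]$ must reconstruct \emph{both} $\MMM_1'$ and $\MMM_2'$ with their fresh names matched up consistently; justifying this is where the $\CONGEXTSTAR$ hypothesis does its real work, playing the role of $\alpha$-equivalence on freshly generated names (the $\GENSYM()\cong\GENSYM()$ example). Concretely I must check that $\AN{C[\cdot]}\subseteq\AN{\MMM}$, that the generated-name sets in the two $\CONV$ judgements can be aligned, and that placing $C'[e\MMM]$ and $C'[e'\MMM]$ inside the scope of the lets binding $x$ and $\VEC{z}$ is legitimate even though $e,e'$ may mention those variables. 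The let-reductions and capture-avoidance conditions are routine but must be discharged carefully for the whole reduction to the hypothesis to go through.
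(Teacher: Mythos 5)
Your proposal is correct and follows essentially the same route as the paper's own proof: you build the identical distinguishing context $\LET{x}{[\cdot]}{\LET{\VEC{z}}{\VEC{N}\MMM}{C'[e\MMM]=C'[e'\MMM]}}$ from the $\CONGEXTSTAR$ witness, apply the congruence hypothesis to it, split the resulting biconditional with the same first-order schema, and read off the two directions via the $b=\TRUE$/$b=\FALSE$ case analysis. The name-bookkeeping and capture-avoidance caveats you flag are exactly the points the paper also leaves as routine.
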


\begin{lemma}[\DONE Congruent extensions implies their evaluation added to a model, model equivalently]
	\label{lem:congruent_base_model_extensions_derive_values_equivalently}
	\[
	\begin{array}[t]{l}
	\Mforall \GAMMA, \MMM^{\GAMMA}, M_1^{\alpha}, M_2^{\alpha}.
	\\
	M_1\MMM \CONGCONTEXT{\alpha}{\AN{\MMM}} M_2\MMM
	\
	\MAND \ 
	\LTCDERIVEDVALUE{M_1}{\GAMMA}{\MMM}{V_1}
	\
	\MAND \ 
	\LTCDERIVEDVALUE{M_2}{\GAMMA}{\MMM}{V_2}
	\\ 
	\MIMPLIES 
	\\
	\Mforall \GAMMA_i, \MMM_1'^{\GAMMA_i}, \MMM_2'^{\GAMMA_i}, A.
	\begin{array}[t]{l}
	(\MMM \cdot x:V_1 \EXTSTAR \MMM_1') 
	\ \CONGEXTSTAR \ 
	(\MMM \cdot x:V_2 \EXTSTAR \MMM_2')
	\ \MAND \
	\FORMULATYPES{\GAMMA_i}{A}
	\\
	\MIMPLIES 
	\\
	\Mforall \GAMMA_0. \TCTYPES{\GAMMA_i}{\GAMMA_0} \ \MIMPLIES \
	(\Mexists M_u. \LTCDERIVEDVALUE{M_u}{\GAMMA_0}{\MMM_1'}{\SEM{u}{\MMM_1'}}
	\ \MIFF \
	\Mexists N_u. \LTCDERIVEDVALUE{N_u}{\GAMMA_0}{\MMM_2'}{\SEM{u}{\MMM_2'}})
	\end{array}
	\end{array}
	\]
	
		If $u \in \DOM{\MMM}$ then clearly this holds.
		\\
		If $u \notin \DOM{\MMM}$ then clearly the $u$ must be in $\MMM_j'$ and hence is derived from a series of extensions, which are characterised by $\VEC{z}:\VEC{N}$ therefore the same use in $M_u$ and $N_u$ of these variables derives the same $\SEM{u}{\MMM_j'}$.
		This relies on the fact that if a name is used in $\MMM_1'$ then it is the same name used in $\MMM_2'$ or it is fresh from $\MMM$ in both.
	\begin{proof}
		This is proven using the fact that $\MMM_1'$ and $\MMM_2'$ are similar and hence any $M_u$ that produces $\SEM{u}{\MMM_1'}$ can be used to derive $\SEM{u}{\MMM_2'}$ as follows:
		\\
		\begin{NDERIVATION}{1}
			\NLINE{
				\text{Let the $\CONGEXTSTAR$ -extension part be derived from by $\VEC{z}:\VEC{M}$ where $\VEC{M}_{i+1}$ may contain the variable $\VEC{z}_i$.}
			}{}
			\NLINE{
				\text{Assume some $\GAMMA, \MMM^{\GAMMA}, M_1^{\alpha}, M_2^{\alpha}$ s.t. }
				M_1\MMM \CONGCONTEXT{\alpha}{\AN{\MMM}} M_2\MMM
				\ \MAND \ 
				\LTCDERIVEDVALUE{M_1}{\GAMMA}{\MMM}{V_1}
				\ \MAND \ 
				\LTCDERIVEDVALUE{M_2}{\GAMMA}{\MMM}{V_2}
			}{}
			\NLINE{
				\text{also assume some $\GAMMA_i, \MMM_1'^{\GAMMA_i}, \MMM_2'^{\GAMMA_i}, A$ s.t. }
				(\MMM \cdot x:V_1 \EXTSTAR \MMM_1') 
				\ \CONGEXTSTAR \ 
				(\MMM \cdot x:V_2 \EXTSTAR \MMM_2')
				\ \MAND \
				\FORMULATYPES{\GAMMA_i}{A}
			}{}
			\NLINE{
				\parbox[t]{12cm}{
					also assume some $\GAMMA_0$ s.t. $\TCTYPES{\GAMMA_i}{\GAMMA_0}$,
					prove $\MIMPLIES$ to prove $\MIFF$-by symmetry
				}
			}{}
			\NLINE{
				\text{
					Assume there exists $M_u$ s.t. $\LTCDERIVEDVALUE{M_u}{\GAMMA_0}{\MMM_1'}{\SEM{u}{\MMM_1'}}$
				}
			}{}
			\NLINE{
				\text{
					then show that there exists $N_u$ s.t. $\LTCDERIVEDVALUE{N_u}{\GAMMA_0}{\MMM_2'}{\SEM{u}{\MMM_2'}}$
				}
			}{}
			\NLINE{
				M_1\MMM \CONGCONTEXT{\alpha}{\AN{\MMM}} M_2\MMM
			}{}
			\NLINE{ \MIFF \
				\Mforall C[\cdot]^{\alpha}, b^{\BOOL}. \ 
				\left(
				\begin{array}{l}
					\TYPES{\emptyset}{C[\cdot]^{\alpha}}{\BOOL} 
					\\ \MAND \
					\AN{C[\cdot]}\subseteq \AN{\MMM}
				\end{array}
				\right)
				\ \MIMPLIES \
				\left(
				\begin{array}{c}
					(\AN{\MMM}, C[M_1\MMM]) \CONV (\AN{\MMM}, G_1, b)
					\\ \MIFF \\
					(\AN{\MMM}, C[M_2\MMM]) \CONV (\AN{\MMM}, G_2, b)
				\end{array}
				\right)
			}{Sem. $\CONGCONTEXT{}{}$}
			\NPLINE{\Mforall b^{\BOOL}. 
			\begin{array}[t]{l}
			\Mforall C[\cdot]^{\alpha}. \ 
			\left(
			\begin{array}{l}
				\TYPES{\emptyset}{C[\cdot]^{\alpha}}{\BOOL} 
				\\ \MAND \
				\AN{C[\cdot]}\subseteq \AN{\MMM}
			\end{array}
			\right)
			\ \MIMPLIES \
			(\AN{\MMM}, C[M_1\MMM]) \CONV (\AN{\MMM}, G_1, b)
			\\
			\MIFF
			\\
			\Mforall C[\cdot]^{\alpha}. \ 
			\left(
			\begin{array}{l}
				\TYPES{\emptyset}{C[\cdot]^{\alpha}}{\BOOL} 
				\\ \MAND \
				\AN{C[\cdot]}\subseteq \AN{\MMM}
			\end{array}
			\right)
			\ \MIMPLIES \
			(\AN{\MMM}, C[M_2\MMM]) \CONV (\AN{\MMM}, G_2, b)
			\end{array}
			}{4cm}{
				 $(\Mforall x. A(x) \MIMPLIES (B(x) \MIFF C(x)))\MIMPLIES $ \\ $((\Mforall x. A(x) \MIMPLIES B(x)) \quad$ \\ $ \MIFF (\Mforall x. A(x) \MIMPLIES C(x)))$
			}
			\NLINE{
				\parbox[t]{13cm}{
					Assume  $\MMM_1'$ and $\MMM_2'$ are both derived from $\MMM$ via $\VEC{z}:\VEC{N}$  and $x:M_j$ (from $\CONGEXTSTAR$)
					\\
					Select $C[\cdot]$ as 
					$\LET{x}{[\cdot]^{\alpha}}{\LET{\VEC{z}}{\VEC{N}\MMM}{(C'[M_u\MMM]^{\NAME} = C'[u\MMM]^{\NAME})}}$
					\\
					with $C'[\cdot] $ not capturing $x,\VEC{z}$
					\\
					$\LTCDERIVEDVALUE{M_j}{\GAMMA}{\MMM}{V_j}$ implies  $C[M_j]\MMM \begin{array}[t]{l}\RED \LET{\VEC{z}}{\VEC{N}(\MMM \cdot x:V_j)}{C'[M_u(\MMM \cdot x:V_j)]^{\NAME} = C'[u(\MMM \cdot x:V_j)]^{\NAME}}
						\\
						 ... \RED \ C'[M_u\MMM'_j]^{\NAME} = C'[u\MMM'_j]^{\NAME}
						\end{array}$
				}
			}{}
			\NPLINE{\MIMPLIES \
				\begin{array}[t]{l}
					\Mforall b^{\BOOL}. 
					\\
					\Mforall C'[M_u\MMM'_1]^{\NAME} = C'[u\MMM'_1]^{\NAME}. \ 
					\begin{array}{l}
						\left(
						\begin{array}{l}
							\TYPES{\emptyset}{C'[M_u\MMM'_1]^{\NAME} = C'[u\MMM'_1]^{\NAME}}{\alpha} 
							\\ \MAND \
							\AN{C'[M_u\MMM'_1]^{\NAME} = C'[u\MMM'_1]^{\NAME}} \subseteq \AN{\MMM'_1}
						\end{array}
						\right)
						\\ \MIMPLIES \
						(\AN{\MMM_1'}, (C'[M_u\MMM'_1] = C'[u\MMM'_1]) \CONV (\AN{\MMM_1'}, G_1, b)
					\end{array}
					\\
					\MIFF
					\\
					\Mforall C'[M_u\MMM'_2]^{\NAME} = C'[u\MMM'_2]^{\NAME}. \ 
					\begin{array}{l}
						\left(
						\begin{array}{l}
							\TYPES{\emptyset}{C'[M_u\MMM'_2]^{\NAME} = C'[u\MMM'_2]^{\NAME}}{\alpha} 
							\\ \MAND \
							\AN{C'[M_u\MMM'_2]^{\NAME} = C'[u\MMM'_2]^{\NAME}} \subseteq \AN{\MMM'_2}
						\end{array}
						\right)
						\\ \MIMPLIES \
						(\AN{\MMM'_2}, (C'[M_u\MMM'_2] = C'[u\MMM'_2]) \CONV (\AN{\MMM'_2}, G_1, b)
					\end{array}
				\end{array}
			\hspace{-1.5cm}
			}{4cm}{Subset of $C[\cdot]$ \\ from line 10}
			\NPLINE{\!\! \MIMPLIES
				\begin{array}[t]{l}
				\Mforall C'[\cdot]^{\NAME}, b'^{\BOOL}.
				\begin{array}[t]{l}
					\left(
					\begin{array}{l}
					\TYPES{\emptyset}{C'[\cdot]^{\NAME}}{\BOOL} 
					\\ \MAND \
					\AN{C'[\cdot]^{\NAME}}\subseteq \AN{\MMM'_1}
					\end{array}
					\right)
					\\ \MIMPLIES
					\left(
					\begin{array}{c}
						(\AN{\MMM_1'}, C'[M_u\MMM_1']) \CONV (\AN{\MMM_1'}, G_2, b')
						\\ \MIFF \\
						(\AN{\MMM_1'}, C'[u\MMM_1']) \CONV (\AN{\MMM_1'}, G_2, b')
					\end{array}
					\right)
				\end{array}
				\\
				\MIFF
				\\
				\Mforall C'[\cdot]^{\NAME}, b'^{\BOOL}.
				\begin{array}[t]{l}
					\left(
					\begin{array}{l}
						\TYPES{\emptyset}{C'[\cdot]^{\NAME}}{\BOOL} 
						\\ \MAND \
						\AN{C'[\cdot]^{\NAME}}\subseteq \AN{\MMM'_2}
					\end{array}
					\right)
					\\ \MIMPLIES
					\left(
					\begin{array}{c}
						(\AN{\MMM_2'}, C'[M_u\MMM'_2]) \CONV (\AN{\MMM_2'}, G_2, b')
						\\ \MIFF \\
						(\AN{\MMM_2'}, C'[u\MMM_2']) \CONV (\AN{\MMM_2'}, G_2, b')
					\end{array}
					\right)
				\end{array}
				\end{array}
			\hspace{-2cm}
			}{5cm}{
				$b=\TRUE$: clearly holds.
				\\
				$b=\FALSE$: holds as 
				\\
				$
				\begin{array}{c}
					(C'[M_u\MMM'_j] = C'[u\MMM'_j]) \CONV \FALSE
					\\
					\MIFF
					\\
					C'[M_u \MMM'_j] \CONV b' \MIFF C'[u \MMM'_j] \CONV \neg b'
				\end{array}
				$
				\\
				implies $(\Mforall C'_{M_u, u}[\cdot]. \TRUTH\MIMPLIES \FALSITY) \MIFF (\Mforall C'_{M_u, u}[\cdot]. \TRUTH\MIMPLIES \FALSITY)$ 
				\\
				and hence the more general $\FALSITY \MIFF \FALSITY$
			}
			\NLASTLINE{
				\MIMPLIES \
				(\Mexists M_u. \LTCDERIVEDVALUE{M_u}{\GAMMA_0}{\MMM_1'}{\SEM{u}{\MMM_1'}}
				\ \MIFF \
				\Mexists N_u. \LTCDERIVEDVALUE{N_u}{\GAMMA_0}{\MMM_2'}{\SEM{u}{\MMM_2'}})
			}{with precisely the same $M_u \equiv N_u$.}
		\end{NDERIVATION}
	\end{proof}
\end{lemma}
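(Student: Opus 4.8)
The plan is to reuse, almost verbatim, the contextual-congruence machinery developed for Lemma~\ref{lem:congruent_base_model_extensions_model_equivalence_equivelently}, since the two statements share identical hypotheses and differ only in their conclusion (derivability of a value rather than satisfaction of an equality). First I would observe that by the symmetry of $\CONGEXTSTAR$ it suffices to prove a single direction of the $\MIFF$: assuming some $M_u$ with $\LTCDERIVEDVALUE{M_u}{\GAMMA_0}{\MMM_1'}{\SEM{u}{\MMM_1'}}$, produce an $N_u$ with $\LTCDERIVEDVALUE{N_u}{\GAMMA_0}{\MMM_2'}{\SEM{u}{\MMM_2'}}$; in fact I expect to take $N_u \equiv M_u$, so the real content is that the \emph{same} name-free term witnesses derivability on both sides. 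I would first dispatch the trivial case where $u \in \DOM{\MMM}$ (the interpretation $\SEM{u}{\MMM}$ then coincides on both models) and concentrate on the case where $u$ lies in $\DOM{\MMM_j'}$ but not $\DOM{\MMM}$, so that $u$ is produced only along the common extension.

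Next I would unpack the definition of $\CONGEXTSTAR$ (Def.~\ref{def:similar_extstar}): the two star-extensions are built from a common derivation sequence $\VEC{z}:\VEC{N}$, where each $\VEC{N}_{i+1}$ may mention the earlier $\VEC{z}_i$, together with the base difference $x:V_1$ versus $x:V_2$ (the TCV assignments being inconsequential for reachability). I would then unfold the hypothesis $M_1\MMM \CONGCONTEXT{\alpha}{\AN{\MMM}} M_2\MMM$ via the semantics of $\CONGCONTEXT{}{}$ and instantiate it with the distinguishing context
\[
C[\cdot] \ \DEFEQ \ \LET{x}{[\cdot]^{\alpha}}{\LET{\VEC{z}}{\VEC{N}\MMM}{(C'[M_u\MMM] = C'[u\MMM])}},
\]
with $C'[\cdot]^{\NAME}$ chosen not to capture $x$ or $\VEC{z}$. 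Using $\LTCDERIVEDVALUE{M_j}{\GAMMA}{\MMM}{V_j}$ for $j=1,2$, each $C[M_j]\MMM$ reduces through its let-bindings to the comparison $C'[M_u\MMM_j'] = C'[u\MMM_j']$, mirroring the context construction in the previous lemma.

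I would then apply the same first-order rearrangement, namely $(\Mforall x. A(x) \MIMPLIES (B(x) \MIFF C(x))) \MIMPLIES ((\Mforall x. A(x) \MIMPLIES B(x)) \MIFF (\Mforall x. A(x) \MIMPLIES C(x)))$, to push the congruence down to the extended models, obtaining that for all name-typed $C'[\cdot]$ convergence of $C'[M_u\MMM_1']$ agrees with that of $C'[u\MMM_1']$ exactly when the corresponding statement holds for $\MMM_2'$. Finally I would read off derivability from the Boolean outcome: since $M_u\MMM_1'$ and $u\MMM_1'$ both evaluate to $\SEM{u}{\MMM_1'}$, every name-producing $C'$ makes $C'[M_u\MMM_1'] = C'[u\MMM_1']$ converge to $\TRUE$; the transferred congruence then makes $C'[M_u\MMM_2'] = C'[u\MMM_2']$ converge to $\TRUE$ for all such $C'$, so $M_u\MMM_2'$ is contextually indistinguishable from $u\MMM_2' = \SEM{u}{\MMM_2'}$ and hence derives it, yielding $\LTCDERIVEDVALUE{M_u}{\GAMMA_0}{\MMM_2'}{\SEM{u}{\MMM_2'}}$ with $N_u \equiv M_u$.

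The step I expect to be the main obstacle is the bookkeeping that makes the distinguishing context $C[\cdot]$ legitimate and faithful. Concretely, $M_u$ may use variables of $\GAMMA_0$, which sits below $\GAMMA_i$ and whose bindings are only introduced by the $\VEC{z}:\VEC{N}$ prefix, so I must verify that $C[\cdot]$ is closed, of Boolean type, and satisfies $\AN{C[\cdot]} \subseteq \AN{\MMM}$, while the fresh names generated along $\VEC{N}$ in the two runs are matched up correctly. This matching is exactly the $\alpha$-equivalence-style renaming of fresh names already justified in Lemma~\ref{lem:congruent_base_model_extensions_model_equivalence_equivelently}: a name reachable in $\MMM_1'$ is either old (hence shared with $\MMM_2'$) or fresh-from-$\MMM$ and occupies the same derivation position in $\MMM_2'$, so the witnesses transfer. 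Getting these name-set side conditions to line up through the nested lets is where the care is needed; once they do, the argument is a direct specialisation of the preceding lemma.
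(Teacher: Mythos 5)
Your proposal is correct and follows essentially the same route as the paper's own proof: reduce to one direction by symmetry, take $N_u \equiv M_u$, instantiate the contextual congruence $M_1\MMM \CONGCONTEXT{\alpha}{\AN{\MMM}} M_2\MMM$ with the distinguishing context $\LET{x}{[\cdot]^{\alpha}}{\LET{\VEC{z}}{\VEC{N}\MMM}{(C'[M_u\MMM] = C'[u\MMM])}}$, push the equivalence through the let-bindings via the same first-order rearrangement, and read off derivability of $\SEM{u}{\MMM_2'}$ from the Boolean outcomes. The bookkeeping concern you flag (closedness, typing, and the name-set condition $\AN{C[\cdot]} \subseteq \AN{\MMM}$ for the nested-let context, together with the matching of fresh names along the two runs) is real but is handled the same way in the paper, by appeal to the $\CONGEXTSTAR$ structure and the renaming argument from the preceding lemma.
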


\begin{lemma}[\DONE Congruent extensions implies their evaluation added to a model, model equivalently]
\label{lem:congruent_base_model_extensions_model_equivelently}
	\[	
	\begin{array}[t]{l}
	\Mforall \GAMMA, \MMM^{\GAMMA}, M_1^{\alpha}, M_2^{\alpha}.
	\\
	M_1\MMM \CONGCONTEXT{\alpha}{\AN{\MMM}} M_2\MMM
	\ \MAND \ 
	\LTCDERIVEDVALUE{M_1}{\GAMMA}{\MMM}{V_1}
	\ \MAND \ 
	\LTCDERIVEDVALUE{M_2}{\GAMMA}{\MMM}{V_2}
	\\ 
	\MIMPLIES 
	\\
	\Mforall \GAMMA_i, \MMM_1'^{\GAMMA_i}, \MMM_2'^{\GAMMA_i}, A.
	\begin{array}[t]{l}
		(\MMM \cdot x:V_1 \EXTSTAR \MMM_1') 
		\ \CONGEXTSTAR \ 
		(\MMM \cdot x:V_2 \EXTSTAR \MMM_2')
		\ \MAND \
		\FORMULATYPES{\GAMMA_i}{A}
		\\
		\MIMPLIES 
		\\
		\MMM_1' \models A \ \MIFF \ \MMM_2' \models A
	\end{array}
	\end{array}
	\]
	Use IH as the additions are all of the same form when proving $A$:
	\\
	\begin{proof}
		Proof by IH on the structure of $A$:
		\\
		\begin{NDERIVATION}{1}
			\NLINE{\text{Assume:  $\GAMMA$, $\MMM^{\GAMMA}$, $M_1^{\alpha}$, $M_2^{\alpha}$ s.t. $M_1\MMM \CONGCONTEXT{\alpha}{\AN{\MMM}} M_2\MMM
					\ \MAND \ 
					\LTCDERIVEDVALUE{M_1}{\GAMMA}{\MMM}{V_1}
					\ \MAND \ 
					\LTCDERIVEDVALUE{M_2}{\GAMMA}{\MMM}{V_2}$}
			}{}
			\NLINE{\text{Assume:  $\GAMMA_i$, $\MMM_1'^{\GAMMA_i}$, $\MMM_2'^{\GAMMA_i}$, $A$ s.t. $(\MMM \cdot x:V_1 \EXTSTAR \MMM_1') 
					\ \CONGEXTSTAR \ 
					(\MMM \cdot x:V_2 \EXTSTAR \MMM_2')
					\ \MAND \
					\FORMULATYPES{\GAMMA_i}{A}$}
			}{}
		\NLASTLINE{\text{Assume:  $\MMM_1' \models A$ to prove $\MMM_2' \models A$}}{}
		\end{NDERIVATION}
		Noting that $\Mforall \GAMMA_0. \TCTYPES{\GAMMA_i}{\GAMMA_0} \ \MIMPLIES \ \SEM{\GAMMA_0}{\MMM_1'} \equiv \SEM{\GAMMA_0}{\MMM_2'}$ as $\TCV$'s are unaffected by $\CONGEXTSTAR$, and the standard mappings are also unaffected.
		
		Next prove this final step by IH on the structure of $A$:
		\begin{itemize}
			\item [\DONE $A \equiv e=e'$] 
			Lemma \ref{lem:congruent_base_model_extensions_model_equivalence_equivelently}.
			\item[\DONE $A \equiv \neg A'$]
			By IH on $A'$.
			\item[\DONE $A \equiv A_1 \PAND A_2$]
			By IH on $A_1$ and $A_2$.
			\item[\DONE $A \equiv A_1 \POR A_2$]
			By IH on $A_1$ and $A_2$.
			\item[\DONE $A \equiv A_1 \PIMPLIES A_2$]
			By IH on $A_1$ and $A_2$.
			
			\item[\DONE $A \equiv \ONEEVAL{u}{e}{m}{A'}$]
			Assuming $\MMM_1' \models \ONEEVAL{u}{e}{m}{A'}$ iff $(\AN{\MMM_1'}, \ \LTCDERIVEDVALUE{ue}{\GAMMA}{\MMM_1'}{V_m} \ \MAND \ \MMM_1' \cdot m:V_m \models A'$ 
			\\
			then it is clear there exists $W_m$ s.t. $\LTCDERIVEDVALUE{ue}{\GAMMA}{\MMM_2'}{W_m}$
			\\
			thus 
			$(\MMM \cdot x:V_1 \EXTSTAR \MMM_1' \cdot m:V_m) 
			\ \CONGEXTSTAR \ 
			(\MMM \cdot x:V_2 \EXTSTAR \MMM_2' \cdot m:W_m)$
			\\
			hence by IH on $A'$: $\MMM_2' \cdot m:W_m \models A'$
			\\
			hence $\MMM_2' \models \ONEEVAL{u}{e}{m}{A'}$
			
			\item[\DONE $A \equiv \FRESH{u}{\GAMMA_0}$]
			Prove by direct use of Lemma \ref{lem:congruent_base_model_extensions_derive_values_equivalently}.
			
			\item[\DONE $A \equiv \FORALL{u}{\GAMMA_0} A'$]
			Assume $\MMM_1' \models \FORALL{u}{\GAMMA_0} A'$ iff $\Mforall M_u. \LTCDERIVEDVALUE{M_u}{\GAMMA_0}{\MMM_1'}{V_u} \MIMPLIES \MMM_1' \cdot u:V_u \models A'$
			\\
			Prove  $\MMM_2' \models \FORALL{u}{\GAMMA_0} A'$ iff $\Mforall M_u. \LTCDERIVEDVALUE{M_u}{\GAMMA_0}{\MMM_2'}{W_u} \MIMPLIES \MMM_2' \cdot u:W_u \models A'$
			\\
			Assume some $M_u$ in the $\MMM_2'$ such that $\LTCDERIVEDVALUE{M_u}{\GAMMA_0}{\MMM_2'}{W_u}$ then given $\SEM{\GAMMA_0}{\MMM_2'} \equiv \SEM{\GAMMA_0}{\MMM_1'}$ and the assumption then:
			for that $M_u$ then
			$\LTCDERIVEDVALUE{M_u}{\GAMMA_0}{\MMM_1'}{V_u}$ and hence $\MMM_1' \cdot u:V_u \models A'$
			and as $W_u$ and $V_u$ are derived from the same term then 
			$(\MMM \cdot x:V_1 \EXTSTAR \MMM_1' \cdot u:V_u) \CONGEXTSTAR(\MMM \cdot x:V_2 \EXTSTAR \MMM_2' \cdot u:W_u)$
			\\
			by induction on $A'$ this implies $\MMM_2' \cdot u:W_u \models A'$.
			\\
			Hence $\Mforall M_u. \LTCDERIVEDVALUE{M_u}{\GAMMA_0}{\MMM_2'}{W_u} \MIMPLIES \MMM_2' \cdot u:W_u \models A'$ hence $\MMM_2' \models \FORALL{u}{\GAMMA_0} A'$
			
			\item[\DONE $A \equiv \EXISTS{u}{\GAMMA_0} A'$] Similar proof as $\FORALL{u}{\GAMMA_0} A'$ above.
			
			\item[\DONE $A \equiv \FAD{\TCV} A'$]
			Assume  $\MMM_1' \models \FAD{\TCV} A'$ iff $\Mforall \MMM_1''^{\GAMMA_i''}. \MMM_1' \EXTSTAR \MMM_1'' \MIMPLIES \MMM_1'' \cdot \TCV:\GAMMA_i''\REMOVETCVfrom \models A'$
			\\
			Prove  $\MMM_2' \models \FAD{\TCV} A'$ iff $\Mforall \MMM_2''^{\GAMMA_i''}. \MMM_2' \EXTSTAR \MMM_2'' \MIMPLIES \MMM_2'' \cdot \TCV:\GAMMA_i''\REMOVETCVfrom \models A'$
			\\
			i.e. assume some $\MMM_2''^{\GAMMA_i''}$ such that $\MMM_2' \EXTSTAR \MMM_2''$ then for the same derivation required for $\MMM_2''$ from $\MMM_2'$ there is an similarly derived model $\MMM_1''$ derived from $\MMM_1'$ 
			\\
			and hence $(\MMM \cdot x:V_1 \EXTSTAR \MMM_1'') \CONGEXTSTAR(\MMM \cdot x:V_2 \EXTSTAR \MMM_2'')$ (By assumption)
			\\
			and hence $(\MMM \cdot x:V_1 \EXTSTAR \MMM_1'' \cdot \TCV:\GAMMA_i''\REMOVETCVfrom) \CONGEXTSTAR(\MMM \cdot x:V_2 \EXTSTAR \MMM_2'' \cdot \TCV:\GAMMA_i''\REMOVETCVfrom)$  (Def \ref{def:similar_extstar})
			\\
			and IH on $A'$ can now be used to show that given $\MMM_1'' \cdot \TCV:\GAMMA_i''\REMOVETCVfrom \models A'$ then $\MMM_2'' \cdot \TCV:\GAMMA_i''\REMOVETCVfrom \models A'$ hence the case holds.
		\end{itemize}
	\end{proof}
\end{lemma}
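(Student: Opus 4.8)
The plan is to proceed by structural induction on the formula $A$, using the fact that $\CONGEXTSTAR$ forces the two extended models to agree on all observable behaviour of derived values; by the symmetry of the $\CONGEXTSTAR$ relation it suffices to establish only the forward implication $\MMM_1' \models A \MIMPLIES \MMM_2' \models A$. Before entering the induction I would record one preliminary fact: since $\CONGEXTSTAR$ keeps the TCV-mappings and the ordinary variable-mappings of $\MMM_1'$ and $\MMM_2'$ in lockstep, for every $\GAMMA_0$ with $\TCTYPES{\GAMMA_i}{\GAMMA_0}$ we have $\SEM{\GAMMA_0}{\MMM_1'} \equiv \SEM{\GAMMA_0}{\MMM_2'}$. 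Consequently a term is well typed to derive a value against one model exactly when it is against the other, so both models quantify over precisely the same collection of deriving terms.

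The base case $A \equiv e = e'$ is exactly the content of Lemma~\ref{lem:congruent_base_model_extensions_model_equivalence_equivelently}, so it can be discharged by direct appeal to that result. The propositional cases $\neg A'$, $A_1 \PAND A_2$, $A_1 \POR A_2$ and $A_1 \PIMPLIES A_2$ then follow immediately from the induction hypothesis on the immediate subformulae, since satisfaction of these connectives is defined pointwise.

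For the quantifier and evaluation cases the recurring move is to exhibit the relevant model growth on both sides from a common deriving term and to argue that $\CONGEXTSTAR$ survives that growth, so that the induction hypothesis applies to the enlarged models. For $A \equiv \ONEEVAL{u}{e}{m}{A'}$ I would evaluate the application in $\MMM_1'$ to a value $V_m$ and in $\MMM_2'$ to a value $W_m$; because both come from the same expression and the underlying extensions were congruent, $(\MMM \cdot x:V_1 \EXTSTAR \MMM_1' \cdot m:V_m) \CONGEXTSTAR (\MMM \cdot x:V_2 \EXTSTAR \MMM_2' \cdot m:W_m)$, and the induction hypothesis on $A'$ closes the case. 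For $A \equiv \FORALL{u}{\GAMMA_0} A'$, and dually $A \equiv \EXISTS{u}{\GAMMA_0} A'$, the preliminary fact supplies a common stock of deriving terms: any $M_u$ deriving $V_u$ in $\MMM_1'$ derives some $W_u$ in $\MMM_2'$, the resulting extensions remain congruent, and the induction hypothesis on $A'$ transfers satisfaction. The freshness case $A \equiv \FRESH{u}{\GAMMA_0}$ amounts to the preservation of derivability of the single value $\SEM{u}{\cdot}$, which is exactly Lemma~\ref{lem:congruent_base_model_extensions_derive_values_equivalently}. Finally, for $A \equiv \FAD{\TCV} A'$ I would take an arbitrary $\MMM_2''$ with $\MMM_2' \EXTSTAR \MMM_2''$, replay the same derivation sequence to produce a correspondingly derived $\MMM_1''$ with $\MMM_1' \EXTSTAR \MMM_1''$, invoke Definition~\ref{def:similar_extstar} to conclude that the $\TCV$-sealed models $\MMM_1'' \cdot \TCV:(\GAMMA_i'' \REMOVETCVfrom)$ and $\MMM_2'' \cdot \TCV:(\GAMMA_i'' \REMOVETCVfrom)$ remain congruent, and apply the induction hypothesis on $A'$.

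The main obstacle I anticipate is not any single case in isolation but the uniform verification that $\CONGEXTSTAR$ is \emph{stable} under every form of model growth the semantics uses---appending an evaluation result, a quantified witness, or an entire future extension sealed under $\TCV$. This stability is what lets the induction hypothesis be fed models that are still related, and getting it right hinges on the compositional clause of Definition~\ref{def:similar_extstar} together with Lemma~\ref{lem:congruent_base_model_extensions_derive_values_equivalently}, which certify that ``the same deriving term yields correspondingly indistinguishable values'' persists through extension. The delicate subpoint is that names freshly generated during these derivations must be handled up to renaming, so that a name reused inside $\MMM_1''$ is matched by the same or an equally fresh name inside $\MMM_2''$; once this renaming is folded into the congruence relation, the remaining bookkeeping is routine.
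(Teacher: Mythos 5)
Your proposal follows essentially the same route as the paper's proof: structural induction on $A$, the same preliminary observation that $\SEM{\GAMMA_0}{\MMM_1'} \equiv \SEM{\GAMMA_0}{\MMM_2'}$, the base equality case discharged by Lemma~\ref{lem:congruent_base_model_extensions_model_equivalence_equivelently}, the freshness case by Lemma~\ref{lem:congruent_base_model_extensions_derive_values_equivalently}, and the evaluation, quantifier and $\FAD{\TCV}$ cases handled by showing $\CONGEXTSTAR$ is preserved under the corresponding model growth before applying the induction hypothesis. The proposal is correct and matches the paper's argument in both structure and the key lemmas invoked.
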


\begin{lemma}[\DONE Two extensions combine to make extensions of each other]
	\label{lem:two_extensions_combine_to_make_extensions_of_each_other}
	
	\[
	\begin{array}[t]{l}
	\Mforall \GAMMA, \GAMMA_1, \GAMMA_2, \MMM^{\GAMMA}, \MMM_1^{\GAMMA_1}, \MMM_2^{\GAMMA_2}.
	\\
	\MMM \EXTSTAR \MMM \cdot \MMM_1
	\ \MAND \
	\MMM \EXTSTAR \MMM \cdot \MMM_2
	\\ \MIMPLIES \ 
	\left(
	\MMM \cdot \MMM_1 \EXTSTAR \MMM \cdot \MMM_1 \cdot \MMM_2
	\ \MIFF \
	\MMM \cdot \MMM_2 \EXTSTAR \MMM \cdot \MMM_1 \cdot \MMM_2
	\right)
	\end{array} 
	\]
	i.e.
	\[
	\begin{array}[t]{l}
	\MMM \EXTSTAR \MMM \cdot \VEC{X}:\VEC{V} 
	\ \MAND \
	\MMM \EXTSTAR \MMM \cdot \VEC{Y}:\VEC{W}
	\\ \MIMPLIES \ 
	\left(
	\MMM \cdot \VEC{X}:\VEC{V} \EXTSTAR \MMM \cdot \VEC{X}:\VEC{V} \cdot \VEC{Y}:\VEC{W}
	\ \MIFF \
	\MMM \cdot \VEC{Y}:\VEC{W} \EXTSTAR \MMM \cdot \VEC{X}:\VEC{V} \cdot \VEC{Y}:\VEC{W}
	\right)
	\end{array} 
	\]
	\begin{proof}
		\begin{NDERIVATION}{1}
			\NLINE{\text{By symmetry this only needs to be proven in one direction, which is as follows: } }{}
			\NLINE{\text{Assume: } 
				\MMM \EXTSTAR \MMM \cdot \MMM_1
				\ \MAND \
				\MMM \EXTSTAR \MMM \cdot \MMM_2
				\ \MAND \ 
				\MMM \cdot \MMM_1 \EXTSTAR \MMM \cdot \MMM_1 \cdot \MMM_2 
			}{}
			\NLINE{\text{i.e. assume: } 
				\MMM \EXTSTAR \MMM \cdot \VEC{X}: \VEC{V}^{\VEC{\alpha}}
				\ \MAND \
				\MMM \EXTSTAR \MMM \cdot \MMM_2
				\ \MAND \ 
				\MMM \cdot \VEC{X}: \VEC{V}^{\VEC{\alpha}} \EXTSTAR \MMM \cdot \VEC{X}: \VEC{V}^{\VEC{\alpha}} \cdot \MMM_2
			}{}
			\NLINE{\parbox[t]{10cm}{From Def $\EXTSTAR$, $\EXTSINGLE$, and Lemma \ref{lem:model_and_model_plus_TCV_models_equivalently_TCV-free_formula}, the TCV can be discarded in the reasoning, hence only standard variables are considered}}{}
			\NLINE{
				\text{Line 3}
				\ \MIFF \
				\begin{array}[t]{l}
					\Mexists \VEC{M}. \begin{array}[t]{l}
						\LTCDERIVEDVALUE{\VEC{M}_1}{\GAMMA}{\MMM}{\VEC{V}_1}
						\
						\MAND \ 
						\LTCDERIVEDVALUE{\VEC{M}_2}{\GAMMA \PLUSV \VEC{X}_1}{\MMM \cdot \VEC{X}_1 : \VEC{V}_1}{\VEC{V}_2}
						\
						\MAND \ ...
						\\
						\MAND \ 
						\LTCDERIVEDVALUE{\VEC{M}_{i+1}}{\GAMMA \PLUSV \VEC{X}_1 \PLUSV ... \PLUSV \VEC{X}_i}{\MMM \cdot \VEC{X}_1 : \VEC{V}_1 \cdot ... \cdot \VEC{X}_i : \VEC{V}_i}{\VEC{V}_{i+1}}
						\
						\MAND \ ...
					\end{array} 
					\\ \MAND \
					\MMM \EXTSTAR \MMM \cdot \MMM_2
					\\ \MAND \ 
					\MMM \cdot \VEC{X}: \VEC{V}^{\VEC{\alpha}} \EXTSTAR \MMM \cdot \VEC{X}: \VEC{V}^{\VEC{\alpha}} \cdot \MMM_2
				\end{array}
			}{}
			\NPLINE{
				\MIMPLIES \
				\begin{array}[t]{l}
					\Mexists \VEC{M}. \begin{array}[t]{l}
						\LTCDERIVEDVALUE{\VEC{M}_1}{\GAMMA}{\MMM \cdot \MMM_2}{\VEC{V}_1}
						\
						\MAND \ 
						\LTCDERIVEDVALUE{\VEC{M}_2}{\GAMMA \PLUSV \VEC{X}_1}{\MMM \cdot \MMM_2 \cdot \VEC{X}_1 : \VEC{V}_1}{\VEC{V}_2}
						\
						\MAND \ ...
						\\
						\MAND \ 
						\LTCDERIVEDVALUE{\VEC{M}_{i+1}}{\GAMMA \PLUSV \VEC{X}_1 \PLUSV ... \PLUSV \VEC{X}_i}{\MMM \cdot \MMM_2 \cdot \VEC{X}_1 : \VEC{V}_1 \cdot ... \cdot \VEC{X}_i : \VEC{V}_i}{\VEC{V}_{i+1}}
						\
						\MAND \ ...
					\end{array} 
				\end{array}
			}{4.5cm}{\raggedleft
					Lemma \ref{lem:eval_under_extensions_are_equivalent}
					\
					$\MMM \EXTSTAR \MMM \cdot \MMM_2$
					\\
					($\AN{\VEC{V}_2} \cap \AN{\MMM_2} \subseteq \AN{\MMM \cdot \VEC{X}_1:\VEC{V}_1}$)
					\\
					$\MMM \cdot \VEC{X}_1:\VEC{V}_1 \EXTSTAR \MMM \cdot \VEC{X}_1:\VEC{V}_1 \cdot \MMM_2$
					\\
					...
			}
			\NLINE{
				\MIMPLIES \
				\begin{array}[t]{l}
					\Mexists \VEC{M}. \begin{array}[t]{l}
						\LTCDERIVEDVALUE{\VEC{M}_1}{\GAMMA \PLUSG \GAMMA_2}{\MMM \cdot \MMM_2}{\VEC{V}_1}
						\\
						\MAND \ 
						\LTCDERIVEDVALUE{\VEC{M}_2}{\GAMMA \PLUSG \GAMMA_2 \PLUSV \VEC{X}_1}{\MMM \cdot \MMM_2 \cdot \VEC{X}_1 : \VEC{V}_1}{\VEC{V}_2}
						\
						\MAND \ ...
						\\
						\MAND \ 
						\LTCDERIVEDVALUE{\VEC{M}_{i+1}}{\GAMMA \PLUSG \GAMMA_2 \PLUSV \VEC{X}_1 \PLUSV ... \PLUSV \VEC{X}_i}{\MMM \cdot \MMM_2 \cdot \VEC{X}_1 : \VEC{V}_1 \cdot ... \cdot \VEC{X}_i : \VEC{V}_i}{\VEC{V}_{i+1}}
						\
						\MAND \ ...
					\end{array} 
				\end{array}\hspace{-1cm}
			}{$\begin{array}[t]{r}
				\SEM{\GAMMA}{\MMM \cdot \MMM_2} \subset \SEM{\GAMMA \PLUSG \GAMMA_2}{\MMM \cdot \MMM_2}
				\\
				\SEM{\GAMMA \PLUSV \VEC{X}_1}{\MMM \cdot \MMM_2 \cdot \VEC{X}_1:\VEC{V}_1} \subset \SEM{\GAMMA \PLUSV \VEC{X}_1 \PLUSG \GAMMA_2}{\MMM \cdot \VEC{X}_1:\VEC{V}_1 \cdot \MMM_2}
				\\
				...
				\end{array}$}
			\NLASTLINE{
				\MIFF \
				\ \MMM \cdot \VEC{Y}: \VEC{W}^{\VEC{\beta}} \EXTSTAR \MMM \cdot \VEC{X}: \VEC{V}^{\VEC{\alpha}} \cdot \VEC{Y}: \VEC{W}^{\VEC{\beta}}
			}{}
		\end{NDERIVATION}
	\end{proof}
\end{lemma}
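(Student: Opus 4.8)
The plan is to exploit the symmetry of the biconditional in $\MMM_1$ and $\MMM_2$. Swapping the two blocks turns the forward implication into the backward one (the two composite models $\MMM \cdot \MMM_1 \cdot \MMM_2$ and $\MMM \cdot \MMM_2 \cdot \MMM_1$ are the same underlying map), so it suffices to prove that $\MMM \cdot \MMM_1 \EXTSTAR \MMM \cdot \MMM_1 \cdot \MMM_2$ implies $\MMM \cdot \MMM_2 \EXTSTAR \MMM \cdot \MMM_1 \cdot \MMM_2$. Writing $\MMM_1 = \VEC{X}{:}\VEC{V}$ and $\MMM_2 = \VEC{Y}{:}\VEC{W}$, I would first use Lemma~\ref{lem:model_and_model_plus_TCV_models_equivalently_TCV-free_formula} together with the definition of $\EXTSINGLE$ to discard the type-context-variable entries, since any $\TCV$-step is reproduced identically on either side; this reduces the problem to blocks consisting purely of program-variable bindings.

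Next I would unfold the hypothesis $\MMM \EXTSTAR \MMM \cdot \VEC{X}{:}\VEC{V}$ into a chain of single steps, obtaining name-free terms $\VEC{M}_{i+1}$ with $\LTCDERIVEDVALUE{\VEC{M}_{i+1}}{\GAMMA \PLUSV \VEC{X}_1 \PLUSV \cdots \PLUSV \VEC{X}_i}{\MMM \cdot \VEC{X}_1{:}\VEC{V}_1 \cdots \VEC{X}_i{:}\VEC{V}_i}{\VEC{V}_{i+1}}$, so each $\VEC{X}$-value is derived from $\MMM$ augmented by the earlier ones. The goal is to produce the same derivations with $\MMM \cdot \VEC{Y}{:}\VEC{W}$ in place of $\MMM$ as base, which by definition of $\EXTSTAR$ is exactly the required $\MMM \cdot \VEC{Y}{:}\VEC{W} \EXTSTAR \MMM \cdot \VEC{X}{:}\VEC{V} \cdot \VEC{Y}{:}\VEC{W}$. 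The core move is to re-base each step: from $\MMM \EXTSTAR \MMM \cdot \MMM_2$ and Lemma~\ref{lem:Gamma_derived_terms_maintain_extension_when_added} I get $\MMM \cdot \VEC{X}_{\leq i}{:}\VEC{V}_{\leq i} \EXTSTAR \MMM \cdot \MMM_2 \cdot \VEC{X}_{\leq i}{:}\VEC{V}_{\leq i}$, and then Lemma~\ref{lem:eval_under_extensions_are_equivalent} transports the derivation of $\VEC{V}_{i+1}$ verbatim to this larger base (the terms are name-free and the two closing models agree on the variables they actually use, by Lemma~\ref{lem:extensions_close_terms_equally}). Finally I would widen each typing context from $\GAMMA$ to $\GAMMA \PLUSG \GAMMA_2$ using Lemma~\ref{lem:LTCDERIVED_subset_implies_LTCDERIVED_supset}, legitimate because $\SEM{\GAMMA}{\MMM \cdot \MMM_2} \subseteq \SEM{\GAMMA \PLUSG \GAMMA_2}{\MMM \cdot \MMM_2}$. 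Reassembling the re-based single steps yields the desired $\EXTSTAR$-chain. The only remaining book-keeping is that the new block is ordered $\VEC{Y}$-then-$\VEC{X}$ whereas the target writes $\VEC{X}$-then-$\VEC{Y}$; these type the same underlying map and the derivability used at each step is insensitive to this reordering, so the conclusion stands.

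The step I expect to be the main obstacle is discharging the freshness side condition of Lemma~\ref{lem:eval_under_extensions_are_equivalent}, namely $\AN{\VEC{V}_{i+1}} \cap \AN{\MMM \cdot \MMM_2} \subseteq \AN{\MMM \cdot \VEC{X}_{\leq i}{:}\VEC{V}_{\leq i}}$. A derived value may carry names freshly generated while evaluating $\VEC{M}_{i+1}$, and such a name could accidentally coincide with a name that lives in $\MMM_2$ but not in $\MMM$; permitting that would wrongly smuggle an $\MMM_2$-name into the $\MMM_1$-block and break the re-basing. The resolution is that the offending names are genuinely fresh and therefore renamable: by Lemma~\ref{lem:adding/remove_unused_names_maintains_evaluation} any such clash is removed by $\alpha$-renaming the generated names, so a representative derivation meeting the side condition always exists. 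Making this renaming precise, and threading it consistently through the whole chain so that the chosen values $\VEC{V}_i$ remain globally coherent across all steps, is the delicate part of the argument.
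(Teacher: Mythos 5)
Your proposal matches the paper's proof essentially step for step: reduce to one direction by symmetry, discard TCV entries via Lemma~\ref{lem:model_and_model_plus_TCV_models_equivalently_TCV-free_formula}, unfold the $\VEC{X}{:}\VEC{V}$ chain into single-step derivations, re-base each over $\MMM\cdot\MMM_2$ using Lemma~\ref{lem:eval_under_extensions_are_equivalent}, and widen the typing contexts to $\GAMMA\PLUSG\GAMMA_2$. Your explicit discussion of the freshness side condition $\AN{\VEC{V}_{i+1}}\cap\AN{\MMM_2}\subseteq\AN{\MMM\cdot\VEC{X}_{\leq i}{:}\VEC{V}_{\leq i}}$ and its resolution by renaming is the same point the paper records only as a marginal annotation, so it is a welcome elaboration rather than a divergence.
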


\newpage

\section{Soundness of Axioms}

We now prove the soundness of the axioms in Sec.~\ref{sec:axioms} split up into sections for each type of axiom.

\label{sec:apndx_soundness_axioms} \label{appendix_soundness_axioms}
\subsection{Soundness of Axioms for $e=e'$}

The following axioms are proven trivially: \\ $(eq2) \equiv x=x$, $(eq3) \equiv x=y \PIFF y=x$, $(eq4) \equiv x=y \PAND y=z \PIMPLIES x=z$
\subsubsection{\DONE Soundness proof of Axiom $(eq1)$:} 
\[
	A(x) \PAND x=e \PIMPLIES A(x) \LSUBST{e}{x}
\]

\begin{proof}
	Syntactically: if $x$ occurs in an LTC and $e$ contains a destructor as in Sec. \ref{def:sec:logical_substitution}, then $\GAMMA_0$ in $A$ then $\EXPRESSIONTYPES{\GAMMA_0}{e}{\alpha}$ holds, otherwise we just add all variables in $e$ to $\GAMMA_0$ s.t. the order is maintained by the global LTC.
\begin{NDERIVATION}{1}
	\NLINE{\text{Assume $\GAMMA$, $\MMM^{\GAMMA}$ s.t. $\FORMULATYPES{\GAMMA}{A(x) \PAND x=e \PIMPLIES A(x)\LSUBST{e}{x}}$ then:}}{}
	\NLINE{\MMM^{\GAMMA} \models A(x) \PAND x=e}{Assume}
	\NLINE{\text{$e$ free for $x$ in $A(x)$}}{Assume}
	\NLINE{\MMM \models A(x) \ \MAND \ \MMM \models x=e}{Sem. $\MAND$}
	\NLINE{\MMM \models A(x)
			\ \MAND \
			\SEM{x}{\MMM} \CONGCONTEXT{\alpha}{\AN{\MMM}} \SEM{e}{\MMM}
	}{Sem. $=$}
	\NLINE{\MMM \models A(x) 
		\ \MAND \
		\SEM{x}{\MMM} \CONGCONTEXT{\alpha}{\AN{\MMM}} \SEM{e}{\MMM}
		\ \MAND \
		(\AN{\MMM}, \SEM{e}{\MMM}) \CONV (G', V_e) 
	}{\ASSUMETERMINATION}
	\NLINE{\MMM \models A(x)
		\ \MAND \
		\SEM{x}{\MMM} \CONGCONTEXT{\alpha}{\AN{\MMM}} \SEM{e}{\MMM}
		\ \MAND \
		\SEM{e}{\MMM} \CONGCONTEXT{\alpha}{\AN{\MMM}} V_e
		\ \MAND \
		(\AN{\MMM}, \SEM{e}{\MMM}) \CONV (G', V_e) 
	}{Lemma \ref{lem:expression_Cong_evaluation}}
	\NLINE{\MMM \models A(x)
		\ \MAND \
		\SEM{x}{\MMM} \CONGCONTEXT{\alpha}{\AN{\MMM}} V_e
		\ \MAND \
		(\AN{\MMM}, \SEM{e}{\MMM}) \CONV (G', V_e) 
	}{$e \CONG e' \PAND e' \CONG e'' \MIMPLIES e \CONG e''$}
	\NPLINE{ \MIMPLIES 
		\begin{array}[t]{l}
			\MMM \models A(x)
			\ \MAND \
			(\AN{\MMM}, \SEM{e}{\MMM}) \CONV (G', V_e) 
			\\ \MAND \
			\Mforall x', A_0(x'). 
			\begin{array}[t]{l}
				x' \notin \DOM{\GAMMA} \MAND \FORMULATYPES{\GAMMA \cdot x':\GAMMA(x)}{A_0(x')} 
				\\ \MIMPLIES \
				(\MMM \cdot x':\MMM(x) \models A_0(x)\LSUBST{x'}{x}
				\MIFF
				\MMM \cdot x':V_e \models A_0(x)\LSUBST{x'}{x})
			\end{array}
		\end{array}
	\hspace{-1cm}
	}{4cm}{
		Lemma \ref{lem:congruent_base_model_extensions_model_equivelently}
		\\
		$B^{-x,x'}(x') \equiv B(x) \LSUBST{x'}{x}$
		\\
		($\SEM{e}{\MMM} \equiv e\MMM$)
	}
	\NPLINE{ \MIFF \
		\begin{array}[t]{l}
			\MMM \models A(x) 
			\ \MAND \
			(\AN{\MMM}, \SEM{e}{\MMM}) \CONV (G', V_e) 
			\\ \MAND \
			\Mforall x'. x' \notin \DOM{\GAMMA} \MIMPLIES 
			(\MMM \models A(x)
			\MIFF
			\MMM \cdot x':V_e \models A(x'))
		\end{array}
	}{4.5cm}{
		Let $A_0 \equiv A$ 
		\\
		$\MMM \cdot x':\MMM(x) \models A(x') \MIFF \MMM \models A(x)$
	}
	\NLINE{\MIMPLIES \
		(\AN{\MMM}, \SEM{e}{\MMM}) \CONV (G', V_e) 
		\ \MAND \
		\Mforall x'. x' \notin \DOM{\GAMMA} \MIMPLIES 
		\MMM \cdot x':V_e \models A(x')
	}{$B \MAND (B \MIFF C) \ \MIMPLIES \ C$ }
	\NPLINE{\MIMPLIES \
		\Mforall x'. x' \notin \DOM{\GAMMA} \ \MIMPLIES \
		((\AN{\MMM}, \SEM{e}{\MMM}) \CONV (G', V_e) 
		\ \MAND \
		\MMM \cdot x':V_e \models A(x'))
	}{3cm}{$A \MAND \Mforall x. (B \MIMPLIES C)$ \\ $\MIMPLIES \ \Mforall x. (B \MIMPLIES A \MAND C)$}
	\NLINE{\MIFF \
		\Mforall x'. x' \notin \DOM{\GAMMA} \ \MIMPLIES \
		\MMM \models A(x')\LSUBST{e}{x'}
	}{Sem. $A\LSUBST{e}{x'}$ ($x' \notin \FV{\MMM}$)}
	\NLINE{\MIFF \
		\MMM \models A(x)\LSUBST{e}{x}
	}{Sem. $A\LSUBST{e}{x}$ ($x \in \FV{\MMM}$)}
	\NLASTLINE{\text{Hence: } \Mforall \MMM^{\GAMMA}. \FORMULATYPES{\GAMMA}{A(x) \PAND x=e \PIMPLIES A(x)\LSUBST{e}{x}} \MIMPLIES \MMM \models A(x) \PAND x=e \PIMPLIES A(x)\LSUBST{e}{x}}{lines 1-14}
\end{NDERIVATION}
\end{proof}

\subsection{Soundness of Axioms for $\FORALL{x}{\Gamma}A$}

\newpage
\subsubsection{\DONE Soundness proof of Axiom $(u1)$:}
\[
\FORALL{x^{\alpha}}{\GAMMA_0} A  \ \PAND \EXPRESSIONTYPES{\GAMMA_0}{e}{\alpha} \quad \PIMPLIES \quad A \LSUBST{e}{x}
\]
\begin{proof}
\begin{NDERIVATION}{1}
	\NLINE{\text{Assume: $\GAMMA$, $\MMM^{\GAMMA_d}$ s.t. $\CONSTRUCT{\GAMMA}{\MMM} \ \MAND \ \FORMULATYPES{\GAMMA}{\FORALL{x^{\alpha}}{\GAMMA_0} A  \ \PIMPLIES  \ A \LSUBST{e}{x}}$ and $\EXPRESSIONTYPES{\GAMMA_0}{e}{\alpha}$}}{}
	\NLINE{\text{Assume: } \MMM \models \FORALL{x^{\alpha}}{\GAMMA_0}A \text{ and } \TYPES{\GAMMA_0}{e}{\alpha}}{Note $e\in  \{x,c,Op(\VEC{e}), \PAIR{e_1}{e_2} \pi_i(e_1) \}$}
	\NLINE{
		\Mforall M_x.
			\LTCDERIVEDVALUE{M_x}{\GAMMA_0}{\MMM}{V_x}
			\ \MIMPLIES \  
			\MMM \cdot x:V_x \models A
	}{Sem.  $\FORALL{}{}$, 2}
	\NLINE{e-\text{expression} \ \MIMPLIES \ e-\text{term} \ \MIMPLIES \ \AN{e}=\emptyset}{Lemma \ref{lem:expressions_are_name_free}}
	\NLINE{
		\TCTYPES{\GAMMA}{\GAMMA_0} \ \MIMPLIES \ (\EXPRESSIONTYPES{\GAMMA_0}{e}{\alpha} \ \MIMPLIES \ \EXPRESSIONTYPES{\GAMMA}{e}{\alpha})
	}{Typing rules}
	\NLINE{
		\EXPRESSIONTYPES{\GAMMA_0}{e}{\alpha} \ \MIMPLIES \ \Mexists V_e. (\AN{\MMM}, \SEM{e}{\MMM}) \CONV (\AN{\MMM}, V_e)
	}{Lemma \ref{lem:expressions_cannot_create_new_names}, 6}
	\NLINE{
		\LTCDERIVEDVALUE{e}{\GAMMA_0}{\MMM}{V_e}
	}{Sem. $\LTCDERIVEDVALUE{}{}{}{}$, 4, 5, 6, $\SEM{e}{\MMM} \equiv e\MMM$}
	\NLINE{
		\LTCDERIVEDVALUE{e}{\GAMMA_0}{\MMM}{V_e} \ \MIMPLIES \ \MMM \cdot x:V_e \models A
	}{Instantiate $\Mforall M_x$ to $e$, 3}
	\NLINE{
		\MIMPLIES \ 
		\LTCDERIVEDVALUE{e}{\GAMMA}{\MMM}{V_e} \
			\MAND \ \MMM \cdot x:V_e \models A
	}{MP, 7, 8}
	\NLINE{
		\MMM \models A\LSUBST{e}{x}
	}{Sem. $\LSUBST{e}{x}$, $x \notin \DOM{\MMM}$}
	\NLINE{
		\text{Hence } \MMM \models \FORALL{x^{\alpha}}{\GAMMA_0}A \text{ and } \TYPES{\GAMMA_0}{e}{\alpha} \text{ implies }  \MMM \models A\LSUBST{e}{x}
	}{lines 2-10}
	\NLASTLINE{
		\text{Hence } 
		\begin{array}[t]{l} 
		\Mforall \GAMMA, \GAMMA_0, A, e. \
		\FORMULATYPES{\GAMMA}{\FORALL{x^{\alpha}}{\GAMMA_0} A  \ \PIMPLIES  \ A \LSUBST{e}{x}} 
		\ \MAND \ \EXPRESSIONTYPES{\GAMMA_0}{e}{\alpha} 
		\\ \MIMPLIES \
		\Mforall \MMM. \CONSTRUCT{\GAMMA}{\MMM} \ \MIMPLIES \
		\MMM \models ((\FORALL{x^{\alpha}}{\GAMMA_0}A)  \PIMPLIES (A\LSUBST{e}{x}))
		\end{array}
	}{lines 1-11}
\end{NDERIVATION}
\end{proof}
\subsubsection{\DONE Soundness proof of Axiom $(u2)$:}
\[
\FORALL{x^{\alpha}}{\GAMMA_0 \PLUSG \GAMMA_1} A \quad \PIMPLIES \quad (\FORALL{x^{\alpha}}{\GAMMA_0} A ) \PAND (\FORALL{x^{\alpha}}{\GAMMA_1} A )
\]
\begin{proof}
\begin{NDERIVATION}{1}
	\NLINE{\text{Assume: $\GAMMA$, $\MMM^{\GAMMA_d}$ s.t. $\CONSTRUCT{\GAMMA}{\MMM} \ \MAND \ \FORMULATYPES{\GAMMA}{\FORALL{x^{\alpha}}{\GAMMA \PLUSG \GAMMA'} A \quad \PIMPLIES \quad (\FORALL{x^{\alpha}}{\GAMMA}A ) \PAND (\FORALL{x^{\alpha}}{\GAMMA'} A )}$}}{}
	\NLINE{\text{Assume: } \MMM \models \FORALL{x^{\alpha}}{\GAMMA_0,\GAMMA_1}A }{}
	\NLINE{\text{Prove: } \MMM \models (\FORALL{x^{\alpha}}{\GAMMA_0}A ) \PAND (\FORALL{x^{\alpha}}{\GAMMA_1} A ) }{}
	\NLINE{
		\Mforall M.
		\LTCDERIVEDVALUE{M}{\GAMMA_0 \PLUSG \GAMMA_1}{\MMM}{V} \
		\MIMPLIES \ \MMM \cdot x:V \models A
	}{Sem. $\FORALL{x}{\GAMMA_0,\GAMMA_1}$,1}
	\NLINE{
		\MIMPLIES
		\Mforall M.
		\LTCDERIVEDVALUE{M}{\GAMMA_0}{\MMM}{V} \
		\MIMPLIES \ \MMM \cdot x:V \models A
	}{$\begin{array}[t]{r}
	\Mforall V. \TYPES{\GAMMA_i}{V}{\alpha}  \ ... \ \subseteq \ \Mforall V. \TYPES{\GAMMA_0 \PLUSG \GAMMA_1}{V}{\alpha} \ ... \end{array}$}
	\NLINE{\MMM \models \FORALL{x^{\alpha}}{\GAMMA_0}A }{Sem.  $\FORALL{}{}$}
	\NLINE{\MMM \models \FORALL{x^{\alpha}}{\GAMMA_1}A }{Same as 3-5 but with $\GAMMA_1$}
	\NLINE{\MIMPLIES \MMM \models (\FORALL{x^{\alpha}}{\GAMMA_0}A) \PAND (\FORALL{x^{\alpha}}{\GAMMA_1}A )}{5,6}
	\NLASTLINE{\text{Hence: } 
		\begin{array}[t]{l}
		\Mforall \GAMMA. \FORMULATYPES{\GAMMA}{\FORALL{x^{\alpha}}{\GAMMA_0 \PLUSG \GAMMA_1} A \ \PIMPLIES \ (\FORALL{x^{\alpha}}{\GAMMA_0} A ) \PAND (\FORALL{x^{\alpha}}{\GAMMA_1} A )}
		\\
		\MIMPLIES \ \Mforall \MMM. \CONSTRUCT{\GAMMA}{\MMM} 
		\ \MIMPLIES \
		\MMM \models (\FORALL{x^{\alpha}}{\GAMMA_0}A) \PAND (\FORALL{x^{\alpha}}{\GAMMA_1}A )
		\end{array}
	}{lines 1-8}
\end{NDERIVATION}
\end{proof}

\subsubsection{\DONE Soundness proof of Axiom $(u3)$:}

\[
{A^{-x}  \PIFF  \FORALL{x^{\alpha}}{\GAMMA_0}A}
\]
\begin{proof}
	Clearly $\FORALL{x}{\GAMMA_0} A \ \PIMPLIES \ A$ applying $(u1)$ given $A^{-x}\LSUBST{e}{x} \equiv A$ and the $x$ derived from $\GAMMA_0$ (say $V$) must imply an extension $\MMM \EXTSTAR \MMM \cdot x:V$ which given  $A$-\EXTINDEP implies this direction.
 	\\
 	Then to prove $A^{-x} \PIMPLIES \FORALL{x}{\GAMMA_0} A$:
 		\begin{NDERIVATION}{1}
 			\NLINE{\text{Assume: $\GAMMA, \GAMMA_0, A$ s.t. 
 					$\FORMULATYPES{\GAMMA}{A^{-x} \PIMPLIES \FORALL{x^{\alpha}}{\GAMMA_0}A}   \ \MAND \ A-\EXTINDEP$ }}{}
 			\NLINE{\text{Assume: $\MMM^{\GAMMA_d}$ s.t. $\CONSTRUCT{\GAMMA}{\MMM}$ and } \MMM \models A^{-x} }{}
 			\NLINE{
 				\MIMPLIES \
 				\Mforall M.\
 				\LTCDERIVEDVALUE{M}{\GAMMA_0}{\MMM}{V}
 				\ \MIMPLIES \ 
 				\MMM \models A 
 			}{Tautology}
 			\NLINE{
 				\MIMPLIES \
 				\Mforall M.\
 				\LTCDERIVEDVALUE{M}{\GAMMA_0}{\MMM}{V}
 				\ \MIMPLIES \ 
 				\MMM \cdot x:V \models A 
 			}{Sem. $\EXTSTAR$, Lemmas \ref{lem:LTCDERIVED_subset_implies_LTCDERIVED_supset}  and $A$-\EXTINDEP}
 			\NLINE{
 				\MMM \models \FORALL{x^{\alpha}}{\GAMMA_0} A
 			}{Sem.  $\FORALL{}{}$}
 			\NLINE{
 				\MMM \models A \ \MIMPLIES \ \MMM \models \FORALL{x^{\alpha}}{\GAMMA_0} A
 			}{2-9}
 			\NLINE{
 				\MMM \models A \ \PIMPLIES \ \FORALL{x^{\alpha}}{\GAMMA_0} A
 			}{Sem. $\PIMPLIES$}
 			\NLASTLINE{\text{Hence: }
 				\begin{array}[t]{l}
 					\Mforall \GAMMA. 
 					\FORMULATYPES{\GAMMA}{ A \PIMPLIES \FORALL{x^{\alpha}}{\GAMMA_0} A}
 					\\
 					\PAND \Mforall \MMM^{\GAMMA_d}. \CONSTRUCT{\GAMMA}{\MMM} \MIMPLIES \MMM \models  A \PIMPLIES \FORALL{x^{\alpha}}{\GAMMA_0} A
 				\end{array} 					
 			}{lines 1-7}
 		\end{NDERIVATION}
 \end{proof}

\newpage
\subsubsection{\DONE Soundness proof of Axiom $(u4)$:}
\[
(\FORALL{x^{\alpha}}{\GAMMA_0}(A \PAND B))  \PIFF   (\FORALL{x^{\alpha}}{\GAMMA_0}A) \PAND  (\FORALL{x^{\alpha}}{\GAMMA_0}B)
\]
\begin{proof}
	Assume $\GAMMA, \MMM^{\GAMMA_d}, A$
	\\ 
	such that 
	$\CONSTRUCT{\GAMMA}{\MMM} \ \MAND \ \FORMULATYPES{\GAMMA}{(\FORALL{x^{\alpha}}{\GAMMA_0}(A \PAND B))  \PIFF   (\FORALL{x^{\alpha}}{\GAMMA_0}A) \PAND  (\FORALL{x^{\alpha}}{\GAMMA_0}B)}  $
	\\
$\PIMPLIES$:
\begin{NDERIVATION}{1}
	\NLINE{\text{Assume: } \MMM \models (\FORALL{x^{\alpha}}{\GAMMA_0}(A \PAND B)) }{ }
	\NLINE{
		\Mforall M.
		\LTCDERIVEDVALUE{M}{\GAMMA_0}{\MMM}{V}
		\ \MIMPLIES \
		\left(
		\begin{array}{l}
			\MMM \cdot x:V\models A
			\\
			\MAND \ \MMM \cdot x:V\models B
		\end{array}
		\right)
	}{Sem.  $\FORALL{}{}$, $\MAND$}
	\NLINE{
		\begin{array}[t]{l}
			\Mforall M.
			\LTCDERIVEDVALUE{M}{\GAMMA_0}{\MMM}{V}
			\ \MIMPLIES \
			\left(
			\begin{array}{l}
				\MMM \cdot x:V\models A
				\\
				\MAND \ \MMM \cdot x:V\models B
			\end{array}
			\right)
			\\
			\MAND 
			\\
			\Mforall M.
			\LTCDERIVEDVALUE{M}{\GAMMA_0}{\MMM}{V}
			\ \MIMPLIES \
			\left(
			\begin{array}{l}
				\MMM \cdot x:V\models A
				\\
				\MAND \ \MMM \cdot x:V\models B
			\end{array}
			\right)
		\end{array}
	}{FOL $A \MIMPLIES A \MAND A$}
	\NLINE{ \MIMPLIES
	\begin{array}[t]{ll}
		\Mforall M.
		\LTCDERIVEDVALUE{M}{\GAMMA_0}{\MMM}{V}
		\ \MIMPLIES \
		\MMM \cdot x:V\models A
		\\
		\MAND 
		\\
		\Mforall M.
		\LTCDERIVEDVALUE{M}{\GAMMA_0}{\MMM}{V}
		\ \MIMPLIES \
		\MMM \cdot x:V\models B
	\end{array}
	}{FOL $\MAND$-elim}
	\NLINE{
		\MMM \models \FORALL{x^{\alpha}}{\GAMMA_0} A  \ \MAND \ \MMM \models \FORALL{x^{\alpha}}{\GAMMA_0} B
	}{Sem.  $\FORALL{}{}$}
	\NLASTLINE{
		\MMM \models (\FORALL{x^{\alpha}}{\GAMMA_0} A) \ \PAND \ (\FORALL{x^{\alpha}}{\GAMMA_0} B)
	}{Sem.  $\MAND$}
\end{NDERIVATION}

$\leftarrow$:
\begin{NDERIVATION}{1}
	\NLINE{\text{Assume: } \MMM \models (\FORALL{x^{\alpha}}{\GAMMA_0}A) \MAND  (\FORALL{x^{\alpha}}{\GAMMA_0}B)}{}
	\NLINE{ 
		\begin{array}[t]{ll}
			\Mforall M.
			\LTCDERIVEDVALUE{M}{\GAMMA_0}{\MMM}{V}
			\ \MIMPLIES \ \MMM \cdot x:V \models A
			\\
			\MAND 
			\\
			\Mforall M.
			\LTCDERIVEDVALUE{M}{\GAMMA_0}{\MMM}{V}
			\ \MIMPLIES \ \MMM \cdot x:V \models B
		\end{array}
	}{Sem.  $\FORALL{}{}$, $\MAND$, 1}
	\NLINE{ 
		\begin{array}[t]{ll}
			\Mforall M.
			\begin{array}[t]{ll}
				\LTCDERIVEDVALUE{M}{\GAMMA_0}{\MMM}{V}
				\ \MIMPLIES \ \MMM \cdot x:V \models A
				\\
				\MAND 
				\\
				\LTCDERIVEDVALUE{M}{\GAMMA_0}{\MMM}{V}
				\ \MIMPLIES \ \MMM \cdot x:V \models B
			\end{array}
		\end{array}
	}{extract the common $\Mforall M$}
	\NLINE{ 
	\begin{array}[t]{ll}
		\Mforall M.
		\LTCDERIVEDVALUE{M}{\GAMMA_0}{\MMM}{V}
		\ \MIMPLIES 
		\left(
		\begin{array}{l}
			\MMM \cdot x:V \models A
			\\
			\MAND \ \MMM \cdot x:V \models B
		\end{array}
		\right)
	\end{array}
	}{\parbox[t]{3.7cm}{\raggedleft FOL \\ $((A\MIMPLIES B) \MAND (A \MIMPLIES C))$ \\ $\MIMPLIES \ (A \MIMPLIES(B \MAND C))$}}
	\NLASTLINE{
		\MMM \models \FORALL{x^{\alpha}}{\GAMMA_0} (A\PAND B)
	}{Sem. $\MAND$, $\FORALL{x}{\GAMMA}$}
\end{NDERIVATION}
\end{proof}

\newpage
\subsubsection{\DONE Soundness proof of Axiom $(u5)$:}

\[
\FORALL{x^{\alpha}}{\GAMMA_0}  A \ \PIFF \ \FORALL{x^{\alpha}}{\emptyset}  A \qquad \text{ iff  $\alpha$ is $\NAME$-free}
\]
\begin{proof}
Trivial using Lemma \ref{lem:Nm-free_terms_have_equivalent_name_free_STLC-term} as they both quantify over the same set of values.
\end{proof}

\subsubsection{\DONE Soundness proof of Axiom $(ex1)$:}
\[
(\TCTYPES{\GAMMA}{\GAMMA_0} \MAND
\EXPRESSIONTYPES{\GAMMA_0}{e}{\alpha} )
\quad \MIMPLIES \quad
A\LSUBST{e}{x} \PIMPLIES \EXISTS{x}{\GAMMA_0} x = e \PAND  A
\]

\begin{proof}
	Assume $\GAMMA^{-x}, \MMM^{\GAMMA_d}, \GAMMA_0^{-x}, e, A$
	\\ 
	such that 
	$\FORMULATYPES{\GAMMA}{A\LSUBST{e}{x} \PIMPLIES \EXISTS{x}{\GAMMA} A}  $, 
	$\TCTYPES{\GAMMA}{\GAMMA_0}$ and
	$\EXPRESSIONTYPES{\GAMMA_0}{e}{\alpha}$
	\\
	\begin{NDERIVATION}{1}
		\NLINE{\text{Assume: $\GAMMA$ s.t. $\FORMULATYPES{\GAMMA}{A\LSUBST{e}{x} \PIMPLIES \EXISTS{x}{\GAMMA_0} x = e \PAND  A}$} }{}
		\NLINE{\text{Assume: $\GAMMA_0$, $e$ s.t. $\TCTYPES{\GAMMA}{\GAMMA_0}$ and $\EXPRESSIONTYPES{\GAMMA_0}{e}{\alpha}$ } }{}
		\NLINE{\text{Assume: $\MMM^{\GAMMA_d}$ s.t. $\CONSTRUCT{\GAMMA}{\MMM}$ and $\MMM\models A\LSUBST{e}{x}$}}{}
		\NLINE{\MIFF \
			\LTCDERIVEDVALUE{e}{\GAMMA}{\MMM}{V} \ \MAND \ \MMM \cdot x:V \models A
		}{Sem. $A\LSUBST{e}{x}$, $x \notin \DOM{\MMM}$}
		\NLINE{\MIFF \
			\LTCDERIVEDVALUE{e}{\GAMMA_0}{\MMM}{V} \ \MAND \ \MMM \cdot x:V \models A \ \MAND \ e(\MMM \cdot x:V) \CONGCONTEXT{\alpha}{\AN{\MMM \cdot x:V}} x(\MMM \cdot x:V)
		}{Lemma \ref{lem:expression_Cong_evaluation}}
		\NLINE{\MIFF \
			\LTCDERIVEDVALUE{e}{\GAMMA_0}{\MMM}{V} \ \MAND \ \MMM \cdot x:V \models x=e \PAND A
		}{Sem. $=$, $\PAND$}
		\NLINE{\MIMPLIES \
			\Mexists M_e. \LTCDERIVEDVALUE{M_e}{\GAMMA_0}{\MMM}{V} \ \MAND \ \MMM \cdot x:V \models  x=e \PAND A
		}{Clearly holds for $M_e = e$}
		\NLASTLINE{\MIFF \ \MMM \models \EXISTS{x}{\GAMMA_0}  x=e \PAND A}{Sem. $\EXISTS{x}{\GAMMA_0}$}
	\end{NDERIVATION}
\end{proof}

\subsubsection{\DONE Soundness proof of Axiom $(ex2)$:}
Assuming $\{ a,b \} \subseteq \DOM{\GAMMA_0}$
\[
\GAMMA \PLUSV x \PLUSTC \GAMMA_0 \Vdash 
\ONEEVAL{a}{b}{c}{c=x}
\
\PIMPLIES \ \EXISTS{x'}{\GAMMA_0} x=x'
\]

\begin{proof}
	\begin{NDERIVATION}{1}
		\NLINE{\text{Assume: $\MMM^{\GAMMA_d}$ s.t. $\CONSTRUCT{\GAMMA \PLUSV x \PLUSTC \GAMMA_0}{\MMM}$} }{}
		\NLINE{\text{Assume: } \MMM \models \ONEEVAL{a}{b}{c}{c=x}}{}
		\NLINE{\MIMPLIES \LTCDERIVEDVALUE{a b}{\GAMMA_0}{\MMM}{V_c} \ \MAND \ \MMM \cdot c:V_c \models x=c
		}{Sem. $\ONEEVAL{}{}{}{}$}
		\NLINE{\MIMPLIES \Mexists M_x'.  \LTCDERIVEDVALUE{M_x'}{\GAMMA_0}{\MMM}{V_c} \MAND \MMM \cdot c:V_c \models x=c
		}{$M_x' \equiv a b$ ($\{ a,b \} \subseteq \DOM{\GAMMA_0}$)} 
		\NLINE{\MIMPLIES 
			\MMM \models \EXISTS{x'}{\GAMMA_0} x=x'
		}{Sem. $\EXISTS{}{}$}
		\NLASTLINE{\text{Hence: }  
			\begin{array}[t]{l}
				\Mforall \GAMMA \PLUSV x \PLUSTC \GAMMA_0. 
				\GAMMA \PLUSV x \PLUSTC \GAMMA_0 \Vdash 
				\ONEEVAL{a}{b}{c}{c=x} \ \PIMPLIES \ \EXISTS{x'}{\GAMMA_0} x=x
				\\
				\MIMPLIES \
				\Mforall \MMM^{\GAMMA_d}. 
				\CONSTRUCT{\GAMMA \PLUSV x \PLUSTC \GAMMA_0}{\MMM} 
				\ \MIMPLIES \
				\MMM \models \ONEEVAL{a}{b}{c}{c=x} \ \PIMPLIES \ \EXISTS{x'}{\GAMMA_0} x=x'
			\end{array}
		}{}
	\end{NDERIVATION}
\end{proof}

\subsubsection{\DONE Soundness proof of Axiom $(ex3)$:}
\[
\GAMMA \PLUSV x \Vdash 
\FORALL{y^{\NAME}}{\emptyset}\EXISTS{z^{\NAME}}{\GAMMA_0 \PLUSV y} x=z
\
\PIMPLIES
\
\EXISTS{z}{\GAMMA_0} x=z
\]

\begin{proof}
	\begin{NDERIVATION}{1}
		\NLINE{\text{Assume some model $\MMM^{\GAMMA_d}$ s.t. $\CONSTRUCT{\GAMMA \PLUSV x}{\MMM}$ and $\MMM\models \FORALL{y^{\NAME}}{\emptyset}\EXISTS{z^{\NAME}}{\GAMMA_0 \PLUSV y} x=z$} }{}
		\NLINE{\MIMPLIES 
			\Mforall M_y^{\NAME}. \LTCDERIVEDVALUE{M_y}{\emptyset}{\MMM}{V_y} 
			\MIMPLIES 
			\Mexists M_z^{\NAME}. \LTCDERIVEDVALUE{M_z}{\GAMMA_0 \PLUSV y}{\MMM \cdot y:V_y}{V_z} 
			\MAND 
			\MMM \cdot y:V_y \cdot z:V_z \models x=z
		}{}
		\NLINE{\MIMPLIES 
			\Mforall M_y^{\NAME}. \LTCDERIVEDVALUE{M_y}{\emptyset}{\MMM}{V_y} 
			\MIMPLIES 
			\Mexists M_z^{\NAME}. \LTCDERIVEDVALUE{M_z}{\GAMMA_0}{\MMM \cdot y:V_y}{V_z} 
			\MAND 
			\MMM \cdot y:V_y \cdot z:V_z \models x=z
		}{replace $y$ with $M_y$}
		\NLINE{\MIMPLIES 
			\Mforall M_y^{\NAME}. \LTCDERIVEDVALUE{M_y}{\emptyset}{\MMM}{V_y} 
			\MIMPLIES 
			\Mexists M_z^{\NAME}. \LTCDERIVEDVALUE{M_z}{\GAMMA_0}{\MMM}{V_z} 
			\MAND 
			\SEM{x}{\MMM \cdot y:V_y \cdot z:V_z} \CONGCONTEXT{\NAME}{\AN{\MMM \cdot y:V_y \cdot z:V_z}} \SEM{z}{\MMM \cdot y:V_y \cdot z:V_z}
		}{Lemma \ref{lem:adding/remove_unused_names_maintains_evaluation}}
		\NLINE{\MIMPLIES 
			\Mforall M_y^{\NAME}. \LTCDERIVEDVALUE{M_y}{\emptyset}{\MMM}{V_y} 
			\MIMPLIES 
			\Mexists M_z^{\NAME}. \LTCDERIVEDVALUE{M_z}{\GAMMA_0}{\MMM}{V_z} 
			\MAND 
			\SEM{x}{\MMM \cdot z:V_z} \CONGCONTEXT{\NAME}{\AN{\MMM \cdot y:V_y \cdot z:V_z}} \SEM{z}{\MMM \cdot z:V_z}
		}{$\SEM{x}{\MMM \cdot y:V_y} = \SEM{x}{\MMM}$}
		\NLINE{\MIMPLIES 
			\Mforall M_y^{\NAME}. \LTCDERIVEDVALUE{M_y}{\emptyset}{\MMM}{V_y} 
			\MIMPLIES 
			\Mexists M_z^{\NAME}. \LTCDERIVEDVALUE{M_z}{\GAMMA_0}{\MMM}{V_z} 
			\MAND 
			\SEM{x}{\MMM \cdot z:V_z} \CONGCONTEXT{\NAME}{\AN{\MMM \cdot z:V_z}} \SEM{z}{\MMM \cdot z:V_z}
		}{Lemma \ref{lem:adding/removing_names_to_congruence_makes_no_difference}}
		\NLINE{\MIMPLIES
			\Mexists M_z^{\NAME}. \LTCDERIVEDVALUE{M_z}{\GAMMA_0}{\MMM}{V_z} 
			\MAND 
			\SEM{x}{\MMM \cdot z:V_z} \CONGCONTEXT{\NAME}{\AN{\MMM \cdot z:V_z}} \SEM{z}{\MMM \cdot z:V_z}
		}{FOL}
		\NLINE{\MIMPLIES 
			\MMM \models \EXISTS{z}{\GAMMA_0} x=z
		}{Sem. $\EXISTS{}{}$} 
		\NLASTLINE{\text{Hence: }  
			\begin{array}[t]{l}
				\Mforall \GAMMA \PLUSV x. 
				\GAMMA \PLUSV x \Vdash 
				\FORALL{y^{\NAME}}{\emptyset}\EXISTS{z^{\NAME}}{\GAMMA_0 \PLUSV y} x=z
				\ \PIMPLIES \
				\EXISTS{z}{\GAMMA_0} x=z
				\\
				\MIMPLIES \
				\Mforall \MMM^{\GAMMA_d}. 
				\CONSTRUCT{\GAMMA \PLUSV x}{\MMM} 
				\ \MIMPLIES \
				\MMM \models 
				\FORALL{y^{\NAME}}{\emptyset}\EXISTS{z^{\NAME}}{\GAMMA_0 \PLUSV y} x=z
				\ \PIMPLIES \
				\EXISTS{z}{\GAMMA_0} x=z
			\end{array}
		}{}
	\end{NDERIVATION}
\end{proof}

\subsection{Soundness of Axioms for $\FRESH{x}{\Gamma}$}

\subsubsection{\DONE Soundness proof of Axiom $(f1)$:} 
\[
\FORMULATYPES{\GAMMA \PLUSV  x:\NAME \PLUSV  f:\alpha \FS \TYBASE 
}{ \FRESH{x}{\GAMMA} \MIMPLIES \FRESH{x}{\GAMMA \PLUSV f:\alpha \FS\TYBASE}}
\]
\begin{proof}
	We use Lemma \ref{lem:Base-result-functions_cannot_reveal_their_names} so show that $f$ cannot help produce $x$ if $f:\alpha \FS \TYBASE$.
	\begin{NDERIVATION}{1}
	\NLINE{\text{Assume: $\MMM_{xf}$ s.t. $\CONSTRUCT{\GAMMA \PLUSV  x:\NAME \PLUSV  f:\alpha \FS \TYBASE}{(\MMM^{\GAMMA_0} \cdot x:r_x \cdot f:V_f)^{\GAMMA_0  \PLUSV  x:\NAME \PLUSV  f:\alpha \FS \TYBASE} \equiv \MMM_{xf}}$ }}{}
	\NLINE{\CONSTRUCT{\GAMMA\PLUSV x \PLUSV f}{\MMM_{xf}} \ \MIMPLIES \ 
			\Mexists M_x. \LTCDERIVEDVALUE{M_x}{\GAMMA}{\MMM}{r_x}
			\ \MAND \
			\Mexists M_f. \LTCDERIVEDVALUE{M_f}{\GAMMA \PLUSV x}{\MMM\cdot x:r_x}{V_f}
	}{}
	\NLINE{\text{Assume: } \MMM_{xf} \models \FRESH{x}{\GAMMA}}{Assumption}
	\NLINE{\text{Hence: } 
		\Mexists M_x. \LTCDERIVEDVALUE{M_x}{\GAMMA}{\MMM}{r_x} 
		\ \MAND \ 
		\neg \Mexists N_x. \LTCDERIVEDVALUE{N_x}{\GAMMA}{\MMM_{xf}}{r_x} 
	}{Sem. $\FRESH{x}{\GAMMA}$, 2}
	\NLINE{\text{Hence: } 
		\Mexists M_x. \LTCDERIVEDVALUE{M_x}{\GAMMA}{\MMM}{r_x} 
		\ \MAND \ 
		\neg \Mexists N_x. \LTCDERIVEDVALUE{N_x}{\GAMMA}{\MMM \cdot x:r_x}{r_x} 
	}{Lemma \ref{lem:eval_under_extensions_are_equivalent}}
	\NLINE{
		\MIMPLIES \ r_x \notin \AN{\MMM} 
	}{Lemma \ref{lem:Gamma_derived_name_and_not_derivable_from_model_plus_name_implies_fresh_name}}
	\NLINE{\text{Assume: } r_x \notin \AN{V_f} }{Then trivial as $x$-fresh and  $f$ cannot output $r_x$, Lemma \ref{lem:Name_fresh_pre_eval_implies_name_fresh_post_eval}}
	\NLINE{\text{Assume: } r_x \in \AN{V_f} }{(proof by contradiction)}
	
	\NLINE{\text{Assume: } \Mexists P_x. \LTCDERIVEDVALUE{P_x}{\GAMMA \PLUSV f}{\MMM_{xf}}{r_x}}{}

	\NLINE{
		\MIFF \
		\Mexists P_x. 
		\begin{array}[t]{l}
			\AN{P_x}=\emptyset
			\\ \MAND \  
			\TYPES{\SEM{\GAMMA}{\MMM}, f:\alpha \FS \TYBASE}{P_x}{\NAME} 
			\\
			\MAND \ (\AN{\MMM_{xf}}, \ P_x\MMM_{xf}) \CONV (\AN{\MMM_{xf}},G', \ r_x) 
		\end{array}
	}{Sem. $\LTCDERIVEDVALUE{}{}{}{}$}
	\NLINE{
		\MIFF \
		\Mexists P_x. 
		\begin{array}[t]{l}
			\AN{P_x}=\emptyset
			\\ \MAND \  
			\TYPES{\SEM{\GAMMA}{\MMM}, f:\alpha \FS \TYBASE}{P_x}{\NAME} 
			\\
			\MAND \ (\AN{\MMM_{xf}}, \ P_x\PSUBST{V_f}{f}\MMM) \CONV (\AN{\MMM_{xf}},G', \ r_x) 
		\end{array}
	}{Def $\PSUBST{V}{x}$, $P_x^{-x}$}
	\NLINE{
		\MIFF \
		\Mexists P_x. 
		\begin{array}[t]{l}
			\AN{P_x}=\emptyset
			\\ \MAND \  
			\TYPES{\SEM{\GAMMA}{\MMM}, f:\alpha \FS \TYBASE}{P_x}{\NAME} 
			\\
			\MAND \ (\AN{\MMM_{xf}}, \ (P_x\MMM)\PSUBST{V_f}{f}) \CONV (\AN{\MMM_{xf}},G', \ r_x) 
		\end{array}
	}{Def closure, $V_f$-value}
	
	\NPLINE{
		\MIFF \
		\Mexists P_x. 
		\begin{array}[t]{l}
			\AN{P_x}=\emptyset
			\\ \MAND \  
			\TYPES{\SEM{\GAMMA}{\MMM}, f:\alpha \FS \TYBASE}{P_x}{\NAME} 
			\\
			\MAND \ (\AN{\MMM_{xf}}, \ (P_x\MMM)\PSUBST{V_f}{f}) \CONV (\AN{\MMM_{xf}},G', \ r_x) 
			\\
			\MAND \ \neg(\AN{\MMM_{xf}}, \ (P_x\MMM)\PSUBST{V_f}{f}) \CONV (\AN{\MMM_{xf}},G', \ r_x) 
		\end{array}
	}{4cm}{
	$r_x \notin \MMM \MIMPLIES r_x\notin \AN{P_x\MMM}$, 
	\\
	$r_x \in \AN{V_f}$
	\\
	Lemma \ref{lem:Base-result-functions_cannot_reveal_their_names} 
	$\MIMPLIES$  $\neg (... \CONV r_x)$
	}
	\NLINE{\text{Contradiction, hence: $\neg \Mexists P_x. \LTCDERIVEDVALUE{P_x}{\GAMMA \PLUSV f}{\MMM_{xf}}{r_x}$}}{}
	\NLASTLINE{\text{Hence: } \Mforall \MMM_{xf}. \CONSTRUCT{\GAMMA \PLUSV x \PLUSV f}{\MMM_{xf}} \MIMPLIES \MMM_{xf} \models \FRESH{x}{\GAMMA} \PIMPLIES \FRESH{x}{\GAMMA \PLUSV f}}{}
	\end{NDERIVATION}
\end{proof}

\newpage
\subsubsection{\DONE  Soundness proof of Axiom $(f2)$:}

\[
		(\FRESH{x}{\GAMMA_0} \PAND \FORALL{y^{\alpha_y}}{\GAMMA_0}A )  
		\
		\PIFF
		\
		\FORALL{y^{\alpha_y}}{\GAMMA_0}(\FRESH{x}{(\GAMMA_0 \PLUSV y:\alpha_y)} \PAND A)
\]
\begin{proof}
The $\PIMPLIEDBY$ direction is elementary.
\\
The $\PIMPLIES$ direction is as follows:

\begin{NDERIVATION}{1}
	\NLINE{\text{Assume some $\GAMMA_1, \MMM^{\GAMMA_1}$ s.t. 
			$
				\FORMULATYPES{\GAMMA_1}{
					\begin{array}[t]{l}
						(\FRESH{x}{\GAMMA_0} \PAND \FORALL{y^{\alpha_y}}{\GAMMA_0}A )  
						\\
						\PIMPLIES \ \FORALL{y^{\alpha_y}}{\GAMMA_0}(\FRESH{x}{(\GAMMA_0 \PLUSV y:\alpha_y)} \PAND A)
					\end{array}
		}$} }{}
	\NLINE{\text{Assume: } \MMM \models \FRESH{x}{\GAMMA_0} \AND (\FORALL{y}{\GAMMA_0}A )  }{}
	\NLINE{
		\begin{array}[t]{l}
			\neg \Mexists M_x. \LTCDERIVEDVALUE{M_x}{\GAMMA_0}{\MMM}{\SEM{x}{\MMM}}
			\\
			\MAND \
			\Mforall M_y. 
			\begin{array}[t]{l}
				\LTCDERIVEDVALUE{M_y}{\GAMMA_0}{\MMM}{V_y}
				\\
				\ \MIMPLIES \ \MMM \cdot y:V_y \models A
			\end{array}
		\end{array} 
	}{Sem. $\begin{array}[t]{r}
		\FRESH{x}{\GAMMA_0}
		\\
		\FORALL{y}{\GAMMA_0}
		\\
		A
		\end{array}$}
	\NLINE{
		\Mforall M_y. 
		\begin{array}[t]{l}
			\LTCDERIVEDVALUE{M_y}{\GAMMA_0}{\MMM}{V_y}
			\\
			\ \MIMPLIES \
			\begin{array}[t]{l}
				\neg \Mexists M_x. \LTCDERIVEDVALUE{M_x}{\GAMMA_0}{\MMM}{\SEM{x}{\MMM}}
				\\
				\MAND \
				\MMM \cdot y:V_y \models A
			\end{array}
		\end{array}
	}{$\begin{array}[t]{r}
		A^{-M_y} \AND \Mforall M_y. \ B \MIMPLIES C 
		\\
		\MIMPLIES
		\\
		\Mforall M_y. B \MIMPLIES (A \MAND C)
		\end{array}$}
	\NPLINE{
		\Mforall M_y. 
		\begin{array}[t]{l}
			\LTCDERIVEDVALUE{M_y}{\GAMMA_0}{\MMM}{V_y}
			\\
			\ \MIMPLIES \
			\begin{array}[t]{l}
				\neg \Mexists M_x. \LTCDERIVEDVALUE{M_x}{\GAMMA_0 \PLUSV y}{\MMM \cdot y:V_y}{\SEM{x}{\MMM \cdot y:V_y}}
				\\
				\MAND \ 
				\MMM \cdot y:V_y \models A
			\end{array}
		\end{array}\hspace{-1cm}
	}{7cm}{
		Lemma \ref{lem:LTC_derived_value_cannot_reveal_old_names} 
		$\LTCDERIVEDVALUE{}{\GAMMA_0}{\MMM}{s} \ \MIFF \ \LTCDERIVEDVALUE{}{\GAMMA_0 \PLUSV y}{\MMM_y}{s}$,
		\\
		Lemma \ref{lem:semantics_expressions_equal_under_model_extensions}, $\SEM{x}{\MMM} \equiv \SEM{x}{\MMM \cdot y:V_y}$
	}
	\NLINE{
		\Mforall M_y. 
		\begin{array}[t]{l}
			\LTCDERIVEDVALUE{M_y}{\GAMMA_0}{\MMM}{V_y}
			\\
			\ \MIMPLIES \
			\begin{array}[t]{l}
				\MMM \cdot y:V_y \models \FRESH{x}{\GAMMA \PLUSV y} 
				\\
				\MAND \ 
				\MMM \cdot y:V_y \models A
			\end{array}
		\end{array} 
	}{
		Sem. $\FRESH{}{}$
	}

	\NLASTLINE{ \MMM \models\FORALL{y}{\GAMMA_0} \FRESH{x}{(\GAMMA_0 \PLUSV y)} \AND A
	}{
		Sem. $\FORALL{}{}$, $\PAND$
	}
\end{NDERIVATION}
\end{proof}

\subsubsection{\DONE  Soundness proof of Axiom $(f3)$:}
\[
	\FRESH{x}{\GAMMA_0}  \PAND \EXPRESSIONTYPES{\GAMMA_0}{e}{\NAME} \quad\PIMPLIES\quad x \neq e
\]

\begin{proof}
Provable directly from syntactic definition of $\FRESH{x}{\GAMMA_0}$ and $(u1)$.
\end{proof}

\subsubsection{\DONE Soundness proof of Axiom $(f4)$}      
\[
	\FRESH{x}{(\GAMMA_0 \PLUSG \GAMMA_1)} \quad\PIMPLIES\quad \FRESH{x}{\GAMMA_0} \PAND \FRESH{x}{ \GAMMA_1}
\]
\begin{proof}
Provable directly from syntactic definition of $\FRESH{x}{\GAMMA_0}$, and $(u2)$.
\end{proof}

\subsection{Soundness of Axioms for $\FAD{\TCV}A$}

\subsubsection{\DONE Soundness proof of Axiom $(utc1)$: } 
\[
	\FORMULATYPES{\GAMMA}{ (\FAD{\TCV}{} A)  \quad \PIMPLIES \quad   A \LSUBST{\GAMMA}{\TCV}}
\]

\begin{proof}
\begin{NDERIVATION}{1}
	\NLINE{\text{Assume:  $\GAMMA$ s.t. $\FORMULATYPES{\GAMMA}{
			(\FAD{\TCV}{} A)  \quad \PIMPLIES \quad   A \LSUBST{\GAMMA}{\TCV}
	}$}}{}
	\NLINE{\text{Assume: $\MMM^{\GAMMA_d}$ s.t. $\CONSTRUCT{\GAMMA}{\MMM}$ and } 
		\MMM\models \FAD{\TCV}{} A
	}{}
	\NLINE{
		\MIFF \ 
			\Mforall \MMM_0^{\GAMMA_0}. \
				\MMM \EXTSTAR \MMM_0
				\
				\MIMPLIES \ \MMM_0 \cdot \TCV : \GAMMA_0 \models A
	}{Sem. $\FAD{\TCV}{}$}
	\NLINE{
		\MIMPLIES \ 
		\MMM \EXTSTAR \MMM 
		\ \MIMPLIES \ 
		\MMM \cdot \TCV : \GAMMA_d \models A
	}{Instantiate $\Mforall \MMM_0^{\GAMMA_0}$ with $\MMM$}
	\NLINE{\MIMPLIES \ \MMM \cdot \TCV : \GAMMA_d \models A}{FOL, $(\MMM \EXTSTAR \MMM) \quad  (\equiv \TRUTH)$}
	\NLINE{\MIMPLIES \ \MMM \cdot \TCV : \GAMMA \models A}{$\CONSTRUCT{\GAMMA}{\MMM} \ \MIMPLIES \ \SEM{\GAMMA}{\MMM} \equiv \SEM{\GAMMA_d}{\MMM}$}
	\NLASTLINE{\MIMPLIES \ \MMM \models A\LSUBST{\GAMMA}{\TCV}}{Sem. $\LSUBST{\GAMMA}{\TCV}$ }
\end{NDERIVATION}
\end{proof}

\subsubsection{\DONE Soundness proof of Axiom $(utc2)$:} 
\[
\FORMULATYPES{\GAMMA}{\FORALL{x^{\NAME}}{\GAMMA} A^{-\TCV} \quad \PIFF \quad  \FAD{\TCV} \FORALL{x^{\NAME}}{\GAMMA \PLUSTC \TCV}  A^{-\TCV}} \text{ if $A$-\EXTINDEP}
\] 
\begin{proof}
	$\PIMPLIEDBY$:
	Holds through $(utc1)$
	\\
	$\PIMPLIES$: 
	\begin{NDERIVATION}{1}
		\NLINE{\text{Assume: $\GAMMA$ s.t. } \FORMULATYPES{\GAMMA}{
				\FORALL{x}{\GAMMA} A \ \PIFF \  \FAD{\TCV} \FORALL{x}{\GAMMA \PLUSTC \TCV}  A^{-\TCV}
			}
		}{}
		\NLINE{\text{Assume: $\MMM^{\GAMMA_d}$ s.t. $\CONSTRUCT{\GAMMA}{\MMM}$ and $\MMM \models \FORALL{x^{\NAME}}{\GAMMA} A $}}{}
		\NLINE{	
			\MIFF \ 
			\Mforall M. 
			\LTCDERIVEDVALUE{M}{\GAMMA}{\MMM}{r_0}
			\ \MIMPLIES \ 
			\MMM \cdot x : r_0 \models A
		}{}
		\NLINE{
			\text{Assume arbitrary $\MMM'^{\GAMMA'}$, $M'$ s.t. $\MMM \EXTSTAR \MMM'$ and $\LTCDERIVEDVALUE{M'}{\GAMMA \PLUSTC \TCV}{\MMM' \cdot \TCV:\GAMMA'\REMOVETCVfrom }{r_1}$}
		}{`\ $\FAD{\TCV} \FORALL{x}{\TCV}$'}
		\NLINE{
			\MIFF
			\ \LTCDERIVEDVALUE{M'}{\GAMMA'}{\MMM'}{r_1}
		}{$\SEM{\GAMMA\PLUSTC \TCV}{\MMM' \cdot \TCV:\GAMMA'\REMOVETCVfrom} \equiv \GAMMA'$, Lemma \ref{lem:LTC_derived_values_unaffected_by_TCV_addition/removal}}
		\NLINE{
			r_1 \in \AN{\MMM} \MIMPLIES r_0 \equiv r_1 \MIMPLIES \MMM \cdot x:r_1 \models A
		}{Lemma \ref{lem:extensions_cannot_reveal_old_names} $\MIMPLIES$ $r_0 \equiv r_1$ obtainable}
		\NPLINE{
			r_1 \notin \AN{\MMM} \MIMPLIES \text{fresh-}r_0\equiv r_1 \MIMPLIES \MMM \cdot x:r_1 \models A 
		}{5cm}{Let $r_0 \equiv r_1$ as $r_1 \notin \AN{\MMM}$ \\(fresh names can be swapped)}
		\NLINE{\MIMPLIES \ \MMM' \cdot x:r_1 \models A
		}{Lemma \ref{lem:name_derived_from_extension_implies_extension_and_name_are_extension}, $A-\EXTINDEP$}
		\NLINE{
			\MIMPLIES \ \MMM' \cdot \TCV: \GAMMA'\REMOVETCVfrom \cdot x:r_1 \models A
		}{
			Lemma \ref{lem:model_and_model_plus_TCV_models_equivalently_TCV-free_formula}, $A^{-\TCV}$
		}
		\NLINE{
			\MIMPLIES \ 
			\MMM \models \FAD{\TCV} \FORALL{x^{\NAME}}{\GAMMA \PLUSTC \TCV}  A^{-\TCV}
		}{lines 4-9}
		\NLASTLINE{\text{Hence: } 
			\begin{array}[t]{l}
				\Mforall \GAMMA. \ \FORMULATYPES{\GAMMA}{\FORALL{x}{\GAMMA} A \ \PIFF \  \FAD{\TCV} \FORALL{x}{\GAMMA \PLUSTC \TCV}  A^{-\TCV}}
				\\
				\MIMPLIES\ 
				\Mforall \MMM^{\GAMMA_d}. \CONSTRUCT{\GAMMA}{\MMM} \ \MIMPLIES \ \MMM \models \FORALL{x^{\NAME}}{\GAMMA} A^{-\TCV} \ \PIMPLIES \  \FAD{\TCV} \FORALL{x^{\NAME}}{\GAMMA \PLUSTC \TCV}  A^{-\TCV}
				\\
				(\text{Assuming $A-\EXTINDEP$})
			\end{array}
		}{lines 1-10}
	\end{NDERIVATION}
	Essentially in line 6: if $r_1$ is in $\AN{\MMM}$ then it can be derived from $\MMM$, and in line 7 if $r_1$ is not in $\AN{\MMM}$ then instantiating $M\equiv \GENSYM()$  produces a fresh name which can easily be set as (or swapped for) $r_1$.
	
\end{proof}

\newpage
\subsubsection{\DONE  Soundness proof of Axiom $(utc3)$: } 
\[
	A-\text{\EXTINDEP \ } \ \MIMPLIES \  A^{-\TCV} \ \PIFF \ \FAD{\TCV}{} A
\]
\begin{proof}
Assume: $\GAMMA$, $\MMM^{\GAMMA}$ s.t. $\FORMULATYPES{\GAMMA}{A^{-\TCV} \ \PIFF \ \FAD{\TCV}{} A}$
and assume some $A^{-\TCV}$-\EXTINDEP, 
then:
\\
$\MIMPLIEDBY$:
Use $(utc1)$ knowing that $A^{-\TCV}\LSUBST{\GAMMA}{\TCV} \equiv A$.
\\
$\MIMPLIES$:
\begin{NDERIVATION}{1}
	\NLINE{\text{Assume: } \MMM \models A^{-\TCV}}{}
	\NLINE{\text{Prove: } \MMM \models  \FAD{\TCV}{} A }{}
	\NLINE{
		\MMM \models A^{-\TCV} 
		\ \MIMPLIES \ 
		\Mforall \MMM'^{\GAMMA'}. \
		\MMM \EXTSTAR \MMM' 
		\ \MIMPLIES \ 
		\MMM \models A^{-\TCV}
	}{trivial addition}
	\NLINE{
		\MMM \models A^{-\TCV} 
		\ \MIMPLIES \ 
		\Mforall \MMM'^{\GAMMA'}. \
		\MMM \EXTSTAR \MMM' 
		\ \MIMPLIES \ 
		\MMM' \models A^{-\TCV}
	}{$A$-\EXTINDEP}
	\NLINE{\MMM \models A^{-\TCV} 
		\ \MIMPLIES \ 
		\Mforall \MMM'^{\GAMMA'}.
		\MMM \EXTSTAR \MMM' 
		\
		\MAND \ 
		\MMM'\cdot \TCV:\GAMMA' \models A^{-\TCV}
	}{Lemma \ref{lem:model_and_model_plus_TCV_models_equivalently_TCV-free_formula}}
	\NLASTLINE{
		\MMM \models A^{-\TCV} 
		\ \MIMPLIES \ 
		\MMM \models  \FAD{\TCV}{} A 
	}{Sem. $\FAD{\TCV}$}
\end{NDERIVATION}
\end{proof}

\subsubsection{\DONE Soundness proof of Axiom $(utc4)$: } 
\[
	\FAD{\TCV}{} (A \PAND B) \ \PIFF \ (\FAD{\TCV}{} A) \PAND (\FAD{\TCV}{} B)
\]

\begin{proof}
	Assume $\GAMMA$, $\MMM^{\GAMMA}$ s.t. $\FORMULATYPES{\GAMMA}{\FAD{\TCV}{} (A \PAND B) \ \PIFF \ (\FAD{\TCV}{} A) \PAND (\FAD{\TCV}{} B)}$
	\\
$\MIMPLIES$:
\begin{NDERIVATION}{1}
	\NLINE{\text{Assume: } \MMM \models \FAD{\TCV}{} (A \MAND B)}{}
	\NLINE{\text{Prove: } \MMM  \models  (\FAD{\TCV}{} A) \MAND (\FAD{\TCV}{} B) }{}
	\NLINE{\Mforall \MMM'^{\GAMMA'}. \
			\MMM \EXTSTAR \MMM'
			\
			\MIMPLIES \ \MMM' \cdot \TCV : \GAMMA' \models A \MAND B
	}{Sem. $\FAD{}{}$}	
	\NLINE{\MIFF \ 
		\begin{array}[t]{l}
			\Mforall \MMM'^{\GAMMA'}. \
				\MMM \EXTSTAR \MMM'
				\
				\MIMPLIES \ \MMM' \cdot \TCV : \GAMMA' \models A \MAND B
			\\
			\MAND
			\Mforall \MMM'^{\GAMMA'}. \
				\MMM \EXTSTAR \MMM'
				\
				\MIMPLIES \ \MMM' \cdot \TCV : \GAMMA' \models A \MAND B
		\end{array}
	}{FOL $A \MIMPLIES (A \MAND A)$}
	\NLINE{\MIMPLIES \ 
		\begin{array}[t]{l}
			\Mforall \MMM'^{\GAMMA'}. \
				\MMM \EXTSTAR \MMM'
				\
				\MIMPLIES \ \MMM' \cdot \TCV : \GAMMA' \models A
			\\
			\MAND
			\Mforall \MMM'^{\GAMMA'}. \
				\MMM \EXTSTAR \MMM'
				\
				\MIMPLIES \ \MMM' \cdot \TCV : \GAMMA' \models B
		\end{array}
	}{$\MAND$-elim}
	
	\NLASTLINE{ \MIFF \
		\MMM  \models  (\FAD{\TCV}{} A) \MAND (\FAD{\TCV}{} B) 
	}{$\MAND$-elim}
\end{NDERIVATION}

$\MIMPLIEDBY$:
\begin{NDERIVATION}{1}
	\NLINE{\text{Assume: } \MMM \models  (\FAD{\TCV}{} A) \MAND (\FAD{\TCV}{} B) }{}
	\NLINE{\text{Prove: } \MMM \models \FAD{\TCV}{} (A \MAND B)}{}
	\NLINE{\IFF \ 
		\begin{array}[t]{l}
			\Mforall \MMM'^{\GAMMA'}. \
				\MMM \EXTSTAR \MMM'
				\
				\MIMPLIES \ \MMM' \cdot \TCV : \GAMMA' \models A
			\\
			\MAND
			\Mforall \MMM'^{\GAMMA'}. \
				\MMM \EXTSTAR \MMM'
				\
				\MIMPLIES \ \MMM' \cdot \TCV : \GAMMA' \models B
		\end{array}
	}{Sem. $\MAND$,$\FAD{}{}$}
	\NLINE{\MIMPLIES \ 
		\begin{array}[t]{l}
			\Mforall \MMM'^{\GAMMA'}. 
			\begin{array}[t]{l}
				\MMM \EXTSTAR \MMM'
				\\
				\MIMPLIES 
				\left(
				\begin{array}{l}
				 	\MMM' \cdot \TCV : \GAMMA' \models A
				 	\\
				 	\MAND \ \MMM' \cdot \TCV : \GAMMA' \models B
				\end{array}
				\right)
			\end{array}
		\end{array}
	}{$\Mforall\MMM'$ unifing }
	
	\NLINE{
		\MIFF \
		\Mforall \MMM'^{\GAMMA'}. 
		\begin{array}[t]{l}
			\MMM \EXTSTAR \MMM'
			\\
			\MIMPLIES \ \MMM' \cdot \TCV : \GAMMA' \models A \MAND B
		\end{array}
	}{Sem. $\MAND$}
	
	\NLASTLINE{ \MIFF \
		\MMM  \models  \FAD{\TCV}{} (A \MAND B) 
	}{$\MAND$-elim}
\end{NDERIVATION}
\end{proof}

\subsection{Soundness of Axioms for $\ONEEVAL{u}{e}{m}{A}$}

\newpage
\subsubsection{\DONE  Soundness proof of Axiom $(e1)$}

\[
\ONEEVAL{e_1}{e_2}{m}{A^{-m} \PAND B} \ \PIFF \ (A \PAND \ONEEVAL{e_1}{e_2}{m}{B}) \qquad \text{ iff } \ m\notin \FV{A} \text{ and $A$-\EXTINDEP.}
\]

\PROOFFINISHED
{
	\begin{NDERIVATION}{1}
		\NLINE{\text{Assume: $\GAMMA$, $\MMM^{\GAMMA}$ s.t.}}{}
		\NLINE{\text{Assume: $\FORMULATYPES{\GAMMA}{\ONEEVAL{e_1}{e_2}{m}{A^{-m} \PAND B} \ \PIFF \ (A \PAND \ONEEVAL{e_1}{e_2}{m}{B})}$ with $m\notin \FV{A}$and $A$-\EXTINDEP.} }{}
		\NLINE{\MMM \models \ONEEVAL{e_1}{e_2}{m}{A^{-m} \PAND B}}{$ m \notin \DOM{\GAMMA}$}
		\NLINE{
			\MIFF \
			\LTCDERIVEDVALUE{e_1 e_2}{\GAMMA}{\MMM}{V}
			\ \MAND \ \MMM \cdot m : V \models A \PAND B  
		}{Sem. $\ONEEVAL{}{}{}{}$}
		\NLINE{
			\MIFF \
			\LTCDERIVEDVALUE{e_1 e_2}{\GAMMA}{\MMM}{V}
			\ \MAND \ \MMM \cdot m : V \models A 
			\ \MAND \ \MMM \cdot m : V \models B  
		}{Sem. $\PAND$}
		\NLINE{\MIFF \
\LTCDERIVEDVALUE{e_1 e_2}{\GAMMA}{\MMM}{V}
			\ \MAND \ \MMM \models A 
			\ \MAND \ \MMM \cdot m : V \models B  
		}{Sem. $\EXTSTAR$, Lemma \ref{lem:LTCDERIVED_subset_implies_LTCDERIVED_supset}  and $A$-\EXTINDEP}
		\NLINE{\MIFF \
			\MMM \models A 
			\ \MAND \ \MMM \models \ONEEVAL{e_1}{e_2}{m}{B}
		}{Sem. $\ONEEVAL{}{}{}{}$}
		\NLASTLINE{\MIFF \
			\MMM \models (A \PAND \ONEEVAL{e_1}{e_2}{m}{B})
		}{Sem. $\PAND$}
	\end{NDERIVATION}
}

\subsubsection{\DONE  Soundness proof of Axiom  $(e2)$} 
\[
	\ONEEVAL{e_1}{e_2}{m}{\FORALL{x}{\GAMMA_0}A} \ \PIFF \ \FORALL{x}{\GAMMA_0} \ONEEVAL{e_1}{e_2}{m}{A} \text{ iff } x \neq e_1,e_2,m \ \MAND \ e_1,e_2,m \notin \GAMMA_0
\]

\PROOFFINISHED
{
\begin{NDERIVATION}{1}
	\NLINE{\text{Assume: $\GAMMA$, $\MMM^{\GAMMA}$ s.t.}}{}
	\NLINE{\parbox{10cm}{Assume: $\FORMULATYPES{\GAMMA}{\ONEEVAL{e_1}{e_2}{m}{\FORALL{x}{\GAMMA_0}A} \ \PIFF \ \FORALL{x}{\GAMMA_0} \ONEEVAL{e_1}{e_2}{m}{A}}$ 
			\\
			with $x \neq e_1,e_2,m \ \MAND \ e_1,e_2,m \notin \GAMMA_0$} }{}
	\NLINE{\text{Assume: } \MMM \models \ONEEVAL{e_1}{e_2}{m}{\FORALL{x}{\GAMMA_0}A} }{}
	\NLINE{ 
		\MIFF \ 
		\begin{array}[t]{l}
			\LTCDERIVEDVALUE{e_1 e_2}{\GAMMA}{\MMM}{V_m}
			\
			\MAND \ 
			\Mforall M_x.\
				\LTCDERIVEDVALUE{M_x}{\GAMMA_0}{\MMM \cdot m : V_m }{V_x}
				\
				\MIMPLIES \ \MMM \cdot m : V_m  \cdot x:V_x \models A
		\end{array}
	}{Sem. $\ONEEVAL{}{}{}{}$, $\FORALL{}{}$}
	\NLINE{ 
		\MIFF \
		\begin{array}[t]{l}
			\LTCDERIVEDVALUE{e_1 e_2}{\GAMMA}{\MMM}{V_m}
			\
			\MAND \
			\Mforall M_x.\
			\LTCDERIVEDVALUE{M_x}{\GAMMA_0}{\MMM}{V_x}
			\
			\MIMPLIES \ \MMM \cdot m : V_m  \cdot x:V_x \models A
		\end{array}\hspace{-2cm}
	}{
	$\begin{array}[t]{r}
		m \notin \GAMMA_0 \ \MAND \ \MMM \EXTSTAR \MMM  \cdot m : V_m 
		\\
		\text{Lemma \ref{lem:eval_under_extensions_are_equivalent}  ($\AN{V_x} \cap \AN{V_m} \subseteq \AN{\MMM}$)}
	\end{array}
	$}
	\NLINE{
		\MIFF \ 
		\Mforall M_x.
		\
		\LTCDERIVEDVALUE{e_1 e_2}{\GAMMA}{\MMM}{V_m}
		\
		\MAND \
		(\LTCDERIVEDVALUE{M_x}{\GAMMA_0}{\MMM}{V_x}
		\
		\MIMPLIES \ \MMM \cdot m : V_m  \cdot x:V_x \models A)
	}{$
	\begin{array}[t]{r}
		A \MAND \Mforall X. B
		\\
		\MIFF
		\
		\Mforall X. A^{-X} \MAND B
	\end{array}
	$}
	\NPLINE{ 
		\MIFF \ 
		\Mforall M_x.
		\
		\LTCDERIVEDVALUE{M_x}{\GAMMA_0}{\MMM}{V_x}
		\ 
		\MIMPLIES \ 
		(\LTCDERIVEDVALUE{e_1e_2}{\GAMMA}{\MMM}{V_m}
		\
		\MAND \ \MMM \cdot m : V_m  \cdot x:V_x \models A) 
	}{5.5cm}{
		$\MIMPLIES$: \ $A \MAND (B \MIMPLIES C) \ \MIMPLIES \ B \MIMPLIES (C \MAND A)$
		\\
		$\MIMPLIEDBY$: \ $\SEM{e_1 e_2}{}\CONV V_m $  terminates
	}
	\NPLINE{ 
		\MIFF \ 
		\begin{array}[t]{l}
			\Mforall M_x.
			\
			\LTCDERIVEDVALUE{M_x}{\GAMMA_0}{\MMM}{V_x}
			\ 
			\MIMPLIES 
			\
			(\LTCDERIVEDVALUE{e_1 e_2}{\GAMMA}{\MMM \cdot x:V_x}{V_m}
			\
			\MAND \ \MMM \cdot m : V_m  \cdot x:V_x \models A) 
		\end{array}
	}{4cm}{$x \neq e_1,e_2$,\\ Lemma \ref{lem:eval_under_extensions_are_equivalent} \\ ($\AN{V_x} \cap \AN{V_m} \subseteq \AN{\MMM \REMOVEVARIABLE x}$)}
	\NLASTLINE{\MIFF \ \MMM  \models \FORALL{x}{\GAMMA_0} \ONEEVAL{e_1}{e_2}{m}{A} }{Sem. $\FORALL{}{}$, $\ONEEVAL{}{}{}{}$}
\end{NDERIVATION}

}

\subsubsection{\DONE  Soundness proof of Axiom $(e3)$} 
\[
\ONEEVAL{e_1}{e_2}{m^{\TYBASE}}{\FAD{\TCV} A} \ \PIFF \ \FAD{\TCV} \ONEEVAL{e_1}{e_2}{m^{\TYBASE}}{A} \text{ iff } A-\EXTINDEP
\]

\PROOFFINISHED
{
	\begin{NDERIVATION}{1}
		\NLINE{\text{Assume: $\GAMMA$, $\MMM^{\GAMMA}$ s.t.}}{}
		\NLINE{\text{Assume: $\FORMULATYPES{\GAMMA}{\ONEEVAL{e_1}{e_2}{m^{\TYBASE}}{\FAD{\TCV} A} \ \PIFF \ \FAD{\TCV} \ONEEVAL{e_1}{e_2}{m^{\TYBASE}}{A}}$ with $A$-\EXTINDEP} }{}
		\NLINE{\text{Assume: } \MMM \models \ONEEVAL{e_1}{e_2}{m^{\TYBASE}}{\FAD{\TCV} A} }{}
		\NLINE{ 
			\MIFF \ 
				\LTCDERIVEDVALUE{e_1 e_2}{\GAMMA}{\MMM}{V_m}
				\
				\MAND \ 
				\Mforall \MMM_m'^{\GAMMA'_m}.\
				\MMM \cdot m:V_m \EXTSTAR \MMM'_m
				\
				\MIMPLIES \ \MMM'_m \cdot \TCV:\GAMMA'_m\REMOVETCVfrom \models A
		}{Sem. $\ONEEVAL{}{}{}{}$, $\FAD{\TCV}$ }
		\NPLINE{ 
			\MIFF \ 
				\LTCDERIVEDVALUE{e_1 e_2}{\GAMMA}{\MMM}{V_m}
				\
				\MAND \ 
				\Mforall \MMM'^{\GAMMA'}.\
				\MMM \cdot m:V_m \EXTSTAR \MMM' \cdot m:V_m
				\
				\MIMPLIES \ \MMM' \cdot m:V_m \cdot \TCV:\GAMMA'\REMOVETCVfrom \models A
		}{2cm}{
		Rewrite $\MMM'_m$ 
		\\
		Lemma \ref{lem:Base_Values_can_be_added/removed_from_TCV:TC}
		}
		\NLINE{ 
			\MIFF \ 
				\LTCDERIVEDVALUE{e_1 e_2}{\GAMMA}{\MMM}{V_m}
				\
				\MAND \ 
				\Mforall \MMM'^{\GAMMA'}.\
				\MMM \EXTSTAR \MMM'
				\
				\MIMPLIES \ \MMM' \cdot m:V_m \cdot \TCV:\GAMMA'\REMOVETCVfrom \models A
		}{
				Lemma \ref{lem:Base_types_added/removed_maintain_extension}
		}
		\NLINE{ 
			\MIFF \ 
			\Mforall \MMM'^{\GAMMA'}.
			\MMM \EXTSTAR \MMM'
			\
			\MIMPLIES 
				(\LTCDERIVEDVALUE{e_1 e_2}{\GAMMA}{\MMM}{V_m}
				\
				\MAND \ 
				\MMM' \cdot m:V_m \cdot \TCV:\GAMMA'\REMOVETCVfrom \models A)
		}{ $\SEM{e_1e_2}{\MMM}\CONV V_m$ terminates
		}
		\NLINE{ 
			\MIFF \ 
			\Mforall \MMM'^{\GAMMA'}.\
			\MMM \EXTSTAR \MMM'
			\
			\MIMPLIES
				(\LTCDERIVEDVALUE{e_1 e_2}{\GAMMA}{\MMM'}{V_m}
				\ 
				\MAND \ 
				\MMM' \cdot m:V_m \cdot \TCV:\GAMMA'\REMOVETCVfrom \models A)
		}{Lemmas \ref{lem:BaseValuesAreNameFree}, \ref{lem:eval_under_extensions_are_equivalent}}
		\NLINE{ 
			\MIFF \ 
			\Mforall \MMM'^{\GAMMA'}.
			\MMM \EXTSTAR \MMM'
			\
			\MIMPLIES \
				(\LTCDERIVEDVALUE{e_1e_2}{\GAMMA}{\MMM' \cdot \TCV:\GAMMA'\REMOVETCVfrom}{V_m}
				\
				\MAND \ 
				\MMM' \cdot \TCV:\GAMMA'\REMOVETCVfrom  \cdot m:V_m  \models A)
		}{Lemma \ref{lem:LTC_derived_values_unaffected_by_TCV_addition/removal}}
		\NLINE{ 
			\MIFF \ 
			\Mforall \MMM'^{\GAMMA'}.
			\MMM \EXTSTAR \MMM'
			\
			\MIMPLIES \
			(\LTCDERIVEDVALUE{e_1 e_2}{\GAMMA' \PLUSTC \TCV}{\MMM' \cdot \TCV:\GAMMA'\REMOVETCVfrom}{V_m}
			\
			\MAND \ 
			\MMM' \cdot \TCV:\GAMMA'\REMOVETCVfrom  \cdot m:V_m  \models A)
		}{$\EXPRESSIONTYPES{\GAMMA}{e_1 e_2}{\TYBASE}$}
		\NLASTLINE{
			\MIFF \ \MMM \models \FAD{\TCV} \ONEEVAL{e_1}{e_2}{m^{\TYBASE}}{A} 
		}{Sem. $\ONEEVAL{}{}{}{}$, $\FAD{\TCV}$ }
	\end{NDERIVATION}
	
}

\subsubsection{Soundness proof of Axiom $(ext)$} 
\label{ax_ext_soundness}
For all $e_1, e_2 : \alpha_1 \FS\alpha_2$ s.t. $\alpha_1$ and $\alpha_2$ are both $\NAME$-free:
\[
(\FORALL{x^{\alpha_1}}{\emptyset} \ONEEVAL{e_1}{x}{m_1^{\alpha_2}}{\ONEEVAL{e_2}{x}{m_2^{\alpha_2}}{m_1 = m_2}}) \ \PIFF \ e_1 =^{\alpha_1 \FS \alpha_2} e_2
\]
\begin{proof}
	Use Lemma \ref{lem:Nm-free_terms_have_equivalent_name_free_STLC-term} and the proof for $(ext)$ in STLC to see this holds

\end{proof}

\newpage

\section{Soundness of Rules}
\label{appendix_soundness_rules}
\label{sec:apndx_soundness_rules}

In this section soundness proofs of the rules are introduced.

Given the logic is limited to compile-time syntax, the condition then for any term s.t. $\TCTYPES{\GAMMA}{M}{\alpha}$ where $\AN{M}=\emptyset$ is implied then clearly $(\AN{\MMM}, M\MMM) \CONV (\AN{\MMM}, G', \ V) \ \MIFF \ \LTCDERIVEDVALUE{M}{\GAMMA}{\MMM}{V}$ hence this will be used equivalently for brevity.

\subsection{\DONE Soundness of \RULENAME{[Var]}}
$
\ZEROPREMISERULENAMEDRIGHT
{
	\ASSERT{A\LSUBST{x}{m}}{x}{m}{A}
}{[Var]}
$
\PROOFFINISHED
{
	Proof:
	
	\begin{NDERIVATION}{1}
		\NLINE{\text{Assume: $\GAMMA$ s.t. $\JUDGEMENTTYPES{\GAMMA}{\ASSERT{A\LSUBST{x}{m}}{x}{m}{A}}$ s.t. }}{}
		\NLINE{\text{Assume $\MMM$ s.t. $\CONSTRUCT{\GAMMA}{\MMM}$ and $ \MMM \models A\LSUBST{x}{m}$} }{}
		\NLINE{\MIMPLIES \ \LTCDERIVEDVALUE{x}{\GAMMA}{\MMM}{\MMM(x)}}{$x\MMM \equiv \MMM(x)$}
		\NLINE{\MMM \models A \LSUBST{x}{m} \ \MIFF \ \MMM \cdot m:\MMM(x) \models A}{Sem. $\LSUBST{x}{m}$, $m$-fresh}
		\NLINE{\MMM \models A \LSUBST{x}{m} \ \MIMPLIES \ \LTCDERIVEDVALUE{x}{\GAMMA}{\MMM}{\MMM(x)} \ \MAND \ \MMM \cdot m:\MMM(x) \models A}{$\MMM(x)$ a value}
		\NLINE{\Mforall \MMM. \CONSTRUCT{\GAMMA}{\MMM} \MIMPLIES \MMM \models A \LSUBST{x}{m} \MIMPLIES \LTCDERIVEDVALUE{x}{\GAMMA}{\MMM}{\MMM(x)} \ \MAND \ \MMM \cdot m:\MMM(x) \models A}{2, 5}
		\NLASTLINE{\text{Hence: } \ZEROPREMISERULE{\models \ASSERT{A\LSUBST{x}{m}}{x}{m}{A}}}{Sem. valid triple}

	\end{NDERIVATION}
}

\subsection{\DONE Soundness of \RULENAME{[Const]}}
$
\ZEROPREMISERULENAMEDRIGHT
{
	\ASSERT{A\LSUBST{c}{m}}{c}{m}{A}
}{[Const]}
$
\\

\PROOFFINISHED
{
	Proof:
	
	\begin{NDERIVATION}{1}
		\NLINE{\text{Assume: $\GAMMA$ s.t. $\JUDGEMENTTYPES{\GAMMA}{\ASSERT{A\LSUBST{c}{m}}{c}{m}{A}}$}}{}
		\NLINE{\text{Assume  $\MMM$ s,t, $\CONSTRUCT{\GAMMA}{\MMM}$ and $\MMM\models A \LSUBST{c}{m}$}}{}
		\NLINE{\MIMPLIES \ \LTCDERIVEDVALUE{c}{\GAMMA}{\MMM}{c}}{$c\MMM \equiv c$}
		\NLINE{\MMM \models A \LSUBST{c}{m} \ \MIFF \ \MMM \cdot m:c \models A}{Sem. $\LSUBST{c}{m}$}
		\NLINE{\MMM \models A \LSUBST{c}{m} \ \MIMPLIES \ \LTCDERIVEDVALUE{c}{\GAMMA}{\MMM}{c} \ \MAND \ \MMM \cdot m:c \models A}{$c$ a value}
		\NLINE{\Mforall \MMM. \CONSTRUCT{\GAMMA}{\MMM} \MIMPLIES \MMM \models A \LSUBST{c}{m} \MIMPLIES \LTCDERIVEDVALUE{c}{\GAMMA}{\MMM}{c} \ \MAND \ \MMM \cdot m:c \models A}{1,5}
		\NLASTLINE{\text{Hence: } \ZEROPREMISERULE{\models \ASSERT{A\LSUBST{c}{m}}{c}{m}{A}}}{Sem. valid triple}

	\end{NDERIVATION}
	
}

\subsection{\DONE Soundness of \RULENAME{[Eq]}}
\[
\THREEPREMISERULENAMEDRIGHT
{
	\ASSERT{A}{M}{m}{B}
}
{
	\ASSERT{B}{N}{n}{C \LSUBST{\EQA{m}{n}}{u}}
}
{
	C\THINWRT{m,n}
}
{
	\ASSERT{A}{M = N}{u}{C}
}{[Eq]}
\]

\PROOFFINISHED
{
	Proof:
	\begin{NDERIVATION}{1}
		\NLINE{\text{Assume: } \Mforall \GAMMA_0, \MMM_1^{\GAMMA_1}. \ 
			\JUDGEMENTTYPES{\GAMMA_0}{\ASSERT{A}{M}{m}{B}} 
			\ \MAND \
			\CONSTRUCT{\GAMMA_0}{\MMM_1}
			\ \MIMPLIES \
			\MMM_1 \models \ASSERT{A}{M}{m}{B}
		}{IH(1)}
		\NLINE{\text{Assume: } \Mforall \GAMMA_0, \MMM_1^{\GAMMA_1}. \ 
			\begin{array}[t]{l}
				\JUDGEMENTTYPES{\GAMMA_0}{\ASSERT{B}{N}{n}{C\LSUBST{m=n}{u}}} 
				\ \MAND \
				\CONSTRUCT{\GAMMA_0}{\MMM_1}
				\\ \MIMPLIES \
				\MMM_1 \models \ASSERT{B}{N}{n}{C\LSUBST{m=n}{u}}
			\end{array}
		}{IH(2)}
	\NLINE{\text{Assume: $\GAMMA$ s.t. $\JUDGEMENTTYPES{\GAMMA}{\ASSERT{A}{M = N}{u}{C}}$}}{}
	\NLINE{\text{Assuming the typing constraints of this rule hold:} }{}
	\NLINE{\text{Assume  $\MMM$ s,t, $\CONSTRUCT{\GAMMA}{\MMM} \ \MAND \ \MMM \models A$}}{}
		\NLINE{\MIMPLIES \ (\AN{\MMM}, \ M\MMM) \CONV (\AN{\MMM}, G_m, \ V_m) \ \MAND \ \MMM \cdot m:V_m \models B}{IH(1)}
		\NLINE{\MIMPLIES 
			\begin{array}[t]{l} 
				(\AN{\MMM}, \ M\MMM) \CONV (\AN{\MMM}, G_m, \ V_m) 
				\\ \MAND \ 
				(\AN{\MMM \cdot m:V_m}, \ N(\MMM \cdot m:V_m)) \CONV (\AN{\MMM \cdot m:V_m}, G_n, \ V_n)  
				\\ \MAND \ 
				\MMM \cdot m:V_m \cdot n:V_n \models C\LSUBST{{m}={n}}{u} 
			\end{array}
		}{IH(2)}
		\NLINE{
			\ \MIMPLIES \ 
			\begin{array}[t]{ll} 
				(\AN{\MMM}, \ M\MMM) \CONV (\AN{\MMM}, G_m, \ V_m) 
				\\ \MAND \ 
				(\AN{\MMM \cdot m:V_m}, \ N(\MMM \cdot m:V_m)) \CONV (\AN{\MMM \cdot m:V_m}, G_n, \ V_n) 
				\\ \MAND \ 
				\LTCDERIVEDVALUE{(m=n)}{\GAMMA \PLUSV m \PLUSV n}{\MMM \cdot m:V_m \cdot n:V_n}{V_u}
				\\ \MAND \ 
				\MMM_{mn} \cdot u:V_u \models C
			\end{array}
		}{Sem. $C\LSUBST{\EQA{m}{n}}{u}$}
		\NLINE{
			\ \MIMPLIES \ 
			\begin{array}[t]{ll} 
				(\AN{\MMM}, \ M\MMM) \CONV (\AN{\MMM}, G_m, \ V_m) 
				\\ \MAND \ 
				(\AN{\MMM \cdot m:V_m}, \ N(\MMM \cdot m:V_m)) \CONV (\AN{\MMM \cdot m:V_m}, G_n, \ V_n) 
				\\ \MAND \ 
				\LTCDERIVEDVALUE{(m=n)}{\GAMMA \PLUSV m \PLUSV n}{\MMM \cdot m:V_m \cdot n:V_n}{V_u}
				\\ \MAND \ 
				\MMM \cdot u:V_u \models C
			\end{array}
		}{$C\THINWRT{m,n}$}
		\NLINE{
			\ \MIMPLIES \ 
			(\AN{\MMM}, \ (M=N)\MMM ) \CONV (\AN{\MMM}, G', \ V_u) 
			\ \MAND \ 
			\MMM \cdot u:V_u \models C
		}{
			Op. Sem.(=)
		}
		\NLINE{\MIMPLIES \ \Mforall \MMM. \CONSTRUCT{\GAMMA}{\MMM} \ \MIMPLIES \ \MMM \models \ASSERT{A}{M=N}{u}{C}}{Assumption, 5-10}
		\NLASTLINE{\text{Hence: }  
			\THREEPREMISERULE
			{
				\models \ASSERT{A}{M}{m}{B}
			}
			{
				\models \ASSERT{B}{N}{n}{C^{-m,n} \LSUBST{\EQA{m}{n}}{u}}
			}
			{
				C \THINWRT{m,n}
			}
			{
				\models \ASSERT{A}{M = N}{u}{C}
			}
		}{lines 1-11}
	\end{NDERIVATION}
}

\newpage
\subsection{\DONE Soundness of \RULENAME{[Gensym]}}
\[
\ZEROPREMISERULENAMEDRIGHT
{\ASSERT{\TRUTH}{\GENSYM}{u}{\FAD{\TCV} \ONEEVAL{u}{()}{m}{\FRESH{m}{\TCV}}}}{[Gensym]}
\]
\PROOFFINISHED
{
	Soundness proof:
	\begin{NDERIVATION}{1}
		\NLINE{\text{Assume: $\GAMMA$ s.t. $\JUDGEMENTTYPES{\GAMMA}{\ASSERT{\TRUTH}{\GENSYM}{u}{\FAD{\TCV} \ONEEVAL{u}{()}{m}{\FRESH{m}{\TCV}}}}$ }}{}
		\NLINE{\text{Assume: $\MMM$ s.t. $\CONSTRUCT{\GAMMA}{\MMM} \ \MAND \ \MMM \models \TRUTH$}}{}
		\NLINE{\text{Let: } \MMM_u^{\GAMMA_u} \equiv \MMM \cdot u:\GENSYM}{$\GENSYM$-value and $u\notin \DOM{\MMM}$}
		\NLINE{ 
			\Mforall \MMM_x^{\GAMMA_x}.   \
			(\AN{\MMM_x}, \ \GENSYM ()) \CONV ((\AN{\MMM_x},r), \ r) \ \MAND \ r \notin \AN{\MMM_x}
		}{Op. Sem. $\GENSYM()$}
		\NPLINE{ 
			\Mforall \MMM_1^{\GAMMA_1}. 
			\begin{array}[t]{l}
				\MMM  \cdot u:\GENSYM \EXTSTAR  \MMM_1 
				\\
				\MIMPLIES 
				\begin{array}[t]{l}
					(\AN{\MMM_1}, \ \SEM{u}{\MMM_1} ()) \CONV ((\AN{\MMM_1},r), \ r)
					\\ 
					\MAND \ r \notin \AN{\MMM_1}
				\end{array}	
			\end{array}	
		}{3cm}{
			restrict the quantification, 4
			\\
			$\SEM{u}{\MMM_1} \equiv \GENSYM$
		}
		\NLINE{ 
			\MIMPLIES \
			\Mforall \MMM_1^{\GAMMA_1}. 
			\begin{array}[t]{l}
				\MMM \cdot u:\GENSYM \EXTSTAR  \MMM_1 
				\\
				\MIMPLIES 
				\begin{array}[t]{l}
					(\AN{\MMM_1}, \ \SEM{u}{\MMM_1} ()) \CONV ((\AN{\MMM_1},r), \ r) 
					\\
					\MAND \ \neg \Mexists M_m^{\NAME}. \LTCDERIVEDVALUE{M_m}{\GAMMA_1}{\MMM_1 \cdot m:r}{r}
				\end{array}	
			\end{array}	
		}{Lemma \ref{lem:fresh_name_underivable_from_model+name}}
		\NLINE{ 
			\MIMPLIES \
			\Mforall \MMM_1^{\GAMMA_1}. 
			\begin{array}[t]{l}
				\MMM \cdot u:\GENSYM \EXTSTAR  \MMM_1 
				\\
				\MIMPLIES 
				\begin{array}[t]{l}
					(\AN{\MMM_1}, \ \SEM{u}{\MMM_1} ()) \CONV ((\AN{\MMM_1},r), \ r) 
					\\
					\MAND \ \neg \Mexists M_m^{\NAME}. \LTCDERIVEDVALUE{M_m}{\GAMMA_1}{\MMM_1 \cdot m:r}{\SEM{m}{\MMM_1 \cdot \TCV:\GAMMA_1\REMOVETCVfrom \cdot m:r}}
				\end{array}	
			\end{array}	
		}{$r \equiv \SEM{m}{\MMM_1 \cdot \TCV:\GAMMA_1\REMOVETCVfrom\cdot m:r}$}
		\NLINE{ 
			\MIMPLIES \
			\Mforall \MMM_1^{\GAMMA_1}. 
			\begin{array}[t]{l}
				\MMM \cdot u:\GENSYM \EXTSTAR  \MMM_1 
				\\
				\MIMPLIES 
				\begin{array}[t]{l}
					(\AN{\MMM_1}, \ \SEM{u}{\MMM_1} ()) \CONV ((\AN{\MMM_1},r), \ r) 
					\\
					\MAND \ \neg \Mexists M_m^{\NAME}. \LTCDERIVEDVALUE{M_m}{\GAMMA_1}{\MMM_1 \cdot \TCV:\GAMMA_1\REMOVETCVfrom \cdot m:r}{\SEM{m}{\MMM_1 \cdot \TCV:\GAMMA_1\REMOVETCVfrom \cdot m:r}}
				\end{array}	
			\end{array}	
		}{
		Lemma \ref{lem:LTC_derived_values_unaffected_by_TCV_addition/removal}
		}
		\NPLINE{ 
			\MIMPLIES \
			\Mforall \MMM_1^{\GAMMA_1}. 
			\begin{array}[t]{l}
				\MMM \cdot u:\GENSYM \EXTSTAR  \MMM_1 
				\\
				\MIMPLIES 
				\begin{array}[t]{l}
					(\AN{\MMM_1}, \ \SEM{u}{\MMM_1} ()) \CONV ((\AN{\MMM_1},r), \ r) 
					\\
					\MAND \ \neg \Mexists M_m^{\NAME}. \LTCDERIVEDVALUE{M_m}{\GAMMA_1\PLUSTC \TCV}{\MMM_1 \cdot \TCV:\GAMMA_1\REMOVETCVfrom \cdot m:r}{\SEM{m}{\MMM_1 \cdot \TCV:\GAMMA_1\REMOVETCVfrom \cdot m:r}}
				\end{array}	
			\end{array}	
		}{4cm}{$\SEM{\GAMMA_1}{\MMM_1 \cdot \TCV:\GAMMA_1\REMOVETCVfrom \cdot m:r} \equiv $ \\ $\SEM{\GAMMA_1\PLUSTC \TCV}{\MMM_1 \cdot \TCV:\GAMMA_1\REMOVETCVfrom \cdot m:r}$}
		\NLINE{ 
			\Mforall \MMM_1^{\GAMMA_1}. 
			\begin{array}[t]{l}
				\MMM \cdot u:\GENSYM \EXTSTAR  \MMM_1 
				\\
				\MIMPLIES 
				\begin{array}[t]{l}
					(\AN{\MMM_1}, \ \SEM{u}{\MMM_1} ()) \CONV ((\AN{\MMM_1},r), \ r) 
					\\
					\MAND \  \MMM_1 \cdot \TCV:\GAMMA_1\REMOVETCVfrom  \cdot m:r \models \FRESH{m}{(\GAMMA_1 \PLUSTC \TCV)}
				\end{array}	
			\end{array}	
		}{Sem. $\FRESH{}{}$}
		\NPLINE{ 
			\Mforall \MMM_1^{\GAMMA_1}. 
			\begin{array}[t]{l}
				\MMM \cdot u:\GENSYM \EXTSTAR  \MMM_1 
				\ \MAND \ 
				\MMM_d \equiv \MMM_1 \cdot \TCV:\GAMMA_1 \REMOVETCVfrom
				\\
				\MIMPLIES 
				\begin{array}[t]{l}
					(\AN{\MMM_d }, \ \SEM{u}{\MMM_d } ()) \CONV ((\AN{\MMM_d },r), \ r) 
					\\
					\MAND \  \MMM_d \cdot m:r \models \FRESH{m}{\TCV}
				\end{array}	
			\end{array}	
		}{3.5cm}{
			Lemma \ref{lem:eval_under_extensions_are_equivalent}  
			\\
			$\AN{\MMM_d} \equiv \AN{\MMM}$
			\\
			(Shorthand $\TCV$)
		}
		\NLINE{ 
			\MMM \cdot u:\GENSYM \models \FAD{\TCV}\ONEEVAL{u}{()}{m}{\FRESH{m}{\TCV}}
		}{Sem. $\FAD{\TCV}\ONEEVAL{u}{()}{m}{\FRESH{m}{\TCV}}$}
		\NLINE{
			\Mforall \MMM^{\GAMMA}. 
			\CONSTRUCT{\GAMMA}{\MMM} \ \MIMPLIES \
			\MMM \models \TRUTH \MIMPLIES
			\begin{array}[t]{l}
				(\AN{\MMM},\ \GENSYM \MMM) \CONV (\AN{\MMM}, \ \GENSYM) 
				\\
				\MAND \ \MMM \cdot u:\GENSYM \models \FAD{\TCV} \ONEEVAL{u}{()}{m}{\FRESH{m}{
						\GAMMA \PLUSV u \PLUSTC
						\TCV}}
			\end{array}
		}{2-12}
		\NLASTLINE{\text{Hence: } \ZEROPREMISERULE{\models \ASSERT{\TRUTH}{\GENSYM}{u}{\FAD{\TCV} \ONEEVAL{u}{()}{m}{\FRESH{m}{\TCV}}}}}{Sem. valid triple}
	\end{NDERIVATION}
	
}

\newpage
\subsection{\DONE Soundness of \RULENAME{[Lam]}}
\[
\TWOPREMISERULENAMEDRIGHT
{
	A-\EXTINDEP
}
{
	(\GAMMA \PLUSTC \TCV \PLUSV x:\alpha)  \Vdash \ASSERT{A^{\MINUS x} \PAND B}{M}{m}{C}
}
{
	\GAMMA \Vdash 
	\ASSERT
	{A}
	{\lambda x^{\alpha}. M}{u}
	{\FAD{\TCV} 
		\FORALL{x^{\alpha}}{\TCV} (B \PIMPLIES \ONEEVAL{u}{x}{m}{C})}
}{[Lam]}
\]
\PROOFFINISHED
{
	\begin{NDERIVATION}{1}
		\NLINE{\text{Assume: $\GAMMA$ s.t.  $\JUDGEMENTTYPES{\GAMMA}{\ASSERT
					{A}
					{\lambda x^{\alpha}. M}{u}
					{\FAD{\TCV} 
						\FORALL{x^{\alpha}}{\TCV} (B \PIMPLIES \ONEEVAL{u}{x}{m}{C})}}$}}{}
		\NLINE{\text{and  $\JUDGEMENTTYPES{\GAMMA \PLUSTC \TCV \PLUSV x}{\ASSERT{A^{\MINUS x} \PAND B}{M}{m}{C}}$}
			}{}
		\NLINE{\text{Assume: }
			\begin{array}[t]{l}
				\Mforall \MMM_0^{\GAMMA_1 \PLUSTC \TCV \PLUSV x} .
				\
				\CONSTRUCT{\GAMMA \PLUSTC \TCV \PLUSV x}{\MMM_0} 
				\ \MAND \
				\MMM_0 \models A^{\MINUS x} \PAND B \ \MIMPLIES \
				\begin{array}[t]{l} 
					(\AN{\MMM_0}, \ M\MMM_0) \CONV ( G',\ V_m)
					\\
					\MAND \ \MMM_0 \cdot m:V_m \models C	
				\end{array}	
			\end{array}
		}{IH(1)}
		\NLINE{		
			\Mforall \MMM^{\GAMMA_1}.\ \CONSTRUCT{\GAMMA}{\MMM} 
			\ \MIMPLIES \
			\MMM \models A \ \MIMPLIES \
			\begin{array}[t]{l} 
				(G, \ (\lambda x^{\alpha}. M)\MMM) \CONV (G, G',\ V_u)
				\\
				\MAND \ \MMM \cdot u:V_u \models \FAD{\TCV}\FORALL{x^{\alpha}}{\TCV} (B \PIMPLIES \ONEEVAL{u}{x}{m}{C})
			\end{array}
		}{Need to Prove}
		
		\NLINE{\text{Assume: $\MMM^{\GAMMA_1}$ s.t. } 
			\CONSTRUCT{\GAMMA}{\MMM} \ \MAND \ 
			\MMM \models A 
		}{Assume}
		\NLINE{	
			\MIMPLIES \	
			(\AN{\MMM}, \ (\lambda x. M)\MMM) \CONV (\AN{\MMM}, \ \lambda x. (M\MMM \REMOVEVARIABLE x))
		}{Op. Sem. $\lambda x. M$}
		\NLINE{\text{For brevity write: } \MMM_{uddx} \equiv \MAND \ \MMM_{ud} \cdot \TCV:\GAMMA_{ud}\REMOVETCVfrom \cdot x:V_x}{}
		\NLINE{		
			\MIMPLIES \
			\begin{array}[t]{l} 
				(\AN{\MMM}, \ (\lambda x. M)\MMM) \CONV (\AN{\MMM}, \ \lambda x. (M\MMM \REMOVEVARIABLE x))
				\\
				\MAND \ \Mforall \MMM_{ud}^{\GAMMA_{ud}}. 
				\begin{array}[t]{l} 
					\MMM \cdot u:\lambda x. (M \MMM \REMOVEVARIABLE x) \EXTSTAR \MMM_{ud}
					\\
					\MAND \ \Mforall P_x. 
					\begin{array}[t]{l}
						\LTCDERIVEDVALUE{P_x}{\GAMMA_{ud} \PLUSTC \TCV}{\MMM_{ud} \cdot \TCV:\GAMMA_{ud}\REMOVETCVfrom}{V_x}
						\\
						\MAND \ \MMM_{uddx} \models B 
						\ \MIMPLIES \ \MMM_{uddx} \models B 
					\end{array}
				\end{array}
			\end{array}
		}{
			\RBOX{	
				Assume some
				\\
				$u$, $\TCV$, $x$-extension 
				\\
				that models $B \MIMPLIES B$
				\\
				Tautology
			}
		}
		\NPLINE{		
			\MIMPLIES \
			\begin{array}[t]{l} 
				(\AN{\MMM}, \ (\lambda x. M)\MMM) \CONV (\AN{\MMM}, \ \lambda x. (M\MMM \REMOVEVARIABLE x))
				\\
				\MAND \ \Mforall \MMM_{ud}^{\GAMMA_{ud}}. 
				\begin{array}[t]{l} 
					\MMM \cdot u:\lambda x. (M \MMM \REMOVEVARIABLE x) \EXTSTAR \MMM_{ud}
					\\
					\MAND \ \Mforall P_x. 
					\begin{array}[t]{l}
						\LTCDERIVEDVALUE{P_x}{\GAMMA_{ud} \PLUSTC \TCV}{\MMM_{ud} \cdot \TCV:\GAMMA_{ud}\REMOVETCVfrom}{V_x}
						\\
						\MAND \ \MMM_{uddx} \models B 
						\ \MIMPLIES \ \MMM_{uddx} \models A \MAND B 
					\end{array}
				\end{array}
			\end{array}
		}{5cm}{
			line 5 ($\MMM \models A$)
			\\
			$\MMM \EXTSTAR \MMM_{uddx}$
			\\ 
			$A-$\EXTINDEP 
			\\
			Sem. $\PAND$
		}
		\NPLINE{		
			\MIMPLIES \
			\begin{array}[t]{l} 
				(\AN{\MMM}, \ (\lambda x. M)\MMM) \CONV (\AN{\MMM}, \ \lambda x. (M\MMM \REMOVEVARIABLE x))
				\\
				\MAND \ \Mforall \MMM_{ud}^{\GAMMA_{ud}}. 
				\begin{array}[t]{l} 
					\MMM \cdot u:\lambda x. (M \MMM \REMOVEVARIABLE x) \EXTSTAR \MMM_{ud}
					\\
					\MAND \ \Mforall P_x. 
					\begin{array}[t]{l}
						\LTCDERIVEDVALUE{P_x}{\GAMMA_{ud} \PLUSTC \TCV}{\MMM_{ud} \cdot \TCV:\GAMMA_{ud}\REMOVETCVfrom}{V_x}
						\\
						\MAND \ \MMM_{uddx} \models B 
						\ \MIMPLIES \ 
						\begin{array}[t]{l}
							(\AN{\MMM_{uddx}}, \ M\MMM_{uddx}) \CONV (\AN{\MMM_{uddx}},G_m, \ V_m) 
							\hspace{-1cm}
							\\
							\MAND \ \MMM_{uddx} \cdot m:V_m \models C
						\end{array}
					\end{array}
				\end{array}
			\end{array}\hspace{-1cm}
		}{4cm}{
				IH(1)
				\\
				$\CONSTRUCT{\GAMMA \PLUSV u \PLUSTC \TCV \PLUSV x}{\MMM_{uddx}}$
		}
		\NPLINE{		
			\MIMPLIES \
			\begin{array}[t]{l} 
				(\AN{\MMM}, \ (\lambda x. M)\MMM) \CONV (\AN{\MMM}, \ \lambda x. (M\MMM \REMOVEVARIABLE x))
				\\
				\MAND \ \Mforall \MMM_{ud}^{\GAMMA_{ud}}. 
				\begin{array}[t]{l} 
					\MMM \cdot u:\lambda x. (M \MMM \REMOVEVARIABLE x) \EXTSTAR \MMM_{ud}
					\\
					\MAND \ \Mforall P_x. 
					\begin{array}[t]{l}
						\LTCDERIVEDVALUE{P_x}{\GAMMA_{ud} \PLUSTC \TCV}{\MMM_{ud} \cdot \TCV:\GAMMA_{ud}\REMOVETCVfrom}{V_x}
						\\
						\MAND \ \MMM_{uddx} \models B 
						\ \MIMPLIES \ 
						\begin{array}[t]{l}
							(\AN{\MMM_{uddx}}, \ \SEM{u}{\MMM_{uddx}} \SEM{x}{\MMM_{uddx}}) \CONV (\AN{\MMM_{uddx}},G_m, \ V_m)
							\hspace{-1cm}
							\\
							\MAND \ \MMM_{uddx} \cdot m:V_m \models C
						\end{array}
					\end{array}
				\end{array}
			\end{array}\hspace{-3cm}
		}{6cm}{
			$M\MMM \equiv ((\lambda x. M)x)\MMM$ $\equiv \SEM{u}{\MMM_{uddx}} \SEM{x}{\MMM_{uddx}}$
		}
		\NPLINE{		
			\MIMPLIES \
			\begin{array}[t]{l} 
				(\AN{\MMM}, \ (\lambda x. M)\MMM) \CONV (\AN{\MMM}, \ \lambda x. (M\MMM \REMOVEVARIABLE x))
				\\
				\MAND \ \Mforall \MMM_{ud}^{\GAMMA_{ud}}. 
				\begin{array}[t]{l} 
					\MMM \cdot u:\lambda x. (M \MMM \REMOVEVARIABLE x) \EXTSTAR \MMM_{ud}
					\\
					\MAND \ \Mforall P_x. \
						\LTCDERIVEDVALUE{P_x}{\GAMMA_{ud} \PLUSTC \TCV}{\MMM_{ud} \cdot \TCV:\GAMMA_{ud}\REMOVETCVfrom}{V_x}
						\
						\MIMPLIES \ \MMM_{uddx} \models B \PIMPLIES \ONEEVAL{u}{x}{m}{C}
				\end{array}
			\end{array} \hspace{-2cm}
		}{4cm}{
			Sem.  $\ONEEVAL{u}{x}{m}{C}$, $\PIMPLIES$
			\\
			$\MAND$ implies $\MIMPLIES$
		}
		\NPLINE{		
			\MIMPLIES \
				(\AN{\MMM}, \ (\lambda x. M)\MMM) \CONV (\AN{\MMM}, \ V_u)
				\
				\MAND \ \MMM \cdot u: V_u \models \FAD{\TCV} \FORALL{x}{\TCV} B \PIMPLIES \ONEEVAL{u}{x}{m}{C}
		}{3cm}{
			Sem. $\FAD{\TCV} \FORALL{x}{\TCV}$
		}
		
		\NLINE{		
			\models \ASSERT{A}{\lambda x. M}{u}{\FAD{\TCV} \FORALL{x}{\GAMMA \PLUSV u \PLUSTC \TCV} (B \PIMPLIES \ONEEVAL{u}{x}{m}{C})}
		}{
			line 5, Sem. $\ASSERT{A}{M}{u}{B}$
		}
		\NLASTLINE{	
			\text{Hence: } 
			\TWOPREMISERULE{
				A-\EXTINDEP
			}{
				\models \ASSERT{A \MAND B}{M}{m}{C}
			}{
				\models \ASSERT{A}{\lambda x. M}{u}{\FAD{\TCV} \FORALL{x}{\GAMMA \PLUSV u \PLUSTC \TCV} (B \PIMPLIES \ONEEVAL{u}{x}{m}{C})}
			}
		}{
			lines 1-13
		}
	\vspace{-3cm}
		
	\end{NDERIVATION}}

\newpage

\subsection{\DONE Soundness of \RULENAME{[App]}}
\[
\THREEPREMISERULENAMEDRIGHT
{
	\ASSERT{A}{M}{m}{B}
}
{
	\ASSERT{B}{N}{n}{\ONEEVAL{m}{n}{u}{C}}
}
{
	C-\THINWRT{m,n}
}
{
	\ASSERT{A}{MN}{u}{C}
}{[App]}
\]

\PROOFFINISHED
{
	Proof:
	
	\begin{NDERIVATION}{1}
		\NLINE{\text{Assume: $\GAMMA$ s.t.  $\JUDGEMENTTYPES{\GAMMA}{\ASSERT{A}{MN}{u}{C}}$}}{}
		\NLINE{\MIMPLIES \ 
			\JUDGEMENTTYPES{\GAMMA}{\ASSERT{A}{M}{m}{B}} 
			\ \MAND \ 
			\JUDGEMENTTYPES{\GAMMA \PLUSV m}{\ASSERT{B}{N}{n}{\ONEEVAL{m}{n}{u}{C}}}
		}{}
		\NLINE{\TYPES{\GAMMA\LTCtoSTC}{MN}{\beta} \ \MIMPLIES 
			\
			\TYPES{\GAMMA\LTCtoSTC}{M}{\alpha \FS \beta}
			\ \MAND \
			\TYPES{\GAMMA\LTCtoSTC}{N}{\alpha}}{}
		\NLINE{\Mforall \MMM_1^{\GAMMA_1}. \CONSTRUCT{\GAMMA}{\MMM_1} \ \MIMPLIES \ \MMM_1 \models A \MIMPLIES (\AN{\MMM_1}, \ M\MMM_1) \CONV (\AN{\MMM_1},G_m, \ V_m) \ \MAND \ \MMM_1 \cdot m:V_m \models B}{IH(1)}
		\NLINE{\Mforall \MMM_2^{\GAMMA_2}. \CONSTRUCT{\GAMMA \PLUSV m}{\MMM_2} \ \MIMPLIES \ \MMM_2 \models B \MIMPLIES 
			\begin{array}[t]{l} 
				(\AN{\MMM_2}, \ N\MMM_2) \CONV (\AN{\MMM_2}, G_n \ V_n) 
				\\ \MAND \ \MMM_2 \cdot n:V_n\models \ONEEVAL{m}{n}{u}{C}
			\end{array}
		}{IH(2)}
		\NLINE{\text{Assume: $\MMM^{\GAMMA_0}$ s.t. } \CONSTRUCT{\GAMMA}{\MMM} \ \MAND \ \MMM \models A}{}
		\NLINE{\MIMPLIES (\AN{\MMM}, \ M\MMM) \CONV (\AN{\MMM},G_m, \ V_m) \ \MAND \ \MMM \cdot m:V_m \models B}{IH(1)}
		\NLINE{
			\MIMPLIES
			\begin{array}[t]{l} 
				(\AN{\MMM}, \ M\MMM) \CONV (\AN{\MMM},G_m, \ V_m) 
				\\ 
				\MAND \
				(\AN{\MMM \cdot m:V_m}, \ N(\MMM\cdot m:V_m )) \CONV (\AN{\MMM\cdot m:V_m}, G_n \ V_n) 
				\\ 
				\MAND \ 
				\MMM \cdot m:V_m  \cdot n:V_n \models \ONEEVAL{m}{n}{u}{C}
			\end{array}
		}{IH(2)}
		\NLINE{
			\MIMPLIES
			\begin{array}[t]{l} 
				(\AN{\MMM}, \ M\MMM) \CONV (\AN{\MMM},G_m, \ V_m) 
				\\ 
				\MAND \
				(\AN{\MMM\cdot m:V_m}, \ N(\MMM\cdot m:V_m )) \CONV (\AN{\MMM\cdot m:V_m}, G_n \ V_n) 
				\\ 
				\MAND \
				(\AN{\MMM \cdot m:V_m  \cdot n:V_n}, \ \SEM{m}{\MMM \cdot m:V_m  \cdot n:V_n } \SEM{n}{\MMM \cdot m:V_m  \cdot n:V_n }) \CONV (G''', \ V_u)
				\\
				\MAND \ 
				\MMM \cdot m:V_m  \cdot n:V_n \cdot u:V_u \models C
			\end{array}
		}{Sem. $\ONEEVAL{}{}{}{}$}
		\NLINE{
			\MIMPLIES
			\begin{array}[t]{l} 
				(\AN{\MMM}, \ M\MMM) \CONV (\AN{\MMM},G_m, \ V_m) 
				\\ 
				\MAND \
				(\AN{\MMM\cdot m:V_m}, \ N(\MMM\cdot m:V_m )) \CONV (\AN{\MMM\cdot m:V_m}, G_n \ V_n) 
				\\ 
				\MAND \
				(\AN{\MMM \cdot m:V_m  \cdot n:V_n}, \ V_m V_n) \CONV (G''', \ V_u)
				\\
				\MAND \ 
				\MMM \cdot m:V_m  \cdot n:V_n \cdot u:V_u \models C
			\end{array}
		}{Sem. $\SEM{x}{\MMM}$}
		\NLINE{
			\MIMPLIES
			\begin{array}[t]{l} 
				(\AN{\MMM}, \ M\MMM) \CONV (\AN{\MMM},G_m, \ V_m) 
				\\ 
				\MAND \
				(\AN{\MMM\cdot m:V_m}, \ N(\MMM\cdot m:V_m )) \CONV (\AN{\MMM\cdot m:V_m}, G_n \ V_n) 
				\\ 
				\MAND \
				(\AN{\MMM \cdot m:V_m  \cdot n:V_n}, \ V_m V_n) \CONV (G''', \ V_u)
				\\
				\MAND \ 
				\MMM \cdot u:V_u \models C
			\end{array}
		}{$C-\THINWRT{m,n}$}
		\NLINE{
			\MIMPLIES
			\begin{array}[t]{l} 
				(\AN{\MMM}, \ (MN)\MMM) \CONV (G''', \ V_u)
				\\
				\MAND \ 
				\MMM \cdot u:V_u \models C
			\end{array}
		}{Op. Sem.  (App)}
		\NLINE{
			\MIMPLIES \
			\Mforall \MMM^{\GAMMA_0}.\ 
			\CONSTRUCT{\GAMMA}{\MMM} 
			\ \MIMPLIES \
			\MMM \models A 
			\ \MIMPLIES 
			\begin{array}[t]{l} 
				(\AN{\MMM}, \ (MN)\MMM) \CONV (G''', \ V_u)
				\\
				\MAND \ 
				\MMM \cdot u:V_u \models C
			\end{array}
		}{Assumption, line 6}
		\NLASTLINE{
			\text{Hence: }
			\THREEPREMISERULE
			{
				\models \ASSERT{A}{M}{m}{B}
			}
			{
				\models\ASSERT{B}{N}{n}{\ONEEVAL{m}{n}{u}{C}}
			}
			{
				C-\THINWRT{m,n}
			}
			{
				\models \ASSERT{A}{MN}{u}{C}
			}
		}{lines 1-13}

	\end{NDERIVATION}
}

\newpage
\subsection{\DONE Soundness of \RULENAME{[Pair]}}
\[
\THREEPREMISERULENAMEDRIGHT
{
	\ASSERT{A}{M}{m}{B}
}
{
	\ASSERT{B}{N}{n}{C\LSUBSTLTC{\PAIR{m}{n}}{u}{\GAMMA \PLUSV m \PLUSV n}}
}
{
	C \THINWRT{x}
}
{
	\ASSERT{A}{\PAIR{M}{N}}{u}{C}
}{[Pair]}
\]
Proof:
\\
\begin{NDERIVATION}{1}
	\NLINE{\text{Assume: $\GAMMA$ s.t.  $\JUDGEMENTTYPES{\GAMMA}{\ASSERT{A}{\PAIR{M}{N}}{u}{C\LSUBSTLTC{\PAIR{m}{n}}{u}{\GAMMA \PLUSV m \PLUSV n}}}$}}{}
	\NLINE{\MIMPLIES \ 
		\JUDGEMENTTYPES{\GAMMA}{\ASSERT{A}{M}{m}{B}} 
		\ \MAND \ 
		\JUDGEMENTTYPES{\GAMMA \PLUSV m}{\ASSERT{B}{N}{n}{C\LSUBSTLTC{\PAIR{m}{n}}{u}{\GAMMA \PLUSV m \PLUSV n}}}
		\ \MAND \
		\TYPES{\GAMMA}{N}{\alpha}
	}{}
	\NLINE{\Mforall \GAMMA, \MMM^{\GAMMA_0}. \CONSTRUCT{\GAMMA}{\MMM} \ \MIMPLIES \ \MMM \models A \MIMPLIES (\AN{\MMM}, \ M\MMM) \CONV (\AN{\MMM},G_m, \ V_m) \ \MAND \ \MMM \cdot m:V_m \models B}{IH(1)}
	\NLINE{\Mforall \GAMMA, \MMM_m^{\GAMMA_0 \PLUSV m}. \CONSTRUCT{\GAMMA \PLUSV m}{\MMM_m} \ \MIMPLIES \ \MMM_m \models B \MIMPLIES 
		\begin{array}[t]{l} 
			(\AN{\MMM_m}, \ N\MMM_m) \CONV (\AN{\MMM_m}, G_n \ V_n) 
			\\ \MAND \ \MMM_n \cdot n:V_n\models C\LSUBSTLTC{\PAIR{m}{n}}{u}{\GAMMA \PLUSV m \PLUSV n}
		\end{array}
	}{IH(2)}
	\NLINE{\text{Assume: $\MMM^{\GAMMA_0}$ s.t. } \CONSTRUCT{\GAMMA}{\MMM} \ \MAND \ \MMM \models A}{}
	\NLINE{\MIMPLIES (\AN{\MMM}, \ M\MMM) \CONV (\AN{\MMM},G_m, \ V_m) \ \MAND \ \MMM \cdot m:V_m \models B}{IH(1)}
	\NLINE{
		\MIMPLIES
		\begin{array}[t]{l} 
			(\AN{\MMM}, \ M\MMM) \CONV (\AN{\MMM},G_m, \ V_m) 
			\\ 
			\MAND \
			(\AN{\MMM \cdot m:V_m}, \ N(\MMM\cdot m:V_m )) \CONV (\AN{\MMM\cdot m:V_m}, G_n \ V_n) 
			\\ 
			\MAND \ 
			\MMM \cdot m:V_m  \cdot n:V_n \models C\LSUBSTLTC{\PAIR{m}{n}}{u}{\GAMMA \PLUSV m \PLUSV n}
		\end{array}
	}{IH(2)}
	\NPLINE{
		\MIMPLIES
		\begin{array}[t]{l} 
			(\AN{\MMM}, \ M\MMM) \CONV (\AN{\MMM},G_m, \ V_m) 
			\\ 
			\MAND \
			(\AN{\MMM\cdot m:V_m}, \ N(\MMM\cdot m:V_m )) \CONV (\AN{\MMM\cdot m:V_m}, G_n \ V_n) 
			\\ 
			\MAND \
			(\AN{\MMM \cdot m:V_m  \cdot n:V_n}, \ \SEM{\PAIR{m}{n}}{\MMM \cdot m:V_m  \cdot n:V_n }) \CONV (G''', \ V_u)
			\\
			\MAND \ 
			\MMM \cdot m:V_m  \cdot n:V_n \cdot u:V_u \models C
		\end{array}
	}{3cm}{Sem. $\LSUBST{e}{x}$,\\ $u \notin \DOM{\MMM \cdot m \cdot n}$}
	\NLINE{
		\MIMPLIES
		\begin{array}[t]{l} 
			(\AN{\MMM}, \ M\MMM) \CONV (\AN{\MMM},G_m, \ V_m) 
			\\ 
			\MAND \
			(\AN{\MMM\cdot m:V_m}, \ N(\MMM\cdot m:V_m )) \CONV (\AN{\MMM\cdot m:V_m}, G_n \ V_n) 
			\\ 
			\MAND \
			(\AN{\MMM \cdot m:V_m  \cdot n:V_n}, \ \PAIR{V_m}{V_n}) \CONV (G''', \ V_u)
			\\
			\MAND \ 
			\MMM \cdot m:V_m  \cdot n:V_n \cdot u:V_u \models C
		\end{array}
	}{Sem. $\SEM{e}{\MMM}$}
	\NLINE{
		\MIMPLIES
		\begin{array}[t]{l} 
			(\AN{\MMM}, \ M\MMM) \CONV (\AN{\MMM},G_m, \ V_m) 
			\\ 
			\MAND \
			(\AN{\MMM\cdot m:V_m}, \ N(\MMM\cdot m:V_m )) \CONV (\AN{\MMM\cdot m:V_m}, G_n \ V_n) 
			\\ 
			\MAND \
			(\AN{\MMM \cdot m:V_m  \cdot n:V_n}, \ \PAIR{V_m}{V_n}) \CONV (G''', \ V_u)
			\\
			\MAND \ 
			\MMM \cdot u:V_u \models C
		\end{array}
	}{$C$ \THINWRT{m,n}}
	\NLINE{
		\MIMPLIES
		\begin{array}[t]{l} 
			(\AN{\MMM}, \ \PAIR{M}{N}\MMM) \CONV (G''', \ V_u)
			\\
			\MAND \ 
			\MMM \cdot u:V_u \models C
		\end{array}
	}{Op. Sem.  (Pair)}
	\NLINE{
		\MIMPLIES \
		\Mforall \MMM^{\GAMMA_0}.\ 
		\CONSTRUCT{\GAMMA}{\MMM} 
		\ \MIMPLIES \
		\MMM \models A 
		\ \MIMPLIES 
		\begin{array}[t]{l} 
			(\AN{\MMM}, \ \PAIR{M}{N}\MMM) \CONV (G''', \ V_u)
			\\
			\MAND \ 
			\MMM \cdot u:V_u \models C
		\end{array}
	}{Assumption, (line 5)}
	\NLASTLINE{
		\text{Hence: } \
		\THREEPREMISERULE
		{
			\models \ASSERT{A}{M}{m}{B}
		}
		{
			\models\ASSERT{B}{N}{n}{C\LSUBSTLTC{\PAIR{m}{n}}{u}{\GAMMA \PLUSV m \PLUSV n}}
		}
		{
			C-\THINWRT{m,n}
		}
		{
			\models \ASSERT{A}{\PAIR{M}{N}}{u}{C}
		}
	}{lines 1- 12}

\end{NDERIVATION}

\newpage
\subsection{\DONE Soundness of \RULENAME{[Proj($i$)]}}
\[
\TWOPREMISERULENAMEDRIGHT
{
	\ASSERT{A}{M}{m}{C\LSUBSTLTC{\pi_i(m)}{u}{\GAMMA \PLUSV m}}
}
{
	C \THINWRT{m}
}
{
	\ASSERT{A}{\pi_i(M)}{u}{C}
}{[Proj($i$)]}
\]
Proof:
\\
\begin{NDERIVATION}{1}
	\NLINE{\text{Assume: $\GAMMA$ s.t.  $\JUDGEMENTTYPES{\GAMMA}{\ASSERT{A}{\pi_i(M)}{u}{C}}$}}{}
	\NLINE{\MIMPLIES \ 
		\JUDGEMENTTYPES{\GAMMA}{\ASSERT{A}{M}{m}{C\LSUBSTLTC{\pi_i(m)}{u}{\GAMMA \PLUSV m}}} 
	}{Typing rules}
	\NLINE{\Mforall \MMM^{\GAMMA_0}. \CONSTRUCT{\GAMMA}{\MMM} \ \MIMPLIES \ \MMM \models A \MIMPLIES (\AN{\MMM}, \ M\MMM) \CONV (\AN{\MMM},G_m, \ V_m) \ \MAND \ \MMM \cdot m:V_m \models C\LSUBSTLTC{\pi_i(m)}{u}{\GAMMA \PLUSV m}
	}{IH(1)}
	\NLINE{\text{Assume: $\MMM^{\GAMMA_0}$ s.t. } \CONSTRUCT{\GAMMA}{\MMM} \ \MAND \ \MMM \models A}{}
	\NLINE{
		\MIMPLIES 
		(\AN{\MMM}, \ M\MMM) \CONV (\AN{\MMM},G_m, \ V_m) 
		\ \MAND \ 
		\MMM \cdot m:V_m \models C\LSUBSTLTC{{\pi}_i(m)}{u}{\GAMMA \PLUSV m}
	}{IH(1)}
	\NPLINE{
		\MIMPLIES
		\begin{array}[t]{l} 
			(\AN{\MMM}, \ M\MMM) \CONV (\AN{\MMM},G_m, \ V_m) 
			\\ 
			\MAND \
			(\AN{\MMM \cdot m:V_m}, \ \SEM{{\pi}_i(m)}{\MMM \cdot m:V_m}) \CONV (G''', \ V_u)
			\\
			\MAND \ 
			\MMM \cdot m:V_m  \cdot u:V_u \models C
		\end{array}
	}{3cm}{Sem. $\LSUBST{e}{x}$,\\ $u \notin \DOM{\MMM \cdot m \cdot n}$}
	\NLINE{
		\MIMPLIES
		\begin{array}[t]{l} 
			(\AN{\MMM}, \ M\MMM) \CONV (\AN{\MMM},G_m, \ V_m) 
			\\ 
			\MAND \
			(\AN{\MMM \cdot m:V_m}, \ \pi_i(V_m)) \CONV (G''', \ V_u)
			\\
			\MAND \ 
			\MMM \cdot m:V_m  \cdot u:V_u \models C
		\end{array}
	}{Sem. $\SEM{e}{\MMM}$}
	\NLINE{
		\MIMPLIES
		\begin{array}[t]{l} 
			(\AN{\MMM}, \ M\MMM) \CONV (\AN{\MMM},G_m, \ V_m) 
			\\ 
			\MAND \
			(\AN{\MMM \cdot m:V_m}, \ \pi_i(V_m)) \CONV (G''', \ V_u)
			\\
			\MAND \ 
			\MMM \cdot u:V_u \models C
		\end{array}
	}{$C$ \THINWRT{m}}
	\NLINE{
		\MIMPLIES \
			(\AN{\MMM}, \ \pi_i(M)\MMM) \CONV (G''', \ V_u)
			\
			\MAND \ 
			\MMM \cdot u:V_u \models C
	}{Op. Sem.  (Pair)}
	\NLINE{
		\MIMPLIES \
		\Mforall \MMM^{\GAMMA_0}.\ 
		\CONSTRUCT{\GAMMA}{\MMM}
		\ \MIMPLIES \
		\MMM \models A 
		\ \MIMPLIES 
		\begin{array}[t]{l} 
			(\AN{\MMM}, \ \PAIR{M}{N}\MMM) \CONV (G''', \ V_u)
			\\
			\MAND \ 
			\MMM \cdot u:V_u \models C
		\end{array}
	}{Assumption, (line 4)}
	\NLASTLINE{
		\text{Hence: } 
		\TWOPREMISERULE
		{
			\models \ASSERT{A}{M}{m}{C\LSUBSTLTC{\pi_i(m)}{u}{\GAMMA \PLUSV m}}
		}
		{
			C \THINWRT{m}
		}
		{
			\models \ASSERT{A}{\pi_i(M)}{u}{C}
		}
	}{lines 1-10}

\end{NDERIVATION}

\subsection{Soundness of \RULENAME{[If]}}
\[
\FIVEPREMISERULENAMEDRIGHT
{
	\ASSERT{A}{M}{m}{B}
}
{
	\ASSERT{B\LSUBST{b_i}{m}}{N_i}{u}{C}
}
{
	b_1 = \TRUE
}
{
	b_2 = \FALSE
}
{
	i = 1, 2
}
{
	\ASSERT{A}{\IFTHENELSE{M}{N_1}{N_2}}{u}{C}
}{[If]}
\]

Proof: standard. Note that substitution is equivalent to standard substitution for $b_i$-values of type $\BOOL$.

\newpage

\subsection{\DONE Soundness of \RULENAME{[Let]}}
\[
\THREEPREMISERULENAMEDRIGHT
{
	\ASSERT{A}{M}{x}{B}
}
{
	\ASSERT{B}{N}{u}{C}
}
{
	C \THINWRT{x}
}
{
	\ASSERT{A}{\LET{x}{M}{N}}{u}{C}
}{[Let]}
\]
\PROOFFINISHED
{
	\begin{NDERIVATION}{1}
		\NLINE{\text{Assume: $\GAMMA$ s.t.  $\JUDGEMENTTYPES{\GAMMA}{\ASSERT{A}{\LET{x}{M}{N}}{u}{C}}$}}{}
		\NLINE{\MIMPLIES \ 
			\TYPES{\GAMMA\LTCtoSTC}{M}{\alpha} \ \MAND \ \TYPES{\GAMMA\LTCtoSTC, x:\alpha}{N}{\beta}
		}{Typing terms}
		\NLINE{\text{Assume: } \Mforall \MMM^{\GAMMA_0}. \ \CONSTRUCT{\GAMMA}{\MMM} \MIMPLIES \MMM \models A \ \MIMPLIES \ (\AN{\MMM}, M\MMM) \CONV (G_m', V_m) \ \MAND \ \MMM \cdot x:V_m \models B}{IH(1)}
		\NLINE{\text{Assume: } \Mforall \MMM_x^{\GAMMA_0 \PLUSV x:\alpha}.  \CONSTRUCT{\GAMMA \PLUSV x}{\MMM_x} \MIMPLIES\ \MMM_x \models B \ \MIMPLIES \ (\AN{\MMM_x}, N\MMM_x) \CONV (G_n', V_n) \ \MAND \ \MMM_x \cdot u:V_n \models C}{IH(2)}
		\NLINE{\text{Assume: $\MMM^{\GAMMA_0}$ s.t: } \CONSTRUCT{\GAMMA}{\MMM} \ \MAND \ \MMM \models A}{}
		\NLINE{\MIMPLIES \  (\AN{\MMM}, \ M\MMM ) \CONV (G', \ V_m)  \ \MAND \ \MMM \cdot x:V_m \models B}{IH(1)} 
		\NLINE{\MIMPLIES \ 
			\begin{array}[t]{l}
				(\AN{\MMM}, \ M\MMM ) \CONV (G', \ V_m)  
				\\
				\MAND \ (\AN{\MMM \cdot x:V_m}, \ N(\MMM \cdot x:V_m)) \CONV (G_n', \ V_n) 
				\\ \MAND \ 
				\MMM \cdot x:V_m \cdot u:V_n \models C
			\end{array}
		}{IH(2)}
		\NLINE{\MIMPLIES \ 
			\begin{array}[t]{l}
				(\AN{\MMM}, \ M\MMM ) \CONV (G', \ V_m)  
				\\
				\MAND \ (\AN{\MMM \cdot x:V_m}, \ N(\MMM \cdot x:V_m)) \CONV (G_n', \ V_n) 
				\\ \MAND \ 
				\MMM \cdot u:V_n \models C
			\end{array}
		}{$C \THINWRT{x}$}
		\NLINE{\MIMPLIES \ 
			\begin{array}[t]{l}
				(\AN{\MMM}, \ M\MMM ) \CONV (G', \ V_m)  
				\\
				\MAND \ (\AN{\MMM \cdot x:V_m}, \ N\PSUBST{V_m}{x}\MMM) \CONV (G_n', \ V_n)  
				\\
				\MAND \ \MMM \cdot u:V_n \models C
			\end{array}
		}{Sem. $\PSUBST{V_m}{x}$}
		\NLINE{\MIMPLIES \ 
				(\AN{\MMM}, \ (\LET{x}{M}{N})\MMM) \CONV (G'', \ V_n)  
				\
				\MAND \ \MMM \cdot u:V_n \models C
		}{Op. Sem. $\LET{x}{M}{N}$}
		\NLINE{\Mforall \MMM. \CONSTRUCT{\GAMMA}{\MMM} \ \MIMPLIES \ \MMM \models \ASSERT{A}{\LET{x}{M}{N}}{m}{C} 
		}{Sem. valid triple}
		\NLASTLINE{\text{Hence: } \THREEPREMISERULE{\models \ASSERT{A}{M}{x}{B}}{\models \ASSERT{B}{N}{u}{C}}{C \THINWRT{x}}{\models \ASSERT{A}{\LET{x}{M}{N}}{u}{C}}}{lines 1-11}
	\end{NDERIVATION}
	
}

\newpage
\subsection{\DONE Soundness of derived rule \RULENAME{[LetFresh]}}

\[
\THREEPREMISERULENAMEDRIGHT
{
	A-\EXTINDEP
}
{
	\GAMMA \PLUSV x  \Vdash \ASSERT{A \PAND \FRESH{x}{\GAMMA}}{M}{m}{C}
}
{
	C \THINWRT{x}
}
{
	\GAMMA \Vdash \ASSERT{A}{\LET{x}{\GENSYM\ ()}{M}}{m}{C}
}{[LetFresh]}
\]

\PROOFFINISHED
{	\begin{NDERIVATION}{1}
		\NLINE{\text{Assume: $\GAMMA$ s.t.  $\JUDGEMENTTYPES{\GAMMA}{\ASSERT{A}{\LET{x}{\GENSYM()}{N}}{u}{C}}$}}{}
		\NLINE{\GAMMA \Vdash \ASSERT{\TRUTH}{\GENSYM()}{x}{\FRESH{x}{\GAMMA}}}{See Example 1}
		\NLINE{\GAMMA \Vdash \ASSERT{A}{\GENSYM()}{x}{A \PAND \FRESH{x}{\GAMMA}}}{\RULENAME{[Invar]}, 7, $A-\EXTINDEP$}
		\NLINE{\GAMMA \PLUSV x \Vdash \ASSERT{A \PAND \FRESH{x}{\GAMMA}}{M}{m}{C}}{Assumption}
		\NLINE{C-\THINWRT{x}}{Assumption}
		\NLINE{\GAMMA \Vdash \ASSERT{A}{\LET{x}{\GENSYM()}{M}}{m}{C} 
		}{\RULENAME{[Let]}, 3, 4, 5}
		\NLASTLINE{\text{Hence: } \THREEPREMISERULE{A-\EXTINDEP}{\models \ASSERT{A \PAND \FRESH{x}{\GAMMA}}{N}{u}{C}}{C \THINWRT{x}}{\models \ASSERT{A}{\LET{x}{M}{N}}{u}{C}}}{lines 1-6}
	\end{NDERIVATION}
}

\newpage
\subsection{Soundness of Key structural rules:}

\subsection{\DONE Soundness of structural rule \RULENAME{[Conseq]}}

\[
\THREEPREMISERULENAMEDRIGHT
{
	A \PIMPLIES A'
}
{
	\ASSERT{A'}{M}{m}{B'}
}
{
	B' \PIMPLIES B
}
{
	\ASSERT{A}{M}{m}{B}
}{[Conseq]}
\]

\PROOFFINISHED
{
	Proof, simply by application of the assumptions:
	\\
	\begin{NDERIVATION}{1}
		\NLINE{\text{Assume $\GAMMA$, s.t. $\JUDGEMENTTYPES{\GAMMA}{\ASSERT{A}{M}{m}{B}}$}}{}
		\NLINE{
			\begin{array}[t]{l}
				\text{Assume typing holds for assumptions i.e. }
				\\
				\FORMULATYPES{\GAMMA}{A \PIMPLIES A'} 
				\ \MAND \ 
				\JUDGEMENTTYPES{\GAMMA}{\ASSERT{A'}{M^{\alpha}}{m}{B'}}
				\ \MAND \ 
				\FORMULATYPES{\GAMMA \PLUSV m :\alpha}{B' \PIMPLIES B}
			\end{array}
		}{}
		\NLINE{\text{Assume: $\Mforall \MMM_0^{\GAMMA_0}. \ \CONSTRUCT{\GAMMA}{\MMM_0} \ \MIMPLIES \ \MMM_0 \models A \PIMPLIES A'$}}{IH(1)}
		\NLINE{\text{Assume: $\Mforall \MMM_0^{\GAMMA_0}. \ \CONSTRUCT{\GAMMA}{\MMM_0} \ \MIMPLIES \ \MMM_0 \models \ASSERT{A'}{M}{m}{B'}$}}{IH(2)}
		\NLINE{\text{Assume: $\Mforall \MMM_1^{\GAMMA_1}. \  \CONSTRUCT{\GAMMA \PLUSV m}{\MMM_1} \ \MIMPLIES \ \MMM_1 \models B' \PIMPLIES B$}
		}{IH(3)}
		\NLINE{\text{Assume: $\MMM^{\GAMMA'}$ s.t. } \CONSTRUCT{\GAMMA}{\MMM} \ \MAND \ \MMM \models A}{}
		\NLINE{\MIMPLIES \ \MMM \models A'}{MP, IH(1)}
		\NLINE{\MIMPLIES \ (\AN{\MMM}, \ M\MMM) \CONV (\AN{\MMM}, G', \ V) \ \MAND \ \MMM \cdot m:V \models B'}{MP, IH(2)}
		\NLINE{\MIMPLIES \ (\AN{\MMM}, \ M\MMM) \CONV (\AN{\MMM}, G', \ V) \ \MAND \ \MMM \cdot m:V \models B}{MP, IH(3)}
		\NLINE{\MIMPLIES \ \MMM \models A \ \MIMPLIES \ (\AN{\MMM}, \ M\MMM) \CONV (\AN{\MMM}, G', \ V) \MAND \MMM \cdot m:V \models B}{Assumption 6-9 }
		\NLINE{\MIMPLIES \ \Mforall \MMM^{\GAMMA'}. \CONSTRUCT{\GAMMA}{\MMM} \ \MIMPLIES \ \MMM \models \ASSERT{A}{M}{m}{B}}{Assumption 3-10}
		\NLASTLINE{\text{Hence: }
			\THREEPREMISERULE
			{
				A \PIMPLIES A'
			}
			{
				\models \ASSERT{A'}{M}{m}{B'}
			}
			{
				B' \PIMPLIES B
			}
			{
				\models \ASSERT{A}{M}{m}{B}
			}
		}{Assumption 1-11}
	\end{NDERIVATION}
}

\newpage
\subsection{\DONE Soundness of structural rule \RULENAME{[Invar]}}
\[
\TWOPREMISERULENAMEDRIGHT
{
	C-\EXTINDEP
}
{
	\ASSERT{A}{M}{m}{B}
}
{
	\ASSERT{A \PAND C}{M}{m}{B \PAND C}
}{[Invar]}
\]

\PROOFFINISHED
{
	Proof: 
	\begin{NDERIVATION}{1}
		\NLINE{\text{Assume: $\GAMMA$ s.t.  $\JUDGEMENTTYPES{\GAMMA}{\ASSERT{A \PAND C}{M}{m}{B \PAND C}}$}}{}
		\NLINE{\MIMPLIES \ 
			\JUDGEMENTTYPES{\GAMMA}{\ASSERT{A}{M}{m}{B}} 
		}{}
		\NLINE{\text{Assume: } 
			\Mforall \MMM^{\GAMMA_0}. \CONSTRUCT{\GAMMA}{\MMM} \ \MIMPLIES \  \MMM \models A
			\MIMPLIES 
			\begin{array}[t]{ll}
				(G, \ M \MMM) \CONV (G,G', \ V)
				\\
				\MAND \ \MMM \cdot m:V \models B
			\end{array}
		}{IH(1)}
%
		\NLINE{\text{Assume for some model $\MMM^{\GAMMA_0}$ s.t. } \CONSTRUCT{\GAMMA}{\MMM} \ \MAND \ \MMM \models A \PAND C}{}
		\NLINE{\MMM \models A \quad \MAND \quad \MMM \models C }{Sem. $\PAND$}
		\NLINE{
			\begin{array}[t]{ll}
				(G, \ M \MMM) \CONV (G,G', \ V)
				\\
				\MAND \ \MMM \cdot m:V \models B
			\end{array}
		\
		\MAND \ \MMM \models C
		}{IH(1)}
		\NLINE{
			\begin{array}[t]{ll}
				(G, \ M \MMM) \CONV (G,G', \ V)
				\\
				\MAND \ \MMM \cdot m:V \models B
				\\
				\MAND \ \MMM \cdot m:V \models C
			\end{array}
		}{
			$\begin{array}[t]{r}
				C-\EXTINDEP
				\\
				\text{Sem. $\LTCDERIVEDVALUE{}{}{}{}$}
				\
				\MIMPLIES \ \LTCDERIVEDVALUE{M}{\GAMMA}{\MMM}{V}
				\\ 
				\text{Sem. $\EXTSTAR$}
				\
				\MIMPLIES \ \MMM \EXTSTAR \MMM\cdot m:V
			\end{array}$
		}
		\NLINE{
				(G, \ M \MMM) \CONV (G,G', \ V)
				\
				\MAND \ \MMM \cdot m:V \models B \PAND  C
		}{Sem. $\PAND$}
		\NLINE{
			\Mforall \MMM^{\GAMMA_0}. \CONSTRUCT{\GAMMA}{\MMM} \ \MIMPLIES \ \MMM \models \ASSERT{A \PAND C}{M}{m}{B \PAND C} 
		}{lines 5-8}
		\NLASTLINE{\text{Hence: }
			\TWOPREMISERULE
			{
				C-\EXTINDEP
			}
			{
				\models
				\ASSERT{A}{M}{m}{B}
			}
			{
				\models
				\ASSERT{A \PAND C}{M}{m}{B \PAND C}
			}
		}{lines 1-9}
	\end{NDERIVATION}
	
}

\newpage
\section{\EXTINDEP \ Formulae Construction Lemmas}
\label{appendix_EXTINDEP}
Syntactically define \EXTINDEP \ formulae and how to construct them:
\\
Some are not core constructions but specific instances used in the reasoning examples in Sec.~\ref{reasoning}.

A subset of all possible $\EXTINDEP$ formulae are defined as $\SYNEXTINDEP$ below and are proven as $\EXTINDEP$ in the lemma that follows.

\begin{definition}[A Syntactic classification of \EXTINDEP \ formulae, written \SYNEXTINDEP ]
	\label{def:syntactic-EXTINDEP}
	Define \SYNEXTINDEP \ inductively as follows:
	\begin{enumerate}
		\item  $\TRUTH$,  $\FALSITY$, $e=e'$ and $\FRESH{x}{\GAMMA'}$ are all \SYNEXTINDEP.
		
		\item 
		If $C_1$ and $C_2$ are  \EXTINDEP \ then 
		\\
		$\neg C_1$, $C_1 \PAND C_2$, $C_1 \POR C_2$, $C_1 \PIMPLIES C_2$,  $\ONEEVAL{u}{e}{m^{\alpha}}{C_1}$, $\EXISTS{x}{\GAMMA'} C_1$, $\FORALL{x}{\GAMMA'} C_1$  are all \SYNEXTINDEP.
		
		\item  
		If $C_1$ is \EXTINDEP \ and contains no reference to $\TCV$
		then $ \FAD{\TCV}  C_1$  
		and $\FAD{\TCV} \FORALL{x}{\TCV} C_1$ are \SYNEXTINDEP
		
		\item Specific cases: 
		$\FAD{\TCV} \ONEEVAL{f}{()}{b}{\FRESH{b}{\TCV}}$ 
		and $\FAD{\TCV} \FORALL{x}{\TCV} \ONEEVAL{f}{x}{b}{\FRESH{b}{\TCV \PLUSV x}}$ are \SYNEXTINDEP
	\end{enumerate}
	
	This is an incomplete characterisation of all \EXTINDEP \ formulae, but covers all cases required for the proofs.
\end{definition}

We introduce the following 2 lemmas for the specific cases in Def.~\ref{def:syntactic-EXTINDEP}
\begin{lemma}[\DONE  Constructing \EXTINDEP formulae from $\FAD{\TCV} \FORALL{x}{\TCV}$]
	\label{lem:EXTINDEP-CONSTR-FADForall}
	\[
	A^{-\TCV}-\text{\EXTINDEP} 
	\ \MIMPLIES \
	\FAD{\TCV} \FORALL{x}{\TCV} A^{-\TCV}-\text{\EXTINDEP} 
	\]
	\PROOFFINISHED
	{ 
		\begin{NDERIVATION}{1}
			\NLINE{\text{Assume: $A^{-\TCV}$-\EXTINDEP}}{}
			\NLINE{
				\MIFF \
				\Mforall \GAMMA, \MMM_x^{\GAMMA \PLUSV x:\alpha}, \MMM_x'^{\GAMMA',x:\alpha}.  \
					(\FORMULATYPES{\GAMMA \PLUSV x:\alpha}{A} 
					\
					\MAND \ \MMM_x \EXTSTAR \MMM'_x )
					\
					\MIMPLIES (\MMM_x \models A \ \MIFF \ \MMM'_x \models A)
			}{}
			\NLINE{
					\text{Show: } 
					\
					\Mforall \GAMMA, \MMM^{\GAMMA}, \MMM'^{\GAMMA'}. 
						\FORMULATYPES{\GAMMA}{\FAD{\TCV} \FORALL{x}{\TCV} A^{-\TCV}} 
						\
						\MAND \ \MMM \EXTSTAR \MMM' 
						\
						\MIMPLIES 
						\left(
						\begin{array}{l}
							\MMM \models \FAD{\TCV} \FORALL{x}{\TCV} A 
							\\ 
							\MIFF \ \MMM' \models \FAD{\TCV} \FORALL{x}{\TCV} A
						\end{array}
						\right)
			}{}
			\NLINE{
					\text{Hence assume: } 
					\GAMMA, \MMM^{\GAMMA}, \MMM'^{\GAMMA'} 
					\text{ s.t. } 
					\FORMULATYPES{\GAMMA}{\FAD{\TCV} \FORALL{x}{\TCV} A} 
					\ \MAND \ 
					\MMM \EXTSTAR \MMM'
			}{}	
			\NLINE{
				\MIMPLIES: \
					\text{Show that: } 
					\MMM \models \FAD{\TCV} \FORALL{x}{\TCV} A
					\ \MIMPLIES \
					\MMM' \models \FAD{\TCV} \FORALL{x}{\TCV} A
			}{line 7 below}
			\NLASTLINE{
				\MIMPLIEDBY: \
				\text{Show that: } 
				\MMM' \models \FAD{\TCV} \FORALL{x}{\TCV} A
				\ \MIMPLIES \
				\MMM \models \FAD{\TCV} \FORALL{x}{\TCV} A
			}{line 13 below}
			
		\end{NDERIVATION}
	$\MIMPLIES$:
		\begin{NDERIVATION}{7}
			\NLINE{\text{Let: }  \MMM_{1d} \equiv \MMM_1 \cdot \TCV:\GAMMA_1 \REMOVETCVfrom  \text{ and } \MMM_{2d}' \equiv \MMM_2' \cdot \TCV:\GAMMA_2 \REMOVETCVfrom }{}
			\NLINE{
					\MMM \EXTSTAR \MMM' 
					\ \MAND \ 
					\MMM \models \FAD{\TCV} \FORALL{x}{\TCV} A
			}{Ignore Typing Constraints as these hold}
			\NLINE{
				\MIFF \
					\MMM \EXTSTAR \MMM' 
					\ \MAND \ 
					\Mforall \MMM_1^{\GAMMA_1}.  \
						\MMM \EXTSTAR \MMM_1
						\
						\MIMPLIES \ 
						\Mforall M. \
							\LTCDERIVEDVALUE{M}{\TCV}{\MMM_{1d}}{V}
							\
							\MIMPLIES \ \MMM_{1d} \cdot x:V \models A
			}{Sem. $\FAD{\TCV} \FORALL{x}{\TCV}$}
			\NLINE{
				\MIFF \
					\MMM \EXTSTAR \MMM' 
					\ \MAND \ 
					\Mforall \MMM_1^{\GAMMA_1}.  \
						\MMM \EXTSTAR \MMM_1
						\
						\MIMPLIES \ 
						\Mforall M. \
							\LTCDERIVEDVALUE{M}{\TCV}{\MMM_{1d}}{V}
							\
							\MIMPLIES \ \MMM_1 \cdot x:V \models A
			}{
					Lemma \ref{lem:model_and_model_plus_TCV_models_equivalently_TCV-free_formula}
			}
			\NLINE{
				\MIMPLIES \
					\MMM \EXTSTAR \MMM' 
					\ \MAND \ 
					\Mforall \MMM_2'^{\GAMMA_2'}. \
						\MMM' \EXTSTAR \MMM'_2
						\
						\MIMPLIES \ 
						\Mforall M. \
							\LTCDERIVEDVALUE{M}{\TCV}{\MMM_{2d}'}{V}
							\
							\MIMPLIES \ \MMM'_2 \cdot x:V \models A
			}{
					select $\MMM_2'$ s.t. $\MMM' \EXTSTAR \MMM_2'$ 
			}
			\NLASTLINE{
				\MIFF \
				\MMM' \models \FAD{\TCV} \FORALL{x}{\TCV} A
			}{Lemma \ref{lem:model_and_model_plus_TCV_models_equivalently_TCV-free_formula}, Sem. $\FAD{\TCV} \FORALL{x}{\TCV}$}
		\end{NDERIVATION}
	$\MIMPLIEDBY$:
		\begin{NDERIVATION}{13}
			\NLINE{\text{Let: }  \MMM_{2d}' \equiv \MMM_2' \cdot \TCV:\GAMMA_2 \REMOVETCVfrom 
				\text{ and }  \MMM_{1d} \equiv \MMM_1 \cdot \TCV:\GAMMA_1 \REMOVETCVfrom 
				 \text{ and } \MMM_{1d}' \equiv \MMM_1 \cdot \TCV:\GAMMA_1' \REMOVETCVfrom 
			 }{}			
			\NLINE{
					\MMM \EXTSTAR \MMM' 
					\ \MAND \ 
					\MMM' \models \FAD{\TCV} \FORALL{x}{\TCV} A
			}{Ignore Typing Constraints as these hold}
			\NLINE{
				\MIFF \
					\MMM \EXTSTAR \MMM' 
					\ \MAND \ 
					\Mforall \MMM_2'^{\GAMMA_2'}.  \
						\MMM' \EXTSTAR \MMM_2'
						\ 
						\MIMPLIES \ 
						\Mforall M.
						\LTCDERIVEDVALUE{M}{\TCV}{\MMM_{2d}'}{V}
						\
						\MIMPLIES \ \MMM_{2d}'\cdot x:V \models A
			}{Sem. $\FAD{\TCV} \FORALL{x}{\TCV}$}
			\NLINE{
				\MIFF \
				\MMM \EXTSTAR \MMM' 
				\ \MAND \ 
						\Mforall \MMM_2'^{\GAMMA_2'}. \
							\MMM' \EXTSTAR \MMM_2'
							\
							\MIMPLIES \ 
							\Mforall M.
								\LTCDERIVEDVALUE{M}{\TCV}{\MMM_{2d}'}{V}
								\
								\MIMPLIES \ \MMM_2' \cdot x:V \models A
			}{Lemma \ref{lem:model_and_model_plus_TCV_models_equivalently_TCV-free_formula}}
			\NLINE{
				\MIFF \
				\MMM \EXTSTAR \MMM' 
				\ \MAND 
					\begin{array}[t]{l}
					\Mforall \MMM_1^{\GAMMA_1}. \ 
					\MMM \EXTSTAR \MMM_1
					\\ \MIMPLIES \ 
					\Mforall \MMM_2'^{\GAMMA_2'}. 
						\MMM' \EXTSTAR \MMM_2'
						\
						\MIMPLIES \ 
						\Mforall M. \
							\LTCDERIVEDVALUE{M}{\TCV}{\MMM_{2d}' }{V}
							\
							\MIMPLIES \ \MMM_2'  \cdot x:V \models A
				\end{array}
			}{Intro $\MMM_1$}
			\NPLINE{
				\MIFF \
				\MMM \EXTSTAR \MMM \cdot \VEC{\MMM}'
				\ \MAND 
				\begin{array}[t]{l}
					\Mforall (\MMM \cdot \VEC{\MMM}_1)^{\GAMMA_1}.
					\
					\MMM \EXTSTAR \MMM \cdot \VEC{\MMM}_1
					\\ 
					\hspace{-1cm}
					\MIMPLIES \ 
					\Mforall (\MMM \cdot \VEC{\MMM}' \cdot \VEC{\MMM}'_2)^{\GAMMA_2'}. \
					\begin{array}[t]{l}
						\MMM \cdot \VEC{\MMM}' \EXTSTAR \MMM \cdot \VEC{\MMM}' \cdot \VEC{\MMM}'_2
						\\
						\MIMPLIES \ 
						\Mforall M. \
							\LTCDERIVEDVALUE{M}{\TCV}{\MMM_{2d}'}{V}
							\
							\MIMPLIES \ \MMM_{2d}' \cdot x:V \models A
					\end{array}
				\end{array}
			}{3cm}{
				write $\MMM \equiv \MMM \cdot \VEC{\MMM}'$
				\\
				write $\MMM_1 \equiv \MMM \cdot \VEC{\MMM}_1$
				\\
				write $\MMM_2' \equiv \MMM' \cdot \VEC{\MMM}'_2$
			}
			\NPLINE{
				\MIFF \
					\MMM \EXTSTAR \MMM \cdot \VEC{\MMM}'
					\ \MAND \
					\Mforall \VEC{\MMM}_1.
					\begin{array}[t]{l}
						\MMM \EXTSTAR (\MMM \cdot \VEC{\MMM}_1)^{\GAMMA_1}
						\ \MAND \ 
						\MMM \cdot \VEC{\MMM}' \EXTSTAR (\MMM \cdot \VEC{\MMM}' \cdot \VEC{\MMM}_1)^{\GAMMA_1'}
						\\
						\MIMPLIES \ 
						\Mforall M.
							\LTCDERIVEDVALUE{M}{\TCV}{\MMM_{1d}' 
							}{V}
							\
							\MIMPLIES \ \MMM \cdot \VEC{\MMM}' \cdot \VEC{\MMM}_1 \cdot x:V \models A
					\end{array}
			}{3cm}{
				Select $\VEC{\MMM}'_2$ as $\VEC{\MMM}_1$
				\\
				$\GAMMA_2' \equiv \GAMMA_1'$
				\
				FOL
			}
			\NLINE{
				\MIFF \
					\MMM \EXTSTAR \MMM \cdot \VEC{\MMM}'
					\ \MAND \ 
					\Mforall \VEC{\MMM}_1.
					\begin{array}[t]{l}
						\MMM \EXTSTAR (\MMM \cdot \VEC{\MMM}_1)^{\GAMMA_1}
						\ \MAND \ 
						\MMM \cdot \VEC{\MMM}_1 \EXTSTAR (\MMM \cdot \VEC{\MMM}_1 \cdot \VEC{\MMM}')^{\GAMMA_1'}
						\\
						\MIMPLIES \ 
						\Mforall M. \
							\LTCDERIVEDVALUE{M}{\TCV}{\MMM_{1d}'
							 }{V}
							\
							\MIMPLIES \ \MMM \cdot \VEC{\MMM}_1 \cdot \VEC{\MMM}' \cdot x:V \models A
					\end{array}
			}{
				Lemma \ref{lem:two_extensions_combine_to_make_extensions_of_each_other}
			}
			\NPLINE{
				\MIMPLIES \ 
					\MMM \EXTSTAR \MMM \cdot \VEC{\MMM}'
					\ \MAND \ 
					\Mforall \VEC{\MMM}_1.
					\begin{array}[t]{l}
						\MMM \EXTSTAR (\MMM \cdot \VEC{\MMM}_1)^{\GAMMA_1}
						\ \MAND \ 
						\MMM \cdot \VEC{\MMM}_1 \EXTSTAR (\MMM \cdot \VEC{\MMM}_1 \cdot \VEC{\MMM}')^{\GAMMA_1'}
						\\
						\MIMPLIES \ 
						\Mforall M.
							\LTCDERIVEDVALUE{M}{\TCV}{\MMM _{1d}
							}{V}
							\
							\MIMPLIES \ \MMM \cdot \VEC{\MMM}_1 \cdot \VEC{\MMM}' \cdot x:V \models A
					\end{array}
\hspace{-1cm}
			}{4cm}{
				Subset of possible $M$'s
				\\
				$\TCTYPES{\GAMMA_1'}{\GAMMA_1}$
				\\
				Lemma \ref{lem:eval_under_extensions_are_equivalent}
			}
			\NLINE{
				\MIMPLIES \
					\Mforall \MMM_1.
					\begin{array}[t]{l}
						\MMM \EXTSTAR \MMM_1^{\GAMMA_1}
						\ \MAND \
						\MMM_1 \EXTSTAR (\MMM_1 \cdot \VEC{\MMM}')^{\GAMMA_1'}
						\\
						\MIMPLIES \ 
						\Mforall M.
							\LTCDERIVEDVALUE{M}{\TCV}{\MMM_{1d} 
							}{V}
							\
							\MIMPLIES \ \MMM_1\cdot x:V \models A
					\end{array}
			}{
					Lemma \ref{lem:Gamma_derived_terms_maintain_extension_when_added}
					\
					$A$-\EXTINDEP
			}
			\NPLINE{
				\MIMPLIES \
					\Mforall \MMM_1.\
						\MMM \EXTSTAR \MMM_1^{\GAMMA_1}
					\
					\MIMPLIES \ 
					\Mforall M.\
						\LTCDERIVEDVALUE{M}{\TCV}{\MMM_{1d}
						 }{V}
						\ \MIMPLIES \ 
						\MMM_1\cdot x:V \models A
			}{3.5cm}{
				$\AN{\VEC{\MMM}_1} \cap \AN{\VEC{\MMM}'} \subseteq \AN{\MMM}$
				\\
				guarantees constraint
			}
			\NLINE{
				\MIFF \
					\Mforall \MMM_1.\
						\MMM \EXTSTAR \MMM_1^{\GAMMA_1}
						\
						\MIMPLIES \ 
						\Mforall M.\
						\LTCDERIVEDVALUE{M}{\TCV}{\MMM_{1d} 
						}{V}
						\ \MIMPLIES \ 
						\MMM_{1d} 
						\models A
			}{Lemma \ref{lem:model_and_model_plus_TCV_models_equivalently_TCV-free_formula}, $A^{-\TCV}$}
			\NLASTLINE{
				\MIFF \ \MMM  \models \FAD{\TCV} \FORALL{x}{\TCV} A
			}{
				Sem. $\FAD{\TCV} \FORALL{x}{\GAMMA\PLUSTC \TCV}$
			}
		\end{NDERIVATION}
}
\end{lemma}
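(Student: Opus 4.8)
The plan is to unfold the definition of \EXTINDEP\ and establish the required biconditional in two directions. By definition I must show that whenever $\MMM \EXTSTAR \MMM'$ (with both models typed so that $\FAD{\TCV}\FORALL{x}{\TCV} A$ is well typed), we have $\MMM \models \FAD{\TCV}\FORALL{x}{\TCV} A$ iff $\MMM' \models \FAD{\TCV}\FORALL{x}{\TCV} A$. First I would unfold the semantics: $\MMM \models \FAD{\TCV}\FORALL{x}{\TCV} A$ holds exactly when, for every $\MMM_1$ with $\MMM \EXTSTAR \MMM_1$ and every name-free term $M$ with $\LTCDERIVEDVALUE{M}{\TCV}{\MMM_1 \cdot \TCV : (\GAMMA_1 \REMOVETCVfrom)}{V}$, we have $\MMM_1 \cdot \TCV : (\GAMMA_1 \REMOVETCVfrom) \cdot x : V \models A$. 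Since $A$ is $\TCV$-free, Lemma \ref{lem:model_and_model_plus_TCV_models_equivalently_TCV-free_formula} lets me delete the $\TCV$-binding throughout, reducing the condition to: for every extension $\MMM_1$ of $\MMM$ and every value $V$ derived from $\GAMMA_1 \REMOVETCVfrom$, $\MMM_1 \cdot x : V \models A$.

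The forward direction ($\MMM \models \cdots$ implies $\MMM' \models \cdots$) is the easy one and rests only on transitivity of $\EXTSTAR$: any $\MMM_2'$ with $\MMM' \EXTSTAR \MMM_2'$ also satisfies $\MMM \EXTSTAR \MMM_2'$, so the universal statement obtained from $\MMM$ specialises immediately to all extensions of $\MMM'$. I would close this direction with one more appeal to Lemma \ref{lem:model_and_model_plus_TCV_models_equivalently_TCV-free_formula} to reinstate the $\TCV$-binding.

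The substantive work lies in the backward direction ($\MMM' \models \cdots$ implies $\MMM \models \cdots$), where an arbitrary extension $\MMM_1$ of $\MMM$ need not be an extension of $\MMM'$. Writing the extension sequences explicitly as $\MMM' = \MMM \cdot \VEC{\MMM}'$ and $\MMM_1 = \MMM \cdot \VEC{\MMM}_1$, the idea is to \emph{merge} the two extensions into $\MMM \cdot \VEC{\MMM}' \cdot \VEC{\MMM}_1$. I would instantiate the hypothesis on $\MMM'$ at the extension obtained by appending $\VEC{\MMM}_1$ (choosing the outer extension block to be $\VEC{\MMM}_1$, so the two outer LTCs coincide), yielding satisfaction of $A$ at $\MMM \cdot \VEC{\MMM}' \cdot \VEC{\MMM}_1 \cdot x : V$. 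Lemma \ref{lem:two_extensions_combine_to_make_extensions_of_each_other} then lets me commute the two blocks so that $\MMM_1 = \MMM \cdot \VEC{\MMM}_1$ extends to the merged model, and Lemma \ref{lem:Gamma_derived_terms_maintain_extension_when_added} together with $A$-\EXTINDEP\ transports satisfaction of $A$ back from the merged model to $\MMM_1 \cdot x : V$. Throughout I would use Lemma \ref{lem:eval_under_extensions_are_equivalent} to guarantee that the same term $M$ derives the same value $V$ across $\MMM_1$ and the merged model, and Lemma \ref{lem:model_and_model_plus_TCV_models_equivalently_TCV-free_formula} to remove and restore the $\TCV$-binding.

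The hardest part will be the name management inside this merge. The transport lemmas only apply when the fresh names introduced by $\VEC{\MMM}_1$ and by $\VEC{\MMM}'$ are disjoint outside $\MMM$, i.e.\ $\AN{\VEC{\MMM}_1} \cap \AN{\VEC{\MMM}'} \subseteq \AN{\MMM}$; securing this side condition (if necessary by renaming fresh names, as licensed by Lemma \ref{lem:adding/remove_unused_names_maintains_evaluation}) and threading it correctly through the LTC bookkeeping as the two extension blocks are commuted is the delicate step. Everything else reduces to routine unfolding of the satisfaction relation and repeated invocation of the $\TCV$-stripping lemma, which is sound precisely because $A = A^{-\TCV}$.
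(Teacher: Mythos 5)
Your proposal follows the paper's own proof essentially step for step: the same unfolding of the semantics with the $\TCV$-stripping lemma (Lemma \ref{lem:model_and_model_plus_TCV_models_equivalently_TCV-free_formula}), the same easy forward direction via transitivity of $\EXTSTAR$, and the same merge-and-commute argument for the backward direction using Lemmas \ref{lem:two_extensions_combine_to_make_extensions_of_each_other}, \ref{lem:Gamma_derived_terms_maintain_extension_when_added}, \ref{lem:eval_under_extensions_are_equivalent} and $A$-\EXTINDEP, including the name-disjointness side condition $\AN{\VEC{\MMM}_1} \cap \AN{\VEC{\MMM}'} \subseteq \AN{\MMM}$. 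This matches the paper's derivation, so no further comparison is needed.
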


\begin{lemma}[\DONE The formula in the postcondition of \RULENAME{[Gensym]} is $\EXTINDEP$]
	\label{lem:GS_formula_is_EXTINDEP}
	\[
	\FAD{\TCV} \ONEEVAL{f}{()}{b}{\FRESH{b}{\TCV}}-\EXTINDEP
	\]
	\PROOFFINISHED
	{
		Proof:
		\\
		\[
		\Mforall \MMM^{\GAMMA}, \MMM'. \MMM \EXTSTAR \MMM' \MIMPLIES (\MMM \models \FAD{\TCV} \ONEEVAL{f}{()}{b}{\FRESH{b}{\TCV}} \ \MIFF \ \MMM' \models \FAD{\TCV} \ONEEVAL{f}{()}{b}{\FRESH{b}{\TCV}})
		\]
		\\
		{\DONE Extending} ($\MMM \models \FAD{\TCV} \ONEEVAL{f}{()}{b}{\FRESH{b}{\TCV}} \ \MIMPLIES \ \MMM' \models \FAD{\TCV} \ONEEVAL{f}{()}{b}{\FRESH{b}{\TCV}}$):
		\begin{NDERIVATION}{1}
			\NLINE{\text{Assume: $\GAMMA$, s.t. $\FORMULATYPES{\GAMMA}{\FAD{\TCV} \ONEEVAL{f}{()}{b}{\FRESH{b}{\GAMMA}}}$ }}{}
			\NLINE{\text{Assume some $\MMM^{\GAMMA_d}$, $\MMM'^{\GAMMA'}$ s.t. $\CONSTRUCT{\GAMMA}{\MMM}$ $\MMM \EXTSTAR\MMM'$ and $\MMM \models \FAD{\TCV} \ONEEVAL{f}{()}{b}{\FRESH{b}{\TCV}}$ }}{}
			\NLINE{\MIFF \ \Mforall \MMM_1^{\GAMMA_1}. \MMM\EXTSTAR \MMM_1 \MIMPLIES \MMM_1 \cdot \TCV:\GAMMA_1\REMOVETCVfrom \models  \ONEEVAL{f}{()}{b}{\FRESH{b}{\TCV}}  }{Sem. $\FAD{\TCV}$}
			\NLINE{\MIMPLIES \ \Mforall \MMM_1^{\GAMMA_1}. \MMM \EXTSTAR \MMM' \EXTSTAR \MMM_1 \MIMPLIES \MMM_1 \cdot \TCV:\GAMMA_1\REMOVETCVfrom \models  \ONEEVAL{f}{()}{b}{\FRESH{b}{\TCV}}  }{Subset $\Mforall \MMM_1$}
			\NLINE{\MIMPLIES \ \Mforall \MMM_1^{\GAMMA_1}. \MMM' \EXTSTAR \MMM_1 \MIMPLIES \MMM_1 \cdot \TCV:\GAMMA_1\REMOVETCVfrom \models  \ONEEVAL{f}{()}{b}{\FRESH{b}{\TCV}}  }{Remove $\MMM\EXTSTAR$}
			\NLASTLINE{\MIMPLIES \ \MMM' \models \FAD{\TCV} \ONEEVAL{f}{()}{b}{\FRESH{b}{\TCV} } }{}
		\end{NDERIVATION}
		{\DONE  Contracting}  ($\MMM \models \FAD{\TCV} \ONEEVAL{f}{()}{b}{\FRESH{b}{\TCV}} \ \MIMPLIEDBY \ \MMM' \models \FAD{\TCV} \ONEEVAL{f}{()}{b}{\FRESH{b}{\TCV}}$):
		\begin{NDERIVATION}{7}
			\NLINE{\text{Assume: $\GAMMA$,s.t. $\FORMULATYPES{\GAMMA}{\FAD{\TCV} \ONEEVAL{f}{()}{b}{\FRESH{b}{\GAMMA}}}$ }}{}
			\NLINE{\text{Assume some  $\MMM^{\GAMMA_d}$, $\MMM'$ s.t. $\CONSTRUCT{\GAMMA}{\MMM}$ and $\MMM \EXTSTAR\MMM'$ and $\MMM' \models \FAD{\TCV} \ONEEVAL{f}{()}{b}{\FRESH{b}{\TCV}}$ }}{}
			\NLINE{\MIFF \ \Mforall \MMM_1^{\GAMMA_1}. \MMM'\EXTSTAR \MMM_1 \MIMPLIES \MMM_1 \cdot \TCV:\GAMMA_1\REMOVETCVfrom \models  \ONEEVAL{f}{()}{b}{\FRESH{b}{\TCV}}  }{Sem. $\FAD{\TCV}$}
			\NPLINE{\MIFF \ \Mforall \VEC{\MMM}_1^{\VEC{\GAMMA}_1}. 
				\begin{array}[t]{l}
					\MMM \cdot \VEC{\MMM}' \EXTSTAR \MMM \cdot \VEC{\MMM}' \cdot \VEC{\MMM}_{1}
					\\
					\MIMPLIES \
						\MMM \cdot \VEC{\MMM}' \cdot \VEC{\MMM}_{1} \cdot \TCV:(\GAMMA \PLUSG \VEC{\GAMMA}' \PLUSG \VEC{\GAMMA}_{1}) \REMOVETCVfrom
						\models  \ONEEVAL{f}{()}{b}{\FRESH{b}{\TCV}} 
				\end{array}
			}{3cm}{
				$\MMM'^{\GAMMA'} \equiv \MMM^{\GAMMA} \cdot \VEC{\MMM}'^{\VEC{\GAMMA}'}$
				\\
				$\MMM_{1}^{\GAMMA_{1}} \equiv \MMM^{\GAMMA} \cdot \VEC{\MMM}'^{\VEC{\GAMMA}'} \cdot \VEC{\MMM}_{1}^{\VEC{\GAMMA}_1}$
			}
			\NPLINE{\MIMPLIES \ \Mforall \VEC{\MMM}_1^{\VEC{\GAMMA}_1}. 
				\begin{array}[t]{l}
					\MMM \cdot \VEC{\MMM}' \EXTSTAR \MMM \cdot \VEC{\MMM}' \cdot \VEC{\MMM}_{1}
					\
					\MAND \
					\MMM \EXTSTAR \MMM \cdot \VEC{\MMM}_{1}
					\\
					\MIMPLIES \
						\MMM \cdot \VEC{\MMM}' \cdot \VEC{\MMM}_{1} \cdot \TCV:(\GAMMA \PLUSG \VEC{\GAMMA}' \PLUSG \VEC{\GAMMA}_{1}) \REMOVETCVfrom \models  \ONEEVAL{f}{()}{b}{\FRESH{b}{\TCV}} 
				\end{array}
			}{4cm}{
				Subset $\Mforall \VEC{\MMM}_1$
				\\
				only $\VEC{\MMM}_1$ s.t. $\MMM \EXTSTAR \MMM \cdot \VEC{\MMM}_1$
			}
			\NPLINE{\MIMPLIES \ \Mforall \VEC{\MMM}_1^{\VEC{\GAMMA}_1}. 
				\begin{array}[t]{l}
					\MMM \cdot \VEC{\MMM}' \EXTSTAR \MMM \cdot \VEC{\MMM}' \cdot \VEC{\MMM}_{1}
					\
					\MAND \
					\MMM \EXTSTAR \MMM \cdot \VEC{\MMM}_{1}
					\\
					\MIMPLIES \
						\MMM \cdot \VEC{\MMM}_{1} \cdot \TCV:(\GAMMA \PLUSG \VEC{\GAMMA}_{1}) \REMOVETCVfrom
						\ONEEVAL{f}{()}{b}{\FRESH{b}{\TCV}} 
				\end{array}
			}{4cm}{
				See below: Line 16
				\\
				(reducing $\TCV$ holds)
			}
			\NPLINE{\MIMPLIES \ \Mforall \VEC{\MMM}_1^{\VEC{\GAMMA}_1}. 
				\MMM \EXTSTAR \MMM \cdot \VEC{\MMM}_{1}
				\
				\MIMPLIES \
				\MMM \cdot \VEC{\MMM}_{1} \cdot \TCV:(\GAMMA \PLUSG \VEC{\GAMMA}_{1})  \REMOVETCVfrom \models \ONEEVAL{f}{()}{b}{\FRESH{b}{\TCV}}
			}{5cm}{
				Lemma \ref{lem:two_extensions_combine_to_make_extensions_of_each_other}, 
				$\AN{\VEC{\MMM}_1} \cap \AN{\VEC{\MMM}'} \subseteq \AN{\MMM}$
				\\
				guarantees constraints
			}
			\NLINE{\MIMPLIES \ \Mforall \MMM_1^{\GAMMA_1}. 
				\MMM \EXTSTAR \MMM_1
				\
				\MIMPLIES \
				\MMM_1 \cdot \TCV:\GAMMA_1\REMOVETCVfrom   \models \ONEEVAL{f}{()}{b}{\FRESH{b}{\TCV}}
			}{
				$\GAMMA_1 \equiv \GAMMA \PLUSG \VEC{\GAMMA}_1$,
				\
				$\MMM_1 \equiv \MMM\cdot \VEC{\MMM}_1$
			}
			\NLASTLINE{\MIMPLIES \ \MMM  \models \FAD{\TCV} \ONEEVAL{f}{()}{b}{\FRESH{b}{\TCV}}
			}{
				Sem. $\FAD{\TCV}$
			}
			
		\end{NDERIVATION}

	Proof of: 
	\[
	\begin{array}{l}
		\Mforall \MMM^{\GAMMA}, \VEC{\MMM}_1^{\VEC{\GAMMA}_1}, \VEC{\MMM}'^{\VEC{\GAMMA}'}, V_x.
		\\
		\MMM \EXTSTAR \MMM \cdot \VEC{\MMM}' 
		\ \MAND \
		\MMM \cdot \VEC{\MMM}' \EXTSTAR \MMM \cdot \VEC{\MMM}' \cdot \VEC{\MMM}_1
		\ \MAND \
		\MMM \EXTSTAR \MMM \cdot \VEC{\MMM}_1
		\\
		\MAND \
		\MMM \cdot \VEC{\MMM}' \cdot \VEC{\MMM}_{1} \cdot \TCV:(\GAMMA \PLUSG \VEC{\GAMMA}' \PLUSG \VEC{\GAMMA}_{1})\REMOVETCVfrom  \models  \ONEEVAL{f}{()}{b}{\FRESH{b}{\TCV}} 
		\\
		\MIMPLIES
		\\
		\MMM \cdot \VEC{\MMM}_{1} \cdot \TCV:(\GAMMA \PLUSG \VEC{\GAMMA}_{1}) \REMOVETCVfrom \models  \ONEEVAL{f}{()}{b}{\FRESH{b}{\TCV}} 
	\end{array}
	\]
	
	\PROOFFINISHED
	{
		Proof:
		\begin{NDERIVATION}{16}
			\NLINE{
				\MMM  \cdot \VEC{\MMM}' \EXTSTAR \MMM \cdot \VEC{\MMM}' \cdot \VEC{\MMM}_{1}
				\ \MIFF \ 
				\MMM  \cdot \VEC{\MMM}_{1} \EXTSTAR \MMM \cdot \VEC{\MMM}' \cdot \VEC{\MMM}_{1}
			}{Lemma \ref{lem:two_extensions_combine_to_make_extensions_of_each_other}}
			\NLINE{\MIMPLIES \
				\MMM  \cdot \VEC{\MMM}_{1} \cdot b:r_b \EXTSTAR \MMM \cdot \VEC{\MMM}' \cdot \VEC{\MMM}_{1} \cdot b:r_b
			}{Lemma \ref{lem:Gamma_derived_terms_maintain_extension_when_added}}
			\NLINE{\text{
					Let $\MMM_3 \equiv \MMM \cdot \VEC{\MMM}' \cdot \VEC{\MMM}_{1} \cdot \TCV:(\GAMMA \PLUSG \VEC{\GAMMA}' \PLUSG \VEC{\GAMMA}_{1}) \REMOVETCVfrom$
			and  $\MMM_2 \equiv \MMM \cdot \VEC{\MMM}_{1} \cdot \TCV:(\GAMMA \PLUSG \VEC{\GAMMA}_{1}) \REMOVETCVfrom$	
			}}{}
			\NLINE{\text{
					Assume: $\MMM_3  \models  \ONEEVAL{f}{()}{b}{\FRESH{b}{\TCV}} $
			}}{}
			\NLINE{\MIFF \ 
				\LTCDERIVEDVALUE{f()}{\GAMMA_3}{\MMM_3}{r_b} 
				\ \MAND \ 
				\neg \Mexists M_b. \LTCDERIVEDVALUE{M_b}{\TCV}{\MMM_3 \cdot b:r_b}{r_b}
			}{Sem. $\ONEEVAL{}{}{}{}$, $\FRESH{}{}$}
			\NLINE{\MIMPLIES \ 
				\LTCDERIVEDVALUE{f()}{\GAMMA_2}{\MMM_2}{r_b}
				\ \MAND \ 
				\neg \Mexists M_b. \LTCDERIVEDVALUE{M_b}{\TCV}{\MMM_3 \cdot b:r_b}{r_b}
			}{$f \in \SEM{\GAMMA}{\MMM} \subseteq \SEM{\GAMMA_{2}}{\MMM_{2}} \subseteq \SEM{\GAMMA_{3}}{\MMM_{3}}$, Lemmas \ref{lem:eval_under_extensions_are_equivalent}}
			\NLINE{\MIMPLIES \
				\LTCDERIVEDVALUE{f()}{\GAMMA_2}{\MMM_2}{r_b}
				\ \MAND \ 
				\neg \Mexists M_b. \LTCDERIVEDVALUE{M_b}{\GAMMA \PLUSG \VEC{\GAMMA}' \PLUSG \VEC{\GAMMA}_1 }{\MMM \cdot \VEC{\MMM}' \cdot \VEC{\MMM}_{1} \cdot b:r_b}{r_b}
			}{$\LTCDERIVEDVALUE{M}{\TCV}{\MMM \cdot \TCV:\GAMMA \REMOVETCVfrom}{V} \equiv \LTCDERIVEDVALUE{M}{\GAMMA}{\MMM}{V}$}
			\NLINE{\MIMPLIES \
				\LTCDERIVEDVALUE{f()}{\GAMMA_2}{\MMM_2}{r_b}
				\ \MAND \ 
				\neg \Mexists M_b. \LTCDERIVEDVALUE{M_b}{\GAMMA \PLUSG \VEC{\GAMMA}_1 }{\MMM \cdot \VEC{\MMM}' \cdot \VEC{\MMM}_{1} \cdot b:r_b}{r_b}
			}{Subset, $\TCTYPES{\GAMMA \PLUSG \VEC{\GAMMA}' \PLUSG \VEC{\GAMMA}_1}{\GAMMA \PLUSG \VEC{\GAMMA}_1}$}
			\NLINE{\MIMPLIES \
				\LTCDERIVEDVALUE{f()}{\GAMMA_2}{\MMM_2}{r_b}
				\ \MAND \ 
				\neg \Mexists M_b. \LTCDERIVEDVALUE{M_b}{\GAMMA \PLUSG \VEC{\GAMMA}_1 }{\MMM \cdot \VEC{\MMM}_{1} \cdot b:r_b}{r_b}
			}{Lemma \ref{lem:eval_under_extensions_are_equivalent}, line 17}
			\NLINE{\MIMPLIES \ 
				\LTCDERIVEDVALUE{f()}{\GAMMA_2}{\MMM_2}{r_b}
				\ \MAND \ 
				\neg \Mexists M_b. \LTCDERIVEDVALUE{M_b}{\TCV}{\MMM_2 \cdot b:r_b}{r_b}
			}{$\LTCDERIVEDVALUE{M}{\TCV}{\MMM \cdot \TCV:\GAMMA \REMOVETCVfrom \cdot b:r_b }{V} \equiv \LTCDERIVEDVALUE{M}{\GAMMA}{\MMM \cdot b:r_b}{V}$}
			\NLASTLINE{\MIMPLIES \ 
				\MMM_2  \models  \ONEEVAL{f}{()}{b}{\FRESH{b}{\TCV}}
			}{}
		\end{NDERIVATION}
	}
	
	}
\end{lemma}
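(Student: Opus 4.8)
The plan is to unfold the definition of $\EXTINDEP$ and establish, for $A \equiv \FAD{\TCV}\ONEEVAL{f}{()}{b}{\FRESH{b}{\TCV}}$ and any well constructed $\MMM^{\GAMMA}$ with $\MMM \EXTSTAR \MMM'$, the biconditional $\MMM \models A \MIFF \MMM' \models A$. I would treat the two implications separately, since they are of very different difficulty. For the \emph{extending} direction ($\MMM \models A$ implies $\MMM' \models A$), I would expand the semantics of $\FAD{\TCV}$: $\MMM \models A$ asserts that for every $\MMM_1$ with $\MMM \EXTSTAR \MMM_1$ one has $\MMM_1 \cdot \TCV : \GAMMA_1\REMOVETCVfrom \models \ONEEVAL{f}{()}{b}{\FRESH{b}{\TCV}}$. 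Because $\EXTSTAR$ is transitive and $\MMM \EXTSTAR \MMM'$, every extension $\MMM_1$ of $\MMM'$ is also an extension of $\MMM$; the extensions quantified over at $\MMM'$ therefore form a subset of those at $\MMM$, and the implication follows with no auxiliary machinery.

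The \emph{contracting} direction ($\MMM' \models A$ implies $\MMM \models A$) is the crux, because a given extension $\MMM \EXTSTAR \MMM_1$ need not extend $\MMM'$, so the hypothesis cannot be instantiated directly. The key idea is to write $\MMM' \equiv \MMM \cdot \VEC{\MMM}'$ and $\MMM_1 \equiv \MMM \cdot \VEC{\MMM}_1$, and to \emph{glue} the two extensions: from $\MMM \EXTSTAR \MMM \cdot \VEC{\MMM}'$ and $\MMM \EXTSTAR \MMM \cdot \VEC{\MMM}_1$, Lemma~\ref{lem:two_extensions_combine_to_make_extensions_of_each_other} yields $\MMM \cdot \VEC{\MMM}' \EXTSTAR \MMM \cdot \VEC{\MMM}' \cdot \VEC{\MMM}_1$, so the combined model $\MMM \cdot \VEC{\MMM}' \cdot \VEC{\MMM}_1$ is a legitimate extension of $\MMM' = \MMM \cdot \VEC{\MMM}'$ to which the hypothesis applies. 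This delivers $\ONEEVAL{f}{()}{b}{\FRESH{b}{\TCV}}$ at the combined model with $\TCV$ assigned $(\GAMMA \PLUSG \VEC{\GAMMA}' \PLUSG \VEC{\GAMMA}_1)\REMOVETCVfrom$.

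It then remains to transfer this evaluation formula down to the smaller assignment $(\GAMMA \PLUSG \VEC{\GAMMA}_1)\REMOVETCVfrom$, i.e. to delete the intermediate context $\VEC{\MMM}'$ coming from $\MMM'$; I would isolate this as a separate auxiliary claim. The evaluation component, that $f()$ still produces the same name $r_b$, follows because $f \in \DOM{\GAMMA}$ is already present in the base model $\MMM$, so evaluation is unaffected by the extra mappings by Lemma~\ref{lem:eval_under_extensions_are_equivalent}, with Lemma~\ref{lem:LTC_derived_values_unaffected_by_TCV_addition/removal} handling the $\TCV$-bookkeeping. The freshness component is the delicate point: I must turn $\neg\Mexists M_b.\,\LTCDERIVEDVALUE{M_b}{\TCV}{\cdots}{r_b}$ over the larger context into the analogous non-derivability over the smaller context. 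Fortunately this is monotone in the right direction, since a smaller LTC derives fewer names; I would rewrite $\FRESH{b}{\TCV}$ as non-derivability from $\GAMMA \PLUSG \VEC{\GAMMA}' \PLUSG \VEC{\GAMMA}_1$, weaken along the subset $\TCTYPES{\GAMMA \PLUSG \VEC{\GAMMA}' \PLUSG \VEC{\GAMMA}_1}{\GAMMA \PLUSG \VEC{\GAMMA}_1}$, then drop $\VEC{\MMM}'$ from the model itself via Lemma~\ref{lem:eval_under_extensions_are_equivalent}, and finally repackage as $\FRESH{b}{\TCV}$ at $\MMM_2 \equiv \MMM \cdot \VEC{\MMM}_1 \cdot \TCV : (\GAMMA \PLUSG \VEC{\GAMMA}_1)\REMOVETCVfrom$. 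Re-applying the semantics of $\FAD{\TCV}$ then gives $\MMM \models A$.

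The main obstacle I anticipate is exactly this last transfer: guaranteeing that deleting $\VEC{\MMM}'$ neither alters the name generated by $f()$ nor accidentally renders $r_b$ derivable. Both hazards are controlled by the side condition $\AN{\VEC{\MMM}_1} \cap \AN{\VEC{\MMM}'} \subseteq \AN{\MMM}$ required to invoke Lemmas~\ref{lem:two_extensions_combine_to_make_extensions_of_each_other} and~\ref{lem:Gamma_derived_terms_maintain_extension_when_added}, so I would verify carefully that any fresh names occurring in one extension can be renamed apart from those in the other; it is precisely this disjointness of freshly generated names that licenses the gluing and makes the contraction sound.
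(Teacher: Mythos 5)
Your proposal is correct and follows essentially the same route as the paper's own proof: the extending direction by restricting the quantifier over extensions, and the contracting direction by decomposing $\MMM' \equiv \MMM \cdot \VEC{\MMM}'$, gluing the two extensions via Lemma~\ref{lem:two_extensions_combine_to_make_extensions_of_each_other}, and then transferring the evaluation formula down to the smaller $\TCV$-assignment through the same auxiliary claim (evaluation preserved by Lemma~\ref{lem:eval_under_extensions_are_equivalent}, freshness by weakening the LTC and dropping $\VEC{\MMM}'$). The side condition on names that you flag as the main hazard is exactly the one the paper invokes to license the gluing.
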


\begin{lemma}[\DONE The formula used in the reasoning for $\lambda x. \GENSYM()$ is $\EXTINDEP$]
	\label{lem:GS_formula_Plus_FORALL_is_EXTINDEP}
	\[
	\FAD{\TCV} \FORALL{x}{\TCV} \ONEEVAL{f}{x}{b}{\FRESH{b}{\TCV \PLUSV x}}-\EXTINDEP
	\]
	(Similar to proof above)
	\\
	\PROOFFINISHED
	{
		Proof:
		\\
		\[
		\Mforall \MMM^{\GAMMA}, \MMM'. \MMM \EXTSTAR \MMM' \MIMPLIES \MMM \models \FAD{\TCV} \FORALL{x}{\TCV} \ONEEVAL{f}{x}{b}{\FRESH{b}{\TCV \PLUSV x}} \ \MIFF \ \MMM' \models \FAD{\TCV} \FORALL{x}{\TCV} \ONEEVAL{f}{x}{b}{\FRESH{b}{\TCV \PLUSV x}}
		\]
		\\
		{\DONE Extending} ($\MMM \models \FAD{\TCV} \FORALL{x}{\TCV} \ONEEVAL{f}{x}{b}{\FRESH{b}{\TCV \PLUSV x}} \ \MIMPLIES \ \MMM' \models \FAD{\TCV} \FORALL{x}{\TCV} \ONEEVAL{f}{x}{b}{\FRESH{b}{\TCV \PLUSV x}}$):
		\begin{NDERIVATION}{1}
			\NLINE{\text{Assume: $\GAMMA$ s.t. $\FORMULATYPES{\GAMMA}{\FAD{\TCV} \FORALL{x}{\TCV} \ONEEVAL{f}{x}{b}{\FRESH{b}{\TCV \PLUSV x}}}$ }}{}
			\NLINE{\text{Assume some $\MMM^{\GAMMA_d}$, $\MMM'$ s.t. $\CONSTRUCT{\GAMMA}{\MMM}$ and  $\MMM \EXTSTAR\MMM'$ and $\MMM \models \FAD{\TCV} \FORALL{x}{\TCV} \ONEEVAL{f}{x}{b}{\FRESH{b}{\TCV \PLUSV x}}$ }}{}
			\NLINE{\MIFF \ \Mforall \MMM_1^{\GAMMA_1}. \MMM\EXTSTAR \MMM_1 \MIMPLIES \MMM_1 \cdot \TCV:\GAMMA_1 \REMOVETCVfrom \models \FORALL{x}{\TCV} \ONEEVAL{f}{x}{b}{\FRESH{b}{\TCV \PLUSV x}}  }{Sem. $\FAD{\TCV}$}
			\NLINE{\MIMPLIES \ \Mforall \MMM_1^{\GAMMA_1}. \MMM \EXTSTAR \MMM' \EXTSTAR \MMM_1 \MIMPLIES \MMM_1 \cdot \TCV:\GAMMA_1 \REMOVETCVfrom \models \FORALL{x}{\TCV} \ONEEVAL{f}{x}{b}{\FRESH{b}{\TCV \PLUSV x}} }{Subset $\Mforall \MMM_1$}
			\NLINE{\MIMPLIES \ \Mforall \MMM_1^{\GAMMA_1}. \MMM' \EXTSTAR \MMM_1 \MIMPLIES \MMM_1 \cdot \TCV:\GAMMA_1 \REMOVETCVfrom \models  \FORALL{x}{\TCV} \ONEEVAL{f}{x}{b}{\FRESH{b}{\TCV \PLUSV x}}  }{Remove $\MMM\EXTSTAR$}
			\NLASTLINE{\MIMPLIES \ \MMM' \models \FAD{\TCV} \FORALL{x}{\TCV} \ONEEVAL{f}{x}{b}{\FRESH{b}{\TCV \PLUSV x}}  }{}
		\end{NDERIVATION}

		{\DONE  Contracting}  ($\MMM \models \FAD{\TCV} \FORALL{x}{\TCV} \ONEEVAL{f}{x}{b}{\FRESH{b}{\TCV \PLUSV x}} \ \MIMPLIEDBY \ \MMM' \models \FAD{\TCV} \FORALL{x}{\TCV} \ONEEVAL{f}{x}{b}{\FRESH{b}{\TCV \PLUSV x}}$):
		\begin{NDERIVATION}{7}
			\NLINE{\text{Assume: $\GAMMA$, $\MMM^{\GAMMA}$ s.t. $\FORMULATYPES{\GAMMA}{\FAD{\TCV} \FORALL{x}{\TCV} \ONEEVAL{f}{x}{b}{\FRESH{b}{\TCV \PLUSV x}}}$ }}{}
			\NLINE{\text{Assume some $\MMM'$ s.t. $\MMM \EXTSTAR\MMM'$ and $\MMM' \models \FAD{\TCV} \FORALL{x}{\TCV} \ONEEVAL{f}{x}{b}{\FRESH{b}{\TCV \PLUSV x}}$ }}{}
			\NLINE{\text{Let: } \MMM_{1d}' \equiv \MMM_1^{\GAMMA_1} \cdot \TCV:\GAMMA_1\REMOVETCVfrom 
				\qquad 
				(\equiv \MMM \cdot \VEC{\MMM}' \cdot \VEC{\MMM}_1 \cdot \TCV:(\GAMMA \PLUSG \VEC{\GAMMA}' \PLUSG \VEC{\GAMMA}_1)\REMOVETCVfrom)
				\text{ and } \MMM_{1d}' 
				
			}{}
			\NLINE{\MIFF \ \Mforall \MMM_1^{\GAMMA_1}.  \
					\MMM' \EXTSTAR \MMM_1 
						\
						\MIMPLIES \
						\Mforall M_x. 
							\LTCDERIVEDVALUE{M_x}{\TCV}{\MMM_{1d}'}{V_x}
							\
							\MIMPLIES \
							\MMM_{1d}' \cdot x:V_x \models \ONEEVAL{f}{x}{b}{\FRESH{b}{\TCV \PLUSV x}}
			}{Sem. $\FAD{\TCV}$, $\FORALL{x}{\TCV}$}
			\NPLINE{\MIFF \ \Mforall \VEC{\MMM}_1^{\VEC{\GAMMA}_1}.
				\begin{array}[t]{l}
					\MMM \cdot \VEC{\MMM}' \EXTSTAR \MMM \cdot \VEC{\MMM}' \cdot \VEC{\MMM}_1 
						\\
						\MIMPLIES \
						\Mforall M_x. 
							\LTCDERIVEDVALUE{M_x}{\TCV}{\MMM_{1d}'}{V_x}
							\
							\MIMPLIES \
							\MMM_{1d}'  \cdot x:V_x  \models \ONEEVAL{f}{x}{b}{\FRESH{b}{\TCV \PLUSV x}}
					\end{array}
			}{3cm}{
				$\MMM'^{\GAMMA'} \equiv \MMM^{\GAMMA} \cdot \VEC{\MMM}'^{\VEC{\GAMMA}'}$
				\\
				$\MMM_{1}^{\GAMMA_{1}} \equiv \MMM^{\GAMMA} \cdot \VEC{\MMM}'^{\VEC{\GAMMA}'} \cdot \VEC{\MMM}_{1}^{\VEC{\GAMMA}_1}$
			}
			\NPLINE{\MIMPLIES \ \Mforall \VEC{\MMM}_1^{\VEC{\GAMMA}_1}.
				\begin{array}[t]{l}
					\MMM \cdot \VEC{\MMM}' \EXTSTAR \MMM \cdot \VEC{\MMM}' \cdot \VEC{\MMM}_1 
					\ \MAND \
					\MMM \EXTSTAR \MMM \cdot \VEC{\MMM}_{1}
					\\
						\MIMPLIES \
						\Mforall M_x. \
							\LTCDERIVEDVALUE{M_x}{\TCV}{\MMM_{1d}'}{V_x}
							\
							\MIMPLIES \
							\MMM_{1d}'  \cdot x:V_x \models \ONEEVAL{f}{x}{b}{\FRESH{b}{\TCV \PLUSV x}}
				\end{array}
				\hspace{-1cm}
			}{3.5cm}{
				Subset $\Mforall \VEC{\MMM}_1$
				\\
				only $\VEC{\MMM}_1$ s.t. $\MMM \EXTSTAR \MMM \cdot \VEC{\MMM}_1$
			}
			\NPLINE{\MIMPLIES \ \Mforall \VEC{\MMM}_1^{\VEC{\GAMMA}_1}.
				\begin{array}[t]{l}
					\MMM \cdot \VEC{\MMM}' \EXTSTAR \MMM \cdot \VEC{\MMM}' \cdot \VEC{\MMM}_1 
					\ \MAND \
					\MMM \EXTSTAR \MMM \cdot \VEC{\MMM}_{1}
						\\
						\MIMPLIES \
						\Mforall M_x. 
							\LTCDERIVEDVALUE{M_x}{\GAMMA \PLUSG \VEC{\GAMMA}' \PLUSG \VEC{\GAMMA}_1}{\MMM_{1d}' \REMOVEVARIABLE \TCV}{V_x}
							\
							\MIMPLIES \ \MMM_{1d}'  \cdot x:V_x  \models \ONEEVAL{f}{x}{b}{\FRESH{b}{\TCV \PLUSV x}}
				\end{array} 
			}{3cm}{
				$\SEM{\TCV}{\MMM_{1d}} \equiv \GAMMA \PLUSG \VEC{\GAMMA}' \PLUSG \VEC{\GAMMA}_1$
				\\
				Lemma \ref{lem:LTC_derived_values_unaffected_by_TCV_addition/removal}
			}
			\NLINE{\text{Let: }
				\MMM_{2} \equiv \MMM \cdot \VEC{\MMM}_{1} \cdot \TCV:(\GAMMA \PLUSG \VEC{\GAMMA}_{1}) \REMOVETCVfrom	
			}{}
			\NPLINE{\MIMPLIES \ \Mforall \VEC{\MMM}_1^{\VEC{\GAMMA}_1}.
				\begin{array}[t]{l}
					\MMM \cdot \VEC{\MMM}' \EXTSTAR \MMM \cdot \VEC{\MMM}' \cdot \VEC{\MMM}_1 
					\ \MAND \
					\MMM \EXTSTAR \MMM \cdot \VEC{\MMM}_{1}
					\\
					(\MIMPLIES \
					\Mforall M_x. 
					\LTCDERIVEDVALUE{M_x}{\GAMMA \PLUSG \VEC{\GAMMA}_{1}}{\MMM \cdot \VEC{\MMM}_{1}}{V_x}
					\
					\MIMPLIES \ \MMM_{1d}'  \cdot x:V_x  \models \ONEEVAL{f}{x}{b}{\FRESH{b}{\TCV \PLUSV x}})
					\\
					\MIMPLIES \
					\Mforall M_x. 
					\LTCDERIVEDVALUE{M_x}{\TCV}{\MMM_2}{V_x}
					\
					\MIMPLIES \ \MMM_{1d}'  \cdot x:V_x  \models \ONEEVAL{f}{x}{b}{\FRESH{b}{\TCV \PLUSV x}}
				\end{array} 
			}{3cm}{
				Subset $\Mforall M_x$
				\\
				$\MMM_2 \REMOVEVARIABLE \TCV \subseteq \MMM_{1d} \REMOVEVARIABLE \TCV $
			}
			\NPLINE{\MIMPLIES \ \Mforall \VEC{\MMM}_1^{\VEC{\GAMMA}_1}.
				\begin{array}[t]{l}
					\MMM \cdot \VEC{\MMM}' \EXTSTAR \MMM \cdot \VEC{\MMM}' \cdot \VEC{\MMM}_1 
					\ \MAND \
					\MMM \EXTSTAR \MMM \cdot \VEC{\MMM}_{1}
					\\
					\MIMPLIES \
					\Mforall M_x. 
					\LTCDERIVEDVALUE{M_x}{\TCV}{\MMM_2}{V_x}
					\
					\MIMPLIES \ \MMM_2  \cdot x:V_x  \models \ONEEVAL{f}{x}{b}{\FRESH{b}{\TCV \PLUSV x}}
				\end{array} 
			}{3cm}{
				See below, Line 19
			}
			\NPLINE{\MIMPLIES \ \Mforall \VEC{\MMM}_1^{\VEC{\GAMMA}_1}. 
				\begin{array}[t]{l}
				\MMM \EXTSTAR \MMM \cdot \VEC{\MMM}_{1}
				\\
				\MIMPLIES \
				\MMM \cdot \VEC{\MMM}_{1} \cdot \TCV:(\GAMMA \PLUSG \VEC{\GAMMA}_{1})\REMOVETCVfrom   \models \FORALL{x}{\TCV} \ONEEVAL{f}{x}{b}{\FRESH{b}{\TCV \PLUSV x}}
				\end{array}
			}{5cm}{
				Lemma \ref{lem:two_extensions_combine_to_make_extensions_of_each_other}, 
				$\AN{\VEC{\MMM}_1} \cap \AN{\VEC{\MMM}'} \subseteq \AN{\MMM}$
				\\
				guarantees constraints
			}
			\NLINE{\MIMPLIES \ \Mforall \MMM_1^{\GAMMA_1}. 
				\MMM \EXTSTAR \MMM_1
				\
				\MIMPLIES \
				\MMM_1 \cdot \TCV:\GAMMA_1\REMOVETCVfrom   \models \ONEEVAL{f}{x}{b}{\FRESH{b}{\TCV \PLUSV x}}
			}{
				$\MMM_1^{\GAMMA_1} \equiv \MMM^{\GAMMA} \cdot \VEC{\MMM}_1^{\VEC{\GAMMA}_1}$
			}
			\NLASTLINE{\MIMPLIES \ \MMM  \models \FAD{\TCV} \ONEEVAL{f}{x}{b}{\FRESH{b}{\TCV \PLUSV x}}
			}{
				Sem. $\FAD{\TCV}$
			}
			
		\end{NDERIVATION}
	
	Proof of
	\[
	\Mforall \VEC{\MMM}_1^{\VEC{\GAMMA}_1}.
	\begin{array}[t]{l}
		\MMM \cdot \VEC{\MMM}' \EXTSTAR \MMM \cdot \VEC{\MMM}' \cdot \VEC{\MMM}_1 
		\ \MAND \
		\MMM \EXTSTAR \MMM \cdot \VEC{\MMM}_{1}
		\\
		\MIMPLIES 
		\begin{array}[t]{l}
			\MMM_{1d} \equiv \MMM \cdot \VEC{\MMM}' \cdot \VEC{\MMM}_1 \cdot \TCV:(\GAMMA \PLUSG \VEC{\GAMMA}' \PLUSG \VEC{\GAMMA}_1)\REMOVETCVfrom
			\\
			\MAND \
			\MMM_{2} \equiv \MMM \cdot \VEC{\MMM}_{1} \cdot \TCV:(\GAMMA \PLUSG \VEC{\GAMMA}_{1}) \REMOVETCVfrom
			\\
			\MAND \
			\Mforall M_x. 
			\begin{array}[t]{l}
				\LTCDERIVEDVALUE{M_x}{\GAMMA \PLUSG \VEC{\GAMMA}_{1}}{\MMM \cdot \VEC{\MMM}_{1}}{V_x}  
				\\
				\MIMPLIES 
				\\
				(\MMM_{1d}  \cdot x:V_x  \models \ONEEVAL{f}{x}{b}{\FRESH{b}{\TCV \PLUSV x}}
				\ \MIMPLIES \
				\MMM_2  \cdot x:V_x  \models \ONEEVAL{f}{x}{b}{\FRESH{b}{\TCV \PLUSV x}})
			\end{array}
		\end{array}
	\end{array}
	\]
	\PROOFFINISHED
	{
		\begin{NDERIVATION}{19}
			\NLINE{
				\MMM  \cdot \VEC{\MMM}' \EXTSTAR \MMM \cdot \VEC{\MMM}' \cdot \VEC{\MMM}_{1}
				\ \MIFF \ 
				\MMM  \cdot \VEC{\MMM}_{1} \EXTSTAR \MMM \cdot \VEC{\MMM}' \cdot \VEC{\MMM}_{1}
			}{Lemma \ref{lem:two_extensions_combine_to_make_extensions_of_each_other}}
			\NLINE{\MIMPLIES \
				\MMM  \cdot \VEC{\MMM}_{1} \cdot b:r_b \EXTSTAR \MMM \cdot \VEC{\MMM}' \cdot \VEC{\MMM}_{1} \cdot b:r_b
			}{Lemma \ref{lem:Gamma_derived_terms_maintain_extension_when_added}}
			\NLINE{\text{
					Let $\MMM_{3x}^{\GAMMA_{3x}} \equiv \MMM \cdot \VEC{\MMM}' \cdot \VEC{\MMM}_{1} \cdot \TCV:(\GAMMA \PLUSG \VEC{\GAMMA}' \PLUSG \VEC{\GAMMA}_{1})\REMOVETCVfrom \cdot x:V_x$ 
				}}{}
			\NLINE{\text{
					Let 
					$\MMM_{2x}^{\GAMMA_{2x}} \equiv \MMM^{\GAMMA} \cdot \VEC{\MMM}_{1} \cdot \TCV:(\GAMMA \PLUSG \VEC{\GAMMA}_{1}) \REMOVETCVfrom \cdot  x:V_x$	
			}}{}
			\NLINE{\text{
					Assume: $\MMM \EXTSTAR \MMM \cdot \VEC{\MMM}' 
					\ \MAND \
					\MMM \cdot \VEC{\MMM}' \EXTSTAR \MMM \cdot \VEC{\MMM}' \cdot \VEC{\MMM}_1
					\ \MAND \
					\MMM \EXTSTAR \MMM \cdot \VEC{\MMM}_1$
			}}{}
			\NLINE{\text{
					Assume: $\MMM_{3x}  \models  \ONEEVAL{f}{x}{b}{\FRESH{b}{\TCV \PLUSV x}} $
			}}{}
			\NLINE{\MIFF \
				\LTCDERIVEDVALUE{fx}{\GAMMA_{3x}}{\MMM_{3x}}{r_b} 
				\ \MAND \ 
				\neg \Mexists M_b. \LTCDERIVEDVALUE{M_b}{\TCV \PLUSV x}{\MMM_{3x} \cdot b:r_b}{r_b}
			}{Sem. $\ONEEVAL{}{}{}{}$, $\FRESH{}{}$}
			\NLINE{\MIFF \
				\LTCDERIVEDVALUE{fx}{\GAMMA_{2x}}{\MMM_{2x}}{r_b} 
				\ \MAND \ 
				\neg \Mexists M_b. \LTCDERIVEDVALUE{M_b}{\TCV \PLUSV x}{\MMM_{3x} \cdot b:r_b}{r_b}
			}{$f,x \in \SEM{\GAMMA}{\MMM} \subseteq \SEM{\GAMMA_{2x}}{\MMM_{2x}} \subseteq \SEM{\GAMMA_{3x}}{\MMM_{3x}}$, Lemma \ref{lem:eval_under_extensions_are_equivalent}}
			\NLINE{\MIFF \
				\LTCDERIVEDVALUE{fx}{\GAMMA_{2x}}{\MMM_{2x}}{r_b} 
				\ \MAND \ 
				\neg \Mexists M_b. \LTCDERIVEDVALUE{M_b}{\GAMMA_{3x}}{(\MMM_{3x} \REMOVEVARIABLE\TCV) \cdot b:r_b}{r_b}
			}{$\SEM{\TCV}{\MMM_{3x}} = \GAMMA_{3x}$, Lemma \ref{lem:LTC_derived_values_unaffected_by_TCV_addition/removal}}
			\NLINE{\MIMPLIES \
				\LTCDERIVEDVALUE{fx}{\GAMMA_{2x}}{\MMM_{2x}}{r_b} 
				\ \MAND \ 
				\neg \Mexists M_b. \LTCDERIVEDVALUE{M_b}{\GAMMA_{2x}}{(\MMM_{3x} \REMOVEVARIABLE \TCV) \cdot b:r_b}{r_b}
			}{Subset, $\TCTYPES{\GAMMA \PLUSG \VEC{\GAMMA}' \PLUSG \VEC{\GAMMA}_1}{\GAMMA \PLUSG \VEC{\GAMMA}_1}$}
			\NLINE{\MIMPLIES \
				\LTCDERIVEDVALUE{fx}{\GAMMA_{2x}}{\MMM_{2x}}{r_b} 
				\ \MAND \ 
				\neg \Mexists M_b. \LTCDERIVEDVALUE{M_b}{\GAMMA_{2x}}{(\MMM_{2x} \REMOVEVARIABLE \TCV) \cdot b:r_b}|{r_b}
			}{Lemma \ref{lem:extensions_cannot_reveal_old_names}, line 20}
			\NLINE{\MIMPLIES \ 
				\LTCDERIVEDVALUE{fx}{\GAMMA_{2x}}{\MMM_{2x}}{r_b} 
				\ \MAND \ 
				\neg \Mexists M_b. \LTCDERIVEDVALUE{M_b}{\TCV \PLUSV x}{\MMM_{2x} \cdot b:r_b}{r_b}
			}{$\SEM{\TCV}{\MMM_{2x}} = \GAMMA_{2x}$, Lemma \ref{lem:LTC_derived_values_unaffected_by_TCV_addition/removal}}
			\NLASTLINE{\MIMPLIES \ 
				\MMM_{2x}  \models  \ONEEVAL{f}{x}{b}{\FRESH{b}{\TCV \PLUSV x}}
			}{}
		\end{NDERIVATION}
	}
	}
\end{lemma}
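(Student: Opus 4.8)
The plan is to prove both directions of the $\EXTINDEP$ biconditional, reusing the skeleton of the proof of Lemma~\ref{lem:GS_formula_is_EXTINDEP} but carrying the extra restricted quantifier $\FORALL{x}{\TCV}$ and the enlarged freshness context $\FRESH{b}{\TCV \PLUSV x}$ through every step. Writing $A \DEFEQ \FAD{\TCV} \FORALL{x}{\TCV} \ONEEVAL{f}{x}{b}{\FRESH{b}{\TCV \PLUSV x}}$, I must show that whenever $\MMM \EXTSTAR \MMM'$ and both models are appropriately typed, $\MMM \models A$ iff $\MMM' \models A$.

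The extending direction ($\MMM \models A \MIMPLIES \MMM' \models A$) is routine. Unfolding the semantics of $\FAD{\TCV}$, the hypothesis quantifies over all $\MMM_1$ with $\MMM \EXTSTAR \MMM_1$; since $\MMM \EXTSTAR \MMM' \EXTSTAR \MMM_1$ gives $\MMM \EXTSTAR \MMM_1$ by transitivity of $\EXTSTAR$, the conclusion's quantification over extensions of $\MMM'$ is just a sub-range of the hypothesis's, so the inner formula $\FORALL{x}{\TCV} \ONEEVAL{f}{x}{b}{\FRESH{b}{\TCV \PLUSV x}}$ is inherited verbatim and no manipulation of the inner formula is required.

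The contracting direction ($\MMM' \models A \MIMPLIES \MMM \models A$) is the substance and the main obstacle. Writing $\MMM' = \MMM \cdot \VEC{\MMM}'$, I must show that for an arbitrary extension $\MMM \cdot \VEC{\MMM}_1$ of $\MMM$ the inner formula holds. The strategy is to merge the two extensions $\VEC{\MMM}'$ and $\VEC{\MMM}_1$: since both extend $\MMM$, Lemma~\ref{lem:two_extensions_combine_to_make_extensions_of_each_other} yields $\MMM \cdot \VEC{\MMM}' \EXTSTAR \MMM \cdot \VEC{\MMM}' \cdot \VEC{\MMM}_1$, so the hypothesis $\MMM' \models A$ can be instantiated at the merged model $\MMM \cdot \VEC{\MMM}' \cdot \VEC{\MMM}_1 \cdot \TCV:(\cdots)\REMOVETCVfrom$. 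I then transport the validity of the inner evaluation formula back down to the smaller context $\MMM \cdot \VEC{\MMM}_1 \cdot \TCV:(\cdots)\REMOVETCVfrom$, i.e.\ delete the spurious $\VEC{\MMM}'$ component. Because $f$ and $x$ already live in $\MMM \cdot \VEC{\MMM}_1$, Lemma~\ref{lem:eval_under_extensions_are_equivalent} and Lemma~\ref{lem:LTC_derived_values_unaffected_by_TCV_addition/removal} guarantee $f x$ evaluates to the same $r_b$ in both models, and Lemma~\ref{lem:Gamma_derived_terms_maintain_extension_when_added} lifts the $b:r_b$ binding across the combined extension — the extra bookkeeping for $x$ derived from $\TCV$ being precisely what is absent in the $\GENSYM$-only case of Lemma~\ref{lem:GS_formula_is_EXTINDEP}.

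I expect the genuinely hard step to be the isolated sub-claim that reducing the $\TCV$-mapped context from $\GAMMA \PLUSG \VEC{\GAMMA}' \PLUSG \VEC{\GAMMA}_1$ to $\GAMMA \PLUSG \VEC{\GAMMA}_1$ preserves both the positive evaluation $\LTCDERIVEDVALUE{fx}{\cdots}{\cdots}{r_b}$ and, crucially, the negative freshness clause $\neg\Mexists M_b.\ \LTCDERIVEDVALUE{M_b}{\TCV \PLUSV x}{\cdots}{r_b}$. The positive part is mere monotonicity of derivability (Lemma~\ref{lem:eval_under_extensions_are_equivalent}); the negative part is delicate, since dropping variables from an LTC could in principle turn a previously underivable name into a derivable one. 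It is exactly here that Lemma~\ref{lem:extensions_cannot_reveal_old_names} must be invoked to show $\VEC{\MMM}'$ cannot help derive the freshly produced $r_b$, combined with the TCV-erasure of Lemma~\ref{lem:LTC_derived_values_unaffected_by_TCV_addition/removal}, after re-expressing both contexts as extensions of a common model. Carrying this reduction through with the freshness predicate pointing at $\TCV \PLUSV x$ rather than $\TCV$ is the only new content beyond the previous lemma, and discharging it cleanly is the crux of the argument.
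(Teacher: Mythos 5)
Your proposal follows essentially the same route as the paper's proof: the extending direction by restricting the range of the $\FAD{\TCV}$ quantifier via transitivity of $\EXTSTAR$, and the contracting direction by merging the two extensions with Lemma~\ref{lem:two_extensions_combine_to_make_extensions_of_each_other}, instantiating the hypothesis at the merged model, and then discharging the context-reduction sub-claim using Lemmas~\ref{lem:eval_under_extensions_are_equivalent}, \ref{lem:LTC_derived_values_unaffected_by_TCV_addition/removal} and \ref{lem:extensions_cannot_reveal_old_names}, with the negative freshness clause correctly identified as the crux. This matches the paper's argument, including the isolated sub-proof for shrinking the $\TCV$-mapped LTC from $\GAMMA \PLUSG \VEC{\GAMMA}' \PLUSG \VEC{\GAMMA}_1$ to $\GAMMA \PLUSG \VEC{\GAMMA}_1$.
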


\label{sec:apndx_formulae_properties}

\begin{lemma}[\DONE  \SYNEXTINDEP \ implies \EXTINDEP ]
	\label{lem:apndx_SIMPLEEXTINDEP_implies_EXTINDEP}
	\[
	A \text{ is \SYNEXTINDEP} \ \MIMPLIES \ A \text{ is \EXTINDEP}
	\]
	
	\begin{proof}
		i.e.
		$
		\begin{array}[t]{l}
			A -\text{\SYNEXTINDEP}
			\\
			\MIMPLIES
			\
			\Mforall \GAMMA, \MMM^{\GAMMA}, \GAMMA', \MMM'^{\GAMMA'}. \FORMULATYPES{\GAMMA}{A} \ \MAND \ \MMM \EXTSTAR \MMM' \MIMPLIES (\MMM \models A \ \MIFF \ \MMM' \models A)
		\end{array}
		$
		\\
		Proof by induction on the structure of $A$, using Lemma \ref{lem:extensions_give_same_semantics_for_type_contexts} (i.e. $\GAMMA_0 \subseteq \GAMMA \ \MIMPLIES \ \SEM{\GAMMA_0}{\MMM}\equiv \SEM{\GAMMA_0}{\MMM'}$) and Lemma \ref{lem:semantics_expressions_equal_under_model_extensions}  (i.e. $\SEM{e}{\MMM} \equiv \SEM{e}{\MMM'}$)
		\begin{enumerate}
			\item Base Formulas:
			\begin{itemize}
				\item[$A\equiv \TRUTH, \FALSITY$] Clearly hold.
				\item[$A\equiv \FRESH{x}{\GAMMA_0}$] 
				$\FORMULATYPES{\GAMMA}{\FRESH{x}{\GAMMA_0}}$ 
				implies 
				$\EXPRESSIONTYPES{\GAMMA}{x}{\NAME}$ and $\TCTYPES{\GAMMA}{\GAMMA_0}$.
				\\
				Lemma  \ref{lem:semantics_expressions_equal_under_model_extensions}  implies $\SEM{x}{\MMM}\equiv \SEM{x}{\MMM'}$.
				\\ 
				Lemma \ref{lem:extensions_give_same_semantics_for_type_contexts} implies $\SEM{\GAMMA_0}{\MMM} \equiv \SEM{\GAMMA_0}{\MMM'}$.
				\\
				Lemma \ref{lem:eval_under_extensions_are_equivalent} implies
				$\Mexists M. \ \LTCDERIVEDVALUE{M}{\GAMMA_0}{\MMM}{ \SEM{x}{\MMM}}
				\ \MIFF \
				\Mexists M. \ \LTCDERIVEDVALUE{M}{\GAMMA_0}{\MMM'}{ \SEM{x}{\MMM'}}$
				
			\end{itemize}
			\ \\
			\item Core Inductive Cases:
			\begin{itemize}
				\item[$A\equiv \neg C_1$] Holds from IH on $C_1$ as $\MMM \models C_1 \ \MIFF \ \MMM' \models C_1$ 
				\\
				hence $\MMM \not\models C_1 \ \MIFF \ \MMM' \not\models C_1$,
				hence $\MMM \models \neg C_1 \ \MIFF \ \MMM' \models \neg C_1$.
				
				\item[$A\equiv C_1 \PAND C_2$] 
				Holds from IH on $C_1$ and $C_2$, as $\MMM \models C_i \ \MIFF \ \MMM' \models C_i$ implies
				\\
				$\MMM \models C_1 \PAND C_2 \ \MIFF \ \MMM \models C_1 \MAND \MMM \models C_2 \ \MIFF \ \MMM' \models C_1 \MAND \MMM' \models C_2 \ \MIFF \ \MMM' \models C_1 \PAND C_2$.
				
				\item[$A\equiv C_1 \POR C_2$] Derivable given $A \POR B \equiv \neg (\neg A \PAND \neg B)$
				\item[$A\equiv C_1 \PIMPLIES C_2$] Derivable given $A \PIMPLIES B \equiv \neg (A \PAND \neg B)$
				\item[$A\equiv \ONEEVAL{u}{e}{m}{C_1}$] 
				Lemma \ref{lem:eval_under_extensions_are_equivalent} implies $\LTCDERIVEDVALUE{ue}{\GAMMA}{\MMM}{V_m} \ \MIFF \ \LTCDERIVEDVALUE{ue}{\GAMMA}{\MMM'}{V_m}$.
				\\
				Lemma \ref{lem:Gamma_derived_terms_maintain_extension_when_added} implies $ \MMM \EXTSTAR \MMM' \MAND \LTCDERIVEDVALUE{ue}{\GAMMA}{\MMM}{V_m} \MIMPLIES \MMM \cdot m:V_m \EXTSTAR \MMM' \cdot m:V_m $.
				\\
				Induction on $C_1$ implies $\MMM \cdot m:V_m \models C_1 \ \MIFF \ \MMM' \cdot m:V_m \models C_1$.
				\\
				Hence
				$
				\begin{array}[t]{l}
					\MMM \models \ONEEVAL{u}{e}{m}{C_1} 
					\\ \MIFF \ 
					\LTCDERIVEDVALUE{ue}{\GAMMA}{\MMM}{V_m} \ \MAND \ \MMM \cdot m : V_m \models C_1
					\\ \MIFF \ 
					\LTCDERIVEDVALUE{ue}{\GAMMA}{\MMM'}{V_m} \ \MAND \ \MMM' \cdot m : V_m \models C_1 
					\\ \MIFF \ 
					\MMM' \models \ONEEVAL{u}{e}{m}{C_1} 
				\end{array}
				$.

				\item[$A\equiv \FORALL{x}{\GAMMA_0} C_1$] 
				Lemma \ref{lem:extensions_give_same_semantics_for_type_contexts} implies $\SEM{\GAMMA_0}{\MMM} \equiv \SEM{\GAMMA_0}{\MMM'}$
				\\
				Lemma \ref{lem:eval_under_extensions_are_equivalent} implies
				$\Mexists M. \ \LTCDERIVEDVALUE{M}{\GAMMA_0}{\MMM}{ \SEM{x}{\MMM}}
				\ \MIFF \
				\Mexists M. \ \LTCDERIVEDVALUE{M}{\GAMMA_0}{\MMM'}{ \SEM{x}{\MMM'}}$
				\\
				Lemma \ref{lem:Gamma_derived_terms_maintain_extension_when_added} implies $ \MMM \EXTSTAR \MMM' \MAND \LTCDERIVEDVALUE{M_x}{\GAMMA_0}{\MMM}{V_x} \MIMPLIES \MMM \cdot x:V_x \EXTSTAR \MMM' \cdot x:V_x $ 
				\\
				Induction on $C_1$ implies $\MMM \cdot x:V_x \models C_1 \ \MIFF \ \MMM' \cdot x:V_x \models C_1$
				\\
				hence 
				$
				\begin{array}[t]{lll}
					\MMM \models \FORALL{x}{\GAMMA_0} C_1 
					& \MIFF \ 
					\Mforall M.
					\LTCDERIVEDVALUE{M}{\GAMMA_0}{\MMM}{W}
					\
					\MIMPLIES \ \MMM \cdot x:W \models C_1
					&
					\text{Sem. $\FORALL{x}{\GAMMA_0}$ }
					\\ 
					&
					\MIFF \ 
					\Mforall M.
					\LTCDERIVEDVALUE{M}{\GAMMA_0}{\MMM'}{W}
					\
					\MIMPLIES \ \MMM \cdot x:W \models C_1
					&
					\begin{array}{l}
						\SEM{\GAMMA_0}{\MMM} \equiv \SEM{\GAMMA_0}{\MMM'} 
						\\
						\text{Lemma \ref{lem:eval_under_extensions_are_equivalent}}
					\end{array}
					\\ 
					&
					\MIFF \ 
					\Mforall M.
					\LTCDERIVEDVALUE{M}{\GAMMA_0}{\MMM'}{W}
					\
					\MIMPLIES \ \MMM' \cdot x:W \models C_1
					&
					\text{IH on $C_1$}
					\\ 
					& 
					\MIFF \ 
					\MMM' \models \FORALL{x}{\GAMMA_0} C_1
					& \text{Sem. $\FORALL{x}{\GAMMA_0}$ }
				\end{array}
				$
				
				\item[$A\equiv \EXISTS{x}{\GAMMA_0} C_1$] 
				Derivable from $\EXISTS{x}{\GAMMA_0} C_1 \equiv \neg \FORALL{x}{\GAMMA_0} \neg C_1$
			\end{itemize}
			\ \\
			\item $\FAD{\TCV}$-Inductive Cases:
			\begin{itemize}	
				\item[$A\equiv \FAD{\TCV} C_1 $] holds by IH on $C_1$, Knowing that $C_1$ is $\TCV$-free then:
				\\
				$\begin{array}{lll}
					\MMM \models \FAD{\TCV} C_1 
					& 
					\MIFF \ \Mforall \MMM_1^{\GAMMA_1}. \MMM \EXTSTAR \MMM_1 \MIMPLIES \MMM_1 \cdot \TCV:\GAMMA_1 \models C_1
					&
					\text{Sem. $\FAD{\TCV}$ }
					\\
					& 
					\MIMPLIES \ \Mforall \MMM_1^{\GAMMA_1}. \MMM \EXTSTAR \MMM' \EXTSTAR \MMM_1 \MIMPLIES \MMM_1 \cdot \TCV:\GAMMA_1   \models C_1
					&
					\text{Subset $\Mforall$}
					\\
					& 
					\MIMPLIES \ \Mforall \MMM_1^{\GAMMA_1}. \MMM'\EXTSTAR \MMM_1 \MIMPLIES \MMM_1 \cdot \TCV:\GAMMA_1   \models C_1
					&
					\MMM \EXTSTAR \MMM' \text{ clearly holds}
					\\
					&
					\MIMPLIES \ \MMM' \models \FAD{\TCV} C_1 
					&
					\text{Sem. $\FAD{\TCV}$ }
					\\\\
					\MMM' \models \FAD{\TCV} C_1 
					& 
					\MIFF \ \Mforall \MMM_1^{\GAMMA_1}. \MMM' \EXTSTAR \MMM_1 \MIMPLIES \MMM_1 \cdot \TCV:\GAMMA_1 \models C_1
					&
					\text{Sem. $\FAD{\TCV}$ }
					\\
					& 
					\MIMPLIES \ \Mforall \MMM_1^{\GAMMA_1}. \MMM' \EXTSTAR \MMM_1 \MIMPLIES \MMM_1  \models C_1
					&
					\text{Lemma \ref{lem:model_and_model_plus_TCV_models_equivalently_TCV-free_formula}  $C_1^{-\TCV}$ }
					\\
					& 
					\MIMPLIES \ \MMM'  \models C_1
					&
					\Mforall \MMM_1 \rightarrow \MMM'
					\\
					& 
					\MIMPLIES \ \MMM  \models C_1
					&
					\text{IH}
					\\
					& 
					\MIMPLIES \ \Mforall \MMM_2^{\GAMMA_2}. \MMM \EXTSTAR \MMM_2 \ \MIMPLIES \ \MMM_2  \models C_1
					&
					\text{IH}
					\\
					& 
					\MIMPLIES \ \Mforall \MMM_2^{\GAMMA_2}. \MMM \EXTSTAR \MMM_2 \ \MIMPLIES \ \MMM_2 \cdot \TCV:\GAMMA_2  \models C_1
					&
					\text{Lemma \ref{lem:model_and_model_plus_TCV_models_equivalently_TCV-free_formula}  $C_1^{-\TCV}$ }
					\\
					&
					\MIMPLIES \ \MMM \models \FAD{\TCV} C_1 
					&
					\text{Sem. $\FAD{\TCV}$ }
				\end{array}$
				
				\item[$A\equiv \FAD{\TCV} \FORALL{x}{\TCV} C_1 $] holds by IH on $C_1^{-\TCV}$, 
				Lemma \ref{lem:EXTINDEP-CONSTR-FADForall}
			\end{itemize}
			\ \\
			\item Two specific cases:
			\begin{itemize}
				
				\item[$\FAD{\TCV} \ONEEVAL{f}{()}{b}{\FRESH{b}{\TCV}}$]
				Lemma \ref{lem:GS_formula_is_EXTINDEP}
				
				\item[$\FAD{\TCV} \FORALL{x}{\TCV} \ONEEVAL{f}{x}{b}{\FRESH{b}{\TCV \PLUSV x}}$]
				Lemma \ref{lem:GS_formula_Plus_FORALL_is_EXTINDEP}
			\end{itemize}
		\end{enumerate}
	\end{proof}
\end{lemma}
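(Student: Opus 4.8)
The plan is to prove the implication by induction on the derivation of $A$ as a \SYNEXTINDEP\ formula, following the four clauses of Definition~\ref{def:syntactic-EXTINDEP}. Throughout, I fix an arbitrary $\GAMMA$, a well-constructed model $\MMM^{\GAMMA}$, and an extension $\MMM'$ with $\MMM \EXTSTAR \MMM'$ and $\FORMULATYPES{\GAMMA}{A}$, and aim to establish $\MMM \models A \MIFF \MMM' \models A$. The recurring technical ingredients are that expressions and type contexts are interpreted identically along extensions (Lemmas~\ref{lem:semantics_expressions_equal_under_model_extensions} and~\ref{lem:extensions_give_same_semantics_for_type_contexts}), that derivability of a value from a sub-LTC is preserved in both directions along an extension (Lemma~\ref{lem:eval_under_extensions_are_equivalent}), and that a derived value can be adjoined to both ends of an extension while preserving it (Lemma~\ref{lem:Gamma_derived_terms_maintain_extension_when_added}).

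For the base cases (clause~1): $\TRUTH$ and $\FALSITY$ are immediate; for $e = e'$ I use $\SEM{e}{\MMM} \equiv \SEM{e}{\MMM'}$ from Lemma~\ref{lem:semantics_expressions_equal_under_model_extensions}, so the two congruence tests coincide; for $\FRESH{x}{\GAMMA_0}$ I combine $\SEM{x}{\MMM} \equiv \SEM{x}{\MMM'}$, $\SEM{\GAMMA_0}{\MMM} \equiv \SEM{\GAMMA_0}{\MMM'}$, and Lemma~\ref{lem:eval_under_extensions_are_equivalent} to see that the name denoted by $x$ is derivable from $\GAMMA_0$ in $\MMM$ exactly when it is in $\MMM'$. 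For the core inductive cases (clause~2), the Boolean connectives follow directly from the induction hypotheses, reading $\POR$ and $\PIMPLIES$ through their definitional abbreviations in terms of $\neg$ and $\PAND$. For $\ONEEVAL{u}{e}{m}{C_1}$ I first transport the derived anchor value $V_m$ across the extension via Lemma~\ref{lem:eval_under_extensions_are_equivalent}, then lift $\MMM \EXTSTAR \MMM'$ to $\MMM \cdot m : V_m \EXTSTAR \MMM' \cdot m : V_m$ using Lemma~\ref{lem:Gamma_derived_terms_maintain_extension_when_added}, and close with the induction hypothesis on $C_1$; the quantifier cases $\FORALL{x}{\GAMMA_0}C_1$ and $\EXISTS{x}{\GAMMA_0}C_1$ are analogous, using equality of the ranges of quantification ($\SEM{\GAMMA_0}{\MMM}\equiv\SEM{\GAMMA_0}{\MMM'}$ together with Lemma~\ref{lem:eval_under_extensions_are_equivalent}) and again lifting the extension to $\MMM \cdot x : V_x \EXTSTAR \MMM' \cdot x : V_x$ before applying the induction hypothesis.

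The delicate part is clause~3, the quantification over type contexts. For $\FAD{\TCV}C_1$ with $C_1$ free of $\TCV$, the extending direction restricts the universal over further extensions $\MMM \EXTSTAR \MMM_1$ to those factoring through $\MMM'$; the contracting direction additionally uses Lemma~\ref{lem:model_and_model_plus_TCV_models_equivalently_TCV-free_formula} to discard the $\TCV$-binding (legitimate since $C_1$ is $\TCV$-free) and then the induction hypothesis. The case $\FAD{\TCV}\FORALL{x}{\TCV}C_1$ is exactly the content of Lemma~\ref{lem:EXTINDEP-CONSTR-FADForall}, which I invoke wholesale. Finally the two concrete formulae of clause~4 are dispatched by the dedicated Lemmas~\ref{lem:GS_formula_is_EXTINDEP} and~\ref{lem:GS_formula_Plus_FORALL_is_EXTINDEP}.

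I expect the main obstacle to be precisely the $\FAD{\TCV}$ cases, and in particular the contracting direction, where one starts from validity in $\MMM'$ and must re-derive validity in the smaller model $\MMM$. There the difficulty is that an arbitrary extension $\MMM \EXTSTAR \MMM_1$ need not factor through $\MMM'$, so one cannot directly instantiate the hypothesis; the resolution is to decompose the extensions into their fresh-variable components and recombine them using Lemma~\ref{lem:two_extensions_combine_to_make_extensions_of_each_other} (together with the disjointness of freshly generated names). This is exactly why clauses~3 and~4 were isolated as standalone lemmas rather than proved inline, leaving the present statement a routine case analysis once those are in hand.
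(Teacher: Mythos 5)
Your proposal is correct and follows essentially the same route as the paper: induction over the clauses of the syntactic characterisation, handling the base and core connective cases via Lemmas~\ref{lem:semantics_expressions_equal_under_model_extensions}, \ref{lem:extensions_give_same_semantics_for_type_contexts}, \ref{lem:eval_under_extensions_are_equivalent} and \ref{lem:Gamma_derived_terms_maintain_extension_when_added}, and delegating the $\FAD{\TCV}$ and specific-formula cases to Lemmas~\ref{lem:model_and_model_plus_TCV_models_equivalently_TCV-free_formula}, \ref{lem:EXTINDEP-CONSTR-FADForall}, \ref{lem:GS_formula_is_EXTINDEP} and \ref{lem:GS_formula_Plus_FORALL_is_EXTINDEP}. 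The only minor divergence is that for the contracting direction of $\FAD{\TCV}C_1$ with $C_1$ $\TCV$-free, the paper uses the simpler trick of instantiating the quantifier at $\MMM'$ itself and then applying the induction hypothesis in both directions, reserving the extension-recombination argument (Lemma~\ref{lem:two_extensions_combine_to_make_extensions_of_each_other}) for the standalone lemmas, which you correctly invoke wholesale.
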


\newpage
\section{Syntactic Thinness Construction Lemmas}
\label{appendix_thinness}
\label{lem:apndx_SYNThinness_implies_Thinness}

\begin{definition}
	\label{def:Syntactically_Thin_formulae}
	Syntactically define thin formulae as follows:
	\begin{enumerate}
		\item \label{def:syntactic_thinness_def_TYBASE}
		If $\FORMULATYPES{\GAMMA}{A}$ and $\EXPRESSIONTYPES{\GAMMA}{x}{\TYBASE}$ then $A \THINWRT{x}$
		\\
		Base types can always be removed if they do not occur in the assertion.
		
		\item \label{def:syntactic_thinness_def_Core}
		For $A \equiv \TRUTH, \FALSITY, e=e', \ e\neq e', \ \FRESH{y}{\GAMMA_1\REMOVETCVfrom}$ and $x\notin \FV{A}$ then $A \THINWRT{x}$
		\\
		The base assertions are $\THINWRT{x}$ under basic assumptions.
		
		\item \label{def:syntactic_thinness_def_Fresh-Sub-ATC}
		If $	\FORMULATYPES{(\GAMMA \PLUSV x \PLUSG \GAMMA')\REMOVEVARIABLE x}{\FRESH{y}{\GAMMA_0}} \ \MAND \ \TCTYPES{\GAMMA}{\GAMMA_0} $ then $\FRESH{y}{\GAMMA_0} \THINWRT{x}$
		\\
		If $\GAMMA \PLUSV x$ and $\TCTYPES{\GAMMA}{\GAMMA_0}$ then $x$ is not included in $\GAMMA_0$ and cannot be included by any TCV occurring in $\GAMMA_0$.
		
		\item \label{def:syntactic_thinness_def_Fresh-Sub-ATC+Name}
		If $	\FORMULATYPES{(\GAMMA \PLUSV x:\NAME \PLUSG \GAMMA')\REMOVEVARIABLE x}{\FRESH{y}{\GAMMA_0 \PLUSV b}} \ \MAND \ \TCTYPES{\GAMMA}{\GAMMA_0} \ \MAND \ \EXPRESSIONTYPES{\GAMMA'}{b}{\NAME} $ then $\FRESH{y}{\GAMMA_0 \PLUSV b} \THINWRT{x}$
		\\
		Similar to \ref{def:syntactic_thinness_def_Fresh-Sub-ATC}, but two possibilities: $b=x \POR b \neq x $ both of which hold.
		
		\item \label{def:syntactic_thinness_def_Core-Inductive}
		If $A_1 \THINWRT{x}$ and $A_2 \THINWRT{x}$ then  
		$A_1 \PAND A_2$, 
		$A_1 \POR A_2$, 
		$\ONEEVAL{u}{e}{m}{A_1}$, 
		$\FORALL{y}{\GAMMA_1} A_1$, 
		$\EXISTS{y^{\TYBASE}}{\GAMMA_1} A_1$, 
		$\EXISTS{y}{\GAMMA_1\REMOVETCVfrom} A_1$ are all $\THINWRT{x}$.
		\\
		Constructing thin assertions with thin assertions holds in most cases.
		
		\item \label{def:syntactic_thinness_def_FAD}
		If  $A \THINWRT{x}$ and $\TCV \notin \FTCV{A}$ then  $\FAD{\TCV} A \THINWRT{x}$
		\\
		If $\TCV$ is unused in $A$ then thinness is maintained.
		
		\item \label{def:syntactic_thinness_def_FAD-FORALL}
		If $A \THINWRT{x}$ and $\TCV \notin \FTCV{A}$ then $\FAD{\TCV} \FORALL{y^{\alpha_y}}{\GAMMA \PLUSTC \TCV} A \THINWRT{x}$
		\\
		If $\TCV$ is only used to derive $y$, then thinness is maintained as it reduces the potentially derived values.
	\end{enumerate}
\end{definition}

We now define some lemmas that help prove that syntactic thinness implies thinness.

\begin{lemma}[\DONE Syntactically thin implies thin for Freshness]
	\label{lem:syntactic_thin_implies_thin_Fresh-typed}
	\[
	\FORMULATYPES{(\GAMMA \PLUSV y \PLUSG \GAMMA')\REMOVEVARIABLE y}{\FRESH{x}{\GAMMA_0}} \ \MAND \ \TCTYPES{\GAMMA}{\GAMMA_0} \ \MIMPLIES \FRESH{x}{\GAMMA_0} \THINWRT{y}
	\]
	\PROOFFINISHED
	{
		Proof:
		\begin{NDERIVATION}{1}
			\NLINE{\text{Assume: } \FORMULATYPES{(\GAMMA \PLUSV y \PLUSG \GAMMA')\REMOVEVARIABLE y}{\FRESH{x}{\GAMMA_0}} \ \MAND \ \TCTYPES{\GAMMA}{\GAMMA_0} }{}
			\NLINE{\MIMPLIES \ \FORMULATYPES{\GAMMA \PLUSG \GAMMA'}{\FRESH{x}{\GAMMA_0}} \ \MAND \ \TCTYPES{\GAMMA}{\GAMMA_0} }{}
			\NLINE{\text{Assume: $\MMM^{\GAMMA \PLUSV y \PLUSG \GAMMA'}$ such that $\MMM \models \FRESH{x}{\GAMMA_0}$ } }{}
			\NLINE{\MIMPLIES \  \neg \Mexists M_x. \LTCDERIVEDVALUE{M_x}{\GAMMA_0}{\MMM}{\SEM{x}{\MMM}} }{}
			\NLINE{\MIMPLIES \  \neg \Mexists M_x. \LTCDERIVEDVALUE{M_x}{\GAMMA_0}{\MMM}{\SEM{x}{\MMM \REMOVEVARIABLE y}} }{$\SEM{x}{\MMM} \equiv \SEM{x}{\MMM \REMOVEVARIABLE y}$}
			\NLINE{\MIMPLIES \  \neg \Mexists M_x.
				\begin{array}[t]{l}
					\TYPES{\SEM{\GAMMA_0}{\MMM}}{M_x}{\NAME}
					\
					\MAND \ \AN{M_x} = \emptyset
					\\
					\MIMPLIES \ (\AN{\MMM}, \ M_x \MMM) \CONV (\AN{\MMM}, G', \  \SEM{x}{\MMM\REMOVEVARIABLE y} ) 
				\end{array}
			}{Sem. $\LTCDERIVEDVALUE{}{}{}{}$}
			\NLINE{\MIMPLIES \  \neg \Mexists M_x.
				\begin{array}[t]{l}
					\TYPES{\SEM{\GAMMA_0}{\MMM\REMOVEVARIABLE y}}{M_x}{\NAME}
					\
					\MAND \ \AN{M_x} = \emptyset
					\\
					\MIMPLIES \ (\AN{\MMM}, \ M_x \MMM) \CONV (\AN{\MMM}, G', \  \SEM{x}{\MMM\REMOVEVARIABLE y} ) 
				\end{array}
			}{\RBOX{$\TCTYPES{\GAMMA}{\GAMMA_0}$ \\ $ \MIMPLIES \ \SEM{\GAMMA_0}{\MMM} \equiv \SEM{\GAMMA_0}{\MMM\REMOVEVARIABLE y}$}}
			\NPLINE{\MIMPLIES \  \neg \Mexists M_x.
				\begin{array}[t]{l}
					\TYPES{\SEM{\GAMMA_0}{\MMM\REMOVEVARIABLE y}}{M_x}{\NAME}
					\
					\MAND \ \AN{M_x} = \emptyset
					\\
					\MIMPLIES \ (\AN{\MMM}, \ M_x (\MMM\REMOVEVARIABLE y)) \CONV (\AN{\MMM}, G', \  \SEM{x}{\MMM\REMOVEVARIABLE y} ) 
				\end{array}
			}{5cm}{$\TCTYPES{\GAMMA}{\GAMMA_0} \ \MAND \ \TYPES{\SEM{\GAMMA_0}{\MMM\REMOVEVARIABLE y}}{M_x}{\NAME}$ \\ $ \MIMPLIES \ M \MMM \equiv M(\MMM\REMOVEVARIABLE y)$}
			\NLINE{\MIMPLIES \  \neg \Mexists M_x.
				\begin{array}[t]{l}
					\TYPES{\SEM{\GAMMA_0}{\MMM\REMOVEVARIABLE y}}{M_x}{\NAME}
					\
					\MAND \ \AN{M_x} = \emptyset
					\\
					\MIMPLIES \ (\AN{\MMM\REMOVEVARIABLE y} \cup \AN{\MMM(y)}, \ M_x (\MMM\REMOVEVARIABLE y)) \CONV (\AN{\MMM}, G', \  \SEM{x}{\MMM\REMOVEVARIABLE y} ) 
				\end{array}
			}{$\AN{\MMM} \equiv \AN{\MMM\REMOVEVARIABLE y} \cup \AN{\MMM(y)}$}
			\NLINE{\MIMPLIES \  \neg \Mexists M_x.
				\begin{array}[t]{l}
					\TYPES{\SEM{\GAMMA_0}{\MMM\REMOVEVARIABLE y}}{M_x}{\NAME}
					\
					\MAND \ \AN{M_x} = \emptyset
					\\
					\MIMPLIES \ (\AN{\MMM \REMOVEVARIABLE y}, \ M_x (\MMM\REMOVEVARIABLE y)) \CONV (\AN{\MMM \REMOVEVARIABLE y}, G', \  \SEM{x}{\MMM\REMOVEVARIABLE y} ) 
				\end{array}
			}{$y \notin \FV{M_x}$, Lemma \ref{lem:adding/remove_unused_names_maintains_evaluation}}
			\NLASTLINE{\MIMPLIES \ \MMM \REMOVEVARIABLE y \models \FRESH{y}{\GAMMA_0} }{Sem. $\FRESH{}{}$}
			
		\end{NDERIVATION}
	}
\end{lemma}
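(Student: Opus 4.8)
The plan is to unfold the definition of thinness and reduce the claim to a statement about which values remain derivable after a variable is deleted from the model. Since $\FRESH{x}{\GAMMA_0}$ is typed by $(\GAMMA \PLUSV y \PLUSG \GAMMA')\REMOVEVARIABLE y$, every LTC $\GAMMA_1$ witnessing the thinness hypothesis with $y \in \DOM{\GAMMA_1}$ decomposes as $\GAMMA \PLUSV y \PLUSG \GAMMA'$ (with $\GAMMA$ the prefix preceding $y$), and the side condition $\TCTYPES{\GAMMA}{\GAMMA_0}$ forces $\GAMMA_0$ to live inside this prefix. It therefore suffices to fix a well constructed model $\MMM$ over $\GAMMA \PLUSV y \PLUSG \GAMMA'$ with $\MMM \models \FRESH{x}{\GAMMA_0}$ and to prove $\MMM \REMOVEVARIABLE y \models \FRESH{x}{\GAMMA_0}$. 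Unfolding the satisfaction clause for freshness, the hypothesis reads $\neg \Mexists M_x. \LTCDERIVEDVALUE{M_x}{\GAMMA_0}{\MMM}{\SEM{x}{\MMM}}$ and the goal reads $\neg \Mexists M_x. \LTCDERIVEDVALUE{M_x}{\GAMMA_0}{\MMM \REMOVEVARIABLE y}{\SEM{x}{\MMM \REMOVEVARIABLE y}}$, so I argue contrapositively: any witness in $\MMM \REMOVEVARIABLE y$ is transported into a witness in $\MMM$.

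First I would record three facts that make the two derivability statements comparable. Since the typing forces $x \in \DOM{\GAMMA \PLUSG \GAMMA'}$ and hence $x \neq y$, the interpretation of $x$ is untouched by the deletion, $\SEM{x}{\MMM} \equiv \SEM{x}{\MMM \REMOVEVARIABLE y}$. Next, because $\TCTYPES{\GAMMA}{\GAMMA_0}$ and $y$ is introduced strictly after $\GAMMA$ (so $y \notin \DOM{\GAMMA}$, and $y$ cannot appear in any TCV-mapping recorded before it), the decoding of $\GAMMA_0$ is stable under the deletion, $\SEM{\GAMMA_0}{\MMM} \equiv \SEM{\GAMMA_0}{\MMM \REMOVEVARIABLE y}$. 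Consequently a candidate term $M_x$ typed by $\SEM{\GAMMA_0}{\MMM \REMOVEVARIABLE y}$ is exactly one typed by $\SEM{\GAMMA_0}{\MMM}$, it satisfies $y \notin \FV{M_x}$, and therefore its two closures coincide, $M_x \MMM \equiv M_x (\MMM \REMOVEVARIABLE y)$.

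The step I expect to be the crux is reconciling the ambient name sets, which differ exactly by the names of $\MMM(y)$: the derivation in $\MMM$ runs from $\AN{\MMM} \equiv \AN{\MMM \REMOVEVARIABLE y} \cup \AN{\MMM(y)}$, whereas the derivation in $\MMM \REMOVEVARIABLE y$ runs from $\AN{\MMM \REMOVEVARIABLE y}$. Here I would invoke Lemma~\ref{lem:adding/remove_unused_names_maintains_evaluation}: the value being matched, $\SEM{x}{\MMM \REMOVEVARIABLE y} = (\MMM \REMOVEVARIABLE y)(x)$, has all of its names inside $\AN{\MMM \REMOVEVARIABLE y}$ and is hence disjoint from the surplus names of $\MMM(y)$, so those surplus names are genuinely unused and may be added or removed without affecting whether $M_x$ evaluates to the sought value. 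Chaining the three equalities turns any derivation witnessing $\LTCDERIVEDVALUE{M_x}{\GAMMA_0}{\MMM \REMOVEVARIABLE y}{\SEM{x}{\MMM \REMOVEVARIABLE y}}$ into one witnessing $\LTCDERIVEDVALUE{M_x}{\GAMMA_0}{\MMM}{\SEM{x}{\MMM}}$, contradicting the hypothesis and delivering the goal. Because the whole argument is a sequence of such equivalences rather than an induction on the formula, the only real work is making each of the three facts precise, with the name-set bookkeeping of the last being the single delicate point.
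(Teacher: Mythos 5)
Your proposal is correct and follows essentially the same route as the paper's own proof: both reduce the claim to the stability of the three ingredients $\SEM{x}{\cdot}$, $\SEM{\GAMMA_0}{\cdot}$ and the closure $M_x\MMM$ under removal of $y$, and both handle the only delicate point --- the surplus names contributed by $\MMM(y)$ --- via Lemma~\ref{lem:adding/remove_unused_names_maintains_evaluation}. The only cosmetic difference is that you argue contrapositively (transporting a witness from $\MMM\REMOVEVARIABLE y$ back into $\MMM$) whereas the paper chains the same equivalences forward under the negation.
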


\begin{lemma}[\DONE Syntactically Thin formulae implies Thin]
	\label{lem:Syntactically_Thin_implies_Thin}
	Define a method to syntactically define thin formulae as follows:
	\begin{enumerate}
		\item \label{lem:syntactic_thinness_lemma_TYBASE}
		If $\FORMULATYPES{\GAMMA}{A}$ and $\EXPRESSIONTYPES{\GAMMA}{x}{\TYBASE}$ then $A \THINWRT{x}$
		\item \label{lem:syntactic_thinness_lemma_Core}
		The assertions $A \equiv \TRUTH, \FALSITY, e=e', \ e\neq e', \ \FRESH{y}{\GAMMA_1\REMOVETCVfrom}$ and are all free from $x$ ($x\notin \FV{A}$) then $A \THINWRT{x}$
		\item \label{lem:syntactic_thinness_lemma_Fresh-Sub-LTC}
		If $	\FORMULATYPES{(\GAMMA \PLUSV y \PLUSG \GAMMA')\REMOVEVARIABLE y}{\FRESH{x}{\GAMMA_0}} \ \MAND \ \TCTYPES{\GAMMA}{\GAMMA_0} $ then $\FRESH{x}{\GAMMA_0} \THINWRT{y}$
		\item \label{lem:syntactic_thinness_lemma_Fresh-Sub-LTC+Name}
		If $	\FORMULATYPES{(\GAMMA \PLUSV y:\NAME \PLUSG \GAMMA')\REMOVEVARIABLE y}{\FRESH{x}{\GAMMA_0 \PLUSV b}} \ \MAND \ \TCTYPES{\GAMMA}{\GAMMA_0} \ \MAND \ \EXPRESSIONTYPES{\GAMMA'}{b}{\NAME} $ then $\FRESH{x}{\GAMMA_0 \PLUSV b} \THINWRT{y}$
		\item \label{lem:syntactic_thinness_lemma_Core-Inductive}
		If $A_1 \THINWRT{x}$ and $A_2 \THINWRT{x}$ then  
		$A_1 \PAND A_2$, 
		$A_1 \POR A_2$, 
		$\ONEEVAL{u}{e}{m}{A_1}$, 
		$\FORALL{y}{\GAMMA_1} A_1$, 
		$\EXISTS{y^{\TYBASE}}{\GAMMA_1} A_1$, 
		$\EXISTS{y}{\GAMMA_1\REMOVETCVfrom} A_1$ are all $\THINWRT{x}$.
		\item \label{lem:syntactic_thinness_lemma_FAD}
		If  $A_1 \THINWRT{x}$ and $\TCV \notin \FTCV{A_1}$ then  $\FAD{\TCV} A_1 \THINWRT{x}$
		\item \label{lem:syntactic_thinness_lemma_FAD-FORALL}
		If $A_1 \THINWRT{x}$ then $\FAD{\TCV} \FORALL{y^{\alpha_y}}{\GAMMA \PLUSTC \TCV} A_1^{-\TCV} \THINWRT{x}$
	\end{enumerate}
\begin{proof}
	This proof assumes that $\MMM \REMOVEVARIABLE x$ (or $\MMM \REMOVEVARIABLE y$) is a well-constructed models.
	
	\begin{enumerate}
		\item If $\FORMULATYPES{\GAMMA \REMOVEVARIABLE x}{A}$ and $\EXPRESSIONTYPES{\GAMMA}{x}{\TYBASE}$ then $A \THINWRT{x}$:
		\\
		Given $\SEM{e}{\MMM} \equiv \SEM{e}{\MMM \cdot x:V_x^{\TYBASE}}$ and  $\SEM{\GAMMA_1}{\MMM} \equiv \SEM{\GAMMA_1}{\MMM \cdot x:V_x^{\TYBASE}}$ the proof holds easily.
		\\
		 Lemmas \ref{lem:Base_Values_Dont_Extend_Reach}, \ref{lem:Base_Values_can_be_added/removed_from_TCV:TC}, \ref{lem:Base_types_added/removed_maintain_extension}.
		
		\item The assertions $A \equiv \TRUTH, \FALSITY, e=e', \ e\neq e', \ \FRESH{y}{\GAMMA_1\REMOVETCVfrom}$ and are all free from $x$ ($x\notin \FV{A}$) then $A \THINWRT{x}$:
		\begin{itemize}
			\item[\DONE $\TRUTH, \FALSITY$] clearly hold.
			\item [\DONE $e=e'$] 
			\begin{NDERIVATION}{1}
				\NLINE{\MMM \models e=e'}{}
				\NLINE{\MIFF \ \SEM{e}{\MMM} \CONGCONTEXT{\alpha}{\AN{\MMM}} \SEM{e'}{\MMM}}{Sem. $=$}
				\NLINE{\MIFF \ \SEM{e}{\MMM\REMOVEVARIABLE x} \CONGCONTEXT{\alpha}{\AN{\MMM}} \SEM{e'}{\MMM\REMOVEVARIABLE x}
				}{$\EXPRESSIONTYPES{\GAMMA \REMOVEVARIABLE x}{e}{\alpha}  \MIMPLIES \SEM{e}{\MMM} \equiv \SEM{e}{\MMM \REMOVEVARIABLE x}$}
				\NLINE{ \MIFF \ \SEM{e}{\MMM\REMOVEVARIABLE x} \CONGCONTEXT{\alpha}{\AN{\MMM\REMOVEVARIABLE x}} \SEM{e'}{\MMM\REMOVEVARIABLE x}
				}{Lemma \ref{lem:adding/removing_names_to_congruence_makes_no_difference}}
				\NLASTLINE{\MIFF \ \MMM\REMOVEVARIABLE x \models e=e'}{}
			\end{NDERIVATION}
			\item[\DONE $e \neq e'$] See above.
			\\
			\item [\DONE $\FRESH{y}{\GAMMA_1\REMOVETCVfrom}$]
			\begin{NDERIVATION}{1}
				\NLINE{
					\MMM \models \FRESH{y}{\GAMMA_1\REMOVETCVfrom}
				}{}
				\NLINE{
					\MIFF \ \neg \Mexists  M_y. \LTCDERIVEDVALUE{M_y}{\GAMMA_1\REMOVETCVfrom}{\MMM}{\SEM{y}{\MMM}}
				}{Sem. $\FRESH{}{}$}
				\NLINE{
					\MIFF \ \neg \Mexists  M_y. \LTCDERIVEDVALUE{M_y}{\GAMMA_1\REMOVETCVfrom}{\MMM}{\SEM{y}{\MMM \REMOVEVARIABLE x}}
				}{$\EXPRESSIONTYPES{\GAMMA \REMOVEVARIABLE x}{y}{\NAME} \MIMPLIES \SEM{y}{\MMM} \equiv \SEM{y}{\MMM \REMOVEVARIABLE x}$}
				\NLINE{
					\MIFF \ \neg \Mexists  M_y. \LTCDERIVEDVALUE{M_y}{\GAMMA_1\REMOVETCVfrom}{\MMM}{\SEM{y}{\MMM \REMOVEVARIABLE x}}
				}{
				$
					\SEM{\GAMMA_1}{\MMM} \equiv \SEM{\GAMMA_1}{\MMM \REMOVEVARIABLE x} 
				$
				}
				\NLASTLINE{
					\MIMPLIES \ 
					\MMM\REMOVEVARIABLE x \models \FRESH{y}{\GAMMA_1\REMOVETCVfrom}
				}{}
			\end{NDERIVATION}
			
		\end{itemize}
		
		\item If $	\FORMULATYPES{(\GAMMA \PLUSV y \PLUSG \GAMMA')\REMOVEVARIABLE y}{\FRESH{x}{\GAMMA_0}} \ \MAND \ \TCTYPES{\GAMMA}{\GAMMA_0} $ then $\FRESH{x}{\GAMMA_0} \THINWRT{y}$:
		\\
		Lemma 
		\ref{lem:syntactic_thin_implies_thin_Fresh-typed}
		
		\item If $	\FORMULATYPES{(\GAMMA \PLUSV y:\NAME \PLUSG \GAMMA')\REMOVEVARIABLE y}{\FRESH{x}{\GAMMA_0 \PLUSV b}} \ \MAND \ \TCTYPES{\GAMMA}{\GAMMA_0} \ \MAND \ \EXPRESSIONTYPES{\GAMMA'}{b}{\NAME} $ then $\FRESH{x}{\GAMMA_0 \PLUSV b} \THINWRT{y}$
		\\
		Essentially relies on $b:\NAME$ must either be equal to $y$ in which case no issue, or different in which case no issue.
		\\
		\begin{NDERIVATION}{1}
			\NLINE{\MMM^{\GAMMA \PLUSV y \PLUSG \GAMMA'} \models \FRESH{x}{\GAMMA_0 \PLUSV b}}{}
			\NLINE{
			\MIFF \ \neg \Mexists  M_x. \LTCDERIVEDVALUE{M_x}{\GAMMA_0  \PLUSV b}{\MMM}{\SEM{x}{\MMM}}
			}{Sem. $\FRESH{}{}$}
			\NLINE{
			\MIFF \ \neg \Mexists  M_x. \LTCDERIVEDVALUE{M_x}{\GAMMA_0  \PLUSV b}{\MMM}{\SEM{x}{\MMM \REMOVEVARIABLE y}}
			}{$\EXPRESSIONTYPES{\GAMMA \REMOVEVARIABLE y}{x}{\NAME} \MIMPLIES \SEM{x}{\MMM} \equiv \SEM{x}{\MMM \REMOVEVARIABLE y}$}
			\NLINE{
				\MIFF \ \neg \Mexists  M_x. 
				\begin{array}[t]{l}
					\TYPES{\SEM{\GAMMA_0 \PLUSV b}{\MMM}}{M_x}{\NAME}
					\
					\MAND \ 
					\AN{M_x}= \emptyset
					\\
					\MAND \ 
					(\AN{\MMM}, \ M_x\MMM) \CONV (\AN{\MMM}, G, \ \SEM{x}{\MMM\REMOVEVARIABLE y})
				\end{array}
			}{Sem. $\LTCDERIVEDVALUE{}{}{}{}$}
			\NPLINE{
				\MIMPLIES \ \neg \Mexists  M_x. 
				\begin{array}[t]{l}
					\TYPES{\SEM{\GAMMA_0 \PLUSV b}{\MMM \REMOVEVARIABLE y}}{M_x}{\NAME}
					\
					\MAND \ 
					\AN{M_x}= \emptyset
					\\
					\MAND \ 
					(\AN{\MMM}, \ M_x(\MMM \REMOVEVARIABLE y)) \CONV (\AN{\MMM}, G, \ \SEM{x}{\MMM\REMOVEVARIABLE y})
				\end{array}
			}{4cm}{
				Subset $M_x$
				\\
				$\SEM{\GAMMA_0 \PLUSV b}{\MMM \REMOVEVARIABLE y} \subseteq \SEM{\GAMMA_0 \PLUSV b}{\MMM}$
			}
			\\
			\NLINE{\text{Case $\MMM(y) \in \AN{\MMM\REMOVEVARIABLE y}$ implies }  \ \AN{\MMM} \equiv \AN{\MMM \REMOVEVARIABLE y} }{}
			\NLINE{\MIMPLIES \ \neg \Mexists  M_x. 
				\begin{array}[t]{l}
					\TYPES{\SEM{\GAMMA_0 \PLUSV b}{\MMM \REMOVEVARIABLE y}}{M_x}{\NAME}
					\
					\MAND \ 
					\AN{M_x}= \emptyset
					\\
					\MAND \ 
					(\AN{\MMM\REMOVEVARIABLE y}, \ M_x(\MMM\REMOVEVARIABLE y)) \CONV (\AN{\MMM\REMOVEVARIABLE y}, G, \ \SEM{x}{\MMM\REMOVEVARIABLE y})
				\end{array}
			}{$\AN{\MMM} \equiv \AN{\MMM\REMOVEVARIABLE y}$}
			\\
			\NLINE{\text{Case $\MMM(y) \notin \AN{\MMM\REMOVEVARIABLE y}$ implies }  \ \MMM(y) \notin \AN{M_1 (\MMM \REMOVEVARIABLE y)} }{}
			\NLINE{
				\MIMPLIES \ \neg \Mexists  M_x. 
				\begin{array}[t]{l}
					\TYPES{\SEM{\GAMMA_0 \PLUSV b}{\MMM \REMOVEVARIABLE y}}{M_x}{\NAME}
					\
					\MAND \ 
					\AN{M_x}= \emptyset
					\\
					\MAND \ 
					(\AN{\MMM \REMOVEVARIABLE y}, \ M_x(\MMM \REMOVEVARIABLE y)) \CONV (\AN{\MMM \REMOVEVARIABLE y}, G, \ \SEM{x}{\MMM\REMOVEVARIABLE y})
				\end{array}
			}{
				Lemma \ref{lem:adding/remove_unused_names_maintains_evaluation}
			}
			\\
			\NLINE{
				\MIMPLIES \ \neg \Mexists  M_x. \LTCDERIVEDVALUE{M_x}{\GAMMA_0  \PLUSV b}{\MMM\REMOVEVARIABLE y}{\SEM{x}{\MMM \REMOVEVARIABLE y}}
			}{Sem. $\LTCDERIVEDVALUE{}{}{}{}$}
			\NLASTLINE{\MIMPLIES \ 
				\MMM\REMOVEVARIABLE y \models \FRESH{x}{\GAMMA_0 \PLUSV b}
			}{}
		\end{NDERIVATION}
		
		\item
		If $A_1 \THINWRT{x}$ and $A_2 \THINWRT{x}$ then  
		$A_1 \PAND A_2$, 
		$A_1 \POR A_2$, 
		$\ONEEVAL{u}{e}{m}{A_1}$, 
		$\FORALL{y}{\GAMMA_1} A_1$, 
		$\EXISTS{y^{\TYBASE}}{\GAMMA_1} A_1$, 
		$\EXISTS{y}{\GAMMA_1\REMOVETCVfrom} A_1$ are all $\THINWRT{x}$:
		\\
		\begin{itemize}
		
		\item[\DONE Elementary cases:] obvious cases are $A_1 \PAND A_2$, $A_1 \POR A_2$ where the proofs are omitted. 
		
		\item[\DONE $\ONEEVAL{e}{e'}{m}{A_1}$:] holds by IH on $A_1$ and  
			
			\begin{NDERIVATION}{1}
				\NLINE{\MMM \models \ONEEVAL{e}{e'}{m}{A_1}}{}
				\NLINE{\MIFF \ \LTCDERIVEDVALUE{ee'}{\GAMMA}{\MMM}{V_m} \ \MAND \ \MMM \cdot m:V_m \models A_1}{IH}
				\NPLINE{\MIMPLIES \ \LTCDERIVEDVALUE{ee'}{\GAMMA}{\MMM\REMOVEVARIABLE x}{V_m} \ \MAND \ \MMM  \REMOVEVARIABLE x \cdot m:V_m \models A_1
				}{5cm}{
					$\SEM{ee'}{\MMM} \equiv \SEM{ee'}{\MMM\REMOVEVARIABLE x}$,
					\
					Lemma \ref{lem:adding/remove_unused_names_maintains_evaluation} 
					\\ 
					($\AN{V_m} \cap \AN{V_x}  \subseteq \AN{\MMM\REMOVEVARIABLE x}$)
				}
				\NLASTLINE{\MIMPLIES \ 
					\MMM\REMOVEVARIABLE x \models \ONEEVAL{e}{e'}{m}{A_1}}{}
			\end{NDERIVATION}
			
		\item[\DONE $\FORALL{y}{\GAMMA_1} A_1$:] holds assuming $A_1 \THINWRT{x} $ then
		\begin{NDERIVATION}{1}
			\NLINE{\MMM \models \FORALL{y}{\GAMMA_1} A_1 }{}
			\NLINE{\MIFF \ \Mforall M_y. \LTCDERIVEDVALUE{M_y}{\GAMMA_1}{\MMM}{V_y} \ \MIMPLIES \ \MMM \cdot y:V_y \models A_1}{Sem. $\FORALL{y}{\GAMMA_1} A_1$}
			\NLINE{\MIMPLIES \ \Mforall M_y. 
				\begin{array}[t]{l}
					\AN{M_y}=\emptyset
					\ \MAND \  
					\TYPES{\SEM{\GAMMA_1}{\MMM}}{M_y}{\alpha} 
					\\
					\MAND \ (\AN{\MMM}, M_y\MMM) \CONV (\AN{\MMM},G', \ V_y) 
				\end{array}
				\ \MIMPLIES \ 
				\MMM  \cdot y:V_y \models A_1
			}{Sem. $\LTCDERIVEDVALUE{}{}{}{}$}
			\NLINE{\MIMPLIES \ \Mforall M_y. 
				\begin{array}[t]{l}
					\AN{M_y}=\emptyset
					\ \MAND \  
					\TYPES{\SEM{\GAMMA_1}{\MMM\REMOVEVARIABLE x}}{M_y}{\alpha} 
					\\
					\MAND \ (\AN{\MMM}, M_y\MMM) \CONV (\AN{\MMM},G', \ V_y) 
				\end{array}
				\ \MIMPLIES \ 
				\MMM  \cdot y:V_y \models A_1
			}{$\SEM{\GAMMA_1}{\MMM\REMOVEVARIABLE x} \subseteq \SEM{\GAMMA_1}{\MMM}$}
			\NPLINE{\MIMPLIES \ \Mforall M_y. 
				\begin{array}[t]{l}
					\AN{M_y}=\emptyset
					\ \MAND \  
					\TYPES{\SEM{\GAMMA_1}{\MMM \REMOVEVARIABLE x}}{M_y}{\alpha} 
					\\
					\MAND \ (\AN{\MMM \REMOVEVARIABLE x}, M_y\MMM \REMOVEVARIABLE x) \CONV (\AN{\MMM \REMOVEVARIABLE x},G', \ V_y) 
				\end{array}
				\ \MIMPLIES \ 
				\MMM  \cdot y:V_y \models A_1
			}{4cm}{$x \notin \FV{\SEM{\GAMMA_1}{\MMM\REMOVEVARIABLE x}}$, \\  Lemma \ref{lem:adding/remove_unused_names_maintains_evaluation} ($\AN{V_y} \cap \AN{V_x}  \subseteq \AN{\MMM\REMOVEVARIABLE x}$)}
			\NLINE{\MIMPLIES \ \Mforall M_y. \LTCDERIVEDVALUE{M_y}{\GAMMA_1}{\MMM\REMOVEVARIABLE x}{V_y} \ \MIMPLIES \ (\MMM \cdot y:V_y)  \models A_1
			}{Sem. $\LTCDERIVEDVALUE{}{}{}{}$}
			\NLINE{\MIMPLIES \ \Mforall M_y. \LTCDERIVEDVALUE{M_y}{\GAMMA_1}{\MMM\REMOVEVARIABLE x}{V_y} \ \MIMPLIES \ (\MMM \cdot y:V_y) \REMOVEVARIABLE x \models A_1}{IH}
			\NLINE{\MIMPLIES \ \Mforall M_y. \LTCDERIVEDVALUE{M_y}{\GAMMA_1}{\MMM\REMOVEVARIABLE x}{V_y} \ \MIMPLIES \ \MMM \REMOVEVARIABLE x \cdot y:V_y \models A_1}{Def $\MMM \REMOVEVARIABLE x$}
			\NLASTLINE{
				\MIMPLIES \ 
				\MMM\REMOVEVARIABLE x \models \FORALL{y}{\GAMMA_1} A_1 
			}{Sem. $\FORALL{y}{\GAMMA_1} A_1$}
		\end{NDERIVATION}
		\item[\DONE $\EXISTS{y^{\TYBASE}}{\GAMMA_1} A_1$:] holds as follows:
		\begin{NDERIVATION}{1}
			\NLINE{\MMM \models \EXISTS{y^{\TYBASE}}{\GAMMA_1} A_1 }{}
			\NLINE{\MIFF \ \Mexists M_y. \LTCDERIVEDVALUE{M_y}{\GAMMA_1}{\MMM}{V_y} \ \MAND \ \MMM \cdot y:V_y \models A_1}{Sem. $\EXISTS{y}{\GAMMA_1} A_1$}
			\NLINE{\MIMPLIES \ \Mexists M_y. 
				\begin{array}[t]{l}
					\AN{M_y}=\emptyset
					\ \MAND \  
					\TYPES{\SEM{\GAMMA_1}{\MMM}}{M_y}{\alpha} 
					\\
					\MAND \ (\AN{\MMM}, M_y\MMM) \CONV (\AN{\MMM},G', \ V_y) 
				\end{array}
				\ \MAND \ 
				\MMM \cdot y:V_y \models A_1
			}{Sem. $\LTCDERIVEDVALUE{}{}{}{}$}
		\NPLINE{\MIMPLIES \ \Mexists M_y. 
			\begin{array}[t]{l}
				\AN{M_y}=\emptyset
				\ \MAND \  
				\TYPES{\SEM{\GAMMA_1}{\MMM\REMOVEVARIABLE x}}{M_y}{\alpha} 
				\\
				\MAND \ (\AN{\MMM}, M_y\MMM) \CONV (\AN{\MMM},G', \ V_y) 
			\end{array}
			\ \MAND \ 
			\MMM \cdot y:V_y \models A_1
			}{4.5cm}{$V_y^{\TYBASE}$ equally derivable from any TC,  Lemma \ref{lem:BaseValues_are_equally_derivable_from_any_LTC}} 
			\NPLINE{\MIMPLIES \ \Mexists M_y. 
				\begin{array}[t]{l}
					\AN{M_y}=\emptyset
					\ \MAND \  
					\TYPES{\SEM{\GAMMA_1}{\MMM \REMOVEVARIABLE x}}{M_y}{\alpha} 
					\\
					\MAND \ (\AN{\MMM \REMOVEVARIABLE x}, M_y\MMM \REMOVEVARIABLE x) \CONV (\AN{\MMM \REMOVEVARIABLE x},G', \ V_y) 
				\end{array}
				\ \MAND \ 
				\MMM \cdot y:V_y \models A_1
			}{4cm}{$\begin{array}[t]{r}
				x \notin \FV{\SEM{\GAMMA_1}{\MMM\REMOVEVARIABLE x}}, \\ \text{ Lemma \ref{lem:adding/remove_unused_names_maintains_evaluation}} \\ (\AN{V_y} \cap \AN{V_x}  \subseteq \AN{\MMM\REMOVEVARIABLE x})
			\end{array}$}
			\NLINE{\MIMPLIES \ \Mexists M_y. \LTCDERIVEDVALUE{M_y}{\GAMMA_1}{\MMM\REMOVEVARIABLE x}{V_y} \ \MAND \ \MMM \cdot y:V_y\models A_1
			}{Sem. $\LTCDERIVEDVALUE{}{}{}{}$}
			
			\NLINE{\MIMPLIES \ \Mexists M_y. \LTCDERIVEDVALUE{M_y}{\GAMMA_1}{\MMM}{V_y} \ \MAND \ (\MMM \cdot y:V_y) \REMOVEVARIABLE x \models A_1}{IH}
			\NLINE{\MIMPLIES \ \Mexists M_y. \LTCDERIVEDVALUE{M_y}{\GAMMA_1}{\MMM}{V_y} \ \MAND \ \MMM \REMOVEVARIABLE x \cdot y:V_y \models A_1}{Def $\MMM \REMOVEVARIABLE x$}
			\NLASTLINE{\MIMPLIES \ 
				\MMM\REMOVEVARIABLE x \models \EXISTS{y}{\GAMMA_1} A_1 
			}{Sem. $\EXISTS{y}{\GAMMA_1} A_1$}
		\end{NDERIVATION}
			
		\item[\DONE $\EXISTS{y}{\GAMMA_1\REMOVETCVfrom} A_1$:] holds as follows:
		\begin{NDERIVATION}{1}
			\NLINE{\MMM \models \EXISTS{y}{\GAMMA_1\REMOVETCVfrom} A_1 }{}
			\NLINE{\MIFF \ \Mexists M_y. \LTCDERIVEDVALUE{M_y}{\GAMMA_1\REMOVETCVfrom}{\MMM}{V_y} \ \MAND \ \MMM \cdot y:V_y \models A_1
			}{Sem. $\EXISTS{y}{\GAMMA_1\REMOVETCVfrom} A_1$}
			\NLINE{\MIMPLIES \ \Mexists M_y. 
				\begin{array}[t]{l}
					\AN{M_y}=\emptyset
					\ \MAND \  
					\TYPES{\SEM{\GAMMA_1\REMOVETCVfrom}{\MMM}}{M_y}{\alpha} 
					\\
					\MAND \ (\AN{\MMM}, M_y\MMM) \CONV (\AN{\MMM},G', \ V_y) 
				\end{array}
				\ \MAND \ 
				\MMM \cdot y:V_y \models A_1
			}{Sem. $\LTCDERIVEDVALUE{}{}{}{}$}
			\NLINE{ \MIMPLIES \ \Mexists M_y. 
				\begin{array}[t]{l}
					\AN{M_y}=\emptyset
					\ \MAND \  
					\TYPES{\SEM{\GAMMA_1\REMOVETCVfrom}{\MMM\REMOVEVARIABLE x}}{M_y}{\alpha} 
					\\
					\MAND \ (\AN{\MMM}, M_y\MMM) \CONV (\AN{\MMM},G', \ V_y) 
				\end{array}
				\ \MAND \ 
				\MMM \cdot y:V_y \models A_1
			}{$\SEM{\GAMMA_0}{\MMM\REMOVEVARIABLE x} \subseteq \SEM{\GAMMA_0}{\MMM}$}
			\NLINE{\MIMPLIES \ \Mexists M_y. 
				\begin{array}[t]{l}
					\AN{M_y}=\emptyset
					\ \MAND \  
					\TYPES{\SEM{\GAMMA_1\REMOVETCVfrom}{\MMM \REMOVEVARIABLE x}}{M_y}{\alpha} 
					\\
					\MAND \ (\AN{\MMM \REMOVEVARIABLE x}, M_y\MMM \REMOVEVARIABLE x) \CONV (\AN{\MMM \REMOVEVARIABLE x},G', \ V_y) 
				\end{array}
				\ \MAND \ 
				\MMM \cdot y:V_y \models A_1
				\hspace{-1cm}
			}{$
				\begin{array}[t]{r}
					x \notin \FV{\SEM{\GAMMA_1}{\MMM\REMOVEVARIABLE x}}, \\ \text{ Lemma \ref{lem:adding/remove_unused_names_maintains_evaluation}} \\ (\AN{V_y} \cap \AN{V_x}  \subseteq \AN{\MMM\REMOVEVARIABLE x})
				\end{array}$
			}
			\NLINE{\MIMPLIES \ \Mexists M_y. \LTCDERIVEDVALUE{M_y}{\GAMMA_1\REMOVETCVfrom}{\MMM}{V_y} \ \MAND \ (\MMM \cdot y:V_y) \REMOVEVARIABLE x \models A_1}{IH}
			\NLINE{\MIMPLIES \ \Mexists M_y. \LTCDERIVEDVALUE{M_y}{\GAMMA_1\REMOVETCVfrom}{\MMM}{V_y} \ \MAND \ \MMM \REMOVEVARIABLE x \cdot y:V_y \models A_1
			}{Def $\MMM \REMOVEVARIABLE x$}
			\NLINE{
				\MIMPLIES \ \Mexists M_y. \LTCDERIVEDVALUE{M_y}{\GAMMA_1\REMOVETCVfrom}{\MMM\REMOVEVARIABLE x}{V_y} \ \MAND \ (\MMM \cdot y:V_y) \REMOVEVARIABLE x \models A_1
			}{Sem. $\LTCDERIVEDVALUE{}{}{}{}$}
			\NLASTLINE{
				\MIMPLIES \ 
				\MMM\REMOVEVARIABLE x \models \EXISTS{y}{\GAMMA_1\REMOVETCVfrom} A_1 
			}{Sem. $\EXISTS{y}{\GAMMA_1\REMOVETCVfrom} A_1$}
		\end{NDERIVATION}
		
	\end{itemize}

		\item[\ref{def:syntactic_thinness_def_FAD}] If  $A_1 \THINWRT{x}$ and $\TCV \notin \FTCV{A_1}$ then  $\FAD{\TCV} A_1 \THINWRT{x}$:
		\\
		Holds by IH on $A_1$ as follows
		\begin{NDERIVATION}{1}
			\NLINE{\MMM \models \FAD{\TCV} A_1 }{}
			\NLINE{\MIFF \ 
				\Mforall \MMM'^{\GAMMA'}. \ \MMM \EXTSTAR \MMM' \ \MIMPLIES \ \MMM' \cdot \TCV:\GAMMA' \models A_1
			}{Sem. $\FAD{\TCV}$}
			\NLINE{ \MIFF \ 
				\Mforall \MMM'^{\GAMMA'}. \ \MMM \EXTSTAR \MMM' \ \MIMPLIES \ \MMM' \models A_1^{-\TCV}
			}{Lemma \ref{lem:model_and_model_plus_TCV_models_equivalently_TCV-free_formula}}
			\NLINE{ \MIFF \ 
				\Mforall \MMM'^{\GAMMA'}. \ \MMM \EXTSTAR \MMM' \ \MIMPLIES \ \MMM'\REMOVEVARIABLE x \models A_1^{-\TCV}
			}{IH}
			\NPLINE{ \MIMPLIES \ 
				\Mforall \MMM'\REMOVEVARIABLE x ^{\GAMMA'\REMOVEVARIABLE x}. \ \MMM \REMOVEVARIABLE x \EXTSTAR \MMM' \REMOVEVARIABLE x \ \MIMPLIES \ \MMM'\REMOVEVARIABLE x \models A_1^{-\TCV}
			}{4cm}{Subset of $\Mforall \MMM'^{\GAMMA'}$ \\ as $\SEM{\GAMMA \REMOVEVARIABLE x}{\MMM \REMOVEVARIABLE x} \subseteq \SEM{\GAMMA}{\MMM}$ \\ and $\MMM \REMOVEVARIABLE x \subseteq \MMM$}
			\NLINE{ \MIMPLIES \ 
				\Mforall \MMM'' \ \! ^{\GAMMA''}. \ \MMM \REMOVEVARIABLE x \EXTSTAR \MMM'' \ \MIMPLIES \ \MMM'' \models A_1^{-\TCV}
			}{Rename $\MMM'' \equiv \MMM' \REMOVEVARIABLE x$}
			\NLINE{ \MIMPLIES \ 
				\Mforall \MMM'' \ \! ^{\GAMMA''}. \ \MMM \REMOVEVARIABLE x \EXTSTAR \MMM'' \ \MIMPLIES \ \MMM'' \cdot \TCV: \GAMMA'' \models A_1^{-\TCV}
			}{Lemma \ref{lem:model_and_model_plus_TCV_models_equivalently_TCV-free_formula}}
			\NLASTLINE{\MIMPLIES \ 
				\MMM\REMOVEVARIABLE x \models \FAD{\TCV} A_1^{-\TCV}
			}{Sem. $\FAD{\TCV}$}
			
		\end{NDERIVATION}
		
		\item [\ref{def:syntactic_thinness_def_FAD-FORALL}]
		If $A_1 \THINWRT{x}$ then $\FAD{\TCV} \FORALL{y^{\alpha_y}}{\GAMMA \PLUSTC \TCV} A_1^{-\TCV} \THINWRT{x}$
		\\
			{\small
				Holds by IH on $A_1$ as follows:
				\begin{NDERIVATION}{1}
					\NLINE{\text{Assume: } A \THINWRT{x}}{}
					\NLINE{\text{i.e. assume } \Mforall \GAMMA, \TCV, y^{\alpha_y}. \ \FORMULATYPES{\GAMMA \PLUSTC \TCV \PLUSV y: \alpha_y \REMOVEVARIABLE x}{A} \ \MIMPLIES \ \Mforall \MMM_{dy}^{\GAMMA \PLUSTC \TCV \PLUSV y: \alpha_y}. \ \MMM_{dy} \models A \ \MIMPLIES \ \MMM_{dy} \REMOVEVARIABLE x \models A}{}
					\NLINE{\text{Assume: $\GAMMA$, $\MMM^{\GAMMA}$ s.t. $\FORMULATYPES{\GAMMA \REMOVEVARIABLE x}{\FAD{\TCV} \FORALL{y}{\GAMMA \PLUSTC \TCV} A}$ and $\MMM \models \FAD{\TCV} \FORALL{y}{\GAMMA \PLUSTC \TCV} A$ }}{}
					\NPLINE{\MIMPLIES \ 
						\begin{array}[t]{l}
							\FORMULATYPES{\GAMMA \REMOVEVARIABLE x \PLUSTC \TCV \PLUSV y:\alpha_y}{ A}
							\\ \MAND \ 
							\Mforall \MMM_d^{\GAMMA_d}. \MMM \EXTSTAR \MMM_d 
							\ \MIMPLIES \
							\Mforall M_y^{\alpha_y}. \LTCDERIVEDVALUE{M_y}{\GAMMA \PLUSTC \TCV}{\MMM_d \cdot \TCV:\GAMMA_d}{V_y} 
							\ \MIMPLIES \
							\MMM_d  \cdot \TCV:\GAMMA_d \cdot y:V_y \models A
						\end{array}
					}{2.5cm}{Typing rules,\\ Sem. $\FAD{}$, $\FORALL{}{}$}
					\NLINE{\MIMPLIES \ 
						\Mforall \MMM_d^{\GAMMA_d}. \MMM \EXTSTAR \MMM_d 
						\ \MIMPLIES \
						\Mforall M_y^{\alpha_y}. \LTCDERIVEDVALUE{M_y}{\GAMMA \PLUSTC \TCV}{\MMM_d \cdot \TCV:\GAMMA_d}{V_y} 
						\ \MIMPLIES \
						\MMM_d \cdot y:V_y \models A
					}{$A^{-\TCV}$, Lemma \ref{lem:model_and_model_plus_TCV_models_equivalently_TCV-free_formula}}
					\NLINE{\MIMPLIES \ 
						\Mforall \MMM_d^{\GAMMA_d}. \MMM \EXTSTAR \MMM_d 
						\ \MIMPLIES \
						\Mforall M_y^{\alpha_y}. \LTCDERIVEDVALUE{M_y}{\GAMMA \PLUSTC \TCV}{\MMM_d \cdot \TCV:\GAMMA_d}{V_y} 
						\ \MIMPLIES \
						(\MMM_d \cdot y:V_y)\REMOVEVARIABLE x \models A
					}{Assumption, 2}
					\NPLINE{\MIMPLIES \ 
						\Mforall \MMM_d^{\GAMMA_d}. \MMM \EXTSTAR \MMM_d 
						\ \MIMPLIES \
						\Mforall M_y^{\alpha_y}. \LTCDERIVEDVALUE{M_y}{\GAMMA \PLUSTC \TCV}{\MMM_d \cdot \TCV:\GAMMA_d}{V_y} 
						\ \MIMPLIES \
						(\MMM_d  \cdot \TCV:\GAMMA_d \cdot y:V_y)\REMOVEVARIABLE x \models A
						\hspace{-1cm}
					}{4cm}{$A^{-\TCV}$, Lemma \ref{lem:model_and_model_plus_TCV_models_equivalently_TCV-free_formula}
						\\
						(Ignoring other TCV's)
					}
					\NLINE{\MIMPLIES \ 
						\Mforall \MMM_d^{\GAMMA_d}.  \
						\MMM \EXTSTAR \MMM_d 
						\ \MIMPLIES \
						\Mforall M_y^{\alpha_y}. 
						\begin{array}[t]{l} 
							\TYPES{\SEM{\GAMMA\PLUSTC \TCV}{\MMM_d \cdot \TCV:\GAMMA_d}}{M_y}{\alpha_y}
							\
							\MAND \ \AN{M_y}=\emptyset
							\\
							\MAND \ (\AN{\MMM_d \cdot \TCV:\GAMMA_d}, \ M_y(\MMM_d \cdot \TCV:\GAMMA_d)) \CONV (\AN{\MMM_d \cdot \TCV:\GAMMA_d}, G', \ V_y)
							\\ \MIMPLIES \
							(\MMM_d \cdot \TCV:\GAMMA_d \cdot y:V_y)\REMOVEVARIABLE x \models A
						\end{array}
					}{Sem. $\LTCDERIVEDVALUE{}{}{}{}$}
					\NPLINE{\MIMPLIES \ 
						\MAND \
						\Mforall \MMM_d^{\GAMMA_d}.  \
						\MMM \EXTSTAR \MMM_d 
						\ \MIMPLIES \
						\Mforall M_y^{\alpha_y}. 
						\begin{array}[t]{l} 
							\MMM_{dd} \equiv \MMM_d \cdot \TCV:\GAMMA_d
							\\
							\MAND \
							\TYPES{\SEM{(\GAMMA\PLUSTC \TCV) \REMOVEVARIABLE x}{\MMM_{dd} \REMOVEVARIABLE x}}{M_y}{\alpha_y}
							\
							\MAND \ \AN{M_y}=\emptyset
							\\
							\MAND \ (\AN{\MMM_{dd}}, \ M_y\MMM_{dd}) \CONV (\AN{\MMM_{dd}}, G', \ V_y)
							\\ \MIMPLIES \
							(\MMM_{dd} \cdot y:V_y)\REMOVEVARIABLE x \models A
						\end{array}
						\hspace{-1cm}
					}{5cm}{
						Subset of Type Context
						\\
						$\SEM{\GAMMA\PLUSTC \TCV}{\MMM_{dd}} \REMOVEVARIABLE x \subseteq \SEM{\GAMMA\PLUSTC \TCV}{\MMM_{dd}} $
						\\
						$\SEM{\GAMMA_0}{\MMM_0}\REMOVEVARIABLE x \equiv \SEM{\GAMMA_0 \REMOVEVARIABLE x}{\MMM_0 \REMOVEVARIABLE x}$
					}
					\NPLINE{\MIMPLIES \ 
						\Mforall \MMM_d^{\GAMMA_d}. 
						\MMM \EXTSTAR \MMM_d 
						\ \MIMPLIES \
						\Mforall M_y^{\alpha_y}. 
						\begin{array}[t]{l} 
							\MMM_{dd-x} \equiv (\MMM_d \cdot \TCV:\GAMMA_d) \REMOVEVARIABLE x
							\\
							\MAND \
							\TYPES{\SEM{(\GAMMA\PLUSTC \TCV) \REMOVEVARIABLE x}{\MMM_{dd-x}}}{M_y}{\alpha_y}
							\
							\MAND \ \AN{M_y}=\emptyset
							\\
							\MAND \ (\AN{\MMM_d \cdot \TCV:\GAMMA_d}, \ M_y\MMM_{dd-x}) \CONV (\AN{\MMM_d \cdot \TCV:\GAMMA_d}, G', \ V_y)
							\\ \MIMPLIES \
							\MMM_{dd-x} \cdot y:V_y \models A
						\end{array}
						\hspace{-2cm}
					}{5cm}{
						$x \notin \FV{M} \MIMPLIES M\MMM \equiv M(\MMM \REMOVEVARIABLE x) $
					}
					\NPLINE{\MIMPLIES \ 
						\Mforall \MMM_d^{\GAMMA_d}. 
						\MMM \EXTSTAR \MMM_d 
						\ \MIMPLIES \
						\Mforall M_y^{\alpha_y}. 
						\begin{array}[t]{l} 
							\MMM_{dd-x} \equiv (\MMM_d \cdot \TCV:\GAMMA_d) \REMOVEVARIABLE x
							\\
							\MAND \
							\TYPES{\SEM{(\GAMMA\PLUSTC \TCV) \REMOVEVARIABLE x}{\MMM_{dd-x}}}{M_y}{\alpha_y}
							\
							\MAND \ \AN{M_y}=\emptyset
							\\
							\MAND \ (\AN{\MMM_{dd-x}}, \ M_y\MMM_{dd-x}) \CONV (\AN{\MMM_{dd-x}}, G', \ V_y)
							\\ \MIMPLIES \
							\MMM_{dd-x} \cdot y:V_y \models A
						\end{array}
					}{4cm}{
						Lemma \ref{lem:adding/remove_unused_names_maintains_evaluation}
					}
					\NLINE{\MIMPLIES \ 
						\Mforall \MMM_d^{\GAMMA_d}. 
						\MMM \EXTSTAR \MMM_d 
						\ \MIMPLIES \
						(\MMM_d \cdot \TCV:\GAMMA_d) \REMOVEVARIABLE x \models \FORALL{y}{\GAMMA \PLUSV \TCV} A
					}{
						Sem. $\FORALL{}{}$
					}
					\NPLINE{\MIMPLIES \ 
						\Mforall \MMM_{d-x}^{\GAMMA_{d-x}}. \
						\MMM \EXTSTAR \MMM_{d-x}  \cdot x:V_x
						\ \MIMPLIES \
						(\MMM_{d-x}  \cdot x:V_x \cdot \TCV:\GAMMA_d) \REMOVEVARIABLE x \models \FORALL{y}{\GAMMA \PLUSV \TCV} A
					}{4cm}{
						$\MMM_d \equiv \MMM_{d-x} \cdot x:V_x$
					}
					\NLINE{\MIMPLIES \ 
						\Mforall \MMM_{d-x}^{\GAMMA_{d-x}}. \ 
						\MMM \REMOVEVARIABLE x \EXTSTAR \MMM_{d-x}
						\ \MIMPLIES \
						\MMM_{d-x} \cdot \TCV:\GAMMA_{d-x} \models \FORALL{y}{\GAMMA \PLUSV \TCV} A
					}{
						Subset of $\Mforall \MMM_{d-x}$
					}
					\NLINE{\MIMPLIES \ \MMM\REMOVEVARIABLE x \models \FAD{\TCV} \FORALL{y}{\GAMMA \PLUSV \TCV} A
					}{
						Sem. $\FAD{\TCV}$
					}
					\NLINE{\text{Hence: } 
						\begin{array}[t]{l}
							\Mforall \GAMMA. \ \FORMULATYPES{\GAMMA \REMOVEVARIABLE x}{\FAD{\TCV} \FORALL{y}{\GAMMA \PLUSTC \TCV} A} 
							\\ \MIMPLIES \ \Mforall \MMM^{\GAMMA}. \ \MMM \models \FAD{\TCV} \FORALL{y}{\GAMMA \PLUSTC \TCV} A \ \MIMPLIES \ \MMM \REMOVEVARIABLE x \models \FAD{\TCV} \FORALL{y}{\GAMMA \PLUSTC \TCV} A
						\end{array}
					}{lines 3-15}
					\NLASTLINE{\text{Hence: } \FAD{\TCV} \FORALL{y}{\GAMMA \PLUSTC \TCV} A \THINWRT{x}}{}
				\end{NDERIVATION}
			}
	\end{enumerate}
\end{proof}

\end{lemma}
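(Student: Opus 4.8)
The plan is to prove the lemma by induction on the derivation that $A$ is syntactically thin, that is, by a case analysis on the seven clauses of Definition~\ref{def:Syntactically_Thin_formulae}. In every case I unfold the semantic definition of thinness and must show, for each well-constructed $\MMM^{\GAMMA}$ with $\FORMULATYPES{\GAMMA \REMOVEVARIABLE x}{A}$ and $x \in \DOM{\GAMMA}$, that $\MMM \models A$ implies $\MMM \REMOVEVARIABLE x \models A$. The unifying idea is that, since $A$ is typable without $x$, the variable $x$ never occurs free in any expression, LTC or sub-formula of $A$; hence removing $x$ from the model does not change the interpretation of expressions ($\SEM{e}{\MMM} \equiv \SEM{e}{\MMM \REMOVEVARIABLE x}$ whenever $x \notin \FV{e}$) nor of type contexts (using the identity $\SEM{\GAMMA_0}{\MMM}\REMOVEVARIABLE x \equiv \SEM{\GAMMA_0 \REMOVEVARIABLE x}{\MMM \REMOVEVARIABLE x}$). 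The genuine content is that the derivability relation $\LTCDERIVEDVALUE{M}{\GAMMA_0}{\MMM}{V}$ appearing in the freshness and quantifier clauses is also preserved under $\REMOVEVARIABLE x$, which is where the auxiliary evaluation lemmas do the work.

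For the base clauses I would argue as follows. Clause~1 ($x{:}\TYBASE$) is handled directly by Lemmas~\ref{lem:Base_Values_Dont_Extend_Reach}, \ref{lem:Base_Values_can_be_added/removed_from_TCV:TC} and \ref{lem:Base_types_added/removed_maintain_extension}: a name-free base-type value can be added to or removed from the model (and from any $\TCV$-mapping) without changing which names are reachable or which extensions exist, so the truth of \emph{any} typable formula is preserved. Clause~2 (atomic $A$ with $x \notin \FV{A}$) follows by unfolding the semantics of $=$ and $\FRESH{\cdot}{\cdot}$, using $\SEM{e}{\MMM} \equiv \SEM{e}{\MMM \REMOVEVARIABLE x}$ together with Lemma~\ref{lem:adding/removing_names_to_congruence_makes_no_difference} to discharge the congruence and Lemma~\ref{lem:adding/remove_unused_names_maintains_evaluation} for the derivation witnesses. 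The freshness clause~3 is exactly Lemma~\ref{lem:syntactic_thin_implies_thin_Fresh-typed}; clause~4 ($\FRESH{y}{\GAMMA_0 \PLUSV b}$) additionally splits on whether the extra name $b$ equals $x$, and in both sub-cases reduces to the same evaluation argument (reusing Lemma~\ref{lem:adding/remove_unused_names_maintains_evaluation} when $\MMM(x)$ is not reachable in $\MMM \REMOVEVARIABLE x$).

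The inductive clause~5 (conjunction, disjunction, evaluation formula, and the three quantifier forms) is routine once the induction hypothesis is in place: each connective is interpreted componentwise, and I move $\REMOVEVARIABLE x$ inward using Lemma~\ref{lem:adding/remove_unused_names_maintains_evaluation} (so a witness $M_y$ derivable in $\MMM$ stays derivable in $\MMM \REMOVEVARIABLE x$, the side condition $\AN{V_y}\cap \AN{V_x} \subseteq \AN{\MMM \REMOVEVARIABLE x}$ being guaranteed since $x \notin \FV{\SEM{\GAMMA_1}{\MMM \REMOVEVARIABLE x}}$), Lemma~\ref{lem:BaseValues_are_equally_derivable_from_any_LTC} for the existential-over-$\TYBASE$ variant, and then apply the induction hypothesis to the sub-formula in the enlarged model $\MMM \cdot y{:}V_y$. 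The $\FAD{\TCV}$ clause~6 uses Lemma~\ref{lem:model_and_model_plus_TCV_models_equivalently_TCV-free_formula} to erase the $\TCV$-mapping (legitimate because $A$ is $\TCV$-free), reducing to the induction hypothesis on $A$ and to the observation that $\FAD{\TCV}$ commutes with $\REMOVEVARIABLE x$, since every extension of $\MMM \REMOVEVARIABLE x$ is the $x$-removal of an extension of $\MMM$.

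I expect clause~7, $\FAD{\TCV}\FORALL{y^{\alpha_y}}{\GAMMA \PLUSTC \TCV} A$, to be the main obstacle, because it stacks the extension quantifier, the restricted universal, and the removal $\REMOVEVARIABLE x$. Here I would first strip $\TCV$ by Lemma~\ref{lem:model_and_model_plus_TCV_models_equivalently_TCV-free_formula} (as $A$ is $\TCV$-free), then show that the set of values $V_y$ derivable for $y$ from $\GAMMA \PLUSTC \TCV$ is unchanged by removing $x$: the derivation context $\SEM{\GAMMA\PLUSTC\TCV}{\MMM \cdot \TCV{:}\GAMMA_d}\REMOVEVARIABLE x$ is a sub-context of the original, and Lemma~\ref{lem:adding/remove_unused_names_maintains_evaluation} transports the evaluation once the name-free $x$ is eliminated. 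The delicate bookkeeping is writing an arbitrary extension $\MMM_d$ of $\MMM$ as $\MMM_{d-x}\cdot x{:}V_x$ and matching it with an extension $\MMM_{d-x}$ of $\MMM \REMOVEVARIABLE x$; for this I would invoke Lemma~\ref{lem:two_extensions_combine_to_make_extensions_of_each_other} to recombine the $x$-free part with the extension, then re-insert the $\TCV$-mapping. Getting the order of these rewrites right, while keeping every intermediate model well-constructed and every derivability side-condition satisfied, is the technically demanding part; once arranged, applying the induction hypothesis to $A$ in the model $\MMM_{d-x}\cdot y{:}V_y$ closes the case.
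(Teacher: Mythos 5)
Your proposal follows essentially the same route as the paper's proof: a case analysis over the seven syntactic clauses, unfolding the semantic definition of thinness in each, delegating clause~3 to the dedicated freshness lemma, handling clause~4 by the same case split on whether the extra name coincides with the removed one, and discharging the quantifier and $\FAD{\TCV}$ cases with the same auxiliary lemmas (preservation of evaluation under removal of unused names, invariance under adding/removing $\TCV$-mappings, and the decomposition $\MMM_d \equiv \MMM_{d-x}\cdot x{:}V_x$ for clause~7). The argument is correct and matches the paper's structure, including the identification of clause~7 as the technically delicate case.
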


\newpage
\newcommand{\NUCLOGIC}{$\nu$-logic}

\section{Conservativity}
\label{appendix_conservativitiy}
We sketch the  proof that our logic is an extension of the STLC logic as defined in \cite{HY04PPDP}, and do not explain any of the trivial details.
We define the logic introduced in this paper as the \NUCLOGIC\ for brevity.
We ignore the other possible extensions of $\INT$ and operations on $\INT$, $\BOOL$,...

\PARAGRAPH{The language for STLC}

\begin{GRAMMAR}
	\begin{array}{lclllllll}
		\alpha
		&\ ::= \ &
		\UNIT \VERTICAL \BOOL \VERTICAL \alpha \FS \alpha \VERTICAL \alpha \times \alpha
		\\
		\Gamma
		&::=&
        \emptyset \VERTICAL \Gamma , x:\alpha
                \\[1mm]
		V
		&\ ::=\ &
                x \VERTICAL c \VERTICAL \lambda x.M \VERTICAL \PAIR{V}{V}
		\\[0.5mm]
		M
		& ::= &
		V \VERTICAL MM  \VERTICAL \LET{x}{M}{M} \VERTICAL M = M
		\\
		&& \VERTICAL \IFTHENELSE{M}{M}{M} \VERTICAL \PAIR{M}{M} \VERTICAL \PROJ{}{i}{M}
	\end{array}
\end{GRAMMAR}

\PARAGRAPH{The logic for STLC}

	\begin{GRAMMAR}
		\alpha
		&::=&
   		\UNIT \VERTICAL \BOOL \VERTICAL \alpha \FS \alpha \VERTICAL \alpha \times \alpha
		\\
		e
		&::=&
		\LOGIC{c} \VERTICAL x \VERTICAL \pi_i(e) \VERTICAL \PAIR{e}{e}
		\\
		A
		&::=&
		e = e
		\VERTICAL
		\neg A
		\VERTICAL
		A \AND A
		\VERTICAL
		\EVALFORMULA{e}{e}{x^{\alpha}}{A}
		\VERTICAL 
		\forall x^{\alpha}. A
	\end{GRAMMAR}
\PARAGRAPH{Axioms for STLC}
The axioms are:
\begin{itemize}
	\item STLC terms can only be reasoned about through types which are $\NAME$-free, for which we introduce two axioms:
	\[
	\begin{array}{lrclr}
	(ext) & \quad \EXT{e_1, e_2} & \quad  \PIFF \quad & e_1 = e_2
	\\
	(u5) & \quad  \FORALL{x^{\alpha}}{\emptyset} A & \quad \PIFF \ & \FORALL{x^{\alpha}}{\GAMMA} A & \qquad \alpha \text{ is $\NAME$-free}
	\end{array}
	\]
	Axiom $(ext)$ is similar to that of \cite{HY04PPDP}, but extended to all $\NAME$-free types terms.
	Axiom $(u5)$ ensures names in a $\NAME$-free terms can be swapped for any fresh name.
	\\
	
	\item The axioms for predicate logic, which are equivalent in the \NUCLOGIC.
	
	\item The axioms for evaluation formulae ($\ONEEVAL{}{}{}{}$) which are all lifted directly from the STLC-logic \cite{BergerM:prologfhgrtmp, HY04PPDP} but with the added assumption that all terms terminate and new definition of $\FORALL{x^{\alpha}}{\GAMMA}$ which behaves identically to $\forall x.$ when $\alpha$ is $\NAME$-free.
	
	\item The axioms for $\forall x. A$ which are those of FOL, are represented by their translation to the \NUCLOGIC\ and axioms for $\FORALL{x}{\GAMMA}$.
	
	\item  $(ext)$ this axiom holds in the STLC-logic because functions must always produce the same result if applied twice. This still holds in \NUCLOGIC\ as seen in Sec.~\ref{sec:axioms}, but only for $\NAME$-free types.
	A simple example of why this fails for other types is $\GENSYM : \UNIT \FS \NAME$ which clearly produces different names each time it is applied.
	
\end{itemize}

\PARAGRAPH{Rules for STLC} Can be seen in Fig.~\ref{fig:rules_for_STLC}.
\begin{FIGURE}
  \begin{RULES}
	\ZEROPREMISERULENAMEDRIGHT
        {
		\ASSERT{A\SUBST{x}{m}}{x}{m}{A}
        }{[Var]}
		\quad
	\ZEROPREMISERULENAMEDRIGHT
        {
		\ASSERT{A\SUBST{\LOGIC{c}}{m}}{\PROGRAM{c}}{m}{A}
        }{[Const]}
	\quad
     \TWOPREMISERULENAMEDRIGHT
     {
        	\ASSERT{A}{M}{m}{B}
     }
     {
        	\ASSERT{B}{N}{n}{C\SUBST{\EQA{m}{n}}{u}}
     }
    {
        	\ASSERT{A}{M = N}{u}{C}
     }{[Eq]}
		\\\\
		\ONEPREMISERULENAMEDRIGHT
		{
			\ASSERT{A^{\MINUS x} \AND B}{M}{m}{C}
		}
		{
			\ASSERT
			{A}
			{\lambda x^{\alpha}. M}{u}
			{				\forall x^{\alpha}. (B \IMPLIES \ONEEVAL{u}{x}{m}{C})}
		}{[Lam]}
		\quad
		\TWOPREMISERULENAMEDRIGHT
		{
			\ASSERT{A}{M}{m}{B}
		}
		{
			\ASSERT{B}{N}{n}{\ONEEVAL{m}{n}{u}{C}}
		}
		{
			\ASSERT{A}{MN}{u}{C}
		}{[App]}
		\\\\
		\FIVEPREMISERULENAMEDRIGHT
		{
			\ASSERT{A}{M}{m}{B}
		}
		{
			\ASSERT{B\SUBST{b_i}{m}}{N_i}{u}{C}
		}
		{
			b_1 = \TRUE
		}
		{
			b_2 = \FALSE
		}
		{
			i = 1, 2
		}
		{
			\ASSERT{A}{\IFTHENELSE{M}{N_1}{N_2}}{u}{C}
		}{[If]}
		\\\\
		\TWOPREMISERULENAMEDRIGHT
		{
                  \ASSERT{A}{M}{m}{B}
		}
		{
		  \ASSERT{B}{N}{n}{C \SUBST{\PAIR{m}{n}}{u}}
		}
		{
                  \ASSERT{A}{\PAIR{M}{N}}{u}{C}
		}{[Pair]}
		\quad
		\ONEPREMISERULENAMEDRIGHT
		{
                  \ASSERT{A}{M}{m}{B\SUBST{\pi_i(m)}{u}}
		}
		{
			\ASSERT{A}{\pi_i(M)}{u}{B}
		}{[Proj($i$)]}
\end{RULES}
\caption{Rules for the STLC}
\label{fig:rules_for_STLC}
\end{FIGURE}

\begin{definition}
The logic for the \NUC \ is a \EMPH{conservative extension} of the logic	for
the simply typed $\lambda$-calculus, provided:
\begin{itemize}

\item If $A$ is a formula in the $\lambda$-logic derivable from the
  axioms for the STLC above, then the translation of $A$ to the $\nu$-logic is also derivable from the axioms in the
  logic of the \NUC.

\item If $\ASSERT{A}{M}{m}{B}$ is a triple in the logic of the STLC, derivable
  from the rules for the STLC then $\ASSERT{A}{M}{m}{B}$  is also derivable from
  the rules in the logic of the \NUC.

\end{itemize}
\end{definition}
\newcommand{\TRANSLATELAMNU}[1]{\TRANSLATE{#1}_{\lambda \FS \nu}}
\begin{definition}
	Define a translation of STLC assertions (and triples) into \NUC \ assertions (and triples)as follows:
	\begin{equation}
		\begin{array}{rcl}
			\TRANSLATELAMNU{e=e'} &=& e=e'
			\\
			\TRANSLATELAMNU{\neg A} &=& \neg \TRANSLATELAMNU{A}
			\\
			\TRANSLATELAMNU{A \PAND B} &=& \TRANSLATELAMNU{A} \PAND \TRANSLATELAMNU{B}
			\\
			\TRANSLATELAMNU{\ONEEVAL{e}{e'}{m}{A}} &=& \ONEEVAL{e}{e'}{m}{\TRANSLATELAMNU{A} }
			\\
			\TRANSLATELAMNU{\forall x^{\alpha}. A} &=& \FORALL{x^{\alpha}}{\emptyset} \TRANSLATELAMNU{A} 
			\\
			\\
			\TRANSLATELAMNU{\ASSERT{A}{M}{m}{B}} &=&  \ASSERT{\TRANSLATELAMNU{A}}{M}{m}{\TRANSLATELAMNU{B}}
		\end{array}
	\end{equation}
\end{definition}

\begin{theorem}
The logic for the \NUC\ is a conservative extension of the logic for
the simply typed $\lambda$-calculus above.
\end{theorem}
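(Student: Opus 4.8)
The plan is to prove the two clauses of the conservativity definition separately, by induction on (respectively) the STLC axiom/predicate-logic derivation and the STLC rule derivation, exploiting throughout the single structural fact that every type occurring in an STLC judgement is $\NAME$-free. The hinge of the whole argument is that on a $\NAME$-free type the restricted quantifier $\FORALL{x^{\alpha}}{\emptyset}$ behaves exactly like the classical $\forall x^{\alpha}$: by Lemma~\ref{lem:Nm-free_terms_have_equivalent_name_free_STLC-term} every closed value of a $\NAME$-free type is contextually equivalent to a name-free STLC value, so the values derivable from the empty LTC are precisely the STLC values the $\lambda$-logic quantifies over, and axiom $(u5)$ records this collapse syntactically. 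I would isolate this as a bridging observation: on $\NAME$-free $\alpha$, $\FORALL{x^{\alpha}}{\emptyset}$ validates exactly the classical quantifier rules, so the image of the translation reasons like first-order logic.

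Before the two inductions I would establish two supporting lemmas about the image of $\TRANSLATELAMNU{\cdot}$. First, every formula $\TRANSLATELAMNU{A}$ is $\SYNEXTINDEP$, hence $\EXTINDEP$ by Lemma~\ref{lem:apndx_SIMPLEEXTINDEP_implies_EXTINDEP}: the translation produces only equalities, $\neg$, $\wedge$, evaluation formulae, and restricted universal quantifiers, and each of these is a constructor clause of Definition~\ref{def:syntactic-EXTINDEP}. Second, $\TRANSLATELAMNU{A}$ is $\THINWRT{x}$ for every intermediate anchor $x$ of $\NAME$-free type that does not occur free in it, via Lemma~\ref{lem:Syntactically_Thin_implies_Thin} (base-type anchors fall under its clause~1; the remaining anchors are simply not free). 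These two lemmas discharge every $\EXTINDEP$ and $\THINWRT{\cdot}$ side condition that the $\NUC$ rules carry but the STLC rules do not.

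For the first clause I would proceed by cases on the STLC axioms. The predicate-logic axioms translate to instances of $(u1)$--$(u5)$; because the quantified type is $\NAME$-free, $(u5)$ replaces any LTC by $\emptyset$, so the restricted-quantifier axioms reproduce the FOL reasoning exactly. The evaluation-formula axioms are $(e1)$--$(e3)$ of the $\NUC$ logic restricted to $\NAME$-free data, and $(ext)$ is literally shared, its $\NAME$-free side condition always being met here. I would also remark that the more restrictive $\NUC$ logical substitution $[e/x]_{\GAMMA}$ collapses to ordinary substitution on translated formulae, since the only LTC these formulae contain is $\emptyset$, making the destructor proviso vacuous; this is what lets the translated $\RULENAME{[Eq]}$ and $\RULENAME{[Proj($i$)]}$ go through despite the weakened substitution.

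For the second clause I would match each STLC rule to its $\NUC$ counterpart. $\RULENAME{[Var]}$, $\RULENAME{[Const]}$, $\RULENAME{[Eq]}$, $\RULENAME{[App]}$, $\RULENAME{[If]}$, $\RULENAME{[Pair]}$ and $\RULENAME{[Proj($i$)]}$ are structurally identical after translation, their thinness and substitution side conditions being supplied by the two supporting lemmas. The one genuinely different case, and the main obstacle, is $\RULENAME{[Lam]}$: the $\NUC$ rule yields the postcondition $\FAD{\TCV}\FORALL{x^{\alpha}}{\TCV}(\TRANSLATELAMNU{B} \IMPLIES \ONEEVAL{u}{x}{m}{\TRANSLATELAMNU{C}})$, whereas the translated STLC postcondition is $\FORALL{x^{\alpha}}{\emptyset}(\TRANSLATELAMNU{B} \IMPLIES \ONEEVAL{u}{x}{m}{\TRANSLATELAMNU{C}})$. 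I would apply $\NUC$ $\RULENAME{[Lam]}$ (its $A$-$\EXTINDEP$ premise is met since $A$ lies in the image of the translation), and then collapse the two new quantifiers: $(u5)$ rewrites the inner $\FORALL{x^{\alpha}}{\TCV}$ to $\FORALL{x^{\alpha}}{\emptyset}$ because $\alpha$ is $\NAME$-free, after which the whole body is $\TCV$-free, and $(utc3)$ then discards the leading $\FAD{\TCV}$, applicable precisely because the body is $\EXTINDEP$ by the first supporting lemma. The delicate point to verify is that $\delta$ truly does not escape once the restricted quantifier's LTC is emptied, which holds since the translation never introduces $\delta$ and its only occurrence was in that LTC; confirming this, together with the congruence of the rewrite under $\FAD{\TCV}$, is where the care concentrates, the remaining cases being routine bookkeeping.
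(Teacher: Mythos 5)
Your proposal is correct and follows essentially the same route as the paper's: translate, observe that every formula in the image is TCV-free hence $\EXTINDEP$ and that thinness degenerates to non-occurrence of the anchor, collapse the restricted quantifier to the classical one on $\NAME$-free types via $(u5)$, and treat \RULENAME{[Lam]} as the only rule needing real work. The only (inessential) difference is in that last step, where you eliminate the quantifiers by $(u5)$ followed by $(utc3)$ on the now $\TCV$-free body, whereas the paper instantiates $\FAD{\TCV}$ directly and then reduces the inner LTC; both variants rest on the same $\EXTINDEP$/$\NAME$-freeness facts.
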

\begin{proof}
Most \textbf{axioms} are simple extensions of the STLC logic, and thus no proof is required.
Note that for all $ A_{\lambda}. \TRANSLATELAMNU{A}-\EXTINDEP$ as $\TRANSLATELAMNU{A}$ is TCV free. 
Axioms that should hold easily are: $(eq1)$, $(u3)$, $(u4)$, $(e1-3)$,  as these are copied from the STLC logic.
For $(u1)$ an adaption $(u5)$ for all types which are name free is required to instantiate to variables derived  from $\emptyset$.

See Sec \ref{ax_ext_soundness} for soundness of $(ext)$ for the limited case of name free types in the LTC and formulae.
	
All \textbf{rules} for \NUC \ are an extension of the rules for the STLC, with a few exceptions:
\begin{itemize}
	\item Substitution: given all LTCs in the logic are reduced to $\emptyset$ $(u5)$ then substitution in all rules becomes standard as seen in the STLC as we never need to perform $\GAMMA \LSUBST{e}{x}$.
	
	\item \RULENAME{[Lam]} introduces $\FAD{\TCV}$ however this can just be instantiated.
	It also introduces $\FORALL{x}{\TCV}$ which can be reduced to $\FORALL{x}{\emptyset}$ via $(u2)$.
	
	\item $\THINWRT{\VEC{x}}$: Given names are not present then thinness simply implies non-existence of such a variable, i.e. $x \notin \FV{A}$ or $A^{-x}$
	
	\item $\EXTINDEP$ holds for all assertions that are free from $\TCV$ hence holds for all STLC logical assertions.

\end{itemize}
\end{proof}

\end{document}